\documentclass[aps,prx,superscriptaddress,notitlepage, amsfonts,longbibliography, twocolumn]{revtex4-2}

\usepackage[dvipdfmx]{graphicx}

\usepackage{amsmath,amssymb,amsthm,mathtools,mathrsfs}
\usepackage{srcltx}

\usepackage{bm,braket,comment}
\usepackage{amsmath,amssymb,amsthm,mathtools}
\usepackage{xcolor}
\usepackage{longtable}
\usepackage{enumerate}
\usepackage{ulem}
\usepackage{multirow}
\usepackage{here}
\usepackage{comment}
\usepackage{url}

\newtheorem{theorem}{Theorem}
\newtheorem{definition}{Definition}
\newtheorem{assumption}{Assumption}
\newtheorem{lemma}{Lemma}
\newtheorem{corollary}{Corollary}
\newtheorem{proposition}{Proposition}

\newcommand{\bout}[1]

\usepackage{hyperref}
\hypersetup{colorlinks=true,linkcolor=blue,citecolor=blue,urlcolor=blue}

\allowdisplaybreaks[4]
\begin{document}

\title{A General Theory of Multi-Resource Theories Involving Finite-Group Asymmetry}

\author{Yosuke Mitsuhashi}
\email{yosuke.mitsuhashi@riken.jp}
\affiliation{Analytical Quantum Complexity RIKEN Hakubi Research Team, RIKEN Center for Quantum Computing (RQC), Wako, Saitama 351-0198, Japan}

\author{Hiroyasu Tajima}
\email{hiroyasu.tajima@inf.kyushu-u.ac.jp}
\affiliation{Department of Informatics, Faculty of Information Science and Electrical Engineering, Kyushu University, 744 Motooka, Nishi-ku, Fukuoka 819-0395, Japan}
\affiliation{JST, FOREST, 4-1-8 Honcho, Kawaguchi, Saitama 332-0012, Japan}

\begin{abstract}
Resource theories provide a universal framework for quantifying properties of quantum states, such as entanglement, magic, athermality, and asymmetry, as resources under various constraints encountered in quantum science. 
Since realistic settings typically impose several constraints at once, it is important to study multi-resource theories that combine them. 
Among these, combinations with asymmetry are especially natural, as symmetry is fundamental and ubiquitous in physics. 
Here we develop a general theory that treats combinations of finite-group asymmetry with a broad range of resource theories in a unified manner. 
Our key observation is that the effect of combining the two constraints is governed by the accessibility of symmetry information under the underlying resource theory: whether the information needed for the conversion can be extracted from the input and used to control free operations. 
If this can be done with asymptotically vanishing error, we show that the conversion rate of the combined theory equals the smaller of the rates of the two constituent theories, so combining them incurs no additional cost. 
This condition holds for LOCC on arbitrary mixed multipartite states, magic-state conversion of multiqubit systems under Clifford symmetries, and Gibbs-preserving operations. 
Thermal operations, being time-translation covariant, can access only limited symmetry information, giving an additional compatibility condition between the two symmetries. 
We also analyze symmetry imposed on elementary operations rather than only on the resulting channel. 
We establish analogous results for LOCC and thermal operations under suitable assumptions, and for magic when the symmetry is represented by Pauli operators. 
Finally, we show that finite-group symmetry does not reduce the asymptotic ergotropy per copy and hence introduces no new completely passive states.

\end{abstract}

\maketitle

\let\oldaddcontentsline\addcontentsline% Store \addcontentsline
\renewcommand{\addcontentsline}[3]{}% Make \addcontentsline a no-op

\onecolumngrid

\section{Introduction}

Quantum resource theories provide a common language for studying what physical systems can accomplish when the available operations are restricted~\cite{chitambar2019quantum,coecke2016mathematical}.
A resource is determined relative to the operations under which it cannot be freely prepared.
This perspective underlies entanglement theory~\cite{horodecki2009quantum}, quantum thermodynamics~\cite{goold2016role,lostaglio2019introductory}, and the resource theory of magic~\cite{bravyi2005universal,veitch2014resource}.
A central operational question is the optimal rate at which many independent copies of one state can be converted into copies of another with vanishing error.
Such rates characterize the leading-order cost of state conversion and reveal which operational restrictions remain relevant in the asymptotic limit.

The study of these rates has revealed markedly different structures across resource theories.
In bipartite pure-state entanglement, concentration and dilution relate asymptotic conversion to entanglement entropy~\cite{bennett1996concentrating}, whereas mixed-state purification introduces additional distinctions between preparation and distillation~\cite{bennett1996mixed-state}.
Single-copy majorization criteria~\cite{nielsen1999conditions} and the structure of local operations and classical communication (LOCC)~\cite{chitambar2014everything} further illustrate that state convertibility depends on the precise operational model.
In thermodynamics, microscopic models of free operations connect state transformations to energetic restrictions~\cite{janzing2000thermodynamic}; asymptotic and finite-size analyses identify different families of constraints, including free-energy relations and their generalized counterparts~\cite{brandao2013resource,horodecki2013fundamental,brandao2015second}. 
For magic, the choice of free operations is equally important.
The stabilizer framework supports a theory of nonclassical computational resources~\cite{veitch2014resource}, while quasiprobability negativity and contextuality provide resource criteria in particular stabilizer settings, notably in odd-prime dimensions~\cite{veitch2012negative,howard2014contextuality}.
The quantification and manipulation of magic states~\cite{ahmadi2018quantification}, the geometry of the stabilizer polytope~\cite{heinrich2019robustness}, and the treatment of magic in quantum channels~\cite{seddon2019quantifying} provide complementary descriptions of these restrictions.
These developments also highlight the distinction between defining free operations at the level of quantum channels and defining them in terms of elementary physical operations.

Physical constraints, however, rarely occur one at a time.
Spatially separated laboratories may lack a shared reference frame, a thermodynamic process may have to respect additional symmetries, and a stabilizer protocol may be restricted by a prescribed group action.
General multi-resource frameworks have investigated the exchange and joint accounting of different resources~\cite{sparaciari2020first}, and thermodynamics with several conserved quantities provides an important concrete setting~\cite{guryanova2016thermodynamics,yungerhalpern2016microcanonical}.
Here we focus on a different but complementary question: when two restrictions are imposed simultaneously on the state conversion protocols, is the resulting asymptotic rate determined by the separate restrictions, or does their combination create an additional obstruction?

Symmetry is a particularly natural restriction for addressing this question.
Covariant operations describe transformations that do not rely on an external reference frame for a specified group action~\cite{bartlett2007reference,gour2008resource}.
The resource theory of asymmetry characterizes the information about this action encoded in quantum states, through transformation criteria~\cite{marvian2013theory}, information-theoretic measures of reference-frame quality~\cite{gour2009measuring}, and constraints on dynamics beyond the conservation of expectation values~\cite{marvian2014extending,marvian2014modes}.
Such constraints also exist for general quantum processes and have been quantified through trade-off relations between process irreversibility under conservation laws and quantum coherence in the ancillary system used for implementation~\cite{tajima2025universaltradeoffstructuresymmetry}.
Importantly, whether this symmetry information can be extracted and used, however, depends on the restrictions imposed on the allowed operations.

Previous studies show that combining symmetry with another operational constraint can lead to nontrivial effects.
Earlier work on entanglement under superselection rules already demonstrates that locality and symmetry must be considered jointly~\cite{verstraete2003quantum,bartlett2003entanglement}.
In particular, particle-number superselection can lead to distinct nonlocal resources and additional asymptotic structure~\cite{schuch2004nonlocal,schuch2004quantum}.
Thermodynamics supplies another instructive example.
The Gibbs-preserving property alone does not capture all restrictions imposed by thermal operations~\cite{faist2015gibbs}.
Time-translation symmetry also constrains coherence independently of the usual population-based free-energy conditions~\cite{lostaglio2015description,lostaglio2015quantum}.
These examples show that combining two resource-theoretic constraints can change the asymptotic state-conversion rate. 
This naturally raises the question of under what conditions imposing an additional symmetry constraint leaves the asymptotic conversion rate unchanged.

Reference systems and catalysts offer one route to such a removal.
Catalysis can enable transformations forbidden without an auxiliary resource~\cite{jonathan1999entanglement-assisted}, and coherent reference systems can support repeated operations under conservation constraints~\cite{aberg2014catalytic}.
The coherence-irreversibility trade-off~\cite{tajima2025universaltradeoffstructuresymmetry} can also be understood as a trade-off between the quality of a reference system used as a catalyst and the error in implementing dynamics with it.
Related results have been obtained for thermodynamic implementations: the coherence cost of implementing certain Gibbs-preserving operations by thermal operations diverges as the implementation error vanishes~\cite{Tajima2025PRLa}.
This perspective has a range of applications, including the implementation of unitary operations~\cite{tajima_coherence_2018, tajima_coherence_2020}.
However, the possibility of using such auxiliary systems depends on the symmetry group and on how the auxiliary system is required to be returned. 
In particular, no-broadcasting results impose strong restrictions on finite-dimensional reference systems for connected Lie-group symmetries~\cite{marvian2019no-broadcasting,lostaglio2019coherence}.
For a finite group, by contrast, an orthogonal orbit can encode the relevant group label in a finite-dimensional system.
The issue in a combined theory is whether such a record can be distilled and used within the underlying free operations, rather than only under unrestricted quantum operations.

Finite-group asymmetry has an especially sharp asymptotic behavior.
For pure inputs, approximate i.i.d. conversion rates under finite-group-covariant operations are either zero or infinite, according to the inclusion of the input and output symmetry subgroups~\cite{shitara2025iid}.
In this work, we extend the analysis to arbitrary mixed states and investigate asymptotic state conversion when finite-group covariance is imposed together with another resource-theoretic constraint. 
Throughout this work, the group and the underlying single-copy Hilbert spaces are fixed in the asymptotic limit.

The operational mechanism is state discrimination followed by conditional control. 
The distinct states in the orbit of $\rho$ are indexed by $G/S$, where $S$ is the symmetry subgroup 
\begin{align}
    S:=\mathrm{Sym}(\rho)
    =\{g\in G \, |\, U(g)\rho U(g)^\dag=\rho\}
\end{align}
with projective unitary representation $U$ of $G$. 
Under unrestricted measurements, the asymptotic discrimination of finitely many distinct i.i.d. states is governed by quantum Chernoff theory~\cite{audenaert2007discriminating,nussbaum2009chernoff,li2016discriminating}.
Restricted measurements require a separate analysis, as illustrated by both local indistinguishability and positive results for local discrimination~\cite{bennett1999quantum,walgate2000local}.
Our construction estimates the expectation values of the orbit states themselves, regarded as observables, and combines the outcomes of repeated tests.
Hoeffding's inequality~\cite{hoeffding1963probability} then gives an explicit exponential bound on the discrimination error.

We first establish a general state-conversion result based on measurements whose outcomes can be used to condition subsequent free operations. 
If the required orbit information can be extracted with vanishing error, a vanishing fraction of input copies suffices for discrimination, while the remaining copies support an asymptotically optimal conversion.
Group averaging then yields a covariant free channel without changing the first-order rate.
This argument does not require a physical memory storing the measurement outcome. 
We nevertheless introduce such memory states as an alternative formulation, which provides a useful operational interpretation of the discrimination-and-control mechanism.

For LOCC on arbitrary mixed multipartite states, local-observable decompositions implement the required tests.
For multiqubit magic under Clifford representations, randomized Pauli measurements provide the analogous construction.
Consequently, whenever $\mathrm{Sym}(\rho)\subset\mathrm{Sym}(\sigma)$, imposing finite-group covariance does not change the asymptotic conversion rate under LOCC, SEP, stabilizer operations, or completely stabilizer-preserving operations. 
If this condition is not satisfied, the conversion rate is zero. 
These results hold without requiring an explicit expression for the asymptotic conversion rate in the absence of the symmetry constraint. 
For fixed input state and group, our LOCC and magic protocols use $O(\log(1/\epsilon))$ copies to achieve discrimination error $\epsilon$.

Thermodynamic operations require a refinement because time-translation covariance limits accessible orbit information.
We treat Gibbs-preserving operations, time-translation-covariant Gibbs-preserving operations, and thermal operations separately.
For Gibbs-preserving operations, control admissibility suffices directly.
For the time-covariant classes, the relevant subgroup is determined by collectively energy-dephased blocks of the input, and successful conversion requires compatibility between the finite-group action and time translations on the input and output.
For thermal operations, an energy-compatible projective construction achieves the required discrimination with $O((\log(1/\epsilon))^2)$ copies.
We note that these bounds on the number of copies are not necessarily optimal.

We also distinguish symmetry imposed at the level of quantum channels from symmetry imposed on the elementary operations used to implement them. 
For LOCC and thermal operations, we establish corresponding primitive-level results under suitable assumptions.
For magic, we show that the map-level conversion rate can also be achieved at the primitive level when the symmetry is represented by Pauli operators.

Finally, we consider work extraction by symmetry-respecting unitaries.
The relation between complete passivity and equilibrium was established without additional symmetry restrictions~\cite{pusz1978passive,lenard1978thermodynamical}.
Ergotropy quantifies finite-system work extraction~\cite{allahverdyan2004maximal}, and collective control can increase the extractable work per copy~\cite{alicki2013entanglement}.
Symmetry-protected complete passivity has been characterized for connected compact Lie groups, with finite cyclic and dihedral groups providing contrasting discrete examples~\cite{mitsuhashi2022characterizing}.
We prove that an arbitrary fixed finite-group symmetry commuting with the Hamiltonian leaves the asymptotic ergotropy per copy unchanged.
It follows that such a symmetry introduces no additional completely passive states, including when the ground eigenspace is degenerate.
Overall, our results identify conditions under which imposing a finite-group symmetry constraint leaves the asymptotic conversion rate unchanged, and clarify how these conditions depend on the free operations available in the underlying resource theory.

The logical structure of our main results is summarized in Fig.~\ref{fig:overview}.
We first establish the relation between state discrimination and state conversion for finite-group asymmetry and characterize the corresponding asymptotic conversion rate.
We then extend this relation to general resource theories subject to an additional symmetry constraint, leading to our general conversion theorem.
Finally, we apply this theorem to entanglement, magic, and thermodynamics, and further investigate primitive-level implementations and symmetry-constrained work extraction.

Throughout this manuscript, we denote the set of all linear operators from a Hilbert space $\mathcal{H}$ to another Hilbert space $\mathcal{H}'$ by $\mathcal{L}(\mathcal{H}\to\mathcal{H}')$. 
If $\mathcal{H}=\mathcal{H}'$, we simply denote it by $\mathcal{L}(\mathcal{H})$. 
We denote the set of all Hermitian operators and unitary operators on $\mathcal{H}$ by $\mathcal{L}^\mathrm{H}(\mathcal{H})$ and $\mathcal{U}(\mathcal{H})$, respectively.

\begin{figure*}
    \centering
    \includegraphics[width=155mm]{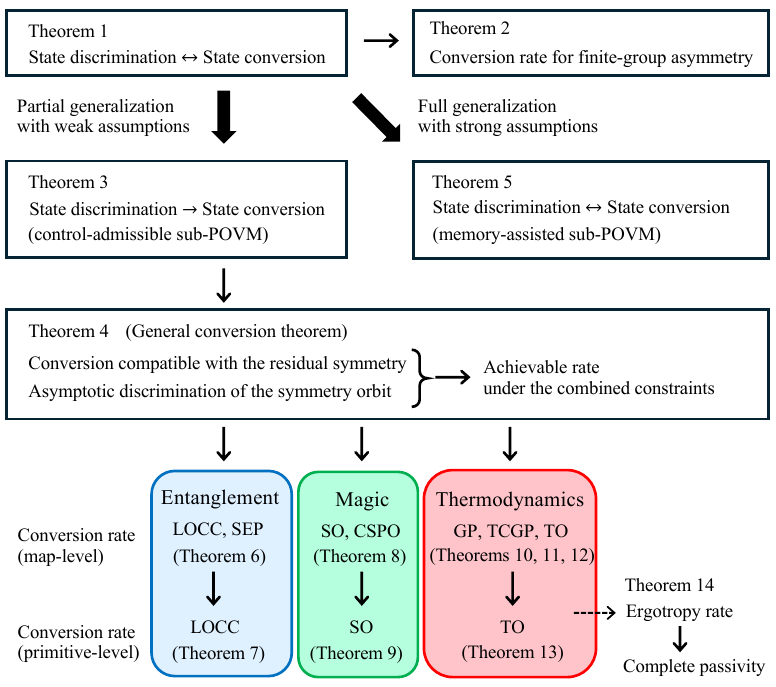}
    \caption{
Overview of the main results.
For finite-group asymmetry, state discrimination and state conversion are first related directly, leading to the characterization of the asymptotic conversion rate.
This relation is then generalized to resource theories subject to an additional symmetry constraint through control-admissible and memory-assisted measurements.
The general conversion theorem shows that compatibility of the conversion with the residual symmetry, together with asymptotic discrimination of the symmetry orbit, guarantees an achievable rate under the combined constraints.
The theorem is applied to entanglement, magic, and thermodynamics, with additional primitive-level and work-extraction results.
}
    \label{fig:overview}
\end{figure*}

\section{Resource theory of finite-group asymmetry and state discrimination}

We consider the relation between the task of distinguishing the states $\{\mathcal{U}_g(\rho)\}_{g\in G}$ and state conversion. 
Since the state set satisfies $\mathcal{U}_g(\rho)=\mathcal{U}_{g'}(\rho)$ if and only if $g^{-1}g'\in S$, the exact problem setup is distinguishing the coset $gS$ from states $\{\mathcal{U}_g(\rho)\}_{g\in G}$. 
Throughout this work, the symmetry representation on a composite system is taken to be the tensor product of the representations on its constituent subsystems.
More precisely, if systems $A$ and $B$ carry projective unitary representations $U_A$ and $U_B$ of $G$, respectively, the representation on $AB$ is
\begin{align}
    U_{AB}(g):=& U_A(g)\otimes U_B(g).
\end{align}
We use the same convention for tensor powers, so that the representation on $A^{\otimes n}$ is $U_A(g)^{\otimes n}$.

First, we define a general distinguishability measure $\Delta$, which includes measures such as trace distance, infidelity, and many versions of divergences as specific examples.

\begin{definition} (Distinguishability measure.) \label{SMdef:general_distinguishability_measure}
    Let $\Delta$ be a family of functions $\Delta_\mathcal{H}: \mathcal{S}(\mathcal{H})\times\mathcal{S}(\mathcal{H})\to [0, \infty]$ indexed by a Hilbert space $\mathcal{H}$. 
    We say that $\Delta$ is a distinguishability measure if it is monotonically decreasing under CPTP maps, i.e., 
    \begin{align}
        \Delta_\mathcal{H}(\rho, \sigma)\geq \Delta_{\mathcal{H}'}(\mathcal{E}(\rho), \mathcal{E}(\sigma)) 
    \end{align}
    for all finite-dimensional Hilbert spaces $\mathcal{H}$, $\mathcal{H}'$, $\rho, \sigma\in\mathcal{S}(\mathcal{H})$ and CPTP maps $\mathcal{E}\in\mathcal{O}(\mathcal{L}(\mathcal{H})\to\mathcal{L}(\mathcal{H}'))$. 
\end{definition}

In the following, we write $\Delta_\mathcal{H}$ as $\Delta$ by omitting the underlying Hilbert space $\mathcal{H}$, which is obvious from the input states.

Next, for a distinguishability measure $\Delta$, we define the orthogonal distinguishability function $f_\Delta$.

\begin{definition} (Orthogonal distinguishability function.) \label{SMdef:orthogonal_distinguishability_function}
    Let $\mathcal{H}$ be a finite-dimensional Hilbert space, $\rho, \sigma\in\mathcal{S}(\mathcal{H})$ be a pair of orthogonal states, and $\Delta$ be a distinguishability measure. 
    We define the orthogonal distinguishability function $f_\Delta: [0, \infty]\to [0, \infty]$ of $\Delta$ by 
    \begin{align}
        f_\Delta(x):=\Delta(\rho, e^{-x}\rho+(1-e^{-x})\sigma)\ \forall x\in [0, \infty]. \label{SMeq:SMdef:orthogonal_distinguishability_function}
    \end{align}
\end{definition}

We note that this is well-defined, because the right-hand side does not depend on $\rho$, $\sigma$, or even the underlying Hilbert space $\mathcal{H}$ as long as $\rho$ and $\sigma$ are orthogonal to each other, which is shown in Corollary~\ref{SMcor:distinguishability_state_independence}.

Important examples of distinguishability measures are the max-divergence, the min-divergence, the trace distance, and the infidelity, defined by
\begin{align}
    &D_{\max}(\rho\|\sigma):=
    \inf\{\lambda\in\mathbb{R}\,|\,\rho\leq e^\lambda\sigma\}, \\
    &D_{\min}(\rho\|\sigma):=
    -\log\left(\mathrm{tr}(\Pi_\rho\sigma)\right), \\
    &\mathrm{T}(\rho,\sigma):=
    \frac{1}{2}\|\rho-\sigma\|_1, \\
    &1-F(\rho,\sigma):=
    1-\|\sqrt{\rho}\sqrt{\sigma}\|_1^2, 
\end{align}
where $\Pi_\rho$ denotes the projection onto the support of $\rho$.
For these measures, the corresponding orthogonal distinguishability functions are
\begin{align}
    &f_{D_{\max}}(x)=f_{D_{\min}}(x)=x, \\
    &f_{\mathrm{T}}(x)=f_{1-F}(x)=1-e^{-x}.
\end{align}
Thus, all four examples satisfy the strict monotonicity condition used below.

We will also use a canonical reference state associated with each subgroup of $G$.
Let $\mathcal{H}_R$ be the left regular representation space of $G$, with orthonormal basis $\{|\xi_g\rangle\}_{g\in G}$ and representation
\begin{align}
    L(g)|\xi_h\rangle:=&|\xi_{gh}\rangle.
\end{align}
For a subgroup $S\subset G$, we define
\begin{align}
    \tau_S:=&\frac{1}{|S|}\sum_{s\in S}|\xi_s\rangle\langle\xi_s|.
    \label{eq:canonical_reference_state}
\end{align}
Under the left regular representation, we have
\begin{align}
    L(g)\tau_S L(g)^\dag
    =&\frac{1}{|S|}\sum_{s\in S}|\xi_{gs}\rangle\langle\xi_{gs}|.
\end{align}
Therefore,
\begin{align}
    \mathrm{Sym}(\tau_S)=&S.
    \label{eq:canonical_reference_symmetry}
\end{align}
Moreover, the distinct states in the orbit of $\tau_S$ are naturally indexed by the left cosets $G/S$.
Thus, $\tau_S$ provides a canonical state whose symmetry subgroup is exactly $S$.

\begin{theorem} \label{thm:conversion_state_discrimination}
    Let $\epsilon\in [0, 1)$, $\Delta$ be a distinguishability measure, its orthogonal distinguishability function $f_\Delta$ be strictly increasing, and $S$ be the symmetry subgroup of $\rho$. 
    Then, the following statements are equivalent: 
    \begin{align}
        &\textrm{(i)}\ &&\ \forall \mathcal{E}\in\mathcal{O}^{S\textrm{-}\mathrm{cov}},\ \exists \mathcal{F}\in\mathcal{O}^{G\textrm{-}\mathrm{cov}} \textrm{ s.t. } \forall \chi\in\mathcal{S}, \Delta(\mathcal{E}(\chi), \mathcal{F}(\chi\otimes \rho))\leq f_\Delta(-\log(1-\epsilon)), \\
        &\textrm{(ii)}\ &&\ \forall \sigma\in\mathcal{S}^{S\textrm{-}\mathrm{sym}},\ \exists \mathcal{F}\in\mathcal{O}^{G\textrm{-}\mathrm{cov}} \textrm{ s.t. } \Delta(\sigma, \mathcal{F}(\rho))\leq f_\Delta(-\log(1-\epsilon)), \\
        &\textrm{(iii)}\ &&\ \exists \mathcal{F}\in\mathcal{O}^{G\textrm{-}\mathrm{cov}} \textrm{ s.t. } \Delta(\tau_S, \mathcal{F}(\rho))\leq f_\Delta(-\log(1-\epsilon)), \\
        &\textrm{(iv)}\ &&\ \textrm{There exists a set of POVMs } \{M_x\}_{x\in G/S} \textrm{ such that }\min_{g\in G} \mathrm{tr}(\mathcal{U}_g(\rho) M_{gS})\geq 1-\epsilon, \\
        &\textrm{(v)}\ &&\ \textrm{There exists a set of POVMs } \{M_x\}_{x\in G/S} \textrm{ such that }\frac{1}{|G|}\sum_{g\in G} \mathrm{tr}(\mathcal{U}_g(\rho) M_{gS})\geq 1-\epsilon. 
    \end{align}
\end{theorem}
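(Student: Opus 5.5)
I will prove the cycle (i)$\Rightarrow$(ii)$\Rightarrow$(iii)$\Rightarrow$(iv)$\Rightarrow$(v)$\Rightarrow$(iv)$\Rightarrow$(i). Here $\mathcal{O}^{S\textrm{-}\mathrm{cov}}$ denotes channels covariant under $S$ only, and $\mathcal{S}^{S\textrm{-}\mathrm{sym}}$ denotes $S$-symmetric states. In (i) I read $\chi$ as ranging over all input states.

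\textbf{The easy implications.}

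(i)$\Rightarrow$(ii): take $\chi$ to be a trivial one-dimensional system, on which $G$ acts trivially. For a given $S$-symmetric $\sigma$, the preparation channel $\mathcal{E}(\chi)=\sigma$ is $S$-covariant. Applying (i) to it gives (ii).

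(ii)$\Rightarrow$(iii): this is immediate, since $\mathrm{Sym}(\tau_S)=S$ by \eqref{eq:canonical_reference_symmetry}, so $\tau_S$ is $S$-symmetric.

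(iv)$\Rightarrow$(v): this is trivial, because the minimum over $g$ is at most the average.

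(v)$\Rightarrow$(iv): I covariantize the POVM by setting
\[
M'_x=\frac{1}{|G|}\sum_{h}U(h)^\dagger M_{hx}U(h).
\]
Then $\mathrm{tr}(\mathcal{U}_g(\rho)M'_{gS})$ equals the average success $\frac{1}{|G|}\sum_h \mathrm{tr}(\mathcal{U}_{hg}(\rho)M_{hgS})$. This is independent of $g$ and hence at least $1-\epsilon$.

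\textbf{(iii)$\Rightarrow$(iv).} I will use monotonicity of $\Delta$ under the measurement channel together with strict monotonicity of $f_\Delta$.

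Given $\mathcal{F}$, let $P_x$ be the projector $\sum_{s\in x}|\xi_s\rangle\langle\xi_s|$ onto the coset $x\in G/S$ in $\mathcal{H}_R$. Set $M_x=\mathcal{F}^\dagger(P_x)$. By covariance of $\mathcal{F}$,
\[
\mathrm{tr}(\mathcal{U}_g(\rho)M_{gS})=\mathrm{tr}(\mathcal{F}(\rho)P_S)=:p
\]
for every $g$. So it suffices to show $p\ge 1-\epsilon$.

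To do this, apply the two-outcome channel $\omega\mapsto \mathrm{tr}(P_S\omega)\,\rho_0+\mathrm{tr}((1-P_S)\omega)\,\sigma_0$, where $\rho_0\perp\sigma_0$. It maps $\tau_S\mapsto\rho_0$ and $\mathcal{F}(\rho)\mapsto p\rho_0+(1-p)\sigma_0$. Monotonicity then gives
\[
f_\Delta(-\log p)\le \Delta(\tau_S,\mathcal{F}(\rho))\le f_\Delta(-\log(1-\epsilon)).
\]
Strict monotonicity of $f_\Delta$ yields $p\ge 1-\epsilon$.

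\textbf{(iv)$\Rightarrow$(i), the main step.} By the averaging argument above, I may assume the POVM is covariant: $U(h)^\dagger M_{hx}U(h)=M_x$. Pick coset representatives $g_x$, and let $\mathcal{E}\in\mathcal{O}^{S\textrm{-}\mathrm{cov}}$. Define
\[
\mathcal{F}(\chi\otimes\omega)=\sum_{x}\mathrm{tr}(M_x\omega)\,\mathcal{U}_{g_x}\circ\mathcal{E}\circ\mathcal{U}_{g_x}^{-1}(\chi).
\]
This is well defined independently of the representative, since $\mathcal{E}$ is $S$-covariant. One checks $G$-covariance by reindexing $x\to hx$ and using covariance of $M$ together with $hg_x\in g_{hx}S$. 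Projective phases cancel in the adjoint action.

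Next, $\mathcal{F}(\chi\otimes\rho)=p\,\mathcal{E}(\chi)+(1-p)\,\xi$ with $p=\mathrm{tr}(M_S\rho)\ge1-\epsilon$ and some state $\xi$. The remaining difficulty is bounding $\Delta(\mathcal{E}(\chi),p\mathcal{E}(\chi)+(1-p)\xi)$ by $f_\Delta(-\log p)$. I will realize this pair as the image of the orthogonal pair $(\rho_0,\,p\rho_0+(1-p)\sigma_0)$ under the channel $\rho_0\mapsto\mathcal{E}(\chi)$, $\sigma_0\mapsto\xi$, i.e.\ measure in the $\{\rho_0,\sigma_0\}$ basis and prepare. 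Monotonicity then gives the bound $f_\Delta(-\log p)\le f_\Delta(-\log(1-\epsilon))$, using that $f_\Delta$ is increasing.

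The delicate points I expect are two. First, handling the representative-independence and covariance bookkeeping under a projective $U$. Second, invoking Corollary~\ref{SMcor:distinguishability_state_independence} so that $f_\Delta$ can be used with an arbitrary orthogonal pair.
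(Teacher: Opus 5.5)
Your proof is correct and follows essentially the paper's route. Your (iii)$\Rightarrow$(iv) step is the paper's $D^\mathrm{min}$ argument with $M_x=\mathcal{F}^\dagger(\Pi_{\tau_x})$ written out directly, and your final estimate is the $D^\mathrm{max}$ bound of Lemma~\ref{lem:distingishability_evaluation_Dmax_Dmin} inlined. The only organizational difference is where the group averaging happens: you twirl the POVM first, which gives an explicit (v)$\Rightarrow$(iv), and then build a measure-and-prepare channel that is automatically $G$-covariant. The paper instead builds the conditional channel from the raw POVM and twirls the whole channel to go directly from (v) to (i); the two are equivalent.
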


Reference~\cite{li2016discriminating} shows that the best achievable $\epsilon\sim e^{-n\xi}$ with $\xi:=\min_{g, h: g^{-1}h\not\in S} C(\mathcal{U}_g(\rho), \mathcal{U}_h(\rho))$, where $C$ is the Chernoff distance $C(\rho, \sigma):=\max_{0\leq s\leq 1} -\log(\mathrm{tr}(\rho^s\sigma^{1-s})$.

When we take max-divergence $D^\mathrm{max}$, trace distance $T$, infidelity $1-F$, or min-divergence $D^\mathrm{min}$ as $\Delta$, the statement (iii) respectively means 
\begin{align}
    &\exists \mathcal{F}\in\mathcal{O}^{G\textrm{-}\mathrm{cov}} \textrm{ s.t. } D^\mathrm{max}(\tau_S \|\mathcal{F}(\rho))\leq -\log(1-\epsilon), \\
    &\exists \mathcal{F}\in\mathcal{O}^{G\textrm{-}\mathrm{cov}} \textrm{ s.t. } T(\tau_S, \mathcal{F}(\rho))\leq \epsilon, \\
    &\exists \mathcal{F}\in\mathcal{O}^{G\textrm{-}\mathrm{cov}} \textrm{ s.t. } 1-F(\tau_S, \mathcal{F}(\rho))\leq \epsilon, \\
    &\exists \mathcal{F}\in\mathcal{O}^{G\textrm{-}\mathrm{cov}} \textrm{ s.t. } D^\mathrm{min}(\tau_S \|\mathcal{F}(\rho))\leq -\log(1-\epsilon). 
\end{align}
Corollary~\ref{cor:distingishability_relation_Dmax_general_Dmin} implies that the first condition is the strongest, and the last condition is the weakest. 
An interesting aspect of this proposition is that it implies all of these conditions are equivalent. 
We note that the strict monotonicity of $f_\Delta$ is used only in the proof of (iii) $\rightarrow$ (iv).

\begin{proof}
    We note that there are many trivial relations between these statements. 
    By considering the specific case where $\mathcal{E}$ is a map from $\mathbb{C}$ to $\mathcal{L}(\mathcal{H})$ and $\mathcal{E}(c):=c\sigma$, we can confirm that (i) $\rightarrow$ (ii) is trivial. 
    By taking the specific case of $\sigma=\tau_S$, we find that (ii) $\rightarrow$ (iii) is trivial. 
    Since (iv) gives a lower bound of the worst case, and (v) gives a lower bound of the average case, (iv) $\rightarrow$ (v) is trivial. 
    Therefore, it is sufficient to prove (v) $\rightarrow$ (i) and (iii) $\rightarrow$ (iv).

    First, we show (v) $\rightarrow$ (i). 
    We take a set of POVMs $\{M_x\}_{x\in G/S}$ such that 
    \begin{align}
        \frac{1}{|G|}\sum_{g\in G} \mathrm{tr}(M_{gS}\mathcal{U}_{g}(\rho) )\geq 1-\epsilon. \label{eq:thm:conversion_state_discrimination01}
    \end{align}
    We take an arbitrary $S$-covariant map $\mathcal{E}$. 
    We define 
    \begin{align}
        \mathcal{E}'(L):=\sum_{x\in G/S} \mathcal{E}_x(\mathrm{tr}_1((M_x\otimes I) L)) \label{eq:thm:conversion_state_discrimination02}
    \end{align}
    with 
    \begin{align}
        \mathcal{E}_x:=\frac{1}{|S|} \sum_{g\in x} \mathcal{U}_g\circ\mathcal{E}\circ\mathcal{U}_{g^{-1}}\ \forall x\in G/S. \label{eq:thm:conversion_state_discrimination03}
    \end{align}
    Then, for any $g\in G$, we have 
    \begin{align}
        \mathcal{E}'\circ(\mathcal{U}_g\otimes \mathcal{U}'_g)(\rho\otimes \chi) 
        \geq \sum_{x\in G/S} \mathrm{tr}(M_x\mathcal{U}_g(\rho))\mathcal{E}_x\circ\mathcal{U}'_g(\chi) 
        \geq \mathrm{tr}(M_{gS}\mathcal{U}_g(\rho))\mathcal{E}_{gS}\circ\mathcal{U}'_g(\chi). \label{eq:thm:conversion_state_discrimination04}
    \end{align}
    By the definition of $\mathcal{E}_{gS}$, we have 
    \begin{align}
        \mathcal{E}_{gS}\circ\mathcal{U}'_g(\chi) 
        =\frac{1}{|S|} \sum_{s\in S} \mathcal{U}_{gs}\circ\mathcal{E}\circ\mathcal{U}_{(gs)^{-1}}\circ\mathcal{U}_g(\chi) 
        =\mathcal{U}_g\circ\left(\frac{1}{|S|} \sum_{s\in S} \mathcal{U}_s\circ\mathcal{E}\circ\mathcal{U}_{s^{-1}}\right)(\chi) 
        =\mathcal{U}_g\circ\mathcal{E}(\chi), \label{eq:thm:conversion_state_discrimination05}
    \end{align}
    where we used the $S$-covariance of $\mathcal{E}$ in the final equality. 
    By Eqs.~\eqref{eq:thm:conversion_state_discrimination04} and \eqref{eq:thm:conversion_state_discrimination05}, we have 
    \begin{align}
        \mathcal{E}'\circ\left(\mathcal{U}_g\otimes \mathcal{U}'_g\right)(\rho\otimes \chi) 
        \geq \mathrm{tr}(M_{gS}\mathcal{U}_g(\rho)) \mathcal{U}_g\circ\mathcal{E}(\chi), \label{eq:thm:conversion_state_discrimination06}
    \end{align}
    which implies 
    \begin{align}
        \mathcal{U}_{g^{-1}}\circ\mathcal{E}'\circ(\mathcal{U}_g \otimes \mathcal{U}'_g)(\rho\otimes \chi) 
        \geq \mathrm{tr}(M_{gS}\mathcal{U}_g(\rho)) \mathcal{E}(\chi) \label{eq:thm:conversion_state_discrimination07}
    \end{align}
    for all $g\in G$. 
    We define a $G$-covariant CPTP map $\mathcal{F}$ by 
    \begin{align}
        \mathcal{F}:=\frac{1}{|G|}\sum_{g\in G} \mathcal{U}_{g^{-1}}\circ\mathcal{E}'\circ(\mathcal{U}_g \otimes \mathcal{U}'_g). \label{eq:thm:conversion_state_discrimination08}
    \end{align}
    By Eqs.~\eqref{eq:thm:conversion_state_discrimination07}, \eqref{eq:thm:conversion_state_discrimination08}, and \eqref{eq:thm:conversion_state_discrimination01}, we have 
    \begin{align}
        \mathcal{F}(\rho\otimes \chi)
        \geq \left(\frac{1}{|G|}\sum_{g\in G}\mathrm{tr}\left(M_{gS}\mathcal{U}_g(\rho)\right)\right) \mathcal{E}(\chi) 
        \geq (1-\epsilon)\mathcal{E}(\chi), \label{eq:thm:conversion_state_discrimination09}
    \end{align}
    which implies 
    \begin{align}
        D^\mathrm{max}(\mathcal{E}(\chi)\| \mathcal{F}(\rho\otimes\chi)) 
        \leq -\log(1-\epsilon). \label{eq:thm:conversion_state_discrimination10}
    \end{align}
    By the first statement of Corollary~\ref{cor:distingishability_relation_Dmax_general_Dmin}, Eq.~\eqref{eq:thm:conversion_state_discrimination10} implies  
    \begin{align}
        \Delta(\mathcal{E}(\chi), \mathcal{F}(\rho\otimes \chi)) 
        \leq f_\Delta(-\log(1-\epsilon)). \label{eq:thm:conversion_state_discrimination11}
    \end{align}

    Next, we show (iii) $\rightarrow$ (iv). 
    We take some $G$-covariant CPTP map $\mathcal{F}$ such that 
    \begin{align}
        \Delta(\tau_S, \mathcal{F}(\rho))\leq f_\Delta(-\log(1-\epsilon)). \label{eq:thm:conversion_state_discrimination12}
    \end{align}
    Since $f_\Delta$ is strictly increasing, the assumption of the second statement of Corollary~\ref{cor:distingishability_relation_Dmax_general_Dmin} is satisfied. 
    Therefore, by the corollary, Eq.~\eqref{eq:thm:conversion_state_discrimination12} implies 
    \begin{align}
        D^\mathrm{min}(\tau_S \| \mathcal{F}(\rho))\leq -\log(1-\epsilon), \label{eq:thm:conversion_state_discrimination13}
    \end{align}
    which is equivalent to 
    \begin{align}
        \mathrm{tr}(\Pi_{\tau_S}\mathcal{F}(\rho))\geq 1-\epsilon. \label{eq:thm:conversion_state_discrimination14}
    \end{align}
    For each $x\in G/S$, we define $M_x$ by 
    \begin{align}
        M_x:=\mathcal{F}^\dag (\Pi_{\tau_x}). \label{eq:thm:conversion_state_discrimination15}
    \end{align}
    Then, by the positivity and identity-preserving property of $\mathcal{F}^\dag$, we can confirm that $\{M_x\}_{x\in G/S}$ is a POVM. 
    By the definition of $M_x$, we have for any $g\in G$, 
    \begin{align}
        \mathrm{tr}\left(M_{gS} \mathcal{U}_g(\rho)\right) 
        =\mathrm{tr} \left(\mathcal{F}^\dag \circ\mathcal{U}_g(\Pi_{\tau_S} )\cdot\mathcal{U}_g(\rho)\right) 
        =\mathrm{tr} \left(\Pi_{\tau_S}\cdot\mathcal{U}_{g^{-1}}\circ\mathcal{F}\circ\mathcal{U}_g(\rho)\right) 
        =\mathrm{tr}\left(\Pi_{\tau_S}\mathcal{F}(\rho)\right), \label{eq:thm:conversion_state_discrimination16}
    \end{align}
    where we used the $G$-covariance of $\mathcal{F}$ in the last equality. 
    By Eqs.~\eqref{eq:thm:conversion_state_discrimination14} and \eqref{eq:thm:conversion_state_discrimination16}, we get 
    \begin{align}
        \mathrm{tr}\left(M_{gS}\mathcal{U}_g(\rho)\right) 
        \geq 1-\epsilon. \label{eq:thm:conversion_state_discrimination17}
    \end{align}
    Since this holds for all $g\in G$, we have proved that 
    \begin{align}
        \min_{g\in G} \mathrm{tr}\left(M_{gS}\mathcal{U}_g(\rho)\right) 
        \geq 1-\epsilon. \label{eq:thm:conversion_state_discrimination18}
    \end{align}
\end{proof}

We first establish an elementary discrimination protocol for a finite group orbit.
Besides proving asymptotic distinguishability under unrestricted measurements, the construction will serve as a template for the restricted-measurement protocols developed later.
Its basic ingredient is to regard each orbit state as an observable and estimate its expectation value by repeated measurements.

\begin{lemma}[Discrimination of a finite unitary orbit]
    \label{lem:asymmetry_state_discrimination}
    Let $G$ be a finite group, $U$ be a projective unitary representation of $G$ on a finite-dimensional Hilbert space $\mathcal{H}$, and $\rho\in\mathcal{S}(\mathcal{H})$.
    Set $S:=\{g\in G \,|\, \mathcal{U}_g(\rho)=\rho\}$.
    There exists a constant $\alpha>0$ such that, for every positive integer $k$, there is a POVM $\{M_x^{(k)}\}_{x\in G/S}$ on $\mathcal{H}^{\otimes k}$ satisfying
    \begin{align}
        \min_{g\in G}\mathrm{tr}\left(M_{gS}^{(k)}\mathcal{U}_g(\rho)^{\otimes k}\right)\geq& 1-e^{-\alpha k}.
        \label{eq:finite_orbit_discrimination}
    \end{align}
\end{lemma}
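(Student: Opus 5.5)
Following the template announced just before the lemma, the plan is to treat each orbit state as an observable and estimate its expectation value by repeated single-copy measurements. Write $\rho_x:=\mathcal{U}_g(\rho)$ for $x=gS\in G/S$; this is well defined, and $\rho_x\neq\rho_y$ for $x\neq y$. Each $\rho_x$ satisfies $0\le\rho_x\le I$, so $\{\rho_x, I-\rho_x\}$ is a two-outcome POVM. Its acceptance probability on the true state $\rho_z$ is $p_{x|z}:=\mathrm{tr}(\rho_x\rho_z)$. By Cauchy--Schwarz for the Hilbert--Schmidt inner product, $p_{x|z}\le \sqrt{\mathrm{tr}\rho_x^2\,\mathrm{tr}\rho_z^2}=\mathrm{tr}\rho^2$, with equality iff $\rho_x=\rho_z$ (the two operators have equal norm). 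Since $G/S$ is finite, the gap
\begin{align}
    \delta:=\mathrm{tr}\rho^2-\max_{x\neq z}\mathrm{tr}(\rho_x\rho_z)
\end{align}
is strictly positive. If $|G/S|=1$ the lemma is trivial with $M^{(k)}=I$, so assume $|G/S|\ge 2$.

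The protocol is as follows. Split the $k$ copies into $|G/S|$ blocks, each of size $m:=\lfloor k/|G/S|\rfloor$, and discard the remainder. On block $x$, measure $\{\rho_x,I-\rho_x\}$ on every copy and record the empirical frequency $\hat p_x$. Output the coset $\hat x:=\arg\max_x \hat p_x$, breaking ties by a fixed order. This is a product measurement followed by classical post-processing, so it defines a POVM $\{M^{(k)}_x\}_{x\in G/S}$ on $\mathcal{H}^{\otimes k}$. It is built from the orbit states alone, which keeps it suitable as a template for the restricted cases treated later.

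For the error bound, fix the true coset $z=gS$. Under $\rho_z^{\otimes k}$, the outcomes within block $x$ are i.i.d.\ Bernoulli with mean $p_{x|z}$. If every $\hat p_x$ lies within $\delta/3$ of its mean, then $\hat p_z>\hat p_x$ for all $x\neq z$ and the output is correct. Hoeffding's inequality, together with a union bound over the $|G/S|$ blocks, bounds the failure probability by $2|G/S|e^{-2m\delta^2/9}$. Using $m\ge k/|G/S|-1$, this is at most $C e^{-\beta k}$ with $\beta:=2\delta^2/(9|G/S|)$ and a constant $C$.

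The only delicate step is converting $Ce^{-\beta k}$ into the exact form $e^{-\alpha k}$ for every $k\ge1$, including small $k$. For $k\ge k_0$ we have $Ce^{-\beta k}\le e^{-\beta k/2}$. For $1\le k<k_0$ I would use the trivial protocol on one copy: measure with the orbit-state POVM on a single copy, which succeeds with probability $\mathrm{tr}(\rho_z^2)/\sum_x\mathrm{tr}(\rho_x\rho_z)$ after normalizing via the pretty-good measurement. Any protocol whose success probability is strictly larger than $0$ for all $z$ suffices here, since then the success probability is at least $1-e^{-\alpha_0 k}$ for some $\alpha_0>0$ uniformly over the finitely many small $k$. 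Taking $\alpha$ as the minimum of $\beta/2$ and these finitely many constants finishes the proof; the pretty-good measurement gives positive success probability because $\mathrm{tr}\rho_z^2>0$.
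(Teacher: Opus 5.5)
Your proposal is correct and follows essentially the paper's route: block-wise single-copy estimation of $\mathrm{tr}(\rho_x\,\cdot\,)$ on $|G/S|$ disjoint blocks, Hoeffding plus a union bound, and a separate fallback for small $k$. The differences are inessential: the paper measures $O_x=\rho_x$ spectrally, thresholds at $\mathrm{tr}\rho^2-\delta/2$ and decodes when exactly one test accepts, whereas you use the Bernoulli test $\{\rho_x,I-\rho_x\}$ with argmax decoding.

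One slip in your small-$k$ step: $\{\rho_x\}$ is not itself a POVM, and the pretty-good measurement succeeds with probability $\mathrm{tr}(\Sigma^{-1/2}\rho_z\Sigma^{-1/2}\rho_z)$, where $\Sigma=\sum_x\rho_x$, not $\mathrm{tr}(\rho_z^2)/\sum_x\mathrm{tr}(\rho_x\rho_z)$. This probability is still positive, so your argument survives, but the paper's uniform guess $M_x^{(k)}=I/|G/S|$ gives success $1/|G/S|>0$ with no computation.
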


\begin{proof}
    Write $X:=G/S$ and $N:=|X|$.
    For each $x=gS\in X$, we define 
    \begin{align}
        &\rho_x:=\mathcal{U}_g(\rho), \\
        &O_x:=\rho_x.
    \end{align}
    The state $\rho_x$ is independent of the choice of representative $g$, and different cosets give different states.
    If $N=1$, the POVM with its only effect equal to the identity has zero error for every $k$.
    We therefore assume $N\geq2$.

    We first construct a binary test for each candidate $x\in X$.
    Set $p_\rho:=\mathrm{tr}\left(\rho^2\right)$.
    Since all orbit states have the same purity, for $x\neq y$ we have
    \begin{align}
        \mathrm{tr}\left(\rho_xO_x\right)-\mathrm{tr}\left(\rho_yO_x\right) 
        =p_\rho-\mathrm{tr}\left(\rho_y\rho_x\right) 
        =\frac{1}{2}\|\rho_x-\rho_y\|_2^2>0.
    \end{align}
    Consequently, the finite set of candidates has a strictly positive separation
    \begin{align}
        \delta_\rho:=& \min_{x,y\in X:\,x\neq y}\left(p_\rho-\mathrm{tr}\left(\rho_y\rho_x\right)\right)>0.
        \label{eq:finite_orbit_observable_gap}
    \end{align}
    Thus, the expectation of $O_x$ is $p_\rho$ on the candidate state $\rho_x$, whereas it is at most $p_\rho-\delta_\rho$ on every other orbit state.

    Fix a positive integer $l$.
    On $l$ independent copies of the unknown state, measure the observable $O_x$ separately on each copy.
    Let
    \begin{align}
        O_x=& \sum_{a\in\Lambda}aP_{x,a}
    \end{align}
    be its spectral decomposition, where the set of eigenvalues $\Lambda$ is the same for all $x$ and is contained in $[0,\|\rho\|_\infty]$.
    Denote the outcomes by $Z_{x,1},\ldots,Z_{x,l}$ and their empirical average by
    \begin{align}
        \widehat\mu_{x,l}:=& \frac1l\sum_{j=1}^{l}Z_{x,j}.
    \end{align}
    If the unknown state is $\rho_y$, then $\mathbb{E}_{\rho_y}\left(Z_{x,j}\right)=\mathrm{tr}\left(\rho_yO_x\right)$.
    We accept the candidate $x$ when $\widehat\mu_{x,l}\geq\theta$, where $\theta:=p_\rho-\delta_\rho/2$.
    The corresponding acceptance effect is the projection
    \begin{align}
        T_x^{(l)}:=& \sum_{\substack{(a_1,\ldots,a_l)\in\Lambda^l\\ l^{-1}\sum_{j=1}^{l}a_j\geq\theta}}P_{x,a_1}\otimes\cdots\otimes P_{x,a_l}.
        \label{eq:finite_orbit_binary_test}
    \end{align}
    For the correct candidate, rejection requires a downward deviation of at least $\delta_\rho/2$ from the mean.
    For an incorrect candidate, acceptance requires an upward deviation of at least $\delta_\rho/2$.
    The one-sided Hoeffding inequality~\cite{hoeffding1963probability} therefore gives
    \begin{align}
        &\mathrm{tr}\left(\left(I-T_x^{(l)}\right)\rho_x^{\otimes l}\right) 
        \leq e^{-c_\rho l}, \\
        &\mathrm{tr}\left(T_x^{(l)}\rho_y^{\otimes l}\right) 
        \leq e^{-c_\rho l}\ \text{for }y\neq x,
        \label{eq:finite_orbit_binary_errors}
    \end{align}
    where
    \begin{align}
        c_\rho:=& \frac{\delta_\rho^2}{2\|\rho\|_\infty^2}>0.
    \end{align}

    We next combine these binary tests to identify the unknown coset.
    Divide $Nl$ copies into $N$ disjoint blocks of $l$ copies, and apply the test $\{T_x^{(l)},I-T_x^{(l)}\}$ to the block labeled by $x$.
    In particular, observables associated with different candidates are never required to be measured on the same copy.
    Write $b=(b_x)_{x\in X}\in\{0,1\}^X$ for the resulting bit string, with $b_x=1$ denoting acceptance.
    The effect corresponding to $b$ is
    \begin{align}
        &F_b^{(l)}:=\bigotimes_{x\in X}R_{x,b_x}^{(l)}, \\
        &R_{x,1}^{(l)}:=T_x^{(l)}, \\
        &R_{x,0}^{(l)}:=I-T_x^{(l)}. 
    \end{align}
    Let $e_x$ denote the bit string whose only nonzero entry is at $x$.
    Choose a decoding function $d:\{0,1\}^X\to X$ satisfying $d(e_x)=x$ for every $x$, and assign all remaining bit strings to an arbitrary fixed candidate.
    This defines a POVM
    \begin{align}
        \widetilde M_x^{(l)}:=& \sum_{b:\,d(b)=x}F_b^{(l)}.
    \end{align}
    For an input $\rho_x^{\otimes Nl}$, decoding is correct whenever the test for $x$ accepts and all other tests reject.
    By the union bound and Eq.~\eqref{eq:finite_orbit_binary_errors},
    \begin{align}
        \mathrm{tr}\left(\widetilde M_x^{(l)}\rho_x^{\otimes Nl}\right)\geq& 1-Ne^{-c_\rho l}
        \label{eq:finite_orbit_block_error}
    \end{align}
    for every $x\in X$.

    For a total of $k\geq N$ copies, set $l:=\lfloor k/N\rfloor$, use the above protocol on $Nl$ copies, and discard the remaining copies.
    The resulting POVM satisfies
    \begin{align}
        1-\min_{x\in X}\mathrm{tr}\left(M_x^{(k)}\rho_x^{\otimes k}\right)
        \leq Ne^{-c_\rho\lfloor k/N\rfloor}
        \leq Ne^{c_\rho}e^{-c_\rho k/N}.
    \end{align}
    Hence, with $\alpha_0:=c_\rho/(2N)$, there is an integer $k_0\geq N$ such that the error is at most $e^{-\alpha_0k}$ for all $k\geq k_0$.
    For $1\leq k<k_0$, instead use the uniformly random guess $M_x^{(k)}:=I/N$.
    Taking
    \begin{align}
        \alpha:=& \min\left\{\alpha_0,-\frac{1}{k_0}\log\left(1-\frac{1}{N}\right)\right\}>0
    \end{align}
    establishes Eq.~\eqref{eq:finite_orbit_discrimination} for every positive integer $k$.
\end{proof}

The construction separates a measurement-dependent step from a purely classical decoding step.
The former produces a bounded unbiased estimator of $\mathrm{tr}\left(\tau O_x\right)$ for each candidate, while the latter combines the resulting binary tests on disjoint blocks.
In the restricted resource theories considered below, we replace the direct spectral measurement of $O_x$ by an allowed measurement with the same expectation value and a uniformly bounded outcome.
Independent repetition then gives the same thresholding and decoding construction, with an error exponent determined by the corresponding outcome bound.

We next recall the asymptotic state-conversion structure of finite-group asymmetry.
For a finite group, the symmetry subgroup completely determines whether asymptotic conversion is possible: if the symmetry subgroup of the input is contained in that of the output, arbitrarily large finite rates are achievable, whereas otherwise no positive rate is possible.
This simple structure will provide the symmetry-side obstruction in the multi-resource theories considered below.

\begin{theorem} \label{thm:asymmetry_rate}
    Let $G$ be a finite group, $U$ and $U'$ be projective unitary representations of $G$ on the input and output space $\mathcal{H}$ and $\mathcal{H}'$, and $\rho$ and $\sigma$ be input and output states. 
    Then, 
    \begin{align}
        R_{G\text{-}\mathrm{cov}}(\rho\to\sigma)= 
        \begin{cases}
            \infty & \text{if }\mathrm{Sym}(\rho)\subset\mathrm{Sym}(\sigma), \\
            0 & \text{otherwise}. 
        \end{cases} \label{eq:thm:asymmetry_rate01}
    \end{align}
\end{theorem}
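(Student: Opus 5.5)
The plan is to handle the two cases separately, combining Lemma~\ref{lem:asymmetry_state_discrimination} with the direction (v)$\rightarrow$(i) of Theorem~\ref{thm:conversion_state_discrimination} for achievability, and using a monotonicity argument for the converse. Throughout, the rate is the supremum of $r$ such that $\rho^{\otimes n}\to\sigma^{\otimes \lfloor rn\rfloor}$ is possible by $G$-covariant CPTP maps with error vanishing as $n\to\infty$, say in trace distance.

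\emph{Achievability when $\mathrm{Sym}(\rho)\subset\mathrm{Sym}(\sigma)$.} Write $S=\mathrm{Sym}(\rho)$. The input $\rho^{\otimes n}$ has the same symmetry subgroup $S$ under $U^{\otimes n}$: if $g\notin S$, then $\mathcal{U}_g(\rho)\neq\rho$, and tensor powers of distinct states remain distinct. Lemma~\ref{lem:asymmetry_state_discrimination} applied to $\rho$ gives a POVM on $\mathcal{H}^{\otimes n}$ with worst-case success probability at least $1-e^{-\alpha n}$. This is statement (iv), hence (v), of Theorem~\ref{thm:conversion_state_discrimination} for the state $\rho^{\otimes n}$ with $\epsilon=e^{-\alpha n}$. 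Fix any $r>0$ and set $m=\lfloor rn\rfloor$. The state $\sigma^{\otimes m}$ is $S$-symmetric, since every $s\in S$ lies in $\mathrm{Sym}(\sigma)$. Hence the replacement channel $\mathcal{E}(c)=c\,\sigma^{\otimes m}$ is $S$-covariant, i.e., $\sigma^{\otimes m}\in\mathcal{S}^{S\text{-sym}}$. Statement (ii) with $\Delta=\mathrm{T}$ then yields a $G$-covariant $\mathcal{F}_n$ with $\mathrm{T}(\sigma^{\otimes m},\mathcal{F}_n(\rho^{\otimes n}))\le e^{-\alpha n}\to0$. The error bound does not depend on $m$, so every finite $r$ is achievable and the rate is $\infty$.

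\emph{Converse when $\mathrm{Sym}(\rho)\not\subset\mathrm{Sym}(\sigma)$.} Pick $s\in S$ with $\mathcal{U}'_s(\sigma)\neq\sigma$. For any $G$-covariant $\mathcal{F}$, we have $\mathcal{U}'^{\otimes m}_s(\mathcal{F}(\rho^{\otimes n}))=\mathcal{F}(\mathcal{U}_s^{\otimes n}(\rho^{\otimes n}))=\mathcal{F}(\rho^{\otimes n})$, so every output is $s$-invariant. Let $\omega_n=\mathcal{F}_n(\rho^{\otimes n})$ satisfy $\mathrm{T}(\omega_n,\sigma^{\otimes m})\le\epsilon_n$. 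By unitary invariance of the trace distance, $\mathrm{T}(\sigma^{\otimes m},\mathcal{U}'^{\otimes m}_s(\sigma^{\otimes m}))\le 2\epsilon_n$. Now $\sigma$ and $\sigma':=\mathcal{U}'_s(\sigma)$ are distinct, so their fidelity satisfies $F(\sigma,\sigma')=:f<1$. Multiplicativity gives $F(\sigma^{\otimes m},\sigma'^{\otimes m})=f^m$. The Fuchs--van de Graaf inequality then gives $\mathrm{T}\ge 1-\sqrt{f^m}\to1$ as $m\to\infty$. If $r>0$, then $m=\lfloor rn\rfloor\to\infty$, so $2\epsilon_n\to 1$, which contradicts $\epsilon_n\to0$. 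Hence no positive rate is achievable, and $R=0$.

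\emph{Main obstacle.} The only delicate point is matching the rate definition to the one the paper uses; the details may need adjusting there. If it allows $m=\lfloor rn\rfloor$ with $r$ infinite via sublinear inputs, or counts $r=0$ separately, the same two arguments apply. The achievability step rests entirely on the $m$-independence of the error from Theorem~\ref{thm:conversion_state_discrimination}, which I expect to go through cleanly.
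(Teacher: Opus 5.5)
Your proposal is correct and follows the paper's proof. Achievability combines Lemma~\ref{lem:asymmetry_state_discrimination} with Theorem~\ref{thm:conversion_state_discrimination} to obtain the $m$-independent error bound $e^{-\alpha n}$. The converse uses the fact that $G$-covariance forces $\mathcal{F}(\rho^{\otimes n})$ to be $\mathcal{U}'_s$-invariant, followed by the triangle inequality.

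The only difference is in the converse. The paper compares with a single output copy, which already gives the constant lower bound $\tfrac12\mathrm{T}(\sigma,\mathcal{U}'_s(\sigma))>0$ on the error. Your fidelity-multiplicativity and Fuchs--van de Graaf step is not needed for the rate statement, but it yields the stronger fact that the error is asymptotically at least $1/2$.
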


\begin{proof}
    First, we consider the case where $\mathrm{Sym}(\rho)\subset\mathrm{Sym}(\sigma)$. 
    By Lemma~\ref{lem:asymmetry_state_discrimination}, there exists some sequence of POVMs $\{M_x^{(n)}\}_{x\in G/S}$ such that 
    \begin{align}
        \mathrm{tr}\left(M_{gS}^{(n)}\mathcal{U}_{gS}\left(\rho^{\otimes n}\right)\right) 
        \geq 1-e^{-\alpha n}, \label{eq:thm:asymmetry_rate02}
    \end{align}
    By Theorem~\ref{thm:conversion_state_discrimination}, this implies that for any $r\in (0, \infty)$, there exists some $G$-covariant CPTP map $\mathcal{E}$ such that 
    \begin{align}
        \mathrm{T}(\mathcal{E}(\rho^{\otimes n}), \sigma^{\otimes \lfloor rn \rfloor})\leq e^{-\alpha n}, \label{eq:thm:asymmetry_rate03}
    \end{align}
    which implies $R_{G\text{-}\mathrm{cov}}(\rho\to\sigma)=\infty$.

    Next, we consider the case where $\mathrm{Sym}(\rho)\not\subset\mathrm{Sym}(\sigma)$. 
    We take some $g\in G$ such that $g\in\mathrm{Sym}(\rho)$ but $g\not\in\mathrm{Sym}(\sigma)$. 
    We suppose that there exists some $n\in\mathbb{N}$ and some $G$-covariant CPTP map $\mathcal{E}$ such that 
    \begin{align}
        \mathrm{T}\left(\mathcal{E}(\rho^{\otimes n}), \sigma\right) 
        <\frac{1}{2}\mathrm{T}(\sigma, \mathcal{U}'_g(\sigma)). \label{eq:thm:asymmetry_rate04}
    \end{align}
    By the triangle inequality, we have 
    \begin{align}
        \mathrm{T}\left(\sigma, \mathcal{U}_g(\sigma)\right) 
        \leq \mathrm{T}\left(\mathcal{E}(\rho^{\otimes n}), \sigma\right)
        +\mathrm{T}\left(\mathcal{E}(\rho^{\otimes n}), \mathcal{U}'_g(\sigma) \right). \label{eq:thm:asymmetry_rate05}
    \end{align}
    Since $g$ is a symmetry subgroup of $\rho$, we have 
    \begin{align}
        \mathrm{T}\left(\mathcal{E}(\rho^{\otimes n}), \mathcal{U}'_g(\sigma) \right) 
        =\mathrm{T}\left(\mathcal{E}\circ\mathcal{U}_g(\rho)^{\otimes n}, \mathcal{U}'_g(\sigma)\right)
        =\mathrm{T} \left(\mathcal{U}_{g^{-1}}\circ\mathcal{E}\circ\mathcal{U}_g(\rho)^{\otimes n}, \sigma\right) 
        =\mathrm{T}\left(\mathcal{E}(\rho^{\otimes n}), \sigma \right), \label{eq:thm:asymmetry_rate06}
    \end{align}
    where we used the $G$-covariance in the last equality. 
    By Eqs.~\eqref{eq:thm:asymmetry_rate05} and \eqref{eq:thm:asymmetry_rate06}, we get 
    \begin{align}
        \mathrm{T}\left(\sigma, \mathcal{U}'_g(\sigma)\right) 
        \leq 2\mathrm{T}\left(\mathcal{E}(\rho^{\otimes n}), \sigma \right), \label{eq:thm:asymmetry_rate07}
    \end{align}
    which contradicts with Eq.~\eqref{eq:thm:asymmetry_rate04}. 
    Therefore, for any $n\in\mathbb{N}$, there does not exist any $G$-covariant CPTP map $\mathcal{E}$ satisfying Eq.~\eqref{eq:thm:asymmetry_rate04}, which implies $R_{G\text{-}\mathrm{cov}}(\rho\to\sigma)=0$. 
\end{proof}

\section{Multi-resource theories} \label{sec:multi_resource}

We now consider a general resource theory subject to an additional finite-group symmetry constraint.
Our aim is to identify conditions under which imposing the two restrictions simultaneously does not reduce the asymptotic conversion rate beyond the restrictions imposed separately.
We formulate this principle in terms of control-admissible measurements and then derive a general lower bound on the conversion rate.

\subsection{Setup}

First, we define general resource theories.

\begin{definition}
    Let $\{\mathfrak{S}^\mathrm{A}\}_\mathrm{A}$ be a family of state sets on system $\mathrm{A}$ and $\{\mathfrak{O}^{\mathrm{A}\to\mathrm{B}}\}_{\mathrm{A}, \mathrm{B}}$ be a family of operation sets from system $\mathrm{A}$ to system $\mathrm{B}$. 
    We say that a pair $(\{\mathfrak{S}^\mathrm{A}\}_\mathrm{A}, \{\mathfrak{O}^{\mathrm{A}\to\mathrm{B}}\}_{\mathrm{A}, \mathrm{B}})$ is a resource theory if it satisfies 
    \begin{align}
        &\forall \rho\in\mathfrak{S}^{\mathrm{A}},\ \forall \mathcal{E}\in\mathfrak{O}^{\mathrm{A}\to\mathrm{B}},\ \mathcal{E}(\rho)\in\mathfrak{S}^\mathrm{B}, \\
        &\mathrm{id}^\mathrm{A}\in\mathfrak{O}^{\mathrm{A}\to\mathrm{A}}, \\
        &\forall \mathcal{E}_1\in\mathfrak{O}^{\mathrm{A}\to\mathrm{B}},\ \mathcal{E}_2\in\mathfrak{O}^{\mathrm{B}\to\mathrm{C}},\ \mathcal{E}_2\circ\mathcal{E}_1\in\mathfrak{O}^{\mathrm{A}\to\mathrm{C}}. 
    \end{align}
\end{definition}

Throughout this paper, we restrict our attention to the resource theories that satisfy the following conditions.

\begin{assumption}
    Let $(\{\mathfrak{S}^\mathrm{A}\}_\mathrm{A}, \{\mathfrak{O}^{\mathrm{A}\to\mathrm{B}}\}_{\mathrm{A}, \mathrm{B}})$ be a resource theory. 
    We assume that the resource theory satisfies the following four properties. \\
    (i) Tensor closure: 
    \begin{align}
        &\forall \rho\in\mathfrak{S}^\mathrm{A}_1, \ \sigma\in\mathfrak{S}^\mathrm{A}_2, \ \rho\otimes\sigma\in\mathfrak{S}^{\mathrm{A}_1\mathrm{A}_2}, \\
        &\forall\mathcal{E}_1\in\mathfrak{O}^{\mathrm{A}_1\to\mathrm{B}_1}, \ \forall\mathcal{E}_2\in\mathfrak{O}^{\mathrm{A}_2\to\mathrm{B}_2}, \ \mathcal{E}_1\otimes \mathcal{E}_2\in\mathfrak{O}^{\mathrm{A}_1\mathrm{A}_2\to\mathrm{B}_1\mathrm{B}_2}
    \end{align}
    (ii) Ancilla preparation: 
    \begin{align}
        \forall\sigma\in\mathfrak{S}^\mathrm{X},\ \mathcal{E}: \rho\mapsto\rho\otimes \sigma\in\mathfrak{O}^{\mathrm{A}\to\mathrm{AX}}. 
    \end{align}
    (iii) Closure under discarding: 
    \begin{align}
        \mathrm{tr}_\mathrm{X}\in\mathfrak{O}^{\mathrm{AX}\to\mathrm{A}}. 
    \end{align}
    (iv) Convexity: 
    \begin{align}
        \forall\mathcal{E}_1, \mathcal{E}_2\in\mathfrak{O}^{\mathrm{A}\to\mathrm{{B}}},\ \forall p\in[0, 1],\ p\mathcal{E}_1+(1-p)\mathcal{E}_2\in\mathfrak{O}^{\mathrm{A}\to\mathrm{{B}}}. 
    \end{align}
\end{assumption}

Next, we introduce symmetry in resource theories. 
We assume that we are given some fixed group $G$ and every physical system $\mathrm{A}$ has projective unitary representation $U^\mathrm{A}$ of $G$.  
In this paper, we deal with the symmetry constraints that satisfy the following conditions.

\begin{assumption} \label{asm:resource_symmetry_compatibility}
    Let $(\{\mathfrak{S}^\mathrm{A}\}_\mathrm{A}, \{\mathfrak{O}^{\mathrm{A}\to\mathrm{B}}\}_{\mathrm{A}, \mathrm{B}})$ be a resource theory, $G$ be a group and $\{U^\mathrm{A}\}_\mathrm{A}$ be a family of projective unitary representations $U^\mathrm{A}$ of $G$ in the system $\mathrm{A}$. 
    We assume that the resource theory $(\{\mathfrak{S}^\mathrm{A}\}_\mathrm{A}, \{\mathfrak{O}^{\mathrm{A}\to\mathrm{B}}\}_{\mathrm{A}, \mathrm{B}})$ is compatible with the symmetry $(G, \{U^\mathrm{A}\}_\mathrm{A})$ if 
    \begin{align}
        &\mathcal{U}_g^\mathrm{A}(\mathfrak{S}^\mathrm{A})=\mathfrak{S}^\mathrm{A}, \\
        &\mathcal{U}_g^\mathrm{B} \circ\mathfrak{O}^{\mathrm{A}\to\mathrm{B}} \circ\mathcal{U}_{g^{-1}}^\mathrm{A}=\mathfrak{O}^{\mathrm{A}\to\mathrm{B}}. 
    \end{align}
    for all $g\in G$. 
\end{assumption}

This condition means that the sets of free states and free operations does not change depending on the symmetry operations. 
We stress that this condition does not imply that every free state is $G$-invariant or every free operation is $G$-covariant.

Finally, We define the symmetry constrained version of resource theories.

\begin{definition}
    Let $G$ be a group, and $(\{\mathfrak{S}^\mathrm{A}\}_\mathrm{A}, \{\mathfrak{O}^{\mathrm{A}\to\mathrm{B}}\}_{\mathrm{A}, \mathrm{B}})$ be a $G$-compatible resource theory. 
    The free states in the $G$-constrained resource theory $\mathrm{R}_{G\textrm{-}\mathrm{sym}}$ is defined by free states $\{\mathfrak{S}_{G\textrm{-}\mathrm{inv}}^\mathrm{A}\}_\mathrm{A}$ and free operations $\{\mathfrak{O}^{G\textrm{-}\mathrm{cov}}(\mathrm{A}\to\mathrm{B})\}_{\mathrm{A}, \mathrm{B}}$, which are defined by 
    \begin{align}
        &\mathfrak{S}_{G\textrm{-}\mathrm{inv}}^\mathrm{A}:=\{\rho\in\mathfrak{S}^\mathrm{A}\ |\ \forall g\in G,\ \mathcal{U}_g^\mathrm{A}(\rho)=\rho\}, \\
        &\mathfrak{O}_{G\textrm{-}\mathrm{cov}}^{\mathrm{A}\to\mathrm{B}}:=\{\mathcal{E}\in\mathfrak{O}^{\mathrm{A}\to\mathrm{B}}\ |\ \forall g\in G,\ \mathcal{E}\circ\mathcal{U}_g^\mathrm{A}=\mathcal{U}_g^\mathrm{B}\circ\mathcal{E}\}. 
    \end{align}
\end{definition}

We will compare asymptotic state-conversion rates under the original resource theory and under its symmetry-constrained counterpart.
We therefore introduce the notation used throughout the paper.

\begin{definition}[Asymptotic conversion rate]
    \label{def:conversion_rate}
    Let $(\{\mathfrak{S}^\mathrm{A}\}_\mathrm{A}, \{\mathfrak{O}^{\mathrm{A}\to \mathrm{B}}\}_\mathrm{A,B})$ be a resource theory.
    For states $\rho$ and $\sigma$, a rate $r\geq0$ is achievable under $\mathfrak{O}$ if there exists a sequence of channels $\mathcal{E}^{(n)}\in\mathfrak{O}$ such that
    \begin{align}
        \lim_{n\to\infty}
        \mathrm{T}\left(
        \mathcal{E}^{(n)}\left(\rho^{\otimes n}\right),
        \sigma^{\otimes\lfloor rn\rfloor}
        \right)=&0.
    \end{align}
    The asymptotic conversion rate from $\rho$ to $\sigma$ under $\mathfrak{O}$ is defined by
    \begin{align}
        R_{\mathfrak{O}}(\rho\to\sigma):=&
        \sup\{r\geq0 \,|\, r\textrm{ is achievable under }\mathcal{O}\}.
    \end{align}
\end{definition}

For the symmetry-constrained resource theory introduced below, we write
\begin{align}
    R_{G\text{-cov}, \mathfrak{O}}(\rho\to\sigma)
\end{align}
for the corresponding conversion rate.
Thus, the central question of this work is whether the relation 
\begin{align}
    R_{{G\text{-cov}}, \mathfrak{O}}(\rho\to\sigma)
    =R_\mathfrak{O}(\rho\to\sigma) 
\end{align}
holds or not. 
Since $\mathfrak{O}_{G\text{-cov}}\subset\mathfrak{O}$, we always have
\begin{align}
    R_{{G\text{-cov}}, \mathfrak{O}}(\rho\to\sigma)\leq
    R_\mathfrak{O}(\rho\to\sigma).
\end{align}
The purpose of the results below is to identify conditions under which the converse inequality holds.

\subsection{Control-admissible POVM}

The discrimination step required for state conversion involves more than obtaining a classical estimate of the group element.
The measurement outcome must also be usable to select a subsequent free operation without leaving the underlying resource theory.
We therefore introduce an operational class of measurements defined by this conditional-control property.

\begin{definition} \label{def:control_admissible_POVM}
    Let $X$ be a finite set and $\{M_x\}_{x\in X}$ be a sub-POVM. 
    $\{M_x\}_{x\in X}$ is called control-admissible sub-POVM if a map defined by 
    \begin{align}
        \mathcal{E}(L):=\sum_{x\in X\cup\{\perp\}} \mathcal{E}_x\left(\mathrm{tr}_1((M_x\otimes I)L)\right) 
    \end{align}
    is a free operation for all free operations $\mathcal{E}_x$ with $x\in X\cup\{\perp\}$, where 
    \begin{align}
        M_\perp:=I-\sum_{x\in X} M_x. 
    \end{align}
\end{definition}

We now show how accessible symmetry information can be converted into covariance.
Suppose that the input orbit can be identified by a control-admissible measurement and that the desired free operation is already covariant under a subgroup $K$.
The measurement outcome can then be used to rotate this operation into the appropriate symmetry sector, after which group averaging produces a fully $G$-covariant free operation.
The following lemma quantifies the resulting approximation error.

\begin{theorem} \label{thm:conversion_control_admissible_POVM}
    Let $G$ be a finite group, $K$ be a subgroup of $G$, $\epsilon\in [0, 1)$, $\rho$ be a state, there exist some control-admissible sub-POVM $\{M_x\}_{x\in G/K}$ such that 
    \begin{align}
        \frac{1}{|G|} \sum_{g\in G} \mathrm{tr}(M_{gK} \mathcal{U}_g(\rho))\geq 1-\epsilon, 
    \end{align}
    $\mathcal{E}\in\mathfrak{O}$ be a free operation and $K$-covariant. 
    Then, there exists some $G$-covariant free operation $\mathcal{F}\in\mathfrak{O}_{G\text{-}\mathrm{cov}}$ such that for any state $\chi$, 
    \begin{align}
        \Delta(\mathcal{E}(\chi), \mathcal{F}(\rho\otimes \chi))\leq f_\Delta(-\log(1-\epsilon)). 
    \end{align}
\end{theorem}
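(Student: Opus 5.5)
The plan is to rerun the (v) $\rightarrow$ (i) argument of Theorem~\ref{thm:conversion_state_discrimination}, with the subgroup $K$ in place of $S$, and to check at each step that the constructed maps stay inside $\mathfrak{O}$. For each coset $x\in G/K$ we set
\begin{align}
    \mathcal{E}_x:=\frac{1}{|K|}\sum_{g\in x}\mathcal{U}_g\circ\mathcal{E}\circ\mathcal{U}_{g^{-1}}.
\end{align}
Assumption~\ref{asm:resource_symmetry_compatibility} puts each conjugate $\mathcal{U}_g\circ\mathcal{E}\circ\mathcal{U}_{g^{-1}}$ in $\mathfrak{O}$. Convexity then gives $\mathcal{E}_x\in\mathfrak{O}$. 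Since $\mathcal{E}$ is $K$-covariant, every term in the sum over $x=gK$ equals $\mathcal{U}_g\circ\mathcal{E}\circ\mathcal{U}_{g^{-1}}$, so $\mathcal{E}_{gK}=\mathcal{U}_g\circ\mathcal{E}\circ\mathcal{U}_{g^{-1}}$.

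For the failure outcome $\perp$ we take $\mathcal{E}_\perp:=\mathcal{E}$, which is free. We then define
\begin{align}
    \mathcal{E}'(L):=\sum_{x\in G/K\cup\{\perp\}}\mathcal{E}_x\left(\mathrm{tr}_1((M_x\otimes I)L)\right),
\end{align}
which is free by Definition~\ref{def:control_admissible_POVM}. Every summand is positive, so dropping all but the $x=gK$ term gives the operator inequality
\begin{align}
    \mathcal{E}'\circ(\mathcal{U}_g\otimes\mathcal{U}'_g)(\rho\otimes\chi)\geq\mathrm{tr}(M_{gK}\mathcal{U}_g(\rho))\,\mathcal{U}_g\circ\mathcal{E}(\chi).
\end{align}
This is exactly Eq.~\eqref{eq:thm:conversion_state_discrimination06} with $S$ replaced by $K$.

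Next we twirl:
\begin{align}
    \mathcal{F}:=\frac{1}{|G|}\sum_{g\in G}\mathcal{U}_{g^{-1}}\circ\mathcal{E}'\circ(\mathcal{U}_g\otimes\mathcal{U}'_g).
\end{align}
Each term is free by Assumption~\ref{asm:resource_symmetry_compatibility}, applied to the composite input system with the tensor-product representation. Convexity makes the average free, and the average is $G$-covariant by construction, so $\mathcal{F}\in\mathfrak{O}_{G\text{-}\mathrm{cov}}$. Averaging the operator inequality over $g$ and using the hypothesis on the average success probability gives $\mathcal{F}(\rho\otimes\chi)\geq(1-\epsilon)\mathcal{E}(\chi)$. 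Hence $D^{\max}(\mathcal{E}(\chi)\|\mathcal{F}(\rho\otimes\chi))\leq-\log(1-\epsilon)$, and the first statement of Corollary~\ref{cor:distingishability_relation_Dmax_general_Dmin} converts this into the claimed bound on $\Delta$.

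The main obstacle is the freeness bookkeeping, not the inequality. Two points need care. The first is the sub-POVM: the leftover weight $M_\perp$ has to be routed to some free operation with the right output system, and $\mathcal{E}$ itself works because only positivity is used. The second is the input ordering: $\mathcal{E}'$ acts on $\rho\otimes\chi$ with the measured system first, and this must match the $\mathrm{tr}_1$ convention in Definition~\ref{def:control_admissible_POVM}. A smaller point is that with projective representations the conjugations $\mathcal{U}_g$ are still genuine group actions on channels, since phases cancel. This is what makes $\mathcal{E}_{gK}$ well defined on cosets and makes the twirl covariant.
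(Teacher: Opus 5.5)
Your proposal is correct and follows the paper's proof essentially step for step: the coset-indexed conjugates $\mathcal{E}_x$, the controlled map $\mathcal{E}'$, the operator inequality obtained by dropping positive terms, the $G$-twirl $\mathcal{F}$, and the conversion from $D^{\max}$ to $\Delta$ via Corollary~\ref{cor:distingishability_relation_Dmax_general_Dmin}. The only deviation is that you take $\mathcal{E}_\perp:=\mathcal{E}$, whereas the paper uses the fully twirled $\mathcal{E}_\perp:=\frac{|K|}{|G|}\sum_{x\in G/K}\mathcal{E}_x$; this is harmless, because only positivity of the $\perp$ branch is used and $G$-covariance comes from the final twirl.
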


We can prove this theorem in a similar way to the former half of the proof of Theorem~\ref{thm:conversion_state_discrimination}.

\begin{proof}
    We take an arbitrary sub-POVM $\{M_x\}_{x\in G/K}$ such that 
    \begin{align}
        \frac{1}{|G|}\sum_{g\in G} \mathrm{tr}(M_{gK}\mathcal{U}_g(\rho) )\geq 1-\epsilon. \label{eq:lem:conversion_control_admissible_POVM01}
    \end{align}
    We take an arbitrary free operation $\mathcal{E}\in\mathfrak{O}$ and assume that $\mathcal{E}$ is $K$-covariant. 
    We define 
    \begin{align}
        \mathcal{E}'(L):=\sum_{x\in G/K\cup\{\perp\}} \mathcal{E}_x(\mathrm{tr}_1((M_x\otimes I) L)) \label{eq:lem:conversion_control_admissible_POVM02}
    \end{align}
    with 
    \begin{align}
        &\mathcal{E}_x:=\frac{1}{|K|} \sum_{g\in x} \mathcal{U}_g\circ\mathcal{E}\circ\mathcal{U}_{g^{-1}}\ \forall x\in G/K, \label{eq:lem:conversion_control_admissible_POVM03}\\
        &\mathcal{E}_\perp:=\frac{|K|}{|G|} \sum_{x\in G/K} \mathcal{E}_x, \\
        &M_\perp:=I-\sum_{x\in G/K} M_x. 
    \end{align}
    Since we assume that the resource theory satisfies the compatibility condition given in Assumption~\ref{asm:resource_symmetry_compatibility}, and the the class of free operations is closed under tensor product, $\mathcal{E}_x$ is a free operation. 
    Since we assume that the class of free operations is closed under convexity, $\mathcal{E}_\perp$ is also a free operation. 
    By the definition of control-admissible POVM, $\mathcal{E}'$ is a free operation. 
    By Eq.~\eqref{eq:lem:conversion_control_admissible_POVM02}, for any $g\in G$, we have 
    \begin{align}
        \mathcal{E}'\circ(\mathcal{U}_g\otimes \mathcal{U}'_g)(\rho\otimes \chi) 
        \geq \sum_{x\in G/K\cup\{\perp\}} \mathrm{tr}(M_x\mathcal{U}_g(\rho))\mathcal{E}_x\circ\mathcal{U}'_g(\chi) 
        \geq \mathrm{tr}(M_{gK}\mathcal{U}_g(\rho))\mathcal{E}_{gK}\circ\mathcal{U}'_g(\chi). \label{eq:lem:conversion_control_admissible_POVM04}
    \end{align}
    By the definition of $\mathcal{E}_{gK}$, we have 
    \begin{align}
        \mathcal{E}_{gK}\circ\mathcal{U}'_g(\chi) 
        =\frac{1}{|K|} \sum_{k\in K} \mathcal{U}_{gk}\circ\mathcal{E}\circ\mathcal{U}_{(gk)^{-1}}\circ\mathcal{U}_g(\chi) 
        =\mathcal{U}_g\circ\left(\frac{1}{|K|} \sum_{k\in K} \mathcal{U}_k\circ\mathcal{E}\circ\mathcal{U}_{k^{-1}}\right)(\chi) 
        =\mathcal{U}_g\circ\mathcal{E}(\chi), \label{eq:lem:conversion_control_admissible_POVM05}
    \end{align}
    where we used the $K$-covariance of $\mathcal{E}$ in the final equality. 
    By Eqs.~\eqref{eq:lem:conversion_control_admissible_POVM04} and \eqref{eq:lem:conversion_control_admissible_POVM05}, we have 
    \begin{align}
        \mathcal{E}'\circ\left(\mathcal{U}_g\otimes \mathcal{U}'_g\right)(\rho\otimes \chi) 
        \geq \mathrm{tr}(M_{gK}\mathcal{U}_g(\rho)) \mathcal{U}_g\circ\mathcal{E}(\chi), \label{eq:lem:conversion_control_admissible_POVM06}
    \end{align}
    which implies 
    \begin{align}
        \mathcal{U}_{g^{-1}}\circ\mathcal{E}'\circ(\mathcal{U}_g \otimes \mathcal{U}'_g)(\rho\otimes \chi) 
        \geq \mathrm{tr}(M_{gK}\mathcal{U}_g(\rho)) \mathcal{E}(\chi) \label{eq:lem:conversion_control_admissible_POVM07}
    \end{align}
    for all $g\in G$. 
    We define a CPTP map $\mathcal{F}$ by 
    \begin{align}
        \mathcal{F}:=\frac{1}{|G|}\sum_{g\in G} \mathcal{U}_{g^{-1}}\circ\mathcal{E}'\circ(\mathcal{U}_g \otimes \mathcal{U}'_g). \label{eq:lem:conversion_control_admissible_POVM08}
    \end{align}
    We note that $\mathcal{F}$ is a free operation and $G$-covariant. 
    By Eqs.~\eqref{eq:lem:conversion_control_admissible_POVM07}, \eqref{eq:lem:conversion_control_admissible_POVM08}, and \eqref{eq:lem:conversion_control_admissible_POVM01}, we have 
    \begin{align}
        \mathcal{F}(\rho\otimes \chi)
        \geq \left(\frac{1}{|G|}\sum_{g\in G}\mathrm{tr}\left(M_{gK}\mathcal{U}_g(\rho)\right)\right) \mathcal{E}(\chi) 
        \geq (1-\epsilon)\mathcal{E}(\chi), \label{eq:lem:conversion_control_admissible_POVM09}
    \end{align}
    which implies 
    \begin{align}
        D^\mathrm{max}(\mathcal{E}(\chi)\| \mathcal{F}(\rho\otimes\chi)) 
        \leq -\log(1-\epsilon). \label{eq:lem:conversion_control_admissible_POVM10}
    \end{align}
    By the first statement of Corollary~\ref{cor:distingishability_relation_Dmax_general_Dmin}, Eq.~\eqref{eq:lem:conversion_control_admissible_POVM10} implies  
    \begin{align}
        \Delta(\mathcal{E}(\chi), \mathcal{F}(\rho\otimes \chi)) 
        \leq f_\Delta(-\log(1-\epsilon)). \label{eq:lem:conversion_control_admissible_POVM11}
    \end{align}
\end{proof}

Theorem~\ref{thm:conversion_control_admissible_POVM} is a one-shot statement.
We now apply it to asymptotic state conversion.
The key observation is that the copies used to identify the symmetry sector can be chosen to form a vanishing fraction of the input, while their number still diverges.
Consequently, asymptotically vanishing discrimination error is sufficient to recover the original first-order conversion rate.

\begin{theorem}[General conversion theorem] \label{thm:rate_lower_bound_general}
    Let $G$ be a finite group, $K$ be a subgroup of $G$, $r>0$, there exist a sequence of free operations $(\mathcal{E}^{(n)})_{n\in\mathbb{N}}$ be such that 
    \begin{align}
        \lim_{n\to\infty} \frac{1}{|K|}\sum_{k\in K} \mathrm{T}
        \left(
            \mathcal{E}^{(n)}\left(\mathcal{U}_k(\rho)^{\otimes n}\right), \mathcal{U}'_k(\sigma)^{\otimes \lfloor rn\rfloor}
        \right)
        =0, \label{eq:thm:rate_lower_bound_general01}
    \end{align}
    and there exist a sequence of control-admissible sub-POVMs $(\{M_x^{(l)}\}_{x\in G/K})_{l\in\mathbb{N}}$ such that 
    \begin{align}
        \lim_{l\to\infty} \frac{1}{|G|}\sum_{g\in G}\mathrm{tr}
        \left(
            M_{gK}^{(l)} \mathcal{U}_g(\rho)^{\otimes l}
        \right)
        =1. \label{eq:thm:rate_lower_bound_general02}
    \end{align}
    Then, 
    \begin{align}
        R_{G\text{-}\mathrm{cov}, \mathfrak{O}}(\rho\to\sigma)\geq r. \label{eq:thm:rate_lower_bound_general03}
    \end{align}
\end{theorem}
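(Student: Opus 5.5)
The plan is to split the $n$ input copies into a small block of $l_n$ copies, used for discrimination, and a large block of $m_n:=n-l_n$ copies, used for conversion. We choose $l_n\to\infty$ with $l_n/n\to0$; for instance $l_n=\lfloor\sqrt{n}\rfloor$.

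Two adjustments are needed before Theorem~\ref{thm:conversion_control_admissible_POVM} can be applied on the large block.

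\emph{Making the conversion map $K$-covariant.} Theorem~\ref{thm:conversion_control_admissible_POVM} requires a $K$-covariant free operation, but $\mathcal{E}^{(m)}$ need not be one. We therefore replace it by
\begin{align}
\widetilde{\mathcal{E}}^{(m)}:=\frac{1}{|K|}\sum_{k\in K}\mathcal{U}'_k\circ\mathcal{E}^{(m)}\circ\mathcal{U}_{k^{-1}}.
\end{align}
This map is free by Assumption~\ref{asm:resource_symmetry_compatibility} together with convexity, and it is $K$-covariant by construction. By the triangle inequality and unitary invariance of $\mathrm{T}$,
\begin{align}
\mathrm{T}\bigl(\widetilde{\mathcal{E}}^{(m)}(\rho^{\otimes m}),\sigma^{\otimes\lfloor rm\rfloor}\bigr)\leq\frac{1}{|K|}\sum_{k}\mathrm{T}\bigl(\mathcal{E}^{(m)}(\mathcal{U}_{k^{-1}}(\rho)^{\otimes m}),\mathcal{U}'_{k^{-1}}(\sigma)^{\otimes\lfloor rm\rfloor}\bigr).
\end{align}
The right-hand side tends to $0$ by Eq.~\eqref{eq:thm:rate_lower_bound_general01}; this is exactly why the hypothesis is stated as an average over $K$.

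\emph{Assembling the covariant protocol.} Apply Theorem~\ref{thm:conversion_control_admissible_POVM} with state $\rho^{\otimes l_n}$, the POVM $\{M^{(l_n)}_x\}$, error $\epsilon_{l}:=1-\frac{1}{|G|}\sum_g\mathrm{tr}(M^{(l)}_{gK}\mathcal{U}_g(\rho)^{\otimes l})\to0$, $\Delta=\mathrm{T}$, and $\chi=\rho^{\otimes m_n}$. The representation on $\mathcal{H}^{\otimes l}$ is $U^{\otimes l}$, so $\mathcal{U}_g(\rho^{\otimes l})=\mathcal{U}_g(\rho)^{\otimes l}$ as the theorem needs. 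It yields a $G$-covariant free $\mathcal{F}^{(n)}$ with
\begin{align}
\mathrm{T}\bigl(\widetilde{\mathcal{E}}^{(m_n)}(\rho^{\otimes m_n}),\mathcal{F}^{(n)}(\rho^{\otimes n})\bigr)\leq f_{\mathrm{T}}(-\log(1-\epsilon_{l_n}))=\epsilon_{l_n}.
\end{align}
If $\epsilon_{l}=1$ for small $l$, we handle those finitely many $n$ arbitrarily. By the triangle inequality, the distance from $\mathcal{F}^{(n)}(\rho^{\otimes n})$ to $\sigma^{\otimes\lfloor rm_n\rfloor}$ tends to $0$.

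\emph{Adjusting the rate.} Since $\lfloor rm_n\rfloor$ may fall short of $\lfloor rn\rfloor$, we fix any $r'<r$. For large $n$ we have $\lfloor rm_n\rfloor\geq\lfloor r'n\rfloor$, because $l_n/n\to0$. We then discard the surplus output copies via a partial trace. This is free and $G$-covariant, since the output representation is a tensor product. Trace distance is nonincreasing under this step, so $r'$ is achievable. Hence $R_{G\text{-cov},\mathfrak{O}}\geq r'$ for all $r'<r$, giving $\geq r$. I do not see a way to get $r$ exactly achievable without this step; the supremum handles it.

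\emph{Main obstacle.} The delicate point is checking that each ingredient stays inside $\mathfrak{O}_{G\text{-cov}}$. This covers the $K$-twirled map, the controlled map built in Theorem~\ref{thm:conversion_control_admissible_POVM}, and the final discard. It relies on Assumption~\ref{asm:resource_symmetry_compatibility}, convexity and closure under discarding. It also requires the theorem's input ordering ($\rho$ first, then $\chi$) to be reconciled with the tensor-product ordering of copies, via a free (permutation or identity) relabeling. The remaining bookkeeping is routine.
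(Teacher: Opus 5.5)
Your proposal is correct and follows essentially the same argument as the paper. Both proofs take a sublinear discrimination block $l_n$, replace $\mathcal{E}^{(n)}$ by its $K$-twirl (bounded via the $K$-averaged hypothesis), apply Theorem~\ref{thm:conversion_control_admissible_POVM} with $\chi=\rho^{\otimes(n-l_n)}$ and $f_{\mathrm{T}}(-\log(1-\epsilon))=\epsilon$, and discard copies to reach any rate $r'<r$. The only cosmetic difference is that you discard the surplus output copies after the $G$-covariant map $\mathcal{F}^{(n)}$, whereas the paper composes the discarding map $\mathcal{N}$ with the $K$-twirled map before invoking the theorem; both orderings are valid, since partial traces are free and covariant under tensor-product representations.
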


\begin{proof}
    For convenience, we denote 
    \begin{align}
        &\delta_n:=\frac{1}{|K|}\sum_{k\in K} \mathrm{T}
        \left(
            \mathcal{E}^{(n)}\left(\mathcal{U}_k(\rho)^{\otimes n}\right), \mathcal{U}'_k(\sigma)^{\otimes \lfloor rn\rfloor}
        \right), \\
        &\epsilon_l:=1-\frac{1}{|G|}\sum_{g\in G}\mathrm{tr}
        \left(
            M_{gK}^{(l)} \mathcal{U}_g(\rho)^{\otimes l}
        \right). 
    \end{align}
    We take arbitrary $r'\in (0, r)$ and an arbitrary sequence $(l_n)_{n\in\mathbb{N}}$ that satisfies $l_n\in (0, n)\cap\mathbb{N}$, $\lim_{n\to\infty} l_n=\infty$, and $\lim_{n\to\infty} l_n/n=0$. 
    We define 
    \begin{align}
        {\mathcal{E}'}^{(n)}:=\frac{1}{|K|}\sum_{k\in K} {\mathcal{U}_{k^{-1}}'}^{\otimes \lfloor rn \rfloor}\circ\mathcal{E}^{(n)}\circ\mathcal{U}_k^{\otimes n}. 
    \end{align}
    Then, we have 
    \begin{align}
        \mathrm{T}\left({\mathcal{E}'}^{(n)}(\rho^{\otimes n}), \sigma^{\otimes \lfloor rn \rfloor}\right) 
        =&\mathrm{T}\left(\frac{1}{|K|}\sum_{k\in K} {\mathcal{U}_{k^{-1}} '}^{\otimes \lfloor rn \rfloor}\circ\mathcal{E}^{(n)}\circ\mathcal{U}_k^{\otimes n}(\rho^{\otimes n}), \sigma^{\otimes \lfloor rn \rfloor}\right) \nonumber\\
        \leq &\frac{1}{|K|}\sum_{k\in K} \mathrm{T}\left({\mathcal{U}_{k^{-1}} '}^{\otimes \lfloor rn \rfloor}\circ\mathcal{E}^{(n)}\circ\mathcal{U}_k^{\otimes n}(\rho^{\otimes n}), \sigma^{\otimes \lfloor rn \rfloor}\right) \nonumber\\
        =&\frac{1}{|K|}\sum_{k\in K} \mathrm{T}
        \left(
            \mathcal{E}^{(n)}\left(\mathcal{U}_k(\rho)^{\otimes n}\right), \mathcal{U}'_k(\sigma)^{\otimes \lfloor rn\rfloor}
        \right) \nonumber\\
        =&\delta_n. \label{eq:thm:rate_lower_bound_general04}
    \end{align}
    For sufficiently large $n\in\mathbb{N}$ such that $\lfloor r(n-l_n)\rfloor\geq \lfloor r'n\rfloor$, we can take a CPTP map $\mathcal{N}$ that discard $\lfloor r(n-l_n)\rfloor-\lfloor r'n\rfloor$ copies out of $\lfloor r(n-l_n)\rfloor$ copies. 
    Then, by the monotonicity of the trace distance, we have 
    \begin{align}
        \mathrm{T}\left(\mathcal{N}\circ{\mathcal{E}'}^{(n-l_n)}\left(\rho^{\otimes n-l_n}\right), \sigma^{\otimes \lfloor r'n \rfloor}\right) 
        =&\mathrm{T}\left(\mathcal{N}\circ{\mathcal{E}'}^{(n-l_n)}\left(\rho^{\otimes n-l_n}\right), \mathcal{N}\left(\sigma^{\otimes \lfloor r(n-l_n)\rfloor}\right)\right) \nonumber\\
        \leq &\mathrm{T}\left({\mathcal{E}'}^{(n-l_n)}\left(\rho^{\otimes n-l_n}\right), \sigma^{\otimes \lfloor r(n-l_n) \rfloor}\right). \label{eq:thm:rate_lower_bound_general05}
    \end{align}
    By Eqs.~\eqref{eq:thm:rate_lower_bound_general04} and \eqref{eq:thm:rate_lower_bound_general05}, we get 
    \begin{align}
        \mathrm{T}\left(\mathcal{N}\circ{\mathcal{E}'}^{(n-l_n)}\left(\rho^{\otimes n-l_n}\right), \sigma^{\otimes \lfloor r'n \rfloor}\right) 
        \leq \delta_{n-l_n}. 
    \end{align}
    We note that $\mathcal{N}\circ{\mathcal{E}'}^{(n-l_n)}$ is $K$-covariant. 
    By Theorem~\ref{thm:conversion_control_admissible_POVM}, there exists some $G$-covariant free operation $\mathcal{F}^{(n)}$ such that 
    \begin{align}
        \Delta(\mathcal{N}\circ{\mathcal{E}'}^{(n-l_n)}(\rho^{\otimes n-l_n}), \mathcal{F}^{(n)}(\rho^{\otimes n}))
        =\Delta(\mathcal{N}\circ{\mathcal{E}'}^{(n-l_n)}(\rho^{\otimes n-l_n}), \mathcal{F}^{(n)}(\rho^{\otimes l_n}\otimes \rho^{\otimes n-l_n}))
        \leq f_\Delta(-\log(1-\epsilon_{l_n})). 
    \end{align}
    Since $f_\Delta(x)=1-e^{-x}$ when $\Delta$ is the trace distance, we have 
    \begin{align}
        \mathrm{T}\left(\mathcal{N}\circ{\mathcal{E}'}^{(n-l_n)}(\rho^{\otimes n-l_n}), \mathcal{F}^{(n)}(\rho^{\otimes n})\right) 
        \leq \epsilon_{l_n}. 
    \end{align}
    By the triangle inequality, we get 
    \begin{align}
        \mathrm{T}\left(\mathcal{F}^{(n)}(\rho^{\otimes n}), \sigma^{\otimes \lfloor r'n \rfloor}\right)\leq \delta_{n-l_n}+\epsilon_{l_n}. 
    \end{align}
    Since the right-hand side tends to $0$ in the limit of $n\to\infty$, we have 
    \begin{align}
        \lim_{n\to\infty} \mathrm{T}\left(\mathcal{F}^{(n)}(\rho^{\otimes n}), \sigma^{\otimes \lfloor r'n\rfloor}\right)=0. 
    \end{align}
    Therefore, we get 
    \begin{align}
        R_{G\text{-}\mathrm{cov}, \mathfrak{O}}(\rho\to\sigma)\geq r'. 
    \end{align}
    Since this holds for all $r'\in (0, r)$, we have 
    \begin{align}
        R_{G\text{-}\mathrm{cov}, \mathfrak{O}}(\rho\to\sigma)\geq r. 
    \end{align}
\end{proof}

\subsection{Memory-assisted POVM}

Control admissibility concerns the operation obtained by using a measurement outcome to select a free operation on another system.
It does not require that the outcome be stored in a physical register carrying a prescribed symmetry representation.
We now introduce a formulation in which the measurement information is encoded in an explicit physical memory.
Besides providing conditional control, such a memory allows us to formulate the distillation of a symmetry reference and its retention during a subsequent state transformation.

The relation between the two formulations depends on the free operations available on the memory.
Under the standardization and control properties introduced below, every complete memory-assisted POVM is control-admissible.
For sub-POVMs, the complementary, inconclusive sector must also be taken into account, and we will distinguish the conditions imposed on this sector from those imposed on the memory states themselves.
The conversion principle of the preceding subsection requires only control admissibility; the additional memory structure is used for statements that explicitly involve a physical record.
We first define the memory states and the free operations needed to use them.

\begin{definition}
    Let $\mathrm{M}$ be a system, $\{\omega_x\}_{x\in G/S}$ be a set of states on $\mathrm{M}$, $G$ be a finite group, and $S$ be a subgroup of $G$. 
    We say that a pair $\{\omega_x\}_{x\in G/S}$ is a $G/S$-memory if it satisfies the following two conditions: \\
    (i) $G$-equivariance: 
    \begin{align}
        \forall g, h\in G,\ \mathcal{U}_g(\omega_{hS})=\omega_{ghS}, 
    \end{align}
    (ii) Orthogonality: 
    \begin{align}
        \forall x_1, x_2\in G/S,\ \omega_{x_1}\omega_{x_2}=0 \textrm{ if }x_1\neq x_2. 
    \end{align}
\end{definition}

We impose the following additional assumption when a physical memory realization is required.
It ensures that the memory states can be freely introduced, standardized after an arbitrary encoding, and used to control subsequent free operations.
To accommodate sub-POVMs, we also introduce an orthogonal failure sector corresponding to the inconclusive outcome.

\begin{assumption}[Free memory]
    \label{ass:free_memory}
    Let $\{\omega_x\}_{x\in X}$ be memory states on a system $M$ with mutually orthogonal supports, and let $\Pi_x$ denote the projection onto the support of $\omega_x$.
    We assume that there exists a state $\omega_\perp$ whose support is orthogonal to those of all $\omega_x$.
    Denote its support projection by $\Pi_\perp$, and assume, without loss of generality, that
    \begin{align}
        \sum_{x\in X}\Pi_x+\Pi_\perp=&I_M.
        \label{eq:memory_support_decomposition}
    \end{align}
    The following properties are satisfied.
    \begin{enumerate}
        \item Free introduction.
        For every $y\in X\cup\{\perp\}$, the preparation channel
        \begin{align}
            \mathcal{P}_y(L):=&\mathrm{tr}(L)\omega_y
        \end{align}
        is a free operation.
        \item Standardization.
        There exists a free operation $\mathcal{D}:M\to M$ satisfying
        \begin{align}
            \mathcal{D}(L)=&
            \sum_{x\in X\cup\{\perp\}} \mathrm{tr}(\Pi_x L)\omega_x
            \label{eq:memory_standardization}
        \end{align}
        for every operator $L$ on $M$.
        \item Conditional control.
        For any family of free operations
        $\{\mathcal{E}_y: B\to C\}_{y\in X\cup\{\perp\}}$,
        there exists a free operation $\mathcal{C}:MB\to MC$ satisfying
        \begin{align}
            \mathcal{C}(\omega_y\otimes L)=&
            \omega_y\otimes\mathcal{E}_y(L)
        \end{align}
        for every $y\in X\cup\{\perp\}$ and every operator $L$ on $B$.
    \end{enumerate}
\end{assumption}

We emphasize that the above assumptions are based only on the original resource theory and its equipped representation, and do not depend on the free states and free operations in the symmetry-constrained version of the resource theory defined below.

By using the notion of memory states, we define memory-assisted POVMs.

We define a free sub-POVM by using free operations and memory states, where a sub-POVM $\{M_x\}_{x\in X}$ is a set of positive-valued operators satisfying $\sum_{x\in X} M_x\leq I$.

\begin{definition} [Memory-assisted sub-POVM]
    \label{def:memory_assisted_subPOVM}
    Let $\{M_x\}$ be a sub-POVM on system $\mathrm{A}$. 
    We say that $\{M_x\}_{x\in X}$ is a memory-assisted POVM if there exists some $\mathcal{E}\in\mathfrak{O}^{\mathrm{A}\to\mathrm{M}}$ such that 
    \begin{align}
        M_x=\mathcal{E}^\dag(\Pi_x) \forall x\in X, \label{eq:memory_assisted_subPOVM}
    \end{align}
    where $\Pi_x$ is a projection onto the support of $\omega_x$. 
\end{definition}

By Eq.~\eqref{eq:memory_support_decomposition} and the trace preserving property of $\mathcal{E}$, the inconclusive effect satisfies
\begin{align}
    M_\perp:=I_A-\sum_{x\in X}M_x=&\mathcal{E}^\dag(\Pi_\perp).
    \label{eq:memory_assisted_failure_effect}
\end{align}
Thus, a memory-assisted sub-POVM is equivalently a complete memory-assisted POVM on the extended outcome set $X\cup\{\perp\}$.

Under the assumption of the existence of the free memory, we can show that control-admissible sub-POVM is equivalent to memory-assited sub-POVM.

\begin{lemma}[Equivalence of control-admissible and memory-assisted sub-POVMs]
    \label{lem:memory_assisted_control_admissible}
    Suppose that the free-memory assumption holds.
    Then, a sub-POVM $\{M_x\}_{x\in X}$ is control-admissible if and only if it is memory-assisted.
\end{lemma}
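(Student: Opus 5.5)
We prove the two implications separately. In both directions the free-memory assumption supplies the free operations that translate between a classical outcome and a physical record.

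\emph{Memory-assisted $\Rightarrow$ control-admissible.}
Suppose $M_x=\mathcal{E}^\dag(\Pi_x)$ for a free operation $\mathcal{E}\in\mathfrak{O}^{\mathrm{A}\to\mathrm{M}}$. By Eq.~\eqref{eq:memory_assisted_failure_effect} this holds for all $x\in X\cup\{\perp\}$. Fix free operations $\{\mathcal{E}_x\}_{x\in X\cup\{\perp\}}$ from $B$ to $C$. Let $\mathcal{C}$ be the conditional-control operation for this family, given by the free-memory assumption, and consider the composition
\begin{align}
    \mathrm{tr}_\mathrm{M}\circ\,\mathcal{C}\circ(\mathcal{D}\otimes\mathrm{id}_B)\circ(\mathcal{E}\otimes\mathrm{id}_B).
\end{align}
Each factor is free: $\mathcal{E}$ and $\mathcal{D}$ by assumption, tensored with the identity by tensor closure, $\mathcal{C}$ by conditional control, and $\mathrm{tr}_\mathrm{M}$ by closure under discarding. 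Composition keeps it free.

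It remains to check that this map equals $L\mapsto\sum_x \mathcal{E}_x(\mathrm{tr}_1((M_x\otimes I)L))$. The key identity is
\begin{align}
    (\mathcal{D}\otimes\mathrm{id})(Y)=\sum_y \omega_y\otimes \mathrm{tr}_\mathrm{M}((\Pi_y\otimes I)Y),
\end{align}
which follows from Eq.~\eqref{eq:memory_standardization} by linearity on product operators. Applying $\mathcal{C}$, which acts as $\omega_y\otimes L\mapsto\omega_y\otimes\mathcal{E}_y(L)$ and extends linearly, and then tracing out $\mathrm{M}$ gives $\sum_y\mathcal{E}_y(\mathrm{tr}_\mathrm{M}((\Pi_y\otimes I)(\mathcal{E}\otimes\mathrm{id})(L)))$. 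Moving the adjoint through the partial trace turns $\mathrm{tr}_\mathrm{M}((\Pi_y\otimes I)(\mathcal{E}\otimes\mathrm{id})(L))$ into $\mathrm{tr}_1((\mathcal{E}^\dag(\Pi_y)\otimes I)L)=\mathrm{tr}_1((M_y\otimes I)L)$. This is exactly the control-admissibility map.

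\emph{Control-admissible $\Rightarrow$ memory-assisted.}
Take $B$ to be trivial ($\mathbb{C}$) and $C=\mathrm{M}$. Choose $\mathcal{E}_x:=\mathcal{P}_x$, the preparation of $\omega_x$, which is free by free introduction. Control admissibility then makes
\begin{align}
    \mathcal{E}(L)=\sum_{x\in X\cup\{\perp\}}\mathrm{tr}(M_xL)\,\omega_x
\end{align}
a free operation from $\mathrm{A}$ to $\mathrm{M}$. Its adjoint is $\mathcal{E}^\dag(Y)=\sum_x \mathrm{tr}(\omega_x Y)M_x$. Since the supports are mutually orthogonal, $\mathrm{tr}(\omega_x\Pi_y)=\delta_{xy}$, so $\mathcal{E}^\dag(\Pi_y)=M_y$. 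Hence $\{M_x\}_{x\in X}$ is memory-assisted.

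\emph{Main obstacle.}
The step needing the most care is the identity for $\mathcal{D}\otimes\mathrm{id}$ together with the use of $\mathcal{C}$ on inputs that are not product states. Assumption~\ref{ass:free_memory} specifies $\mathcal{C}$ only on inputs of the form $\omega_y\otimes L$. After $\mathcal{D}\otimes\mathrm{id}$, however, the state is always a linear combination of such operators, so linearity of $\mathcal{C}$ settles it. A smaller point is the trivial system $B=\mathbb{C}$ in the second direction: Definition~\ref{def:control_admissible_POVM} allows it implicitly, and if needed one can instead take any system $B$ and compose with $\mathrm{tr}_B$ first, which is free.
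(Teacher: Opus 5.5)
Your proof is correct and matches the paper's in both directions. For memory-assisted $\Rightarrow$ control-admissible you compose the encoding channel, the standardization $\mathcal{D}$, the conditional control $\mathcal{C}$ and the discard of $\mathrm{M}$; for the converse you feed the free preparations $\mathcal{P}_y$ into the control-admissible map and read off $M_x=\mathcal{E}^\dag(\Pi_x)$ from $\mathrm{tr}(\Pi_x\omega_y)=\delta_{x,y}$, and your explicit linearity check for $\mathcal{C}$ on non-product inputs fills in a step the paper leaves implicit.

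One small slip is in your side remark about a nontrivial $B$. There you would precompose with appending a free state on $B$ (ancilla preparation), not with $\mathrm{tr}_B$. The main argument with $B=\mathbb{C}$ does not need this remark.
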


\begin{proof}
    We first suppose that $\{M_x\}_{x\in X}$ is memory-assisted.
    Let $\mathcal{F}:A\to M$ be a free operation satisfying
    \begin{align}
        M_x=&\mathcal{F}^\dag(\Pi_x)
    \end{align}
    for every $x\in X$, and set
    \begin{align}
        M_\perp:=&I_A-\sum_{x\in X}M_x.
    \end{align}
    Equation~\eqref{eq:memory_support_decomposition} gives
    \begin{align}
        M_\perp=&\mathcal{F}^\dag(\Pi_\perp).
    \end{align}

    Take arbitrary free operations
    $\{\mathcal{E}_y:B\to C\}_{y\in X\cup\{\perp\}}$.
    By the standardization property, for every operator $L$ on $AB$,
    \begin{align}
        \left(
        \left(\mathcal{D}\circ\mathcal{F}\right)
        \otimes\mathrm{id}_B
        \right)(L)
        =&
        \sum_{y\in X\cup\{\perp\}}
        \omega_y\otimes
        \mathrm{tr}_A\left(
        \left(M_y\otimes I_B\right)L
        \right).
        \label{eq:standardized_memory_output}
    \end{align}
    Applying the conditional-control operation and tracing out the memory system gives the free operation
    \begin{align}
        \mathcal{G}(L):=&
        \sum_{y\in X\cup\{\perp\}}
        \mathcal{E}_y\left(
        \mathrm{tr}_A\left(
        \left(M_y\otimes I_B\right)L
        \right)
        \right).
        \label{eq:memory_induced_control}
    \end{align}
    Hence, $\{M_x\}_{x\in X}$ is control-admissible.

    Conversely, suppose that $\{M_x\}_{x\in X}$ is control-admissible, and define
    \begin{align}
        M_\perp:=&I_A-\sum_{x\in X}M_x.
    \end{align}
    By the free-introduction property, the preparation channels
    \begin{align}
        \mathcal{P}_y(L):=&\mathrm{tr}(L)\omega_y
    \end{align}
    are free for every $y\in X\cup\{\perp\}$.
    Control admissibility therefore implies that the channel
    \begin{align}
        \mathcal{F}(L):=&
        \sum_{y\in X\cup\{\perp\}}
        \mathrm{tr}(M_yL)\omega_y
        \label{eq:memory_encoding_from_control}
    \end{align}
    is free.
    Since the supports of the states $\{\omega_y\}_{y\in X\cup\{\perp\}}$ are mutually orthogonal, we have
    \begin{align}
        \mathrm{tr}(\Pi_x\omega_y)=&\delta_{x,y}
    \end{align}
    for every $x\in X$ and $y\in X\cup\{\perp\}$.
    Therefore, for every operator $L$ on $A$,
    \begin{align}
        \mathrm{tr}\left(
        \mathcal{F}^\dag(\Pi_x)L
        \right)
        =\mathrm{tr}\left(
        \Pi_x\mathcal{F}(L)
        \right) 
        =\mathrm{tr}(M_x L),
    \end{align}
    and hence
    \begin{align}
        \mathcal{F}^\dag(\Pi_x)=&M_x.
    \end{align}
    Thus, $\{M_x\}_{x\in X}$ is memory-assisted.
\end{proof}

We can now relate the different operational manifestations of accessible symmetry information, which is a generalization of Theorem~\ref{thm:conversion_state_discrimination}.
The following theorem compares state discrimination, distillation of a finite symmetry reference, and the simulation of symmetry-compatible free operations using the input state.
Unlike the rate theorem above, this result explicitly involves a physical memory and therefore relies on the corresponding memory assumptions.

\begin{theorem} \label{thm:conversion_POVM_equivalence}
    Let $G$ be a finite group, $K$ be a subgroup of $G$, $\epsilon\in [0, 1)$, $\Delta$ be a distinguishability measure, its orthogonal distinguishability function $f_\Delta$ be strictly increasing, and $S$ be the symmetry subgroup of $\rho$. 
    Then, the following statements are equivalent: 
    \begin{align}
        (i)&\ \forall\mathcal{E}\in\mathfrak{O}_{S\textrm{-}\mathrm{cov}}(\mathrm{X}\to\mathrm{Y}),\ \exists\mathcal{F}\in\mathfrak{O}_{G\textrm{-}\mathrm{cov}}(\mathrm{AX}\to\mathrm{Y})\textrm{ s.t. }\forall\chi\in\mathcal{S}(X),\ \Delta(\mathcal{E}(\chi), \mathcal{F}(\rho\otimes \chi))\leq f_\Delta(-\log(1-\epsilon)). \\
        (ii)&\ \forall \sigma\in\mathfrak{S}_{S\text{-}\mathrm{inv}}(\mathrm{B}),\ \exists\mathcal{F}\in\mathfrak{O}_{G\textrm{-}\mathrm{cov}}(\mathrm{A}\to\mathrm{B})\textrm{ s.t. }\Delta(\sigma, \mathcal{F}(\rho))\leq f_\Delta(-\log(1-\epsilon)).\\
        (iii)&\ \exists \mathcal{F}\in\mathfrak{O}_{G\textrm{-}\mathrm{cov}}(\mathrm{A}\to\mathrm{M})\textrm{ s.t. }\Delta(\omega_S, \mathcal{F}(\rho))\leq f_\Delta(-\log(1-\epsilon)). \\
        (iv)&\ \exists \textrm{memory-assisted sub-POVM }\{M_x\}_{x\in G/S} \textrm{ s.t. } \min_{g\in G} \mathrm{tr}(M_{gS}\mathcal{U}_g(\rho))\geq 1-\epsilon. \\
        (v)&\ \exists \textrm{memory-assisted sub-POVM }\{M_x\}_{x\in G/S} \textrm{ s.t. } \frac{1}{|G|}\sum_{g\in G}\mathrm{tr}(M_{gS}\mathcal{U}_g(\rho))\geq 1-\epsilon. 
    \end{align}
\end{theorem}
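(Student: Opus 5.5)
The plan follows the structure of the proof of Theorem~\ref{thm:conversion_state_discrimination}, now inside the free operations $\mathfrak{O}$ and with the canonical state $\tau_S$ replaced by the memory state $\omega_S:=\omega_{eS}$. Several implications are immediate. (iv) $\rightarrow$ (v) holds because a worst-case bound implies an average-case bound. (ii) $\rightarrow$ (iii) follows by taking $\mathrm{B}=\mathrm{M}$ and $\sigma=\omega_S$. This needs $\omega_S\in\mathfrak{S}_{S\text{-inv}}(\mathrm{M})$: $G$-equivariance gives $\mathcal{U}_s(\omega_S)=\omega_{sS}=\omega_S$ for $s\in S$, and freeness follows from free introduction, since $\mathcal{P}_S$ applied to any free state gives $\omega_S$. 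For (i) $\rightarrow$ (ii), take $\mathrm{X}$ trivial and let $\mathcal{E}$ be the preparation $c\mapsto c\sigma$. This is free by ancilla preparation and $S$-covariant because $\sigma$ is $S$-invariant. It therefore remains to prove (v) $\rightarrow$ (i) and (iii) $\rightarrow$ (iv).

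For (v) $\rightarrow$ (i), Lemma~\ref{lem:memory_assisted_control_admissible} turns the memory-assisted sub-POVM into a control-admissible one. Theorem~\ref{thm:conversion_control_admissible_POVM} with $K=S$ then applies directly to any $S$-covariant free $\mathcal{E}$ and gives a $G$-covariant free $\mathcal{F}$ with the required bound $f_\Delta(-\log(1-\epsilon))$. I will check that the free-memory assumption is in force here (the statement involves $\omega_S$, so I take it as standing). I will also check the role of the inconclusive sector: in Theorem~\ref{thm:conversion_control_admissible_POVM}, $\mathcal{E}_\perp$ was chosen as a uniform mixture of the $\mathcal{E}_x$, which is both free and compatible with the group averaging.

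For (iii) $\rightarrow$ (iv), let $\mathcal{F}\in\mathfrak{O}_{G\text{-cov}}(\mathrm{A}\to\mathrm{M})$ satisfy $\Delta(\omega_S,\mathcal{F}(\rho))\le f_\Delta(-\log(1-\epsilon))$. Since $f_\Delta$ is strictly increasing, the second statement of Corollary~\ref{cor:distingishability_relation_Dmax_general_Dmin} gives $D^{\min}(\omega_S\|\mathcal{F}(\rho))\le-\log(1-\epsilon)$, that is, $\mathrm{tr}(\Pi_S\mathcal{F}(\rho))\ge 1-\epsilon$. Define $M_x:=\mathcal{F}^\dag(\Pi_x)$. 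Because $\mathcal{F}$ is a free operation into $\mathrm{M}$, this is a memory-assisted sub-POVM by definition. Equivariance of the memory gives $\mathcal{U}_g(\Pi_{hS})=\Pi_{ghS}$, since unitaries map support projections to support projections. Combined with $G$-covariance of $\mathcal{F}$, this yields $\mathrm{tr}(M_{gS}\mathcal{U}_g(\rho))=\mathrm{tr}(\Pi_S\mathcal{F}(\rho))\ge1-\epsilon$ for every $g$, which is (iv).

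The main obstacle is bookkeeping rather than new mathematics. The steps to confirm are that the states used in (ii) and (iii) are free, that the projective phases cancel in $\mathcal{U}_g$, and that the corollary comparing $\Delta$ with $D^{\max}$ and $D^{\min}$ applies to the pair $(\omega_S,\mathcal{F}(\rho))$. That corollary relied on an orthogonal-pair reduction, so I will check that the projection $\Pi_S$ and its complement give the binary coarse-graining needed. If that is delicate, I would first apply the two-outcome measurement $\{\Pi_S,I-\Pi_S\}$. By monotonicity of $\Delta$ this reduces the problem to comparing $\omega_S$ with a mixture of $\omega_S$ and an orthogonal state, which is exactly the setting of $f_\Delta$.
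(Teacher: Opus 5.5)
Your proposal is correct and follows the paper's proof essentially step for step. The paper likewise treats (i)$\rightarrow$(ii)$\rightarrow$(iii) and (iv)$\rightarrow$(v) as trivial, gets (v)$\rightarrow$(i) from Lemma~\ref{lem:memory_assisted_control_admissible} and Theorem~\ref{thm:conversion_control_admissible_POVM} with $K=S$, and proves (iii)$\rightarrow$(iv) by passing to $D^{\min}$, setting $M_x:=\mathcal{F}^\dag(\Pi_x)$, and using $G$-covariance with $\mathcal{U}_g(\Pi_{hS})=\Pi_{ghS}$; your added check that $\omega_S$ is a free $S$-invariant state (via free introduction and equivariance) fills in a point the paper only asserts.
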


\begin{proof}
    We note that (i)$\rightarrow$(ii), (ii)$\rightarrow$(iii), and (iv)$\rightarrow$(v) are trivial, and (v)$\rightarrow$(i) is direct from Theorem~\ref{thm:conversion_control_admissible_POVM} and Lemma~\ref{lem:memory_assisted_control_admissible}. 
    For the proof of (i)$\rightarrow$(ii), as an $S$-covariant free operation $\mathcal{E}$, we consider a map that introduces an $S$-invariant state $\sigma$. 
    For the proof of (ii)$\rightarrow$(iii), we take $\omega_S$ as an example of an $S$-invariant free state. 
    By taking the average of (iv), we get (v). 
    By Lemma~\ref{lem:memory_assisted_control_admissible}, a memory-assisted sub-POVM is control-admissible. 
    Thus, (v) implies 
    \begin{align}
        \exists \textrm{control-admissible sub-POVM }\{M_x\}_{x\in G/S} \textrm{ s.t. } \frac{1}{|G|}\sum_{g\in G}\mathrm{tr}(M_{gS}\mathcal{U}_g(\rho))\geq 1-\epsilon, 
    \end{align}
    which implies (i) by Theorem~\ref{thm:conversion_control_admissible_POVM}. 
    Therefore, it is sufficient to prove (iii)$\rightarrow$(iv) in the following. 
    We take some $\mathcal{F}\in\mathfrak{O}^{G\textrm{-}\mathrm{cov}}(\mathrm{A}\to\mathrm{M})$ such that $\Delta(\omega_S, \mathcal{F}(\rho))\leq f_\Delta(-\log(1-\epsilon))$. 
    By Lemma~\ref{lem:distingishability_evaluation_Dmax_Dmin}, this implies 
    \begin{align}
        D^\mathrm{min}(\omega_S\|\mathcal{F}(\rho))\leq -\log(1-\epsilon), 
    \end{align}
    which means 
    \begin{align}
        \mathrm{tr}(\Pi_{\omega_S}\mathcal{F}(\rho))\geq 1-\epsilon. \label{eq:thm:conversion_POVM_equivalence1}
    \end{align}
    We define a memory-assisted sub-POVM $\{M_x\}_{x\in G/S}$ by 
    \begin{align}
        M_{gS}:=\mathcal{F}^\dag(\Pi_{gS}). 
    \end{align}
    Then, for any $g\in G$, we have 
    \begin{align}
        \mathrm{tr}(M_{gS}\mathcal{U}_g(\rho)) 
        =\mathrm{tr}(\mathcal{F}^\dag(\Pi_{gS})\mathcal{U}_g(\rho)) 
        =\mathrm{tr}(\mathcal{F}^\dag\circ\mathcal{U}_g(\Pi_S)\mathcal{U}_g(\rho)) 
        =\mathrm{tr}(\Pi_S\mathcal{U}_{g^{-1}}\circ\mathcal{F}\circ\mathcal{U}_g(\rho)) 
        =\mathrm{tr}(\Pi_S \mathcal{F}(\rho)). \label{eq:thm:conversion_POVM_equivalence2}
    \end{align}
    By Eqs.~\eqref{eq:thm:conversion_POVM_equivalence1} and \eqref{eq:thm:conversion_POVM_equivalence2}, we have 
    \begin{align}
        \mathrm{tr}(M_{gS}\mathcal{U}_g(\rho))\geq 1-\epsilon. 
    \end{align}
    Since the choice of $g\in G$ is arbitrary, we get (iv). 
\end{proof}

\section{Entanglement under finite-group symmetry}
\label{sec:entanglement}

We apply the general framework to multipartite entanglement.
We first impose covariance on the resulting LOCC channel and then examine implementations in which each elementary operation respects the symmetry.

\subsection{Map-level conversion rates}
\label{subsec:entanglement_map}

Consider input and output systems $A=A_1\cdots A_m$ and $B=B_1\cdots B_m$ shared by $m$ parties.
Let $G$ be a finite group whose projective unitary representations on these systems are tensor products of local representations:
\begin{align}
    &U_A(g)=\bigotimes_{i=1}^m U_{A_i}(g), \\
    &U_B(g)=\bigotimes_{i=1}^m U_{B_i}(g).
\end{align}
We write $\mathcal{U}_{A,g}:=U_A(g)\cdot U_A(g)^\dag$ and similarly for the output system.
The map-level operation class is
\begin{align}
    \mathcal{O}_{G\text{-cov},\mathrm{LOCC}}(A\to B):=
    \{\mathcal{E}\in\mathcal{O}_{\mathrm{LOCC}}(A\to B)
    \ |\ \mathcal{E}\circ\mathcal{U}_{A,g}
    =\mathcal{U}_{B,g}\circ\mathcal{E}\text{ for all }g\in G\}.
\end{align}
Only the complete channel is required to be covariant; its individual local operations need not be symmetric.

Local unitary conjugation preserves LOCC, so these representations are compatible with the LOCC resource theory.
Moreover, an LOCC measurement followed by an outcome-dependent LOCC channel is again LOCC.
Thus, when $\mathrm{Sym}_{G,U_A}(\rho)\subset\mathrm{Sym}_{G,U_B}(\sigma)$, the remaining ingredient needed for the general rate bound is an LOCC measurement that distinguishes the orbit states with vanishing error.
We establish this ingredient for arbitrary multipartite states.

The unrestricted discrimination protocol of Lemma~\ref{lem:asymmetry_state_discrimination} estimates the expectation value of an orbit state regarded as an observable.
Although this observable is generally nonlocal, its expectation value can be estimated by LOCC.
Indeed, every Hermitian operator on a finite-dimensional multipartite system admits a finite decomposition into tensor products of local Hermitian operators.
Randomly sampling the terms of such a decomposition therefore gives a bounded unbiased estimator that can be implemented by local measurements and classical communication.
The following lemma makes this observation precise.

\begin{lemma} \label{lem:binary_test_LOCC}
    Let $X$ be a finite set, $\{\rho_x\}_{x\in X}$ be a finite set of multipartite states indexed by $x\in X$, and Hermitian operator $O$ satisfy 
    \begin{align}
        \mathrm{tr}(\rho_{x_0} O)>\mathrm{tr}(\rho_x O)\ \forall x\in X-\{x_0\}. \label{eq:lem:binary_test_LOCC01}
    \end{align}
    Then, there exists some LOCC POVM $\{T, I-T\}$ that acts on $N$-copy system such that 
    \begin{align}
        &\mathrm{tr}(T \rho_{x_0}^{\otimes N})\geq 1-e^{-\alpha N}, \label{eq:lem:binary_test_LOCC02}\\
        &\mathrm{tr}(T \rho_x^{\otimes N})\leq e^{-\alpha N}\ \forall x\in X-\{x_0\} \label{eq:lem:binary_test_LOCC03}
    \end{align}
    with some $\alpha>0$. 
\end{lemma}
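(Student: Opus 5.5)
The plan mirrors the binary test in the proof of Lemma~\ref{lem:asymmetry_state_discrimination}, with the spectral measurement of $O$ replaced by an LOCC-implementable bounded unbiased estimator. First I would fix a local operator basis. For each party $i$, choose a basis $\{E^{(i)}_{j}\}_j$ of $\mathcal{L}^\mathrm{H}(\mathcal{H}_{A_i})$ consisting of Hermitian operators, for example the generalized Gell-Mann matrices together with the identity. Products of these span $\mathcal{L}^\mathrm{H}(\mathcal{H}_A)$ over $\mathbb{R}$, so
\begin{align}
    O=\sum_{\mathbf{j}} c_{\mathbf{j}}\, E^{(1)}_{j_1}\otimes\cdots\otimes E^{(m)}_{j_m},\qquad c_{\mathbf{j}}\in\mathbb{R}.
\end{align}
Set $C:=\sum_{\mathbf{j}}|c_{\mathbf{j}}|$. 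If $C=0$, then $O=0$, which contradicts Eq.~\eqref{eq:lem:binary_test_LOCC01}; hence $C>0$.

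Next I would define a single-copy LOCC estimator. With probability $p_{\mathbf{j}}:=|c_{\mathbf{j}}|/C$, the parties sample the index $\mathbf{j}$ using shared classical randomness, which can be distributed by classical communication. Each party $i$ then measures $E^{(i)}_{j_i}$ in its eigenbasis and reports the eigenvalue $a_i$. The output is $Z:=C\,\mathrm{sgn}(c_{\mathbf{j}})\prod_i a_i$. This is a local product measurement with classical postprocessing, so it is LOCC. Because the local outcomes are drawn from the joint distribution of commuting product projectors, $\mathbb{E}[\prod_i a_i\mid\mathbf{j}]=\mathrm{tr}(\rho\,E^{(1)}_{j_1}\otimes\cdots\otimes E^{(m)}_{j_m})$. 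Averaging over $\mathbf{j}$ therefore gives $\mathbb{E}_\rho[Z]=\mathrm{tr}(\rho O)$. The estimator is also uniformly bounded: $|Z|\le B:=C\max_{\mathbf{j}}\prod_i\|E^{(i)}_{j_i}\|_\infty$, and $B$ depends only on $O$ and the chosen basis.

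Then I would threshold. Let $\mu_x:=\mathrm{tr}(\rho_xO)$ and $\delta:=\mu_{x_0}-\max_{x\neq x_0}\mu_x$, which is strictly positive because $X$ is finite. Set $\theta:=\mu_{x_0}-\delta/2$. The $N$ copies are measured independently with the estimator above, and $T$ is the LOCC effect "the empirical mean is at least $\theta$". Each $T$ is a coarse-graining of a fixed LOCC measurement performed independently on each copy, so $\{T,I-T\}$ is an LOCC POVM. Under $\rho_x^{\otimes N}$ the outcomes $Z_1,\dots,Z_N$ are i.i.d., lie in $[-B,B]$, and have mean $\mu_x$. Hoeffding's inequality~\cite{hoeffding1963probability}, applied to a deviation of $\delta/2$, gives
\begin{align}
    \mathrm{tr}\bigl((I-T)\rho_{x_0}^{\otimes N}\bigr)\le e^{-N\delta^2/(8B^2)},\qquad \mathrm{tr}\bigl(T\rho_x^{\otimes N}\bigr)\le e^{-N\delta^2/(8B^2)}
\end{align}
for every $x\neq x_0$. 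Taking $\alpha:=\delta^2/(8B^2)$ proves Eqs.~\eqref{eq:lem:binary_test_LOCC02} and \eqref{eq:lem:binary_test_LOCC03}.

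I expect the main obstacle to be the careful justification that the randomized estimator is genuinely an LOCC POVM on $\mathcal{H}^{\otimes N}$. The effect $T$ must be written explicitly as $\sum$ over sampled indices and local outcomes of $\prod p_{\mathbf{j}}$ times tensor products of local spectral projectors. One then has to check that the shared randomness and the final thresholding step stay inside LOCC. This should follow from convexity of LOCC and from classical postprocessing, but the bookkeeping needs care. The statistical part is routine once the bound $B$ is in place.
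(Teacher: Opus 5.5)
Your proposal is correct and follows essentially the same route as the paper: decompose $O$ into tensor products of local Hermitian operators, sample a term with probability $|c_{\mathbf{j}}|/C$, and measure locally to get a bounded unbiased LOCC estimator of $\mathrm{tr}(\rho O)$; then threshold the empirical mean at the midpoint and apply Hoeffding's inequality. The only cosmetic difference is that the paper normalizes each local factor to $\|L_{k,j}\|_\infty=1$, which makes the outcome bound simply $C$ and $\alpha=\Delta^2/(8C^2)$, whereas your fixed generalized Gell-Mann basis gives the slightly larger bound $B$.
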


\begin{proof}
    First, we decompose $O$ into tensor product operators: 
    \begin{align}
        O:=\sum_{j=1}^J c_j L_{1, j}\otimes L_{2, j}\otimes \cdots \otimes L_{m, j} 
    \end{align}
    with $c_j\in\mathbb{R}$ and $L_{k, j}\in\mathcal{L}^\mathrm{H}(\mathcal{H}_k)$ satisfying $\|L_{k, j}\|_\infty=1$. 
    Let $L_{k, j}$ have the following spectral decomposition 
    \begin{align}
        L_{k, j}=\sum_{a\in A_{k, j}} aP_{k, j, a}, 
    \end{align}
    where $P_{k, j, a}$ is the projection operator of  $L_{k, j}$ with eigenvalue $a$. 
    We define 
    \begin{align}
        &C:=\sum_{j=1}^J |c_j|, \\
        &p_j:=\frac{|c_j|}{C}. 
    \end{align}
    We consider a protocol in which we take $j$ with probability $p_j$ and measure $L_{k, j}$ on the $k$th party. 
    The measurement result $a_1, a_2, ..., a_m$ is sent to Alice and she calculates a score 
    \begin{align}
        z(\omega):=C \mathrm{sgn}(c_j) a_1 a_2 \cdots a_m 
    \end{align}
    For a single-copy outcome $\omega=(j,a_1,\ldots,a_m)$, we define the corresponding POVM element by
    \begin{align}
        E_\omega:=p_j P_{1,j,a_1}\otimes\cdots\otimes P_{m,j,a_m}. 
    \end{align}
    We repeat this measurement on $N$ copies of states $\{\rho_x\}_{x\in X}$. 
    We denote the $N$th set of results by $\bm{\omega}:=(\omega_1, ..., \omega_n)$, and calculate the empirical average 
    \begin{align}
        \hat{\mu}_N(\bm{\omega}):=\frac{1}{N}\sum_{k=1}^N z(\omega_k). 
    \end{align}
    For a sequence of outcomes $\boldsymbol{\omega}=(\omega_1,\ldots,\omega_N)$, we define
    \begin{align}
        E_{\boldsymbol{\omega}}:=E_{\omega_1}\otimes\cdots\otimes E_{\omega_N}.
    \end{align}
    We set the threshold $\tau$ as 
    \begin{align}
        \tau:=\frac{1}{2}\left(\mathrm{tr}(\rho_{x_0} O)+\max_{x\in X-\{x_0\}}\mathrm{tr}(\rho_x O)\right)
    \end{align}
    and we define an acceptance operator $T$ by 
    \begin{align}
        T_N:=\sum_{\bm{\omega}: \hat{\mu_N}(\bm{\omega})\geq \tau} E_{\bm{\omega}}. 
    \end{align}

    In the following, we show that the acceptance operator $T$ satisfies Eqs.~\eqref{eq:lem:binary_test_LOCC02} and \eqref{eq:lem:binary_test_LOCC03}. 
    For the proof of this inequality, we mainly use Hoeffding's inequality. 
    For the preparation, we confirm that $\hat{\mu}_N(\omega)$ is an unbiased estimator of $\rho$. 
    We note that 
    \begin{align}
        \mathbb{E}[z] 
        =&\sum_{j, a_1, ..., a_m} z(j, a_1, ..., a_m)\mathrm{tr}(\rho E_{j, a_1, ..., a_m}) \\
        =&\sum_{j, a_1, ..., a_m} C \mathrm{sgn}(c_j)a_1 a_2 \cdots a_m \cdot \mathrm{tr}(\rho p_j P_{1, j, a_1}\otimes \cdots P_{m, j, a_m}) \\
        =&\mathrm{tr}\left(\rho\left(\sum_j c_j L_{1, j}\otimes \cdots\otimes L_{m, j}\right)\right) \\
        =&\mathrm{tr}(\rho O). 
    \end{align}
    This implies that the empirical average $\hat{\mu}_N(\bm{\omega})$ is also an unbiased estimator of $\mathrm{tr}(\rho O)$
    We can also confirm that $\hat{\mu}_N(\bm{\omega})$ is bounded by noting that $\|L_{k, j}\|_\infty\leq 1$ means the eigenvalues $a_{j, k, l}$ satisfies $|a_{j, k, l}|\leq 1$, which implies $|z(\omega_k)|\leq C$. 
    The empirical average $\hat{\mu}_N(\bm{\omega})$ also satisfies $|\hat{\mu}_N(\bm{\omega})|\leq C$. 
    Thus, we are ready to use Hoeffding's inequality to $\hat{\mu}_N$, and we get 
    \begin{align}
        \mathrm{tr}\left((I-T_N) \rho_{x_0}^{\otimes N}\right) 
        =\mathrm{Pr}(\hat{\mu}_N<\tau | \rho_{x_0}) 
        \leq\exp\left(-\frac{N\Delta^2}{8C^2}\right), 
    \end{align}
    where 
    \begin{align}
        \Delta:=\mathrm{tr}(\rho_{x_0} O)-\max_{x\in X-\{x_0\}}\mathrm{tr}(\rho_x O). 
    \end{align}
    For any $x\in X-\{x_0\}$, we also get 
    \begin{align}
        \mathrm{tr}\left(T_N \rho_x^{\otimes N}\right) 
        =\mathrm{Pr}(\hat{\mu}_N\geq\tau | \rho_x) 
        \leq\exp\left(-\frac{N\Delta^2}{8C^2}\right). 
    \end{align}
    Therefore, we set $\alpha:=\Delta^2/8C^2$ and we get Eqs.~Eqs.~\eqref{eq:lem:binary_test_LOCC02} and \eqref{eq:lem:binary_test_LOCC03}. 
\end{proof}

Next, by using this test operator $T$, we construct a LOCC protocol that distinguishes the elements $\{\mathcal{U}_g(\rho)\}_{g\in G}$. 
In this construction, we take the operator $\rho$ as the observable that works like $O$.

\begin{lemma} (Multiple-state discrimination from one-versus-rest LOCC tests) \label{lem:multiple_test_LOCC}
    Let $\epsilon\in[0, 1]$, $\{\rho_x\}_{x\in X}$ be a finite set of multipartite states, and for any $x\in X$, there exist some LOCC test $T_x$ acting on $N$ copies such that 
    \begin{align}
        &\mathrm{tr}\left(T_x\rho_x^{\otimes N}\right)\geq 1-\epsilon, \\
        &\mathrm{tr}\left(T_x\rho_y^{\otimes N}\right)\leq\epsilon\ \forall y\in X-\{x\}. 
    \end{align} 
    Then, using $N|X|$ copies, there exists an LOCC measurement $\left\{M_x\right\}_{x\in X}$ that discriminates the states with 
    \begin{align}
        \mathrm{tr}\left(M_x\rho_x^{\otimes N|X|} \right) \geq 1-|X|\epsilon 
    \end{align}
    for all $x\in X$. 
\end{lemma}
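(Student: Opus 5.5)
We mirror the block-and-decode construction from the proof of Lemma~\ref{lem:asymmetry_state_discrimination}, with the LOCC tests $T_x$ replacing the spectral binary tests $T_x^{(l)}$.

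\textbf{Step 1: the product tests.} We split the $N|X|$ copies into $|X|$ disjoint blocks of $N$ copies each, labelled by $x\in X$. On block $x$ we apply the two-outcome LOCC measurement $\{T_x, I-T_x\}$. The parties run these tests one after another, or in parallel, since they act on disjoint systems. The product of LOCC measurements on disjoint copies is again LOCC, and the combined measurement has effects
\begin{align}
    F_b:=\bigotimes_{x\in X}R_{x,b_x},\qquad R_{x,1}:=T_x,\quad R_{x,0}:=I-T_x,
\end{align}
indexed by bit strings $b\in\{0,1\}^X$.

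\textbf{Step 2: classical decoding.} We fix a decoding function $d:\{0,1\}^X\to X$ with $d(e_x)=x$, where $e_x$ is the indicator string of $x$, and send all other strings to a fixed candidate. We then set $M_x:=\sum_{b:d(b)=x}F_b$.

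This is still an LOCC measurement. The bits $b_x$ are classical outcomes that can be broadcast to all parties, and relabelling outcomes by a classical function (coarse-graining) preserves the LOCC property. By construction $\sum_x M_x=I$.

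\textbf{Step 3: error bound.} Take input $\rho_x^{\otimes N|X|}$. Every block then carries $\rho_x^{\otimes N}$, so the block outcomes are independent. Decoding is correct whenever $b=e_x$, hence
\begin{align}
    \mathrm{tr}\bigl(M_x\rho_x^{\otimes N|X|}\bigr)\geq \mathrm{tr}(T_x\rho_x^{\otimes N})\prod_{y\neq x}\bigl(1-\mathrm{tr}(T_y\rho_x^{\otimes N})\bigr)\geq (1-\epsilon)^{|X|}.
\end{align}
The hypothesis $\mathrm{tr}(T_y\rho_x^{\otimes N})\leq\epsilon$ for $y\neq x$ is exactly the condition on test $T_y$ applied to the state $\rho_x$. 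Bernoulli's inequality gives $(1-\epsilon)^{|X|}\geq 1-|X|\epsilon$ for $\epsilon\in[0,1]$. Alternatively, the union bound on the $|X|$ failure events gives the same estimate directly.

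\textbf{Main obstacle.} There is no analytic difficulty. The only point needing care is justifying that the composite measurement is LOCC: tensor products of LOCC measurements on disjoint copies, followed by classical post-processing of the outcomes, remain in the LOCC class. We will state this explicitly.
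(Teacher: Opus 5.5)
Your proposal is correct and follows the paper's proof essentially step for step. Like the paper, you run the one-versus-rest tests $\{T_x, I-T_x\}$ on disjoint blocks, coarse-grain the bit-string outcomes with a decoder satisfying $d(e_x)=x$, and lower-bound $\mathrm{tr}(M_x\rho_x^{\otimes N|X|})$ by the $F_{e_x}$ term to obtain $(1-\epsilon)^{|X|}\geq 1-|X|\epsilon$.
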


\begin{proof}
    We repeat the test measurement considered in Lemma~\ref{lem:binary_test_LOCC} $|X|$ times to the copies of $\rho$. 
    On the $x$ block, we perform the binary LOCC measurement $\{T_x, I-T_x\}$
    Let $b_x\in\{0, 1\}$ denote its outcome, where $b_x=1$ corresponds to accepting the hypothesis that the input state is $\rho_x$. 
    The combined outcome is therefore a bit string $\bm{b}=(b_x)_{x\in X}\in\{0, 1\}^{|X|}$. 
    More explicitly, the POVM element associated with a bit string $\boldsymbol b=(b_x)_{x\in\mathcal X}$ is 
    \begin{align}
        F_{\boldsymbol b}:=\bigotimes_{x\in X}R_{x,b_x}.
    \end{align}
    with 
    \begin{align}
        &R_{x,1}:=T_x, \\
        &R_{x,0}:=I-T_x, 
    \end{align}
    We note that the measurement $\{F_{\boldsymbol b}\}_{\boldsymbol b}$ can be implemented by LOCC because each binary measurement is applied by LOCC to a disjoint set of copies.

    For each $x\in X$, let $\bm{e}_x$ denote the bit string whose $x$-th entry is one and whose other entries are zero. 
    We choose any classical decoding function $d: \{0,1\}^X \to X$ satisfying 
    \begin{align}
        d(\bm{e}_x)=x\ \forall x\in X, 
    \end{align}
    and assign all other bit strings arbitrarily. 
    The resulting measurement operators are
    \begin{align}
        M_x^{(N)}:=\sum_{\bm{b}:\, d(\boldsymbol b)=x} F_{\boldsymbol b}.
    \end{align}
    Since these operators are obtained from an LOCC measurement by classical post-processing, $\{M_x\}_{x\in X}$ is also an LOCC measurement.

    Suppose that the input state is $\rho_x^{\otimes N|X|}$. 
    We note that $M_x\geq F_{\bm{e}_x}$. 
    Therefore, we get 
    \begin{align}
        \mathrm{tr}\left(M_x\rho_x^{\otimes N|X|}\right) 
        \geq &\mathrm{tr}\left(F_{\bm{e}_x}\rho_x^{\otimes N|X|}\right) \nonumber\\
        =&\mathrm{tr}\left(\left(\bigotimes_{y\in X}R_{y, \delta_{y, x}}\right)\rho_x^{\otimes N|X|}\right) \nonumber\\
        =&\prod_{y\in X} \mathrm{tr}\left(R_{y, \delta_{y, x}}\rho_x^{\otimes N}\right) \nonumber\\
        =&\mathrm{tr}(T_x\rho_x^{\otimes N})\cdot\prod_{y\in X-\{x\}}\mathrm{tr}\left((I-T_y)\rho_x^{\otimes N}\right) \nonumber\\
        \geq &(1-\epsilon)^{|X|} \nonumber\\
        \geq &1-|X|\epsilon. 
    \end{align}
    for all $x\in X$. 
\end{proof}

The preceding lemmas imply that the distinct states in any finite unitary orbit can be discriminated by LOCC with an exponentially decreasing error probability.

\begin{lemma}[LOCC discrimination of a finite unitary orbit] \label{lem:LOCC_discrimination}
    Let $G$ be a finite group, $U$ be a projective unitary representation of $G$ given by tensor products of local unitaries, $\rho$ be a multipartite state, and $S$ be defined by 
    \begin{align}
        S:=\mathrm{Sym}_{G,U}(\rho)=\{g\in G\ |\ U(g)\rho U(g)^\dag=\rho\}. 
    \end{align}
    Then, there exists a constant $\alpha>0$ such that, for every positive integer $k$, there exists an LOCC measurement $\{M_{gS}^{(k)}\}_{gS\in G/S}$ on $k|G/S|$ copies satisfying
    \begin{align}
        \mathrm{tr}\left(M_{gS}^{(k)}\left(U(g)\rho U(g)^\dag\right)^{\otimes k|G/S|}\right)
        \geq 1-|G/S|e^{-\alpha k} 
    \end{align}
    for all $g\in G$. 
\end{lemma}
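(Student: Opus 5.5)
The plan is to combine Lemma~\ref{lem:binary_test_LOCC} with Lemma~\ref{lem:multiple_test_LOCC}, following the template of Lemma~\ref{lem:asymmetry_state_discrimination}. Set $X:=G/S$ and, for each $x=gS$, put $\rho_x:=U(g)\rho U(g)^\dag$. This is well defined up to the projective phase, which cancels under conjugation. Distinct cosets give distinct states by the definition of $S$.

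If $|X|=1$, the trivial measurement $M_S=I$ has zero error, so we may assume $|X|\geq 2$. For each candidate $x$ we take the observable $O_x:=\rho_x$. Exactly as in Lemma~\ref{lem:asymmetry_state_discrimination}, unitary invariance of purity gives, for $y\neq x$,
\begin{align}
    \mathrm{tr}(\rho_x O_x)-\mathrm{tr}(\rho_y O_x)=\frac{1}{2}\|\rho_x-\rho_y\|_2^2>0 .
\end{align}
Hence the strict-separation hypothesis \eqref{eq:lem:binary_test_LOCC01} of Lemma~\ref{lem:binary_test_LOCC} holds with $x_0=x$. That lemma then yields, for each $x$ and each $N$, an LOCC test $T_x^{(N)}$ on $N$ copies with both error probabilities at most $e^{-\alpha_x N}$ for some $\alpha_x>0$.

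The one point needing care is that the exponent must be uniform over candidates. I would argue this in either of two ways. The first is to set $\alpha:=\min_{x\in X}\alpha_x$, which is positive because $X$ is finite. The second, more intrinsic, is covariance: since $U(g)$ is a tensor product of local unitaries, a local tensor-product decomposition of $O_S=\rho$ is carried by $\mathcal{U}_g$ to one of $O_{gS}$ with the same coefficients $c_j$ and the same unit-norm local factors. Consequently the constants $C$ and the gap $\Delta$ in the proof of Lemma~\ref{lem:binary_test_LOCC} coincide for all $x$. Either way, all tests satisfy the error bound with a common $\alpha>0$.

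Finally, we apply Lemma~\ref{lem:multiple_test_LOCC} with $N=k$ and $\epsilon=e^{-\alpha k}$. It uses $k|X|$ copies, running test $x$ on block $x$ and decoding the resulting bit string classically. This produces an LOCC measurement $\{M^{(k)}_{gS}\}$ with $\mathrm{tr}\bigl(M^{(k)}_{gS}\rho_{gS}^{\otimes k|G/S|}\bigr)\geq 1-|G/S|e^{-\alpha k}$ for every coset, and hence for every $g\in G$, as claimed. The main obstacle is only the uniformity of $\alpha$ across candidates, which is handled above; the rest is direct invocation of the earlier lemmas.
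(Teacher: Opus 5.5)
Your proposal is correct and follows essentially the same route as the paper: take the observable $\rho_x$ for each coset, invoke Lemma~\ref{lem:binary_test_LOCC} to get one-versus-rest LOCC tests, and combine them with Lemma~\ref{lem:multiple_test_LOCC} on $k|G/S|$ copies. The only difference is how the exponent is made uniform. The paper builds a single test $T_S^{(k)}$ for the identity coset and defines the others as $U(g)^{\otimes k}T_S^{(k)}U(g)^{\dag\otimes k}$, which are still LOCC because $U(g)$ is a product of local unitaries; this matches your covariance remark, and your minimum over finitely many $\alpha_x$ works equally well.
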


\begin{proof}
    We fix a representative $g$ of each coset $gS\in G/S$.
    For every $gS\in G/S$, we define the orbit state 
    \begin{align}
        \rho_{gS}:=U(g)\rho U(g)^\dag. 
    \end{align}
    This definition is independent of the choice of the representative $g$, and $\rho_{gS}=\rho_{hS}$ holds if and only if $gS=hS$.

    We first construct a one-versus-rest test for the state corresponding to the identity coset $S$.
    We choose
    \begin{align}
        O_S:=\rho.
    \end{align}
    We note that 
    \begin{align}
        \mathrm{tr}\left(\rho O_S\right)-\mathrm{tr}\left(\rho_{hS}O_S\right)
        =&\mathrm{tr}\left(\rho^2\right)-\mathrm{tr}\left(\rho_{hS}\rho\right) \nonumber\\
        =&\frac{1}{2}\mathrm{tr}\left(\left(\rho-\rho_{hS}\right)^2\right) \nonumber\\
        =&\frac{1}{2}\left\|\rho-\rho_{hS}\right\|_2^2 \nonumber\\
        \geq &0.
    \end{align}
    Equality holds if and only if $hS=S$, which means $h\in S$. 
    Consequently, 
    \begin{align}
        \mathrm{tr}\left(\rho O_S\right)>\mathrm{tr}\left(\rho_{hS}O_S\right)
    \end{align}
    for every $hS\neq S$.

    Lemma~\ref{lem:binary_test_LOCC} therefore gives a constant $\alpha>0$ and an LOCC test $T_S^{(k)}$ on $k$ copies such that
    \begin{align}
        \mathrm{tr}\left(T_S^{(k)}\rho^{\otimes k}\right)
        \geq &1-e^{-\alpha k}, \\
        \mathrm{tr}\left(T_S^{(k)}\rho_{hS}^{\otimes k}\right)
        \leq &e^{-\alpha k}\ \forall h\notin S. 
    \end{align}

    For each $gS\in G/S$, we define the rotated test 
    \begin{align}
        T_{gS}^{(k)}:=U(g)^{\otimes k}T_S^{(k)}U(g)^{\dag\otimes k}. 
    \end{align}
    We note that $T_{gS}^{(k)}$ is an LOCC test because $U(g)$ is a tensor product of local unitaries.

    For the target state $\rho_{gS}$, unitary invariance of the trace gives
    \begin{align}
        \mathrm{tr}\left(T_{gS}^{(k)}\rho_{gS}^{\otimes k}\right) 
        =\mathrm{tr}\left(T_S^{(k)}\rho^{\otimes k}\right) 
        \geq 1-e^{-\alpha k}. 
    \end{align}
    For every $hS\neq gS$, we similarly obtain
    \begin{align}
        \mathrm{tr}\left(T_{gS}^{(k)}\rho_{hS}^{\otimes k}\right) 
        =\mathrm{tr}\left(T_S^{(k)}\rho_{g^{-1}hS}^{\otimes k}\right) 
        \leq e^{-\alpha k}, 
    \end{align}
    where $g^{-1}hS\neq S$ follows from $hS\neq gS$.

    Thus, the family $\{T_{gS}^{(k)}\}_{gS\in G/S}$ satisfies the assumptions of Lemma~\ref{lem:multiple_test_LOCC} with the same error exponent $\alpha$ for every $gS\in G/S$.
    Applying Lemma~\ref{lem:multiple_test_LOCC} gives an LOCC measurement $\{M_{gS}^{(k)}\}_{gS\in G/S}$ on $k|G/S|$ copies satisfying
    \begin{align}
        \mathrm{tr}\left(M_{gS}^{(k)}\rho_{gS}^{\otimes k|G/S|}\right)\geq 1-|G/S|e^{-\alpha k} 
    \end{align}
    for all $gS\in G/S$. 
\end{proof}

Applying the preceding construction to the finite orbit of $\rho$ gives an LOCC measurement that identifies the orbit with asymptotically vanishing error.
Since LOCC is closed under measurement-conditioned LOCC operations, this measurement is control-admissible.
Theorem~\ref{thm:rate_lower_bound_general} therefore gives the main map-level result of this section.

\begin{theorem}[Asymptotic conversion rates under finite-group-covariant LOCC] \label{thm:LOCC_rate_map_level}
    Let $G$ be a finite group, and $U$ and $U'$ be projective unitary representations of $G$ on the input and output multipartite Hilbert spaces, respectively. 
    Suppose that $U(g)$ and $U'(g)$ are tensor products of local unitaries for every $g\in G$. 
    Then, for arbitrary multipartite states $\rho$ and $\sigma$, 
    \begin{align}
        R_{G\text{-}\mathrm{cov}, \mathrm{LOCC}}(\rho\to\sigma)
        =
        \begin{cases}
            R_{\mathrm{LOCC}}(\rho\to\sigma) & \text{ if }\mathrm{Sym}_{G,U}(\rho)\subset\mathrm{Sym}_{G,U'}(\sigma),\\
            0 & \text{ otherwise}.
        \end{cases}
    \end{align}
\end{theorem}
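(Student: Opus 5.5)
The proof splits into the two cases of the statement. In each case we combine the general conversion theorem (Theorem~\ref{thm:rate_lower_bound_general}) with the LOCC orbit-discrimination lemma (Lemma~\ref{lem:LOCC_discrimination}), or with the obstruction argument of Theorem~\ref{thm:asymmetry_rate}.

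\emph{Case $\mathrm{Sym}_{G,U}(\rho)\subset\mathrm{Sym}_{G,U'}(\sigma)$.} The inequality $R_{G\text{-}\mathrm{cov},\mathrm{LOCC}}\leq R_{\mathrm{LOCC}}$ is automatic, since $G$-covariant LOCC is a subset of LOCC. For the converse, set $K:=S=\mathrm{Sym}_{G,U}(\rho)$ and fix an arbitrary achievable rate $r<R_{\mathrm{LOCC}}(\rho\to\sigma)$, achieved by LOCC channels $\mathcal{E}^{(n)}$. We must verify the two hypotheses of Theorem~\ref{thm:rate_lower_bound_general}.

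For Eq.~\eqref{eq:thm:rate_lower_bound_general01}, note that for $k\in S$ we have $\mathcal{U}_k(\rho)=\rho$ and, because $S\subset\mathrm{Sym}(\sigma)$, also $\mathcal{U}'_k(\sigma)=\sigma$. Each term of the $K$-average is therefore $\mathrm{T}(\mathcal{E}^{(n)}(\rho^{\otimes n}),\sigma^{\otimes\lfloor rn\rfloor})$, which tends to $0$.

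For Eq.~\eqref{eq:thm:rate_lower_bound_general02}, Lemma~\ref{lem:LOCC_discrimination} supplies LOCC POVMs on $k|G/S|$ copies with worst-case error at most $|G/S|e^{-\alpha k}$. For general $l$ we take $k=\lfloor l/|G/S|\rfloor$ and discard the leftover copies. Discarding is LOCC, so this still gives error tending to $0$, and in particular the $G$-average tends to $1$. These POVMs are control-admissible in the sense of Definition~\ref{def:control_admissible_POVM}. The reason is that performing the LOCC measurement, broadcasting the outcome classically, and applying an outcome-dependent LOCC channel $\mathcal{E}_x$ is again an LOCC protocol. The map $L\mapsto\sum_x\mathcal{E}_x(\mathrm{tr}_1((M_x\otimes I)L))$ is exactly the channel realized by that protocol; since the POVM is complete, the $\perp$ term vanishes.

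We also need LOCC to satisfy the standing assumptions of the general framework. Tensor closure, local ancilla preparation, partial trace and convexity all hold, and convexity is realized through shared classical randomness. The $G$-compatibility of Assumption~\ref{asm:resource_symmetry_compatibility} holds because conjugation by tensor products of local unitaries maps LOCC to LOCC. Theorem~\ref{thm:rate_lower_bound_general} then gives $R_{G\text{-}\mathrm{cov},\mathrm{LOCC}}\geq r$. Taking the supremum over $r$ yields equality; if $R_{\mathrm{LOCC}}=0$ there is nothing to prove.

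\emph{Otherwise.} Pick $g\in\mathrm{Sym}(\rho)\setminus\mathrm{Sym}(\sigma)$. The argument of the second half of Theorem~\ref{thm:asymmetry_rate} uses only $G$-covariance of the channel, so it applies verbatim to $G$-covariant LOCC. It gives
\begin{align}
\mathrm{T}(\mathcal{E}(\rho^{\otimes n}),\sigma^{\otimes m})\geq\tfrac12\mathrm{T}(\sigma^{\otimes m},\mathcal{U}_g'(\sigma)^{\otimes m})\geq\tfrac12\mathrm{T}(\sigma,\mathcal{U}'_g(\sigma))>0
\end{align}
for any $m\geq1$, where the second inequality follows by tracing out all but one copy. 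Hence no positive rate is achievable.

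\emph{Main obstacle.} The only nonroutine point is the rigorous justification of control-admissibility, that is, that the composite channel is LOCC, including the fact that the measurement acts on a disjoint block of $l_n$ copies. This is standard but should be stated carefully. Everything else is bookkeeping.
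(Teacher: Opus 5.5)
Your proposal is correct and follows the paper's proof. The upper bound comes from inclusion, and the lower bound from Theorem~\ref{thm:rate_lower_bound_general} with $K=S$, fed by the LOCC orbit discrimination of Lemma~\ref{lem:LOCC_discrimination}, whose control-admissibility rests on the same LOCC-closure observation. The zero-rate case uses the covariance-plus-triangle-inequality obstruction of Theorem~\ref{thm:asymmetry_rate}. Two of your steps fill in details the paper leaves implicit without changing the route: you rederive that obstruction explicitly for $\sigma^{\otimes m}$ via partial trace, and you extend the $k|G/S|$-copy POVMs of Lemma~\ref{lem:LOCC_discrimination} to arbitrary $l$ by discarding leftover copies.
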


This theorem implies that if the input and output states are pure bipartite states $\Phi$ and $\Psi$, $R_{G^\text{-}\mathrm{cov}, \mathrm{LOCC}}(\Phi\to\Psi)$ is explicitly given by 
\begin{align}
    R_{G^\text{-}\mathrm{cov}, \mathrm{LOCC}}(\Phi\to\Psi)=
    \begin{cases}
        \displaystyle\frac{S(\Phi)}{S(\Psi)} & \textrm{if } \mathrm{Sym}(\Phi)\subset\mathrm{Sym}(\Psi) \\
        0 & \mathrm{otherwise}, 
    \end{cases}
\end{align}
with the entanglement entropy $S(\cdot)$.

We note that the same argument for this proof extends directly from LOCC to separable operations.
Let $\mathrm{SEP}$ denote the class of multipartite quantum channels admitting a Kraus representation with product Kraus operators.
Since $\mathrm{LOCC}\subset\mathrm{SEP}$, the orbit-discrimination measurement constructed above is also available within $\mathrm{SEP}$.
Moreover, a separable measurement followed by an outcome-dependent separable operation is again separable, so the measurement is control-admissible for $\mathrm{SEP}$.
The remaining assumptions of the general conversion theorem are also preserved under separable operations.
Consequently, the same rate formula holds with $\mathrm{LOCC}$ replaced by $\mathrm{SEP}$.

\begin{proof}
    First, we consider the case where $\mathrm{Sym}_{G,U}(\rho)\not\subset\mathrm{Sym}_{G,U'}(\sigma)$. 
    Since the class of $G$-covariant LOCC is smaller than that of $G$-covariant operations, we have 
    \begin{align}
        R_{G\text{-cov}, \mathrm{LOCC}}(\rho\to\sigma)\leq
        R_{G\text{-cov}}(\rho\to\sigma). 
    \end{align}
    Theorem~\ref{thm:asymmetry_rate} implies that 
    \begin{align}
        R_{G\text{-cov}}(\rho\to\sigma)=0. 
    \end{align}
    Thus, we get 
    \begin{align}
        R_{{G\text{-cov}}, \mathrm{LOCC}}(\rho\to\sigma)=0.
    \end{align}

    Next, we consider the case where $\mathrm{Sym}_G(\rho)\subset\mathrm{Sym}_G(\sigma)$. 
    Since the class of $G$-covariant LOCC is smaller than that of LOCC, we trivially have 
    \begin{align}
        R_{{G\text{-cov}}, \mathrm{LOCC}}(\rho\to\sigma)\leq
        R_{\mathrm{LOCC}}(\rho\to\sigma). 
    \end{align}
    We take arbitrary $r\in (0, R_{\mathrm{LOCC}}(\rho\to\sigma))$. 
    Then, there exists some sequence of LOCC such that 
    \begin{align}
        \lim_{n\to\infty} T(\mathcal{E}^{(n)}(\rho^{\otimes n}), \sigma^{\otimes \lfloor rn \rfloor})=0. 
    \end{align}
    Since $\rho$ and $\sigma$ are invariant by the action of $s\in S$, we have 
    \begin{align}
        \lim_{n\to\infty} \frac{1}{|S|}\sum_{s\in S} T\left(\mathcal{E}^{(n)}(\mathcal{U}_s(\rho)^{\otimes n}), \mathcal{U}'_s(\sigma)^{\otimes \lfloor rn \rfloor}\right)=0. 
    \end{align}
    Lemma~\ref{lem:LOCC_discrimination} provides a control-admissible LOCC measurement that satisfies 
    \begin{align}
        \lim_{k\to\infty} \frac{1}{|G|}\sum_{g\in G} \mathrm{tr}\left(M_{gS}^{(k)}\mathcal{U}_g(\rho)^{\otimes k}\right)=1. 
    \end{align}
    Therefore, the general conversion theorem (Theorem~\ref{thm:rate_lower_bound_general}) gives
    \begin{align}
        R_{{G\text{-cov}}, \mathrm{LOCC}}(\rho\to\sigma)\geq
        r.
    \end{align}
    Since this holds for all $r\in(0, R_{\mathrm{LOCC}}(\rho\to\sigma))$, we get 
    \begin{align}
        R_{{G\text{-cov}}, \mathrm{LOCC}}(\rho\to\sigma) 
        =R_{\mathrm{LOCC}}(\rho\to\sigma).
    \end{align}
\end{proof}

The map-level theorem covers both global and local covariance.
For global covariance, the same group element $g\in G$ acts on every party through the tensor-product representations $U_A(g)$ and $U_B(g)$.
For independent local covariance, we instead apply the theorem to the product group $G_1\times\cdots\times G_m$, with representations
\begin{align}
    &U_A(g_1,\ldots,g_m)=\bigotimes_{i=1}^m U_{A_i}(g_i), \\
    &U_B(g_1,\ldots,g_m)=\bigotimes_{i=1}^m U_{B_i}(g_i).
\end{align}
When all local groups equal $G$, global covariance is covariance under the diagonal subgroup of $G^m$, whereas local covariance is covariance under the full product group.
The symmetry-subgroup condition in the rate theorem must therefore be evaluated for the group whose covariance is imposed.

These are different map-level restrictions.
The relation between their primitive-level realizations additionally depends on which shared ancillary states and classical communication channels are available.
We make these assumptions explicit below.

\subsection{Primitive-level implementations of covariant LOCC}
\label{sec:primitive-level}

In the preceding sections, the allowed operations were characterized at the level of quantum channels.
In this section, we show that the same class of transformations can be generated from symmetric local primitives.
This provides an operational interpretation of covariant LOCC transformations in terms of symmetric local unitaries, symmetric local projective measurements, symmetric state introduction, classical communication, and classical control.

Let $m$ be the number of parties, and let $G_i$ be a finite group acting on the system of the $i$th party.
We consider a subgroup
\begin{align}
    H\subset G_1\times\cdots\times G_m.
\end{align}
For $\boldsymbol{g}=(g_1,\ldots,g_m)\in H$, we write
\begin{align}
    U(\boldsymbol{g})
    :=\bigotimes_{i=1}^m U_i(g_i),
    \qquad
    U'(\boldsymbol{g})
    :=\bigotimes_{i=1}^m U_i'(g_i),
\end{align}
and define the corresponding actions by
\begin{align}
    \mathcal{U}_{\boldsymbol{g}}
    :=U(\boldsymbol{g})\,\cdot\,U(\boldsymbol{g})^\dag,
    \qquad
    \mathcal{U}_{\boldsymbol{g}}^{\prime}
    :=U'(\boldsymbol{g})\,\cdot\,U'(\boldsymbol{g})^\dag.
\end{align}
A quantum channel $\mathcal{E}$ is $H$-covariant if
\begin{align}
    \mathcal{E}\circ\mathcal{U}_{\boldsymbol{g}}
    =\mathcal{U}_{\boldsymbol{g}}^{\prime}\circ\mathcal{E}
\end{align}
for every $\boldsymbol{g}\in H$.

We denote by $\mathfrak{O}_{\mathrm{covLOCC},H}$ the class of finite-round LOCC channels that are $H$-covariant.
We denote by $\mathfrak{O}_{\mathrm{prim},H}$ the class of channels generated by $H$-symmetric local unitaries, $H$-symmetric local projective measurements, symmetric state introduction, symmetry-compatible classical communication and classical control, and discarding of local subsystems.
The classical registers containing the measurement outcomes and the classical transcript are taken to transform trivially under $H$.

Classical communication in this primitive model transmits only registers on which the symmetry acts trivially.
It does not include the transmission of quantum reference systems or classical registers carrying nontrivially transforming group labels.
Local symmetry-invariant ancillary states may be freely introduced.
Whenever a shared ancillary state is additionally admitted, its availability is stated explicitly, and the state is required to be separable across the parties so that its introduction does not supply entanglement.
In particular, the availability of the correlated reference state $\Gamma_H$ below is a separate assumption.

For each party, let $\{|\xi_{g_i}\rangle\}_{g_i\in G_i}$ denote the canonical basis carrying the left regular representation of $G_i$.
For $\boldsymbol{g}=(g_1,\ldots,g_m)$, define
\begin{align}
    |\boldsymbol{\xi}_{\boldsymbol{g}}\rangle
    :=\bigotimes_{i=1}^m|\xi_{g_i}\rangle,
\end{align}
and introduce the correlated reference state
\begin{align}
    \Gamma_H
    :=\frac{1}{|H|}
    \sum_{\boldsymbol{g}\in H}
    |\boldsymbol{\xi}_{\boldsymbol{g}}\rangle
    \langle\boldsymbol{\xi}_{\boldsymbol{g}}|.
\end{align}

\begin{proposition}[Equivalence of map-level and primitive-level operations]
\label{prop:primitive-map-equivalence}
    Suppose that the state $\Gamma_H$ can be freely introduced by the primitive-level theory.
    Then every finite-round $H$-covariant LOCC channel admits an exact implementation using $H$-symmetric local primitives.
    Consequently,
    \begin{align}
        \mathfrak{O}_{\mathrm{prim},H}
        =\mathfrak{O}_{\mathrm{covLOCC},H}.
    \end{align}
\end{proposition}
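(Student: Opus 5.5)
The plan is to prove both inclusions. For $\mathfrak{O}_{\mathrm{prim},H}\subset\mathfrak{O}_{\mathrm{covLOCC},H}$, I would argue generator by generator. An $H$-symmetric local unitary or a symmetric local projective measurement commutes with $\mathcal{U}_{\boldsymbol{g}}$ once we use the trivial action on the outcome registers. Introducing a symmetric state, discarding, and classical communication or control of invariant registers are all $H$-covariant LOCC operations. Covariance is preserved under composition, tensor products and classical control on invariant registers, so every generated channel is a finite-round $H$-covariant LOCC channel. The state $\Gamma_H$ is separable, so introducing it supplies no entanglement, and it is $H$-invariant because $L(\boldsymbol{h})$ permutes $H$.

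For the converse, I would start from an $H$-covariant finite-round LOCC channel $\mathcal{E}$ with some arbitrary, non-symmetric LOCC implementation. I would then simulate that implementation ``in the frame of the reference'' as follows. First, append $\Gamma_H$, so that every party $i$ holds a register $R_i$ in the state $|\xi_{g_i}\rangle$ with the same random $\boldsymbol{g}\in H$. Next, each party applies the controlled unitary $V_i:=\sum_{g_i}|\xi_{g_i}\rangle\langle\xi_{g_i}|\otimes U_i(g_i)^\dag$ to $R_iA_i$. This unitary is symmetric under $L(h_i)\otimes U_i(h_i)$, since conjugation relabels $g_i\to h_ig_i$ and $U_i(h_ig_i)^\dag U_i(h_i)=U_i(g_i)^\dag$ up to a cocycle phase that cancels in the adjoint action. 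The parties then run the original LOCC protocol on the ``de-rotated'' systems. Every local operation $\Lambda$ acts on $A_i$ but is conditioned on nothing else, and it is turned into its twirl-free controlled version $\sum_{g_i}|\xi_{g_i}\rangle\langle\xi_{g_i}|\otimes\Lambda$ acting on the pair $R_iA_i$. That controlled version is invariant, because on the rotated frame $\Lambda$ sees only $U_i(g_i)^\dag$-corrected inputs. Local measurements are realized the same way, as controlled projective measurements (after a Naimark dilation with symmetric ancillas). Their outcomes are then invariant classical data that can be communicated freely. After the protocol, each party applies $V_i'^\dag$ on the output, re-rotating by $U_i'(g_i)$, and discards $R_i$.

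The resulting channel is $\frac{1}{|H|}\sum_{\boldsymbol{g}\in H}\mathcal{U}'_{\boldsymbol{g}}\circ\mathcal{E}\circ\mathcal{U}_{\boldsymbol{g}^{-1}}$. By $H$-covariance this equals $\mathcal{E}$, so the implementation is exact.

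The main obstacle is checking that each conditioned step is genuinely an $H$-symmetric local primitive when only a \emph{local} component $g_i$ of a correlated $\boldsymbol{g}$ is available, and that the classical transcript is truly invariant. Symmetry of the local primitives only requires invariance under $G_i$-actions restricted to the projections of $H$, and the controlled constructions are $G_i$-invariant. The transcript is invariant because, conditioned on $\boldsymbol{g}$, its distribution is that of the original protocol on $\mathcal{U}_{\boldsymbol{g}^{-1}}(\rho)$. I would formalize this by induction over rounds: the joint state of references, de-rotated systems and transcript after each round equals $\frac{1}{|H|}\sum_{\boldsymbol{g}}|\boldsymbol{\xi}_{\boldsymbol{g}}\rangle\langle\boldsymbol{\xi}_{\boldsymbol{g}}|\otimes(\text{round output on }\mathcal{U}_{\boldsymbol{g}^{-1}}(\cdot))$. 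The whole construction is also invariant under the joint $H$-action on references and systems. Projective phases are handled by noting that every step uses only the adjoint actions $\mathcal{U}_{g_i}$.
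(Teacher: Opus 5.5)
Your inclusion $\mathfrak{O}_{\mathrm{prim},H}\subset\mathfrak{O}_{\mathrm{covLOCC},H}$ and your final averaging formula agree with the paper. The converse, however, fails at exactly the step you flagged as the main obstacle.

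The de-rotation $V_i=\sum_{g}\ket{\xi_g}\bra{\xi_g}\otimes U_i(g)^\dag$ is not invariant under $L(h)\otimes U_i(h)$. Conjugation gives $\sum_g\ket{\xi_{hg}}\bra{\xi_{hg}}\otimes U_i(h)U_i(g)^\dag U_i(h)^\dag$, and requiring $U_i(h)U_i(g)^\dag U_i(h)^\dag\propto U_i(hg)^\dag$ would force $U_i(h)\propto I$. The identity you quote, $U_i(hg)^\dag U_i(h)\propto U_i(g)^\dag$, says something else: $V_i\,(L(h)\otimes U_i(h))=(L(h)\otimes I)\,V_i$ up to block phases. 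So $V_i$ \emph{intertwines} $L\otimes U_i$ with $L$ tensored with the trivial representation on $A_i$; it is not invariant.

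The same problem affects the later steps. Your ``twirl-free controlled version'' $\sum_g\ket{\xi_g}\bra{\xi_g}\otimes\Lambda$ equals $I_{R_i}\otimes\Lambda$, which is just $\Lambda$. It, and the projectors $I_{R_i}\otimes P_x$, are symmetric only if you silently reassign the trivial representation to $A_i$ after applying $V_i$. In the primitive model a physical system carries the same representation before and after every elementary operation. Hence none of these steps is an $H$-symmetric primitive, and the converse inclusion is not established.

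Two further points are also unsupported:
\begin{itemize}
\item Your distributional argument for transcript neutrality does not replace the actual requirement, which is that each outcome projector commute with the symmetry.
\item State introduction is left open. In block $g$ a non-invariant ancilla $\tau$ must appear as $U_E(g)\tau U_E(g)^\dag$, and ``symmetric ancillas'' does not say how this is achieved.
\end{itemize}

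There are two ways to repair the argument.

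\textbf{The paper's route.} Never separate the frame change from the operation. Each unitary or projector $X$ on $Q$ is replaced by $\sum_g U_Q(g)XU_Q(g)^\dag\otimes\ket{\xi_g}\bra{\xi_g}$. This is invariant because $U_Q(h)U_Q(g)XU_Q(g)^\dag U_Q(h)^\dag=U_Q(hg)XU_Q(hg)^\dag$. An ancilla $\tau$ is introduced through the invariant state $|G|^{-1}\sum_h U_E(h)\tau U_E(h)^\dag\otimes\ket{\xi_h}\bra{\xi_h}_{R'}$, followed by an invariant controlled unitary that aligns $h$ with the stored label $g$.

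\textbf{Making your frame change physical.} The gate $\sum_g\ket{\xi_g}\bra{\xi_g}\otimes\bigl(U_i(g)\otimes U_i(g)^\dag\bigr)\mathrm{SWAP}_{A_iA_i'}$ is exactly invariant under $L(h)\otimes U_i(h)\otimes I_{A_i'}$, since the cocycle phases cancel. In block $g$ it moves $U_i(g)^\dag\rho\,U_i(g)$ into a fresh register $A_i'$ that carries the trivial representation. The junk left on $A_i$ is then discarded. The whole original protocol, including its ancilla preparations and measurements, can then run on trivially transforming registers, where every operation is automatically symmetric. The analogous gate transfers the result into $B_i$, initialized in its maximally mixed state.

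Your relational route needs only two nontrivial symmetric gates per party. It does rely on being allowed to append trivially transforming workspace registers. The paper's construction instead keeps the original realization and its representations intact.
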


\begin{proof}
    Every channel generated by the primitive-level operations is LOCC because all of its quantum operations are local and the only communication between the parties is classical.
    Moreover, each primitive is $H$-covariant, and covariance is preserved under sequential composition, classical conditioning, and coarse-graining over measurement outcomes.
    It follows that
    \begin{align}
        \mathfrak{O}_{\mathrm{prim},H}
        \subset\mathfrak{O}_{\mathrm{covLOCC},H}.
    \end{align}

    We next prove the converse inclusion.
    Let $\mathcal{E}\in\mathfrak{O}_{\mathrm{covLOCC},H}$, and fix a finite-round LOCC realization of $\mathcal{E}$.
    This realization may contain local state introductions, local unitaries, local measurements, classical communication, and operations conditioned on the preceding classical transcript.

    We attach to the $i$th party a regular-representation register with basis $\{|\xi_{g_i}\rangle\}_{g_i\in G_i}$.
    Each local primitive in the chosen LOCC realization is then replaced by its symmetric extension constructed in Sec.~\ref{sec:LOCC_technical_lemmas}.
    The same group label $\boldsymbol{g}$ is retained throughout all rounds and all branches of the protocol.
    Since the classical transcript transforms trivially, operations conditioned on the transcript can be extended independently in each classical branch.
    The technical lemmas in Sec.~\ref{sec:LOCC_technical_lemmas} therefore imply that the entire LOCC realization can be replaced by a protocol composed only of $H$-symmetric local primitives.

    Let $\widetilde{\mathcal{E}}$ denote the resulting channel on the principal system and the regular-representation registers.
    For every input state $\rho$, its action on $\rho\otimes\Gamma_H$ is
    \begin{align}
        \widetilde{\mathcal{E}}\left(\rho\otimes\Gamma_H\right)
        =&\frac{1}{|H|}
        \sum_{\boldsymbol{g}\in H}
        \left(
        \mathcal{U}_{\boldsymbol{g}}^{\prime}
        \circ\mathcal{E}
        \circ\mathcal{U}_{\boldsymbol{g}^{-1}}
        \right)(\rho)
        \otimes
        |\boldsymbol{\xi}_{\boldsymbol{g}}\rangle
        \langle\boldsymbol{\xi}_{\boldsymbol{g}}|.
    \end{align}
    The $H$-covariance of $\mathcal{E}$ gives
    \begin{align}
        \mathcal{U}_{\boldsymbol{g}}^{\prime}
        \circ\mathcal{E}
        \circ\mathcal{U}_{\boldsymbol{g}^{-1}}
        =\mathcal{E}
    \end{align}
    for every $\boldsymbol{g}\in H$.
    Hence,
    \begin{align}
        \widetilde{\mathcal{E}}\left(\rho\otimes\Gamma_H\right)
        =\mathcal{E}(\rho)\otimes\Gamma_H.
    \end{align}
    Thus, the symmetric protocol implements $\mathcal{E}$ exactly on the principal system and returns the reference state $\Gamma_H$ unchanged.
    Since $\Gamma_H$ can be freely introduced, this proves that
    \begin{align}
        \mathfrak{O}_{\mathrm{covLOCC},H}
        \subset\mathfrak{O}_{\mathrm{prim},H}.
    \end{align}
    Combining the two inclusions proves the claim.
\end{proof}

\subsubsection{Local symmetry}
\label{subsec:primitive-local-symmetry}

We first specialize Proposition~\ref{prop:primitive-map-equivalence} to local symmetry.
In this case,
\begin{align}
    H_{\mathrm{loc}}
    :=G_1\times\cdots\times G_m.
\end{align}
The corresponding reference state factorizes as
\begin{align}
    \Gamma_{\mathrm{loc}}
    =\frac{1}{|G_1|\cdots|G_m|}
    \sum_{g_1\in G_1,\ldots,g_m\in G_m}
    \bigotimes_{i=1}^m
    |\xi_{g_i}\rangle\langle\xi_{g_i}| 
    =\bigotimes_{i=1}^m
    \left(
    \frac{1}{|G_i|}
    \sum_{g_i\in G_i}
    |\xi_{g_i}\rangle\langle\xi_{g_i}|
    \right).
\end{align}
Each factor is invariant under the corresponding local regular representation.
Therefore, $\Gamma_{\mathrm{loc}}$ is a symmetric product state and can be introduced by local symmetric state introduction.

\begin{corollary}[Local symmetry]
\label{cor:primitive-local-equivalence}
    Under local symmetry, the primitive-level operation class coincides with the class of locally covariant LOCC channels:
    \begin{align}
        \mathfrak{O}_{\mathrm{prim},H_{\mathrm{loc}}}
        =\mathfrak{O}_{\mathrm{covLOCC},H_{\mathrm{loc}}}.
    \end{align}
\end{corollary}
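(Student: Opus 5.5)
The corollary is a direct specialization of Proposition~\ref{prop:primitive-map-equivalence} to $H=H_{\mathrm{loc}}=G_1\times\cdots\times G_m$. The only hypothesis of that proposition is that the correlated reference state $\Gamma_H$ can be freely introduced by the primitive-level theory. Our plan is therefore to verify this hypothesis for $H_{\mathrm{loc}}$ and then invoke the proposition verbatim.

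The verification has two steps. First, we use the factorization already computed above:
\begin{align}
    \Gamma_{\mathrm{loc}}=\bigotimes_{i=1}^m \tau^{(i)}_{\{e\}},
\end{align}
where
\begin{align}
    \tau^{(i)}_{\{e\}}:=\frac{1}{|G_i|}\sum_{g_i\in G_i}|\xi_{g_i}\rangle\langle\xi_{g_i}|=I/|G_i|
\end{align}
is the maximally mixed state on the $i$th regular-representation register. Second, we check that each factor is invariant under the local action. Because $L_i(h)$ permutes the basis $\{|\xi_{g_i}\rangle\}$, we have
\begin{align}
    L_i(h)\,\tau^{(i)}_{\{e\}}\,L_i(h)^\dag=\tau^{(i)}_{\{e\}}\ \forall h\in G_i.
\end{align}
Hence $\tau^{(i)}_{\{e\}}$ is a local symmetry-invariant state, and such states may be freely introduced by party $i$ under the primitive model. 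The tensor product of these independent local preparations is a product state across the parties. It therefore supplies no entanglement and requires no shared-ancilla assumption. Its invariance under $H_{\mathrm{loc}}$ follows factorwise, since the representation of $(h_1,\ldots,h_m)$ acts as $\bigotimes_i L_i(h_i)$.

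With the hypothesis verified, Proposition~\ref{prop:primitive-map-equivalence} gives $\mathfrak{O}_{\mathrm{prim},H_{\mathrm{loc}}}=\mathfrak{O}_{\mathrm{covLOCC},H_{\mathrm{loc}}}$ directly. The one point that needs care is that the proposition's construction retains the group label $\boldsymbol g$ in the registers throughout the protocol, with the symmetric extensions of Sec.~\ref{sec:LOCC_technical_lemmas} applied party by party. For the product group this label is simply the tuple of independent local labels, so each party only needs its own register and no cross-party correlation is required. We expect this consistency check to be the only nontrivial step; the rest is immediate.
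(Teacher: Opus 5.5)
Your proposal is correct and matches the paper's proof. Both verify that $\Gamma_{\mathrm{loc}}$ factorizes into locally invariant states that each party can introduce by symmetric state introduction, and then apply Proposition~\ref{prop:primitive-map-equivalence} with $H=H_{\mathrm{loc}}$. One small notational slip: under the paper's convention in Eq.~\eqref{eq:canonical_reference_state}, $\tau_{\{e\}}$ is the non-invariant pure state $|\xi_e\rangle\langle\xi_e|$, so the maximally mixed factor you mean should be written $\tau^{(i)}_{G_i}$.
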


\begin{proof}
    The reference state $\Gamma_{\mathrm{loc}}$ is a symmetric product state and is therefore freely introducible.
    The result follows directly from Proposition~\ref{prop:primitive-map-equivalence}.
\end{proof}

\subsubsection{Global symmetry}
\label{subsec:primitive-global-symmetry}

We next consider the case in which the same finite group $G$ acts globally on all parties.
The relevant subgroup is the diagonal subgroup
\begin{align}
    H_{\mathrm{glob}}
    :=\{(g,\ldots,g)\in G^m\ |\ g\in G\}.
\end{align}
The corresponding reference state is
\begin{align}
    \Gamma_{\mathrm{glob}}
    :=\frac{1}{|G|}
    \sum_{g\in G}
    \bigotimes_{i=1}^m
    |\xi_g\rangle\langle\xi_g|.
\end{align}
Unlike $\Gamma_{\mathrm{loc}}$, the state $\Gamma_{\mathrm{glob}}$ is generally correlated across the parties.
Nevertheless, it is invariant under the diagonal action of $G$.

The state $\Gamma_{\mathrm{glob}}$ is separable across the parties and invariant under the diagonal action.
However, its preparation is not implied by local symmetric state introduction and symmetry-neutral classical communication.
We therefore distinguish two primitive-level models according to whether this shared reference is freely available.

First, suppose that $\Gamma_{\mathrm{glob}}$ is admitted as a free shared ancillary state.
Its introduction does not supply entanglement, but it supplies a correlation between the local reference registers that is absent from a product of locally invariant reference states.
The primitive-level implementation theorem then gives the following result.

\begin{corollary}[Global symmetry with a shared invariant reference]
    \label{cor:primitive-global-equivalence}
    Suppose that the separable, globally invariant reference state $\Gamma_{\mathrm{glob}}$ can be freely introduced.
    Then the primitive-level operation class coincides with the class of finite-round globally covariant LOCC channels:
    \begin{align}
        \mathcal{O}_{\mathrm{prim},H_{\mathrm{glob}}}=&
        \mathcal{O}_{\mathrm{covLOCC},H_{\mathrm{glob}}}.
    \end{align}
\end{corollary}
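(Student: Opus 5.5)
This corollary follows from Proposition~\ref{prop:primitive-map-equivalence} specialized to $H=H_{\mathrm{glob}}$. The only hypothesis of that proposition is that the reference state $\Gamma_H$ can be freely introduced. Here the statement assumes exactly this, so the proof consists of checking that $\Gamma_{H_{\mathrm{glob}}}$ coincides with $\Gamma_{\mathrm{glob}}$ and that the admissibility conditions on shared ancillas are met.

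First, I will identify the reference states. Substituting $H=H_{\mathrm{glob}}=\{(g,\ldots,g)\}$ into the definition of $\Gamma_H$ gives
\begin{align}
\Gamma_{H_{\mathrm{glob}}}=\frac{1}{|G|}\sum_{g\in G}\bigotimes_{i=1}^m|\xi_g\rangle\langle\xi_g|=\Gamma_{\mathrm{glob}}.
\end{align}
Next, I will confirm that introducing $\Gamma_{\mathrm{glob}}$ is consistent with the primitive model. It is separable, being a convex mixture of product states, so it supplies no entanglement. It is also invariant under the diagonal action: $L(g)^{\otimes m}$ maps the summand labelled $h$ to the one labelled $gh$, which only permutes the terms of the sum. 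Together these show that this state is precisely the kind of explicitly admitted shared ancilla described in the primitive model.

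With these facts, Proposition~\ref{prop:primitive-map-equivalence} yields both inclusions. One inclusion holds because primitives are LOCC and $H_{\mathrm{glob}}$-covariant. The other comes from the symmetric-extension construction: it returns $\Gamma_{\mathrm{glob}}$ unchanged and implements $\mathcal{E}$ exactly. The only point requiring care is that the symmetric extensions from Sec.~\ref{sec:LOCC_technical_lemmas} use each party's local register $|\xi_{g}\rangle$, and that all parties' registers carry the same label $g$ in each branch. Both properties are exactly what the correlated state $\Gamma_{\mathrm{glob}}$ provides. No further obstacle is expected.
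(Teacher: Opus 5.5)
Your proposal is correct and matches the paper's proof, which simply invokes Proposition~\ref{prop:primitive-map-equivalence} with $H=H_{\mathrm{glob}}$, using the assumed free availability of $\Gamma_{\mathrm{glob}}$. Your checks that $\Gamma_{H_{\mathrm{glob}}}=\Gamma_{\mathrm{glob}}$, and that this state is separable and invariant under the diagonal action, make explicit what the paper leaves implicit.
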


\begin{proof}
    The assumed availability of $\Gamma_{\mathrm{glob}}$ allows us to apply Proposition~\ref{prop:primitive-map-equivalence} with $H=H_{\mathrm{glob}}$.
\end{proof}

\paragraph{Only local ancillary preparation.}
Consider instead the model in which ancillary quantum states can be introduced only locally and are invariant under the corresponding local $G$ action.
Shared classical randomness is allowed, but every communicated classical register transforms trivially under the symmetry, and no quantum system or nontrivially transforming classical register is transmitted between parties.
These restrictions concern freely supplied ancillary systems, not the resource input whose conversion is being studied.

Let $\mathcal{O}^{\mathrm{local}}_{\mathrm{prim},H_{\mathrm{glob}}}$ denote the resulting class.
Although symmetry is initially described using the diagonal group, the allowed local primitives are exactly those of the independently symmetric model with $H_{\mathrm{loc}}=G^m$.
For example, for a unitary $V_i$ acting at party $i$,
\begin{align}
    \left[V_i\otimes I_{\overline i},
    \bigotimes_{j=1}^m U_j(g)\right]=&0
    \quad\text{for every }g\in G
\end{align}
is equivalent to $[V_i,U_i(g)]=0$ for every $g\in G$.
The same equivalence holds for every local measurement projection.
Each measurement branch is therefore covariant under the local action at that party and acts trivially on the other parties.

Local invariant ancillary preparation, discarding, and symmetry-neutral classical communication also respect the independent group actions.
Consequently, composition and classical conditioning generate exactly the same operation class as in the locally symmetric primitive model:
\begin{align}
    \mathcal{O}^{\mathrm{local}}_{\mathrm{prim},H_{\mathrm{glob}}}=&
    \mathcal{O}_{\mathrm{prim},H_{\mathrm{loc}}},
    \qquad H_{\mathrm{loc}}=G^m.
    \label{eq:global_local_ancilla_equivalence}
\end{align}
In particular, for two parties this is the primitive-level theory with independent $G\times G$ symmetry.
Using Corollary~\ref{cor:primitive-local-equivalence}, the implemented channels are precisely the finite-round $G^m$-covariant LOCC channels under the stated ancillary-system assumptions.
The corresponding conversion rate is therefore governed by the symmetry subgroups for the product-group action, not merely by those for the diagonal action.

This equivalence does not identify the full map-level classes of globally and locally covariant LOCC channels.
It follows from the specified restrictions on the primitives and ancillary resources.
Moreover, excluding the direct introduction of shared ancillary states alone would not suffice if classical registers carrying nontrivially transforming group labels could be transmitted.
Such communication could distribute correlations between reference registers and thereby make $\Gamma_{\mathrm{glob}}$ available.

\subsubsection{Consequences for asymptotic conversion rates}
\label{subsec:primitive-rates}

The preceding equivalence shows that replacing covariant LOCC channels by their primitive-level implementations does not change the achievable transformations.
In particular, it does not change the asymptotic conversion rate.

Let $R_{\mathrm{prim},H}(\rho\to\sigma)$ denote the conversion rate under primitive-level $H$-symmetric LOCC, and let $R_{\mathrm{covLOCC},H}(\rho\to\sigma)$ denote the rate under $H$-covariant LOCC.
Proposition~\ref{prop:primitive-map-equivalence} gives
\begin{align}
    R_{\mathrm{prim},H}(\rho\to\sigma)
    =R_{\mathrm{covLOCC},H}(\rho\to\sigma).
\end{align}
Combining this equality with the map-level conversion theorem established in the preceding section yields the following result.

\begin{theorem}[Primitive-level conversion rate]
\label{thm:primitive-conversion-rate}
    Suppose that the reference state $\Gamma_H$ can be freely introduced.
    Then
    \begin{align}
        R_{\mathrm{prim},H}(\rho\to\sigma)
        =
        \begin{cases}
            R_{\mathrm{LOCC}}(\rho\to\sigma)
            &
            \text{if }\mathrm{Sym}_{H,U}(\rho)\subset\mathrm{Sym}_{H,U'}(\sigma),
            \\
            0
            &
            \text{otherwise},
        \end{cases}
    \end{align}
    where the zero-rate case is the one established in the resource-theoretic discussion at the beginning of the paper.
    This statement applies both to $H=H_{\mathrm{loc}}$ and to $H=H_{\mathrm{glob}}$.
\end{theorem}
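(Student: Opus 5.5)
The plan is to combine Proposition~\ref{prop:primitive-map-equivalence} with Theorem~\ref{thm:LOCC_rate_map_level}, applied to the group $H$ itself rather than to a single finite group $G$. First, under the assumption that $\Gamma_H$ can be freely introduced, Proposition~\ref{prop:primitive-map-equivalence} gives $\mathfrak{O}_{\mathrm{prim},H}=\mathfrak{O}_{\mathrm{covLOCC},H}$. Since the asymptotic conversion rate of Definition~\ref{def:conversion_rate} depends only on the set of allowed channels, we get $R_{\mathrm{prim},H}(\rho\to\sigma)=R_{\mathrm{covLOCC},H}(\rho\to\sigma)$ for every $n$-copy input and output system. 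One point needs checking here: the proposition is applied with tensor-power representations $U(\boldsymbol g)^{\otimes n}$ and $U'(\boldsymbol g)^{\otimes\lfloor rn\rfloor}$. These are again tensor products of local unitaries of the subgroup $H\subset G_1\times\cdots\times G_m$, with each party holding its $n$ copies. Hence the proposition applies uniformly in $n$, with the same reference state $\Gamma_H$.

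Second, I would invoke Theorem~\ref{thm:LOCC_rate_map_level} with the finite group $H$ in place of $G$. Its hypotheses are that $H$ is finite, which holds because it is a subgroup of a finite product, and that $U(\boldsymbol g)=\bigotimes_i U_i(g_i)$ and $U'(\boldsymbol g)$ are tensor products of local unitaries for every $\boldsymbol g\in H$, which holds by construction. If $U_i$ are projective representations of $G_i$, the product restricted to $H$ is still a projective representation, so Theorem~\ref{thm:LOCC_rate_map_level} applies. It yields $R_{\mathrm{covLOCC},H}(\rho\to\sigma)=R_{\mathrm{LOCC}}(\rho\to\sigma)$ when $\mathrm{Sym}_{H,U}(\rho)\subset\mathrm{Sym}_{H,U'}(\sigma)$, and $0$ otherwise. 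The zero case comes from Theorem~\ref{thm:asymmetry_rate} through the bound $\mathfrak{O}_{\mathrm{covLOCC},H}\subset\mathcal{O}^{H\text{-}\mathrm{cov}}$.

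Finally, I would specialize to the two cases of interest. For $H=H_{\mathrm{loc}}$, Corollary~\ref{cor:primitive-local-equivalence} shows that $\Gamma_{\mathrm{loc}}$ is automatically free, so the hypothesis holds without extra assumptions. For $H=H_{\mathrm{glob}}$, the hypothesis is exactly the assumption of Corollary~\ref{cor:primitive-global-equivalence}. In both cases the formula follows with the symmetry subgroups computed for the respective $H$-action.

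The main obstacle is making sure the primitive-level equivalence holds at every copy number with one fixed free reference state, so that the rates and not just single-shot classes coincide. Concretely, the symmetric-extension construction of Proposition~\ref{prop:primitive-map-equivalence} must be applied to the $n$-copy covariant LOCC protocols produced by the general conversion theorem, Theorem~\ref{thm:rate_lower_bound_general}. I expect this to go through directly, because the proposition is stated for arbitrary finite-dimensional local systems carrying tensor-product representations of $H$.
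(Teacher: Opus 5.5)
Your proposal is correct and follows the same route as the paper. The paper also uses the exact primitive-level implementations (Proposition~\ref{prop:primitive-map-equivalence}, via Corollaries~\ref{cor:primitive-local-equivalence} and~\ref{cor:primitive-global-equivalence}) to equate the primitive-level and map-level rates, and then reads off the formula from Theorem~\ref{thm:LOCC_rate_map_level} applied to $H$. Your extra checks are details the paper leaves implicit: that $H$ is finite, that $U(\boldsymbol g)^{\otimes n}$ is a product of local unitaries, and that one $n$-independent $\Gamma_H$ suffices.
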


\begin{proof}
    Corollaries~\ref{cor:primitive-local-equivalence} and~\ref{cor:primitive-global-equivalence} show that, in the local and global settings, respectively, every covariant LOCC protocol has an exact primitive-level implementation.
    Therefore, the primitive-level and map-level achievable rates coincide.
    The claimed expression then follows from the map-level conversion theorem.
\end{proof}

If LOCC is defined as the closure of finite-round LOCC, the same conclusions hold after taking the closure of both operation classes.

\section{Magic under finite-group symmetry}
\label{sec:magic}

We next consider the resource theory of magic under finite-group symmetry constraints.
At the map level, we allow the symmetry to be represented by arbitrary qubit Clifford operators and consider both stabilizer operations and completely stabilizer-preserving operations.
We show that imposing finite-group covariance does not reduce the asymptotic conversion rate whenever the symmetry subgroup of the input is contained in that of the output.

We then impose symmetry directly on the elementary operations of a stabilizer protocol.
In this formulation, every measured Hermitian Pauli operator must commute with the symmetry representation on the entire system on which it acts, including any ancillary registers.
For symmetries represented by Pauli operators, we construct symmetric implementations of covariant stabilizer protocols and obtain the corresponding equality of asymptotic conversion rates.
For general Clifford representations, we explain why this construction does not extend directly by exhibiting an obstruction to symmetric Pauli measurements.
We do not claim that the Pauli condition is necessary for equality of conversion rates, or that this measurement obstruction alone establishes a separation between map-level and primitive-level rates.

The construction of stabilizer memory states for general Clifford representations is treated separately in Appendix~\ref{app:clifford_memory}.
There, preparation, readout, and conditional control are considered within the underlying stabilizer theory, without requiring each elementary operation to respect the symmetry.

\subsection{Magic operations and finite-group covariance}
\label{subsec:magic-setting}

Let $\mathrm{STAB}(\mathrm{A})$ denote the set of stabilizer states on a finite-dimensional stabilizer system $\mathrm{A}$.
A stabilizer operation is a quantum channel that can be implemented using Clifford unitaries, stabilizer ancillas, Pauli measurements, classical feed-forward, and the discarding of subsystems.
We denote the set of stabilizer operations from $A$ to $B$ by $\mathrm{SO}(\mathrm{A}\to \mathrm{B})$.

We also consider operation classes larger than the set of stabilizer operations.
A quantum channel $\mathcal{E}:A\to B$ is stabilizer preserving if
\begin{align}
    \mathcal{E}(\tau)\in\mathrm{STAB}(B)
\end{align}
for every $\tau\in\mathrm{STAB}(A)$.
We denote the set of such channels by $\mathrm{SPO}(A\to B)$.

A quantum channel $\mathcal{E}:A\to B$ is completely stabilizer preserving if
\begin{align}
    \mathcal{E}\otimes\operatorname{id}_R\in\mathrm{SPO}(AR\to BR)
\end{align}
for every ancillary stabilizer system $R$.
We denote the set of such channels by $\mathrm{CSPO}(A\to B)$.
These operation classes satisfy
\begin{align}
    \mathrm{SO}(A\to B)\subseteq\mathrm{CSPO}(A\to B)\subseteq\mathrm{SPO}(A\to B).
\end{align}

Let $G$ be a finite group, and let $U_A(g)$ and $U_B(g)$ be possibly projective unitary representations of $G$ on $A$ and $B$, respectively.
Throughout this section, we assume that $U_A(g)$ and $U_B(g)$ are Clifford operators for every $g\in G$.
We denote the corresponding unitary channels by
\begin{align}
    \mathcal{U}_{A,g}\coloneqq U_A(g)\,\cdot\,U_A(g)^\dag,\qquad \mathcal{U}_{B,g}\coloneqq U_B(g)\,\cdot\,U_B(g)^\dag.
\end{align}

A quantum channel $\mathcal{E}:A\to B$ is $G$-covariant if
\begin{align}
    \mathcal{E}\circ\mathcal{U}_{A,g}=\mathcal{U}_{B,g}\circ\mathcal{E}
\end{align}
for every $g\in G$.
We denote the set of $G$-covariant channels from $A$ to $B$ by $\mathrm{Cov}_G(A\to B)$.

For a map-level class of free operations $\mathcal{F}(A\to B)$, we define its $G$-covariant restriction by
\begin{align}
    \mathcal{F}^G(A\to B)\coloneqq\mathcal{F}(A\to B)\cap\mathrm{Cov}_G(A\to B).
\end{align}
Our map-level conversion results below concern the classes $\mathrm{SO}^{G}$ and $\mathrm{CSPO}^{G}$ obtained from stabilizer operations and completely stabilizer-preserving operations, respectively.

This map-level restriction requires only the resulting channel to be $G$-covariant.
It does not require a realization in which every elementary Clifford unitary, ancillary-state introduction, and Pauli measurement respects the symmetry.
We denote the class defined by the latter requirement by $\mathrm{SO}^{G}_{\mathrm{prim}}(A\to B)$.
Its precise definition is given in Sec.~\ref{sec:magic_primitive}.
By construction,
\begin{align}
    \mathrm{SO}^{G}_{\mathrm{prim}}(A\to B)\subset& \mathrm{SO}^{G}(A\to B).
    \label{eq:magic_primitive_inclusion}
\end{align}
The converse inclusion requires a separate implementation argument and does not follow from covariance of the channel alone.

For states $\rho$ and $\sigma$, we denote the asymptotic conversion rate under an operation class $\mathcal{F}$ by $R_{\mathcal{F}}(\rho\to\sigma)$.
We first establish the map-level rate equalities for Clifford representations.
We then show that the corresponding stabilizer-operation rate is also attained at the primitive level when both the input and output symmetries are represented by Pauli operators.

\subsection{Map-level conversion rates}
\label{subsec:magic-map-level}

We first show that finite-group covariance does not change the asymptotic conversion rate at the map level, except for the necessary condition imposed by the symmetry subgroups.
The discrimination principle used here is the same as that used for entanglement.
In both cases, we identify an orbit state by estimating the expectation value of the observable $\rho$.
The difference lies only in the free implementation of this measurement.
For entanglement, we used a decomposition into local observables, whereas here we use a decomposition into Pauli observables.

For a state $\rho$ on $A$ and a state $\sigma$ on $B$, we denote the symmetry subgroup of $\rho$ by 
\begin{align}
    S:=\mathrm{Sym}(\rho)=\{g\in G \ |\ U_A(g)\rho U_A(g)^\dag=\rho\}
\end{align}
For each $g\in G$, we define the orbit state
\begin{align}
    \rho_g:=U_A(g)\rho U_A(g)^\dag.
\end{align}
The orbit state $\rho_g$ depends only on the coset $gS$.

We first construct a stabilizer measurement that identifies the coset $gS$.
The construction is based on the overlap between $\rho$ and its orbit states.
Define
\begin{align}
    c_g:=\mathrm{tr}\left(\rho\rho_g\right).
\end{align}
We note that 
\begin{align}
    c_g=\mathrm{tr}\left(\rho^2\right)-\frac{1}{2}\|\rho-\rho_g\|_2^2 
    \leq\mathrm{tr}\left(\rho^2\right), 
\end{align}
where equality holds if and only if $\rho_g=\rho$, or equivalently, if and only if $g\in S$.
Since $G$ is finite, the quantity
\begin{align}
    \Delta_\rho:=\mathrm{tr}\left(\rho^2\right)-\max_{g\notin S(\rho)}\mathrm{tr}\left(\rho\rho_g\right)
\end{align}
is strictly positive whenever $S(\rho)\neq G$.
If $S(\rho)=G$, there is only one coset and no discrimination is required.

Let $\mathcal{P}_A$ be a Hermitian Pauli basis on $A$ satisfying
\begin{align}
    \mathrm{tr}\left(PQ\right)=d_A\delta_{P,Q}
\end{align}
for every $P,Q\in\mathcal{P}_A$, where $d_A$ is the dimension of $A$.
The Pauli expansion of $\rho$ is
\begin{align}
    &\rho=\frac{1}{d_A}\sum_{P\in\mathcal{P}_A}r_PP, \\
    &r_P:=\mathrm{tr}\left(P\rho\right). 
\end{align}
Define
\begin{align}
    &L_\rho:=\sum_{P\in\mathcal{P}_A}|r_P|, \\
    &q_P:=\frac{|r_P|}{L_\rho}. 
\end{align}
To estimate the expectation value of $\rho$ on a state $\tau$, we choose $P\in\mathcal{P}_A$ with probability $q_P$ and measure $P$ on $\tau$.
If $z\in\{+1,-1\}$ denotes the measurement outcome, we define the classically processed outcome by
\begin{align}
    Y:=\frac{L_\rho}{d_A}\operatorname{sgn}(r_P)z.
\end{align}
The expectation value of $Y$ is
\begin{align}
    \mathbb{E}_\tau[Y]
    =\frac{1}{d_A} \sum_{P\in\mathcal{P}_A} r_P\mathrm{tr}\left(P\tau\right) 
    =\mathrm{tr}\left(\rho\tau\right).
\end{align}
Moreover, $Y$ satisfies
\begin{align}
    &-B_\rho\leq Y\leq B_\rho, \\
    &B_\rho:=\frac{L_\rho}{d_A}. 
\end{align}
Thus, the expectation value of the observable $\rho$ can be estimated using only classical randomness, Pauli measurements, and classical post-processing.

We now use this measurement to identify the orbit state.
Let the unknown state be $\rho_h$ and consider a candidate coset $gS$.
We first apply the Clifford unitary $U_A(g)^\dag$ and then perform the randomized Pauli measurement described above.
The expectation value of the resulting random variable is
\begin{align}
    \mathrm{tr}\left(\rho\rho_{g^{-1}h}\right).
\end{align}
If $gS=hS$, this expectation value is $\mathrm{tr}\left(\rho^2\right)$.
If $gS\neq hS$, it is at most $\mathrm{tr}\left(\rho^2\right)-\Delta_\rho$.

For each candidate coset, we repeat the measurement on $k$ independent copies and denote the sample mean by
\begin{align}
    \overline{Y}_{g,k}:=\frac{1}{k}\sum_{j=1}^kY_{g,j}.
\end{align}
We accept the candidate $gS$ if
\begin{align}
    \overline{Y}_{g,k}\geq\mathrm{tr}\left(\rho^2\right)-\frac{\Delta_\rho}{2}.
\end{align}
For the correct candidate, a rejection requires a lower deviation of at least $\Delta_\rho/2$.
For an incorrect candidate, an acceptance requires an upper deviation of at least $\Delta_\rho/2$.
The one-sided Hoeffding inequality therefore gives
\begin{align}
    \Pr[\text{an incorrect decision for a fixed candidate}]\leq\exp\left(-\alpha_\rho k\right),
\end{align}
where
\begin{align}
    \alpha_\rho:=\frac{\Delta_\rho^2}{8B_\rho^2}=\frac{d_A^2\Delta_\rho^2}{8L_\rho^2}.
\end{align}

Let
\begin{align}
    N_\rho:=|G/S|
\end{align}
be the number of distinct orbit states.
We divide $kN_\rho$ copies into $N_\rho$ blocks of size $k$ and test one candidate coset on each block.
If precisely one candidate is accepted, we output that candidate, and otherwise we output an arbitrary fixed candidate.
Applying the union bound to the one-sided error events gives a stabilizer POVM $\{M_x^{(k)}\}_{x\in G/S}$ such that
\begin{align}
    \min_{g\in G}\mathrm{tr}\left(M_{gS}^{(k)}\rho_g^{\otimes kN_\rho}\right)\geq 1-\epsilon_k,
\end{align}
where
\begin{align}
    \epsilon_k\leq N_\rho\exp\left(-\alpha_\rho k\right).
\end{align}
In particular, an error not larger than $\epsilon$ is achieved whenever
\begin{align}
    k\geq\frac{1}{\alpha_\rho}\log\frac{N_\rho}{\epsilon}.
\end{align}
Since $N_\rho$ is independent of the asymptotic number of input copies, the total number of copies required by this discrimination protocol is $O(\log(1/\epsilon))$.

We summarize this construction in the following lemma.

\begin{lemma}
    \label{lem:magic-free-discrimination}
    Let $G$ be a finite group represented by Clifford operators $U_A(g)$.
    For every state $\rho$, there exists a sequence of stabilizer POVMs $\{M_x^{(k)}\}_{x\in G/S}$ acting on $kN_\rho$ copies such that
    \begin{align}
        \min_{g\in G}\mathrm{tr}\left(M_{gS}^{(k)}\rho_g^{\otimes kN_\rho}\right)\geq 1-N_\rho e^{-\alpha_\rho k},
    \end{align}
    where
    \begin{align}
        \alpha_\rho:=\frac{d_A^2\Delta_\rho^2}{8L_\rho^2}>0.
    \end{align}
\end{lemma}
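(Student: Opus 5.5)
The plan is to formalize the randomized-Pauli construction described just before the statement, following the template of Lemma~\ref{lem:asymmetry_state_discrimination} and Lemmas~\ref{lem:binary_test_LOCC}--\ref{lem:multiple_test_LOCC}. If $S=G$, take the single-outcome POVM $M_S^{(k)}=I$. Otherwise, first build a single-copy stabilizer instrument for a candidate coset $gS$.

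\textbf{Single-copy test.} Sample $P$ with probability $q_P$ from the classical randomness, apply the Clifford unitary $U_A(g)^\dag$, measure the Hermitian Pauli $P$, and record $Y=B_\rho\,\mathrm{sgn}(r_P)z$. The corresponding POVM element for outcome $(P,z)$ is
\begin{align}
    E^{g}_{P,z}=q_P\,U_A(g)\Pi_{P,z}U_A(g)^\dag,
\end{align}
where $\Pi_{P,z}$ is the eigenprojector of $P$ with eigenvalue $z$. This is a stabilizer measurement because it consists of a classical random choice, a Clifford unitary, and a Pauli measurement. The expectation computed before the statement gives $\mathbb{E}_{\rho_h}[Y]=\mathrm{tr}(\rho\rho_{g^{-1}h})$, and $|Y|\le B_\rho$. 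Moreover $\mathrm{tr}(\rho\rho_{g^{-1}h})$ is either $\mathrm{tr}(\rho^2)$ or at most $\mathrm{tr}(\rho^2)-\Delta_\rho$, according to whether $gS=hS$.

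\textbf{Block test and Hoeffding bound.} On $k$ copies of $\rho_h$, the outcomes $Y_{g,1},\dots,Y_{g,k}$ are i.i.d. because the copies are in a product state and the measurements act copywise. Define the acceptance effect $T_g^{(k)}$ as the sum of the product effects over outcome strings whose sample mean is at least $\mathrm{tr}(\rho^2)-\Delta_\rho/2$. Apply the one-sided Hoeffding inequality to variables in $[-B_\rho,B_\rho]$ with deviation $\Delta_\rho/2$. This gives
\begin{align}
    \exp\left(-\frac{2k(\Delta_\rho/2)^2}{(2B_\rho)^2}\right)=e^{-\alpha_\rho k}
\end{align}
for both the false rejection of the correct coset and the false acceptance of any incorrect coset. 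Also check that $T_g^{(k)}$ depends only on the coset: replacing $g$ by $gs$ with $s\in S$ changes the effects, but not the outcome distribution on any $\rho_h$. The cleanest fix is simply to fix one representative per coset.

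\textbf{Combining the tests.} This step is identical to Lemma~\ref{lem:multiple_test_LOCC}. Split the $kN_\rho$ copies into $N_\rho$ blocks and apply $\{T_x^{(k)},I-T_x^{(k)}\}$ to the block labeled $x$. Decode with a function $d$ satisfying $d(e_x)=x$. Then
\begin{align}
    \mathrm{tr}\left(M_x^{(k)}\rho_x^{\otimes kN_\rho}\right)\ge(1-e^{-\alpha_\rho k})^{N_\rho}\ge 1-N_\rho e^{-\alpha_\rho k}.
\end{align}
Since $\rho_g$ depends only on $gS$, this bound holds for every $g$.

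\textbf{Remaining checks.} The resulting POVM is a stabilizer measurement, since it is built from parallel stabilizer measurements on disjoint blocks followed by classical post-processing. Finally, $\alpha_\rho>0$ because $\Delta_\rho>0$ when $S\ne G$, by finiteness of $G$ and the equality case of $c_g\le\mathrm{tr}(\rho^2)$, and because $L_\rho\ge|r_I|=1$ is finite. The main point requiring care is the Hoeffding constant and verifying that $L_\rho$ does not vanish. Everything else mirrors the proofs of the earlier lemmas.
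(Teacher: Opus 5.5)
Your proposal is correct and is essentially the paper's own construction: the randomized Pauli estimator of $\mathrm{tr}(\rho\,\cdot)$ applied after $U_A(g)^\dagger$, the one-sided Hoeffding bound with outcome range $2B_\rho$ giving exactly $\alpha_\rho=\Delta_\rho^2/(8B_\rho^2)$, and one-versus-rest tests on disjoint blocks combined by a decoding function and a union bound. Your extra care about coset representatives is harmless: fixing one representative per coset, as the paper implicitly does, suffices. In fact the coarse-grained acceptance effect does not depend on the representative, since conjugation by $U_A(s)$ with $s\in S$ permutes the Pauli terms up to signs, and these signs cancel in $\mathrm{sgn}(r_P)z$.
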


The stabilizer POVMs constructed above are control-admissible for both $\mathrm{SO}$ and $\mathrm{CSPO}$.
For $\mathrm{SO}$, this follows directly from closure under stabilizer measurements, classical feed-forward, and composition.
For $\mathrm{CSPO}$, consider an arbitrary joint stabilizer input, including an arbitrary reference system.
A stabilizer measurement on one subsystem produces an unnormalized stabilizer state on the remaining systems in each branch.
Applying a completely stabilizer-preserving channel conditionally on that branch preserves this property, including the reference system.
Summing over the outcomes gives a stabilizer state.
Thus, the resulting measurement-and-control channel is completely stabilizer preserving.

Moreover, both operation classes are closed under conjugation by Clifford unitaries and under classical randomization.
The discrimination construction therefore supplies the control-admissible POVMs required by Theorem~\ref{thm:rate_lower_bound_general}.
For example, taking $k_n=\lceil\sqrt{n}\rceil$ uses $k_nN_\rho=o(n)$ copies and gives a discrimination error tending to zero.
If $S(\rho)=G$, the trivial one-outcome POVM suffices.

\begin{theorem}[Map-level conversion rates for magic]
    \label{thm:magic_map_rates}
    Let $G$ be a finite group, and let $U_A$ and $U_B$ be projective unitary representations of $G$ on multiqubit systems $A$ and $B$, respectively.
    Suppose that $U_A(g)$ and $U_B(g)$ are Clifford operators for every $g\in G$.
    For $\mathcal{F}\in\{\mathrm{SO}, \mathrm{CSPO}\}$ and arbitrary states $\rho$ and $\sigma$ on $A$ and $B$,
    \begin{align}
        R_{G\text{-cov}, \mathcal{F}}(\rho\to\sigma)=&
        \begin{cases}
            R_\mathcal{F}(\rho\to\sigma) & \text{if }\mathrm{Sym}(\rho)\subset \mathrm{Sym}(\sigma),\\
            0 & \text{otherwise}.
        \end{cases}
        \label{eq:magic_map_rates}
    \end{align}
\end{theorem}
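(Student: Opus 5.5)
The proof mirrors that of Theorem~\ref{thm:LOCC_rate_map_level}, with LOCC replaced by $\mathcal{F}\in\{\mathrm{SO},\mathrm{CSPO}\}$ and with Lemma~\ref{lem:magic-free-discrimination} supplying the orbit discrimination. We split into two cases according to whether $\mathrm{Sym}(\rho)\subset\mathrm{Sym}(\sigma)$.

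\emph{Case $\mathrm{Sym}(\rho)\not\subset\mathrm{Sym}(\sigma)$.} Here $\mathcal{F}^G\subset\mathrm{Cov}_G$, so $R_{G\text{-cov},\mathcal{F}}(\rho\to\sigma)\leq R_{G\text{-cov}}(\rho\to\sigma)$. Theorem~\ref{thm:asymmetry_rate} gives that the right-hand side is $0$, which settles this case.

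\emph{Case $\mathrm{Sym}(\rho)\subset\mathrm{Sym}(\sigma)$.} The inequality $R_{G\text{-cov},\mathcal{F}}\leq R_{\mathcal{F}}$ is trivial, so it remains to show $R_{G\text{-cov},\mathcal{F}}\geq r$ for every $r<R_{\mathcal{F}}(\rho\to\sigma)$. We proceed as follows.

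\textbf{Step 1 (framework compatibility).} Check that $\mathcal{F}$ meets the standing assumptions of Sec.~\ref{sec:multi_resource}: tensor closure, stabilizer-ancilla preparation, partial trace, and convexity. Also check Assumption~\ref{asm:resource_symmetry_compatibility}, which holds because $U_A(g)$ and $U_B(g)$ are Clifford and both classes are closed under Clifford conjugation.

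For CSPO, tensor closure needs a short argument. One writes $\mathcal{E}_1\otimes\mathcal{E}_2=(\mathcal{E}_1\otimes\mathrm{id})(\mathrm{id}\otimes\mathcal{E}_2)$ and uses the complete-preservation property for stabilizer inputs on the enlarged system. Convexity holds because the set of stabilizer states is convex.

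\textbf{Step 2 (the conversion sequence).} Take a sequence $\mathcal{E}^{(n)}\in\mathcal{F}$ achieving $r$. With $K:=S=\mathrm{Sym}(\rho)\subset\mathrm{Sym}(\sigma)$, we have $\mathcal{U}_k(\rho)=\rho$ and $\mathcal{U}'_k(\sigma)=\sigma$ for all $k\in S$. Hence condition \eqref{eq:thm:rate_lower_bound_general01} holds trivially.

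\textbf{Step 3 (the discrimination sequence).} Lemma~\ref{lem:magic-free-discrimination} gives stabilizer POVMs whose error on $kN_\rho$ copies is at most $N_\rho e^{-\alpha_\rho k}$. For a general number of copies $l$, set $k:=\lfloor l/N_\rho\rfloor$ and discard the leftover copies. This gives \eqref{eq:thm:rate_lower_bound_general02}, since the worst-case error bounds the average error.

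\textbf{Step 4 (control admissibility, the main obstacle).} We need that for any free $\mathcal{E}_x$, the map $L\mapsto\sum_x\mathcal{E}_x(\mathrm{tr}_1((M_x\otimes I)L))$ lies in $\mathcal{F}$. This includes the inconclusive branch $\perp$, which is empty here because the POVM is complete.

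For SO, the argument runs through the structure of the measurement. It is a classically randomized choice of Clifford $U_A(g)^\dagger$ and Pauli measurements, followed by classical post-processing. Conditioned stabilizer operations after Pauli measurements with feed-forward are stabilizer operations by definition.

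For CSPO, I will argue branchwise. Let $\tau$ be a stabilizer state on the measured system, the target, and a reference $R$. A Pauli projector acting on $\tau$ yields a subnormalized stabilizer state. Through the randomized choice and the post-processing, each effect $M_x$ is a nonnegative combination of products of Pauli projectors. So each branch $\mathrm{tr}_1((M_x\otimes I)\tau)$ is a nonnegative combination of stabilizer states. Applying $\mathcal{E}_x\otimes\mathrm{id}_R$ keeps each term in the stabilizer cone, and summing over $x$ gives a normalized state. That state is a convex combination of stabilizer states, hence a stabilizer state.

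The delicate point is that $M_x$ must decompose as a positive combination of tensor products of Pauli-projector strings on disjoint copies. I would verify this explicitly from the definition of $F_{\bm b}$ and the decoder, since the effects $T$ and $I-T$ are sums of such products.

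\textbf{Step 5 (conclusion).} Theorem~\ref{thm:rate_lower_bound_general} now yields $R_{G\text{-cov},\mathcal{F}}\geq r$. Letting $r\uparrow R_{\mathcal{F}}$ completes the proof. If $S=G$, the one-outcome POVM makes Steps 3 and 4 trivial.
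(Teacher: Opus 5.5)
Your proposal is correct and follows essentially the same route as the paper. The paper also obtains the upper bound from inclusion, gets the zero case from the symmetry obstruction, and then applies Theorem~\ref{thm:rate_lower_bound_general} with $K=S$, fed by Lemma~\ref{lem:magic-free-discrimination} and the same branchwise control-admissibility argument for $\mathrm{SO}$ and $\mathrm{CSPO}$. The differences are cosmetic: for the zero case you cite Theorem~\ref{thm:asymmetry_rate} where the paper repeats the single-copy-marginal argument directly, and you spell out several framework checks that the paper leaves implicit (tensor closure of $\mathrm{CSPO}$, the Pauli-projector decomposition of the effects, and reindexing to $l$ copies).
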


\begin{proof}
    The inclusion $\mathcal{F}^{G}\subset\mathcal{F}$ gives 
    \begin{align}
        R_{\mathcal{F}^{G}}(\rho\to\sigma)\leq& R_{\mathcal{F}}(\rho\to\sigma).
    \end{align}

    Suppose first that $\mathrm{Sym}(\rho)\not\subset \mathrm{Sym}(\sigma)$.
    Choose $g\in \mathrm{Sym}(\rho)$ such that $\mathcal{U}_{B,g}(\sigma)\neq\sigma$.
    For any $G$-covariant channel $\mathcal{E}$ from $n$ copies of $A$ to $m\geq 1$ copies of $B$, every single-copy marginal of $\mathcal{E}(\rho^{\otimes n})$ is invariant under $\mathcal{U}_{B,g}$.
    Taking one such marginal and using the triangle inequality and contractivity of the trace distance gives
    \begin{align}
        2\mathrm{T}\left(\mathcal{E}(\rho^{\otimes n}),  \sigma^{\otimes m}\right)\geq&
        \mathrm{T}\left(\sigma, \mathcal{U}_{B, g}(\sigma)\right)>0.
    \end{align}
    Hence, no positive conversion rate is possible.

    Suppose next that $S(\rho)\subset \mathrm{Sym}(\sigma)$ and set $S:=S(\rho)$.
    If $R_{\mathcal{F}}(\rho\to\sigma)=0$, the upper bound already proves the claim.
    Otherwise, take any $r$ satisfying $0<r<R_{\mathcal{F}}(\rho\to\sigma)$.
    There is a sequence of free operations $\mathcal{E}^{(n)}\in\mathcal{F}$ such that
    \begin{align}
        \lim_{n\to\infty}\mathrm{T}\left(\mathcal{E}^{(n)}(\rho^{\otimes n}),\sigma^{\otimes\lfloor rn\rfloor}\right)=0.
    \end{align}
    Since both $\rho$ and $\sigma$ are invariant under $S$, this sequence also satisfies
    \begin{align}
        \lim_{n\to\infty}\frac1{|S|}\sum_{s\in S}
        \mathrm{T}\left(\mathcal{E}^{(n)}\left(\mathcal{U}_{A,s}(\rho)^{\otimes n}\right),
        \mathcal{U}_{B,s}(\sigma)^{\otimes\lfloor rn\rfloor}\right)=&0.
    \end{align}
    Together with the control-admissible discrimination protocol, Theorem~\ref{thm:rate_lower_bound_general} implies
    \begin{align}
        R_{\mathcal{F}^{G}}(\rho\to\sigma)\geq&r.
    \end{align}
    Taking the supremum over $r<R_{\mathcal{F}}(\rho\to\sigma)$ proves the opposite inequality.
\end{proof}

Thus, arbitrary Clifford representations are allowed in the map-level rate theorem.
We next consider the additional requirements imposed by a symmetric implementation of a stabilizer protocol.

\subsection{Primitive-level stabilizer operations}
\label{sec:magic_primitive}

\subsection{Primitive-level stabilizer operations}

We now impose symmetry on the elementary operations of a stabilizer protocol.
Throughout this subsection, the symmetry may in general be represented by
Clifford operators.
A unitary primitive is a Clifford unitary that commutes with the symmetry
representation on the current joint quantum system.
A measurement primitive is the binary projective measurement of a Hermitian
Pauli operator $P$ satisfying
\begin{align}
    [P,U_Q(g)]
    =&0
    \qquad
    \forall g\in G,
\end{align}
where $Q$ includes all ancillary registers on which the measured operator acts.
The corresponding projections are
\begin{align}
    \Pi_{P,\pm}
    :=&\frac{I_Q\pm P}{2}.
\end{align}
Thus, each outcome projection is individually invariant under the symmetry.
In particular, covariance of a measurement that merely permutes the two
outcomes is not sufficient for the measurement to be an allowed primitive.

We also allow the introduction of symmetry-invariant stabilizer states,
classical randomness, classical feed-forward, and the discarding of
subsystems.
Stabilizer states may be mixed, with mixed stabilizer states understood as
convex combinations of pure stabilizer states.
The classical registers containing measurement outcomes and the protocol
transcript transform trivially under the symmetry.
The symmetry representation on a composite system is the tensor product of
the representations on its constituent registers.
Ancillary quantum registers may carry specified Clifford representations of
$G$, provided that every elementary operation satisfies the symmetry
requirements above.

We denote the resulting class of channels by
$\mathrm{SO}_{\mathrm{prim}}^G$.
Every allowed quantum primitive is a stabilizer operation and is
$G$-covariant, including each individual measurement branch.
Consequently, classical conditioning on a symmetry-neutral transcript and
composition preserve both properties, and
\begin{align}
    \mathrm{SO}_{\mathrm{prim}}^G(A\to B)
    \subset&
    \mathrm{SO}^G(A\to B).
\end{align}

We next identify a class of symmetry representations for which the converse
inclusion can be established by an explicit primitive-level implementation.
Specifically, suppose that both $U_A(g)$ and $U_B(g)$ are Pauli operators up
to phases for every $g\in G$.
Pauli symmetries are also known to play a distinguished role in the structure
of symmetric Clifford operations~\cite{mitsuhashi2023clifford}.
The role of the Pauli assumption here is operationally different: it allows
the symmetry-dependent extensions of Clifford unitaries and Pauli
measurements constructed below to remain within the allowed stabilizer
primitives.
We define the common projective kernel by
\begin{align}
    K:=\{g\in G \,|\, U_A(g)\propto I_A
    \text{ and } U_B(g)\propto I_B\},
\end{align}
and 
\begin{align}
    \overline{G}:=G/K.
\end{align}
Since $\overline{G}$ embeds into the product of the input and output Pauli
groups modulo phases, there exists some integer $r\geq 0$ such that
\begin{align}
    \overline{G}\simeq (\mathbb{Z}_2)^r.
\end{align}
We identify the elements of $\overline{G}$ with
$x\in\mathbb{F}_2^r$. 
The use of the common projective kernel is important because the input and
output representations need not have the same projective kernel.

A stabilizer protocol can be realized by appending quantum registers, applying Clifford unitaries and Pauli measurements, and discarding unused outputs.
The prescribed input and output registers retain their given representations, while additional workspace registers may carry trivial representations.
Thus, all quantum registers needed in such a realization can be chosen to carry Pauli representations that factor projectively through $\overline G$.
For any current quantum register $Q$, write $U_Q(x)$ for a representative of its action.
The choice of its phase will not affect any construction below.

Introduce an $r$-qubit auxiliary register $M$ with representation
\begin{align}
    V(x):=&\bigotimes_{j=1}^r X_{M_j}^{x_j}.
\end{align}
Its computational basis satisfies $V(y)|x\rangle=|x+y\rangle$.
We initialize this register in
\begin{align}
    \zeta_M:=&2^{-r}\sum_{x\in\mathbb F_2^r}|x\rangle\langle x|
    =\frac{I_M}{2^r},
\end{align}
which is an invariant stabilizer state.
The basis states $|x\rangle$ will be used to describe the blocks of symmetric operations.
We do not prepare an individual $|x\rangle$ or read out $x$ by measuring the memory qubits.

For an operator $A$ on $Q$, define
\begin{align}
    \widehat A:=&
    \sum_{x\in\mathbb F_2^r}|x\rangle\langle x|_M
    \otimes U_Q(x)A U_Q(x)^\dag.
    \label{eq:magic_Pauli_extension}
\end{align}
The projective phases cancel in conjugation.
Using $V(y)|x\rangle=|x+y\rangle$ and relabeling the sum gives
\begin{align}
    \left(V(y)\otimes U_Q(y)\right)\widehat A
    \left(V(y)\otimes U_Q(y)\right)^\dag=&\widehat A
\end{align}
for every $y\in\mathbb F_2^r$.
We next verify that this extension preserves the required stabilizer primitives.

For a Clifford unitary $C$ on $Q$, the operator $\widehat C$ is a symmetric Clifford unitary.
To verify the Clifford property without discarding relative phases between memory blocks, let $e_j$ be the $j$th standard basis vector of $\mathbb F_2^r$, and choose a Hermitian Pauli operator $P_{Q,j}$ proportional to $U_Q(e_j)$.
Define
\begin{align}
    D_Q:=&
    \prod_{j=1}^r
    \left(|0\rangle\langle0|_{M_j}\otimes I_Q
    +|1\rangle\langle1|_{M_j}\otimes P_{Q,j}\right),
\end{align}
where the product is taken in a fixed order and each factor acts trivially on the remaining memory qubits.
Every factor is a controlled Hermitian Pauli gate and is therefore Clifford.
On the memory block $x$, the product $\prod_jP_{Q,j}^{x_j}$ is proportional to $U_Q(x)$.
Consequently,
\begin{align}
    \widehat C=&D_Q\left(I_M\otimes C\right)D_Q^\dag,
    \label{eq:magic_extended_Clifford}
\end{align}
which proves that $\widehat C$ is Clifford.
This factorization is used only to establish the Clifford property.
The allowed symmetric primitive is $\widehat C$ itself; the factors defining $D_Q$ are not assumed to be individually symmetric.

For a Hermitian Pauli operator $P$ on $Q$, conjugation by the Pauli representation changes only its sign.
The sign is a character of $\mathbb F_2^r$, so there is a vector $a(P)\in\mathbb F_2^r$ such that
\begin{align}
    U_Q(x)P U_Q(x)^\dag=&(-1)^{a(P)\cdot x}P.
\end{align}
Its extension is therefore
\begin{align}
    &\widehat P=Z_M^{a(P)}\otimes P, \\
    &Z_M^{a(P)}:=\bigotimes_{j=1}^r Z_{M_j}^{a(P)_j}.
    \label{eq:magic_dressed_Pauli}
\end{align}
This is a Hermitian Pauli operator commuting with $V(y)\otimes U_Q(y)$ for every $y$.
Indeed, the signs acquired by its memory and system factors cancel.
Its binary measurement is an allowed symmetric primitive, with projections
\begin{align}
    \widehat\Pi_{P,\pm}:=&\frac{I_{MQ}\pm\widehat P}{2}\nonumber\\
    =&\sum_{x\in\mathbb F_2^r}|x\rangle\langle x|_M
    \otimes U_Q(x)\Pi_{P,\pm}U_Q(x)^\dag.
    \label{eq:magic_extended_Pauli_measurement}
\end{align}
Thus, each memory block implements the corresponding conjugated Pauli measurement, while the outcome labels remain symmetry neutral.
No separate readout of the memory label is required.

Ancillary-state introduction can also be implemented while retaining the same memory label.
Let $\tau$ be a stabilizer state on an ancillary register $Q$.
First append the maximally mixed state $I_Q/\dim Q$, which is invariant and stabilizer.
An ordinary stabilizer preparation of a pure component of $\tau$ can be performed by computational-basis measurements with Pauli corrections, followed by a Clifford unitary.
Classical randomization then prepares a mixed stabilizer state.
Replace each unitary and measurement in this preparation by its symmetric extension using $M$.
For every $x$, the resulting preparation has the action
\begin{align}
    |x\rangle\langle x|_M\otimes\frac{I_Q}{\dim Q}
    \longmapsto&
    |x\rangle\langle x|_M\otimes U_Q(x)\tau U_Q(x)^\dag.
    \label{eq:magic_symmetric_ancilla_preparation}
\end{align}
All operations in this preparation are symmetric stabilizer primitives.
In particular, this construction does not assume access to an asymmetric ancillary state.

Now fix a stabilizer realization of a channel $\mathcal{E}:A\to B$.
Replace every Clifford unitary, Pauli measurement, and ancillary-state preparation by the corresponding construction above, using the same register $M$ throughout the protocol.
Classical feed-forward remains valid because the transcript is symmetry neutral.
Each extended operation preserves the memory blocks, and discarding a subsystem is covariant.
The resulting primitive-level channel $\widehat{\mathcal{E}}$ therefore satisfies
\begin{align}
    \widehat{\mathcal{E}}\left(\zeta_M\otimes L\right)=&
    2^{-r}\sum_{x\in\mathbb F_2^r}|x\rangle\langle x|_M
    \otimes
    \left(\mathcal{U}_{B,x}\circ\mathcal{E}\circ
    \mathcal{U}_{A,x}^{-1}\right)(L),
    \label{eq:magic_symmetric_protocol_extension}
\end{align}
where $\mathcal{U}_{Q,x}:=U_Q(x)\cdot U_Q(x)^\dag$.
If $\mathcal{E}$ is $G$-covariant, every channel in the sum equals $\mathcal{E}$, giving
\begin{align}
    \widehat{\mathcal{E}}\left(\zeta_M\otimes L\right)=&
    \zeta_M\otimes\mathcal{E}(L).
    \label{eq:magic_exact_covariant_implementation}
\end{align}
Since $\zeta_M$ can be freely introduced and discarded, this is an exact symmetric primitive implementation of $\mathcal{E}$.

This implementation gives the following conversion-rate result.

\begin{theorem}[Primitive-level conversion rates for Pauli symmetries]
    \label{thm:magic_Pauli_primitive_rates}
    Let $G$ be a finite group, and suppose that both $U_A(g)$ and $U_B(g)$ are Pauli operators up to phases for every $g\in G$.
    In the primitive-level theory specified above,
    \begin{align}
        R_{G\text{-cov}, \mathrm{SO}_{\mathrm{prim}}}(\rho\to\sigma)=&
        \begin{cases}
            R_{\mathrm{SO}}(\rho\to\sigma) & \text{if } \mathrm{Sym}(\rho)\subset \mathrm{Sym}(\sigma),\\
            0 & \text{otherwise}.
        \end{cases}
    \end{align}
\end{theorem}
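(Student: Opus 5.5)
The plan is to reduce the primitive-level statement to the map-level Theorem~\ref{thm:magic_map_rates} for $\mathcal{F}=\mathrm{SO}$, using the exact symmetric implementation constructed just above. The key claim is the class equality
\begin{align}
    \mathrm{SO}^G_{\mathrm{prim}}(A\to B)=\mathrm{SO}^G(A\to B)
\end{align}
under the Pauli assumption. Once this holds, the two operation classes yield identical achievable rates, and the stated formula follows verbatim from Theorem~\ref{thm:magic_map_rates}. Pauli operators are in particular Clifford, so that theorem applies.

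\textbf{Upper bound and zero-rate case.} The inclusion $\mathrm{SO}^G_{\mathrm{prim}}\subset\mathrm{SO}^G$ is Eq.~\eqref{eq:magic_primitive_inclusion}. Hence
\begin{align}
    R_{G\text{-cov},\mathrm{SO}_{\mathrm{prim}}}\leq R_{G\text{-cov},\mathrm{SO}}.
\end{align}
By Theorem~\ref{thm:magic_map_rates}, the right-hand side is $0$ when $\mathrm{Sym}(\rho)\not\subset\mathrm{Sym}(\sigma)$, and it is at most $R_{\mathrm{SO}}(\rho\to\sigma)$ otherwise. This settles the zero-rate case and gives one direction of the equality.

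\textbf{Lower bound.} Assume $\mathrm{Sym}(\rho)\subset\mathrm{Sym}(\sigma)$ and fix $r<R_{\mathrm{SO}}(\rho\to\sigma)$. Theorem~\ref{thm:magic_map_rates} supplies $G$-covariant stabilizer channels $\mathcal{F}^{(n)}$ achieving rate $r$. For each $n$, I fix a stabilizer realization of $\mathcal{F}^{(n)}$ and apply the extension construction. Every Clifford $C$ becomes $\widehat C$ via Eq.~\eqref{eq:magic_extended_Clifford}, every Pauli measurement becomes the measurement of the dressed Pauli $\widehat P$ in Eq.~\eqref{eq:magic_dressed_Pauli}, and every ancilla preparation is handled as in Eq.~\eqref{eq:magic_symmetric_ancilla_preparation}. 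All of this uses a single memory register $M$ initialized in $\zeta_M=I_M/2^r$.

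For this step I must check the following points for the $n$-copy systems.
\begin{enumerate}
    \item The representations $U_A(g)^{\otimes n}$ and $U_B(g)^{\otimes\lfloor rn\rfloor}$ are still Pauli up to phase.
    \item Their common projective kernel still gives a quotient $\overline G\simeq(\mathbb{Z}_2)^{r'}$.
    \item Intermediate workspace registers can be given trivial representations.
\end{enumerate}
Then Eq.~\eqref{eq:magic_exact_covariant_implementation} gives
\begin{align}
    \widehat{\mathcal{F}}^{(n)}(\zeta_M\otimes L)=\zeta_M\otimes\mathcal{F}^{(n)}(L).
\end{align}
Prepending the invariant stabilizer state $\zeta_M$ and discarding $M$ afterwards are allowed primitives. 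So $\mathcal{F}^{(n)}$ itself lies in $\mathrm{SO}^G_{\mathrm{prim}}$ and achieves rate $r$. Taking the supremum over $r$ finishes the proof.

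\textbf{Main obstacle.} I expect the hard part to be justifying that the block-wise extension reproduces the original protocol exactly. Concretely, Eq.~\eqref{eq:magic_symmetric_protocol_extension} requires that within each memory block $x$ the transcript distribution and the conditional operations coincide with those of the conjugated protocol $\mathcal{U}_{B,x}\circ\mathcal{F}\circ\mathcal{U}_{A,x}^{-1}$. This needs the following facts.
\begin{enumerate}
    \item The projections $\widehat\Pi_{P,\pm}$ are block-diagonal in $|x\rangle$, as shown in Eq.~\eqref{eq:magic_extended_Pauli_measurement}.
    \item Feed-forward depends only on the symmetry-neutral outcomes.
    \item Registers that change the representation, for example when the input $A$ is discarded and $B$ is prepared, are conjugated consistently by the same label $x$.
\end{enumerate}
I would verify this by induction on the number of primitives, maintaining the invariant that the state has the form $\sum_x |x\rangle\langle x|\otimes \mathcal{U}_{Q,x}(\text{original branch state})$. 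Each extended primitive preserves this form, and discarding a subsystem commutes with the blockwise conjugation.
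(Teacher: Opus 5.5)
Your proposal is correct and follows the paper's proof. The paper combines the inclusion $\mathrm{SO}^G_{\mathrm{prim}}\subset\mathrm{SO}^G$ with the exact symmetric implementation in Eq.~\eqref{eq:magic_exact_covariant_implementation} to equate the primitive-level and map-level rates, and then invokes Theorem~\ref{thm:magic_map_rates}. Your extra checks (that $n$-copy Pauli representations stay Pauli with an elementary abelian $2$-group quotient, and the blockwise induction over primitives) only make explicit what the paper's construction takes for granted.
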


\begin{proof}
    Every primitive-level symmetric stabilizer channel belongs to $\mathrm{SO}^{G}$.
    Conversely, Eq.~\eqref{eq:magic_exact_covariant_implementation} gives a symmetric primitive implementation of every $G$-covariant stabilizer protocol under the stated Pauli representations.
    Applying this construction to each channel in a conversion sequence gives
    \begin{align}
        R_{G\text{-cov}, \mathrm{SO}_{\mathrm{prim}}}(\rho\to\sigma) 
        =R_{G\text{-cov}, \mathrm{SO}}(\rho\to\sigma).
    \end{align}
    The claimed expression then follows from Theorem~\ref{thm:magic_map_rates}.
\end{proof}

The Pauli assumption is sufficient for this construction because it ensures that the extension of every Hermitian Pauli observable is again a single Hermitian Pauli observable.
For general Clifford representations, conjugation may instead map a Pauli observable to a distinct Pauli observable.
The extension in Eq.~\eqref{eq:magic_Pauli_extension} is then not generally a Pauli operator, so it need not define an allowed measurement primitive.

A concrete obstruction is provided by the projective representation of $\mathbb Z_3$ generated by $C=SH$ on a qubit.
Its conjugation action is $CXC^\dag=Z$, $CZC^\dag=Y$, $CYC^\dag=X$. 
Hence, for every $n\geq1$, conjugation by $C^{\otimes n}$ maps each nonidentity Pauli string to a different Pauli string, even modulo a sign.
The only Hermitian Pauli operators commuting with $C^{\otimes n}$ are therefore proportional to the identity.

This obstruction cannot be removed merely by adjoining another Clifford-represented ancillary system.
Let $V$ denote the operator representing the generator of $\mathbb Z_3$ on such an ancillary system.
If a joint Hermitian Pauli operator $P_A\otimes P_B$ satisfies
\begin{align}
    [P_A\otimes P_B,C^{\otimes n}\otimes V]=&0,
\end{align}
then equality of the two tensor-product operators implies
\begin{align}
    C^{\otimes n}P_A C^{\dag\otimes n}\propto&P_A.
\end{align}
The Pauli permutation described above therefore forces $P_A\propto I_A$.
Thus, every symmetric Pauli observable on the enlarged system acts trivially on the original $n$ qubits.
The same conclusion holds after symmetric Clifford preprocessing, since conjugation by a symmetric Clifford unitary maps a symmetric Hermitian Pauli observable to another symmetric Hermitian Pauli observable.

This example explains why the Pauli-measurement construction does not extend directly to arbitrary Clifford representations.
It does not assert that every non-Pauli Clifford representation has the same obstruction, nor does it establish a separation between map-level and primitive-level asymptotic conversion rates.
A general characterization of primitive-level conversion under Clifford symmetries is left outside the scope of this work.

\section{Thermodynamics under finite-group symmetry}

We next apply the general framework to thermodynamic resource theories.
We consider three classes of free operations: Gibbs-preserving operations, time-translation-covariant Gibbs-preserving operations, and thermal operations.
Although all three settings can be analyzed within the general framework developed above, the amount of additional structure required to verify its assumptions differs substantially among them.

For Gibbs-preserving operations, the discrimination procedure required in Theorem~\ref{thm:rate_lower_bound_general} can be incorporated directly into a Gibbs-preserving map.
The resulting application is therefore essentially immediate.
When time-translation covariance is imposed in addition to Gibbs preservation, the situation becomes more restrictive.
In this case, only the information that remains after dephasing with respect to the system Hamiltonian can be used for the discrimination step.
Consequently, the relevant distinguishability condition is determined by the orbit of the dephased state rather than by that of the original state itself.
This introduces additional cases in which the conversion rate vanishes.
Thermal operations require a more careful analysis.
Here, it is not sufficient to appeal directly to an arbitrary discrimination map.
Instead, we explicitly construct a projective measurement that is compatible with the Hamiltonian and that asymptotically resolves the relevant group orbit.
This construction verifies the assumptions of Theorem~\ref{thm:rate_lower_bound_general} within the class of thermal operations and thereby yields the corresponding map-level conversion result.

We then strengthen the thermal-operation result at the level of physical primitives.
Namely, we show that the same discrimination structure can be incorporated coherently into a single unitary implementation that is both energy preserving and invariant under the additional symmetry.
Thus, the asymptotic conversion result continues to hold even when the symmetry constraint is imposed directly on the unitary primitives rather than only on the resulting quantum channel.

Finally, we apply the same mechanism to work extraction.
We define the symmetry-protected ergotropy rate by restricting the allowed unitaries to those that are invariant under the symmetry, while not requiring them to commute with the Hamiltonian.
We show that, for a finite symmetry group, this restriction does not change the asymptotic ergotropy rate.
As a consequence, the symmetry-protected completely passive states are characterized by the usual completely passive states, which is contrasting to the result~\cite{mitsuhashi2022characterizing} with Lie-group symmetry constraints, where generalized Gibbs states are characterized as completely passive states under symmetry constraints.

Throughout this section, we assume that the finite-group symmetry is compatible with the Hamiltonian.
More precisely, for every input and output system considered below, the symmetry representation commutes with the corresponding Hamiltonian:
\begin{align}
    [U(g),H]=0,\ 
    [U'(g),H']=0\ \forall g\in G.
    \label{eq:symmetry_hamiltonian_commutation}
\end{align}
Consequently, the finite-group action commutes with time translations and with the corresponding dephasing maps.

\subsection{Gibbs-preserving operations}

We first consider how symmetry constraints affect the asymptotic conversion rate via Gibbs-preserving operations.

We denote the Gibbs state with inverse temperature $\beta$ on system $\mathrm{S}$ 
\begin{align}
    \gamma_\mathrm{S}(\beta) =\frac{e^{-\beta H_\mathrm{S}}}{\mathrm{tr}\left(e^{-\beta H_\mathrm{S}}\right)}, 
\end{align}
where $H_\mathrm{S}$ is the Hamiltonian of the system $\mathrm{S}$. 
For some fixed inverse temperature $\beta>0$, a quantum channel $\mathcal{E}$ from system $\mathrm{A}$ to system $\mathrm{B}$ is called Gibbs preserving if 
\begin{align}
    \mathcal{E}(\gamma_\mathrm{A}(\beta))=\gamma_\mathrm{B}(\beta). 
\end{align}
For states $\rho$ and $\sigma$, we denote the asymptotic conversion rate from $\rho$ to $\sigma$ via Gibbs-preserving operations by $R_\mathrm{GP}(\rho\to\sigma)$, and the asymptotic conversion rate from $\rho$ to $\sigma$ via $G$-covariant Gibbs-preserving operations by $R_{G\text{-}\mathrm{cov}, \mathrm{GP}}(\rho\to\sigma)$. 
Then, $R_{G\text{-}\mathrm{cov}, \mathrm{GP}}(\rho\to\sigma)$ is expressed in the following theorem.

\begin{lemma} \label{lem:control_admissible_POVM_GP}
    Let $G$ be a finite group, 
    Then, any sub-POVM is a control-admissible sub-POVM in the resource theory of Gibbs-preserving operations. 
    Especially, any POVM is a control-admissible POVM. 
\end{lemma}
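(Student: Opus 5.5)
We verify Definition~\ref{def:control_admissible_POVM} directly. Fix a sub-POVM $\{M_x\}_{x\in X}$ on system $\mathrm{A}$ (with Hamiltonian $H_\mathrm{A}$ and Gibbs state $\gamma_\mathrm{A}$), set $M_\perp:=I-\sum_{x\in X}M_x\geq 0$, and take arbitrary Gibbs-preserving channels $\mathcal{E}_x:\mathrm{B}\to\mathrm{C}$ for $x\in X\cup\{\perp\}$. We must show that
\begin{align}
    \mathcal{E}(L):=\sum_{x\in X\cup\{\perp\}}\mathcal{E}_x\left(\mathrm{tr}_\mathrm{A}((M_x\otimes I)L)\right)
\end{align}
is a Gibbs-preserving CPTP map from $\mathrm{AB}$ to $\mathrm{C}$. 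The composite system is taken to have Hamiltonian $H_\mathrm{A}\otimes I+I\otimes H_\mathrm{B}$. Its Gibbs state is therefore $\gamma_{\mathrm{AB}}=\gamma_\mathrm{A}\otimes\gamma_\mathrm{B}$, and this product structure is what makes the check work.

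\textbf{Complete positivity and trace preservation.} The map $L\mapsto\mathrm{tr}_\mathrm{A}((M_x\otimes I)L)=\mathrm{tr}_\mathrm{A}((\sqrt{M_x}\otimes I)L(\sqrt{M_x}\otimes I))$ is completely positive. Composing it with the CPTP map $\mathcal{E}_x$ and summing over $x$ gives a CP map. For trace preservation we compute
\begin{align}
    \mathrm{tr}\,\mathcal{E}(L)=\sum_x\mathrm{tr}((M_x\otimes I)L)=\mathrm{tr}\,L,
\end{align}
using $\sum_{x\in X\cup\{\perp\}}M_x=I$. Including the inconclusive outcome $\perp$ is exactly what makes this sum complete.

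\textbf{Gibbs preservation.} Inserting $\gamma_\mathrm{A}\otimes\gamma_\mathrm{B}$, each branch gives
\begin{align}
    \mathrm{tr}_\mathrm{A}((M_x\otimes I)(\gamma_\mathrm{A}\otimes\gamma_\mathrm{B}))=\mathrm{tr}(M_x\gamma_\mathrm{A})\,\gamma_\mathrm{B}.
\end{align}
Since each $\mathcal{E}_x$ preserves the Gibbs state,
\begin{align}
    \mathcal{E}(\gamma_\mathrm{A}\otimes\gamma_\mathrm{B})=\sum_x\mathrm{tr}(M_x\gamma_\mathrm{A})\,\gamma_\mathrm{C}=\gamma_\mathrm{C}.
\end{align}
Hence $\mathcal{E}$ is Gibbs preserving, and every sub-POVM is control-admissible. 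A POVM is the special case $M_\perp=0$, which proves the final sentence of the lemma.

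The only step needing care is the convention for the Gibbs state of the composite system. It must be the product $\gamma_\mathrm{A}\otimes\gamma_\mathrm{B}$, which holds for a noninteracting additive Hamiltonian at a common $\beta$. This is the same convention implicit in the tensor-closure and discarding assumptions on the Gibbs-preserving theory, so we adopt it explicitly. The group $G$ plays no role in the argument.
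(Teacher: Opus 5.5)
Your proof is correct and follows the same route as the paper: insert the product Gibbs state $\gamma_\mathrm{A}\otimes\gamma_\mathrm{B}$, factor each branch as $\mathrm{tr}(M_x\gamma_\mathrm{A})\,\gamma_\mathrm{B}$, use Gibbs preservation of each $\mathcal{E}_x$, and sum using $\sum_{x\in X\cup\{\perp\}}M_x=I$. Your explicit checks of complete positivity and trace preservation, and your remark that the composite Gibbs state must be a product, are useful additions that the paper leaves implicit.
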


\begin{proof}
    We take an arbitrary sub-POVM $\{M_x\}_{x\in X}$, and arbitrary Gibbs-preserving operations $\{\mathcal{E}_x\}_{x\in X}$ and $\mathcal{E}_\perp$. 
    We define operation $\mathcal{E}$ by 
    \begin{align}
        \mathcal{E}(L_1\otimes L_2)
        :=\sum_{x\in X\cup\{\perp\}} \mathrm{tr}(M_x L_1)\mathcal{E}_x(L_2) 
    \end{align}
    with 
    \begin{align}
        M_\perp:=I-\sum_{x\in X} M_x. 
    \end{align}
    Then, we can confirm that $\mathcal{E}$ is Gibbs preserving. 
    Indeed, we have 
    \begin{align}
        \mathcal{E}(\gamma_{\mathrm{A}_1\mathrm{A}_2}(\beta)) 
        =&\mathcal{E}(\gamma_{\mathrm{A}_1}(\beta)\otimes \gamma_{\mathrm{A}_2}(\beta)) \nonumber\\
        =&\sum_{x\in X\cup\{\perp\}} \mathrm{tr}(M_x \gamma_{\mathrm{A}_1}(\beta))\mathcal{E}_x(\gamma_{\mathrm{A}_2}(\beta)) \nonumber\\
        =&\left(\sum_{x\in X\cup\{\perp\}} \mathrm{tr}(M_x \gamma_{\mathrm{A}_1}(\beta))\right)\gamma_\mathrm{B}(\beta) \nonumber\\
        =&\gamma_\mathrm{B}(\beta). 
    \end{align}
    Therefore, any sub-POVM is control-admissible in the resource theory of Gibbs-preserving operations. 
    The statement about POVM is a special case of this argument. 
\end{proof}

\begin{theorem} \label{thm:GP_conversion_rate}
    Let $G$ be a finite group, $\rho$ and $\sigma$ be input and output states, and $U$ and $U'$ be projective unitary representations of $G$ on the input and output space. 
    Then,
    \begin{align}
        R_{G\text{-}\mathrm{cov}, \mathrm{GP}}(\rho\to\sigma)=
        \begin{cases}
            R_\mathrm{GP}(\rho\to\sigma) & \text{if } \mathrm{Sym}_{G, U}(\rho)\subset \mathrm{Sym}_{G, U'}(\sigma), \\
            0 & \text{otherwise}.
        \end{cases}
    \end{align}
\end{theorem}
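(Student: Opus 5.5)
The plan follows the template of Theorem~\ref{thm:LOCC_rate_map_level}: a trivial upper bound, the zero-rate case from pure asymmetry, and the matching lower bound from the general conversion theorem.

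\emph{Upper bounds.} Since $G$-covariant Gibbs-preserving operations form a subset of Gibbs-preserving operations, we always have $R_{G\text{-}\mathrm{cov},\mathrm{GP}}(\rho\to\sigma)\leq R_\mathrm{GP}(\rho\to\sigma)$. They are also a subset of all $G$-covariant CPTP maps, so $R_{G\text{-}\mathrm{cov},\mathrm{GP}}(\rho\to\sigma)\leq R_{G\text{-cov}}(\rho\to\sigma)$. When $\mathrm{Sym}_{G,U}(\rho)\not\subset\mathrm{Sym}_{G,U'}(\sigma)$, Theorem~\ref{thm:asymmetry_rate} gives $R_{G\text{-cov}}(\rho\to\sigma)=0$, which settles the second case.

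\emph{Lower bound when $S:=\mathrm{Sym}(\rho)\subset\mathrm{Sym}(\sigma)$.} If $R_\mathrm{GP}(\rho\to\sigma)=0$ there is nothing to prove, so take $0<r<R_\mathrm{GP}(\rho\to\sigma)$ and a sequence of Gibbs-preserving $\mathcal{E}^{(n)}$ with $\mathrm{T}(\mathcal{E}^{(n)}(\rho^{\otimes n}),\sigma^{\otimes\lfloor rn\rfloor})\to0$. Because every $s\in S$ fixes both $\rho$ and $\sigma$, the $S$-averaged error in Eq.~\eqref{eq:thm:rate_lower_bound_general01} with $K=S$ equals this same quantity and so tends to zero. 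For the discrimination side, Lemma~\ref{lem:asymmetry_state_discrimination} provides POVMs $\{M^{(l)}_{gS}\}$ on $\mathcal{H}^{\otimes l}$ with error at most $e^{-\alpha l}$. By Lemma~\ref{lem:control_admissible_POVM_GP} every POVM is control-admissible for Gibbs-preserving operations, so Eq.~\eqref{eq:thm:rate_lower_bound_general02} holds. Theorem~\ref{thm:rate_lower_bound_general} then yields $R_{G\text{-}\mathrm{cov},\mathrm{GP}}(\rho\to\sigma)\geq r$, and taking $r\uparrow R_\mathrm{GP}$ gives equality.

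\emph{Standing assumptions.} The main thing to check is that Gibbs-preserving operations satisfy the hypotheses used inside Theorems~\ref{thm:conversion_control_admissible_POVM} and~\ref{thm:rate_lower_bound_general}.
\begin{itemize}
\item Convexity, composition and tensor closure hold, since $\gamma_{AB}=\gamma_A\otimes\gamma_B$ for non-interacting composite Hamiltonians.
\item Discarding is Gibbs preserving, because partial traces of product Gibbs states are Gibbs states.
\item Compatibility with the symmetry (Assumption~\ref{asm:resource_symmetry_compatibility}) requires $\mathcal{U}_g(\gamma)=\gamma$. This follows from $[U(g),H]=0$ in Eq.~\eqref{eq:symmetry_hamiltonian_commutation}, so conjugating a Gibbs-preserving map by $\mathcal{U}_g$ keeps it Gibbs preserving.
\item The ancilla-preparation axiom only needs Gibbs states as free states.
\end{itemize}

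The step I expect to need the most care is the last one: confirming that the channel $\mathcal{F}^{(n)}$ built in Theorem~\ref{thm:conversion_control_admissible_POVM} is indeed Gibbs preserving. Its ingredients are the conjugated maps $\mathcal{E}_x$, the default branch $\mathcal{E}_\perp$, and the final group average. The discarding map $\mathcal{N}$ is harmless by the partial-trace argument above, and each of the other pieces should reduce to the invariance $\mathcal{U}_g(\gamma)=\gamma$.
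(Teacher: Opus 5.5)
Your proposal is correct and follows the paper's proof essentially step for step. The upper bound comes from the inclusion of $G$-covariant Gibbs-preserving maps in both the Gibbs-preserving and the $G$-covariant classes, together with Theorem~\ref{thm:asymmetry_rate}. The lower bound comes from Lemma~\ref{lem:asymmetry_state_discrimination}, Lemma~\ref{lem:control_admissible_POVM_GP}, and Theorem~\ref{thm:rate_lower_bound_general} with $K=S$. Your explicit check that Gibbs-preserving operations satisfy the standing axioms, in particular $G$-compatibility via $[U(g),H]=0$ making $\gamma$ invariant, is left implicit in the paper and is a worthwhile addition rather than a deviation.
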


\begin{proof}
    Since the class of $G$-covariant Gibbs-preserving operations is smaller than the class of $G$-covariant operations and Gibbs-preserving operations, we trivially have 
    \begin{align}
        R_{G\text{-}\mathrm{cov}, \mathrm{GP}}(\rho\to\sigma)
        \leq\min\left\{R_{G\text{-}\mathrm{cov}}(\rho\to\sigma), R_\mathrm{GP}(\rho\to\sigma)\right\}
    \end{align}
    Combining with the result of Theorem~\ref{thm:asymmetry_rate}, we have 
    \begin{align}
        R_{G\text{-}\mathrm{cov}, \mathrm{GP}}(\rho\to\sigma)\leq 
        \begin{cases}
            R_\mathrm{GP}(\rho\to\sigma) & \text{if } \mathrm{Sym}_{G, U}(\rho)\subset \mathrm{Sym}_{G, U'}(\sigma), \\
            0 & \text{otherwise}.
        \end{cases}
    \end{align}
    Thus, in the following, it is sufficient to show that 
    \begin{align}
        R_{G\text{-}\mathrm{cov}, \mathrm{GP}}(\rho\to\sigma)\geq R_\mathrm{GP}(\rho\to\sigma) 
    \end{align}
    when $\mathrm{Sym}_{G, U}(\rho)\subset \mathrm{Sym}_{G, U'}(\sigma)$. 
    When $R_\mathrm{GP}(\rho\to\sigma)=0$, it is trivial that $R_{G\text{-}\mathrm{cov}, \mathrm{GP}}(\rho\to\sigma)\geq R_\mathrm{GP}(\rho\to\sigma)$. 
    We assume that $R_\mathrm{GP}(\rho\to\sigma)>0$ in the following. 
    We take arbitrary $r<R_\mathrm{GP}(\rho\to\sigma)$. 
    Then, there exists some sequence of Gibbs preserving operations $(\mathcal{E}_n)_{n\in\mathbb{N}}$ such that 
    \begin{align}
        \lim_{n\to\infty} \mathrm{T}\left(\mathcal{E}_n(\rho^{\otimes n}), \sigma^{\otimes \lfloor rn\rfloor}\right)=0. \label{eq:thm:GP_conversion_rate1}
    \end{align}
    We denote $S:=\mathrm{Sym}_{G, U}(\rho)$ for convenience. 
    Since $\rho$ and $\sigma$ are invariant under the action of $s\in S$, Eq.~\eqref{eq:thm:GP_conversion_rate1} implies 
    \begin{align}
        \lim_{n\to\infty} \frac{1}{|S|}\sum_{s\in S} \mathrm{T}\left(\mathcal{E}_n(\mathcal{U}_s(\rho)^{\otimes n}), \mathcal{U}'_s(\sigma)^{\otimes \lfloor rn\rfloor}\right)=0. \label{eq:thm:GP_conversion_rate2}
    \end{align}
    By Lemma~\ref{lem:asymmetry_state_discrimination}, there exist some sequence of POVMs $(\{M_x^{(k)}\}_{x\in G/S})_{k\in\mathbb{N}}$ and $\alpha>0$ such that 
    \begin{align}
        \mathrm{tr}(M_{gS}^{(k)}\mathcal{U}_{g}(\rho^{\otimes k}))\geq 1-e^{-\alpha k} 
    \end{align}
    for all $g\in G$ and $k\in\mathbb{N}$. 
    Since this POVM is control admissible in the resource theory of Gibbs-preserving operations by Lemma~\ref{lem:control_admissible_POVM_GP}, there exist some POVM $\{M_x^{(k)}\}_{x\in G/S}$ and $\alpha>0$ such that 
    \begin{align}
        \lim_{k\to\infty} \frac{1}{|G|}\sum_{g\in G}\mathrm{tr}(M_{gS}^{(k)}\mathcal{U}_{g}(\rho^{\otimes k}))=1. \label{eq:thm:GP_conversion_rate3}
    \end{align}
    By Eqs.~\eqref{eq:thm:GP_conversion_rate2} and \eqref{eq:thm:GP_conversion_rate3}, we can confirm that the assumptions in Theorem~\ref{thm:rate_lower_bound_general} are satisfied, and we get 
    \begin{align}
        R_{G\text{-}\mathrm{cov}, \mathrm{GP}}(\rho\to\sigma)
        \geq r. 
    \end{align}
    Since this holds for all $r\in (0, R_\mathrm{GP}(\rho\to\sigma))$, we have 
    \begin{align}
        R_{G\text{-}\mathrm{cov}, \mathrm{GP}}(\rho\to\sigma)
        \geq R_\mathrm{GP}(\rho\to\sigma). 
    \end{align}
\end{proof}

\subsection{Time-translation-covariant Gibbs-preserving operations}

We next consider Gibbs-preserving operations that are covariant under time translations.
A channel $\mathcal{E}$ from a system with Hamiltonian $H$ to a system with Hamiltonian $H'$ is time-translation covariant if
\begin{align}
    \mathcal{E}\left(e^{-itH}Le^{itH}\right)
    = e^{-itH'}\mathcal{E}(L)e^{itH'}
\end{align}
for all $L$ and $t\in\mathbb{R}$.

First, we clarify the class of control-admissible sub-POVM in the resource theory of time-translation covariant Gibbs-preserving operations.

\begin{lemma} \label{lem:control_admissible_POVM_TCGP}
    Let $X$ be a finite set, $H$ be a Hamiltonian, $\{M_x\}_{x\in X}$ be a sub-POVM, each $M_x$ commute with $H$. 
    Then, $\{M_x\}_{x\in X}$ is a control-admissible sub-POVM in the resource theory of time-translation covariant Gibbs-preserving operations. 
\end{lemma}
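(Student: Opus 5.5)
The proof follows the structure of Lemma~\ref{lem:control_admissible_POVM_GP}: we check directly that the measure-and-control map of Definition~\ref{def:control_admissible_POVM} lies in the free class. Fix a sub-POVM $\{M_x\}_{x\in X}$ with $[M_x,H]=0$ for all $x$, and set $M_\perp:=I-\sum_{x\in X}M_x$. Since $[H,I]=0$, the effect $M_\perp$ also commutes with $H$. Take arbitrary time-translation covariant Gibbs-preserving operations $\{\mathcal{E}_x\}_{x\in X\cup\{\perp\}}$ from $\mathrm{A}_2$ to $\mathrm{B}$. The controlled map is
\begin{align}
    \mathcal{E}(L):=\sum_{x\in X\cup\{\perp\}}\mathcal{E}_x\left(\mathrm{tr}_1\left((M_x\otimes I)L\right)\right)
\end{align}
on $\mathrm{A}_1\mathrm{A}_2$, with total Hamiltonian $H\otimes I+I\otimes H_{\mathrm{A}_2}$. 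We must show three things: $\mathcal{E}$ is CPTP, it is Gibbs preserving, and it is time-translation covariant.

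\textbf{CPTP.} The map $L\mapsto\sum_x|x\rangle\langle x|\otimes\mathrm{tr}_1((M_x\otimes I)L)$ is an instrument, hence CPTP. Composing it with the classically controlled CPTP maps $\mathcal{E}_x$ gives a CPTP map.

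\textbf{Gibbs preservation.} This is identical to Lemma~\ref{lem:control_admissible_POVM_GP}. The joint Gibbs state factorizes as $\gamma_{\mathrm{A}_1}\otimes\gamma_{\mathrm{A}_2}$. Hence
\begin{align}
    \mathcal{E}(\gamma_{\mathrm{A}_1}\otimes\gamma_{\mathrm{A}_2})=\sum_x\mathrm{tr}(M_x\gamma_{\mathrm{A}_1})\,\mathcal{E}_x(\gamma_{\mathrm{A}_2})=\gamma_\mathrm{B},
\end{align}
because each $\mathcal{E}_x(\gamma_{\mathrm{A}_2})=\gamma_\mathrm{B}$ and the weights $\mathrm{tr}(M_x\gamma_{\mathrm{A}_1})$ sum to one.

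\textbf{Time-translation covariance.} This is the only new step and the one where the hypothesis is used. Write $V_t:=e^{-itH}$ and $W_t:=e^{-itH_{\mathrm{A}_2}}$, so the joint evolution is $V_t\otimes W_t$. Since $[M_x,H]=0$, we have $V_t^\dag M_xV_t=M_x$. By cyclicity of the partial trace on the first factor,
\begin{align}
    \mathrm{tr}_1\left((M_x\otimes I)(V_t\otimes W_t)L(V_t\otimes W_t)^\dag\right)=W_t\,\mathrm{tr}_1\left((V_t^\dag M_xV_t\otimes I)L\right)W_t^\dag=W_t\,\mathrm{tr}_1\left((M_x\otimes I)L\right)W_t^\dag .
\end{align}
Applying the covariance of each $\mathcal{E}_x$ then gives $\mathcal{E}\circ\mathcal{V}_t=e^{-itH_\mathrm{B}}\mathcal{E}(\cdot)e^{itH_\mathrm{B}}$ after summing over $x$. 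Care is needed only to make the partial-trace cyclicity explicit: it holds because $V_t\otimes I$ acts solely on the traced factor.

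Combining the three properties, $\mathcal{E}$ is a time-translation covariant Gibbs-preserving operation for every choice of free $\{\mathcal{E}_x\}_{x\in X\cup\{\perp\}}$. Hence $\{M_x\}_{x\in X}$ is control-admissible. The case of a complete POVM is the special case $M_\perp=0$.
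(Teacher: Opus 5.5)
Your proposal is correct and follows the paper's proof essentially step for step. Gibbs preservation is inherited from Lemma~\ref{lem:control_admissible_POVM_GP}, and time-translation covariance follows by moving $e^{-itH}$ through the partial trace using $[M_x,H]=0$ (including for $M_\perp$) and then invoking the covariance of each $\mathcal{E}_x$; your explicit CPTP check is a minor addition that the paper leaves implicit.
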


\begin{proof}
    We take arbitrary input and output systems, and for $x\in X\cup\{\perp\}$, we take arbitrary free operations $\mathcal{E}_x$. 
    We define a map $\mathcal{E}$ by 
    \begin{align}
        \mathcal{E}(L)
        :=\sum_{x\in X\cup\{\perp\}} \mathcal{E}_x(\mathrm{tr}_1((M_x\otimes I) L)), 
    \end{align}
    where 
    \begin{align}
        M_\perp:=I-\sum_{x\in X} M_x. 
    \end{align}
    We note that $M_\perp$ is also time-translation invariant. 
    As we showed in Lemma~\ref{lem:control_admissible_POVM_GP}, the map $\mathcal{E}$ is Gibbs-preserving. 
    We denote the input and output Hamiltonian by $H^\mathrm{in}$ and $H^\mathrm{out}$. 
    Then, for any $t\in\mathbb{R}$, we have 
    \begin{align}
        \mathcal{E}\left(e^{-it(H\otimes I+I\otimes H^\mathrm{in})}L e^{it(H\otimes I+I\otimes H^\mathrm{in})}\right) 
        =&\sum_{x\in X\cup\{\perp\}} \mathcal{E}_x\left(e^{-itH^\mathrm{in}}\mathrm{tr}_1\left(\left(e^{itH}M_x e^{-itH}\otimes I\right) L \right)e^{itH^\mathrm{in}}\right) \nonumber\\
        =&\sum_{x\in X\cup\{\perp\}} e^{-itH^\mathrm{out}}\mathcal{E}_x(\mathrm{tr}_1((M_x\otimes I) L))e^{itH^\mathrm{out}} \nonumber\\
        =&e^{-itH^\mathrm{out}}\mathcal{E}(L)e^{itH^\mathrm{out}}, 
    \end{align}
    which implies that $\mathcal{E}$ is time-translation covariant. 
    Therefore, $\{M_x\}$ is control-admissible POVM in the resource theory of time-translation covariant Gibbs-preserving operations. 
\end{proof}

This time-translation invariant POVM has some limitation about state distinguishability.  
To explicitly describe the limitation, we introduce the compatibility condition between the finite-group covariance and time-translation covariance.

\begin{definition} (Compatibility of covariance conditions.) \label{def:group_time_covariance_compatibility}
    Let $G$ be a finite group, $U$ and $U'$ be projective unitary representations of $G$ on $\mathcal{H}$ and $\mathcal{H}'$, respectively, $\rho\in\mathcal{S}(\mathcal{H})$, $\sigma\in\mathcal{S}(\mathcal{H}')$, and $H\in\mathcal{L}^\mathrm{H}(\mathcal{H})$ and $H'\in\mathcal{L}^\mathrm{H}(\mathcal{H}')$ be Hamiltonians on the input and output systems. 
    We say that $(G, U, U')$-covariance and time-translation $(H, H')$-covariance are compatible for the conversion from $\rho$ to $\sigma$ if for any $g\in G$ and any real sequence $(t_j)_{j\in\mathbb{N}}$ satisfying 
    \begin{align}
        \lim_{j\to\infty} e^{-it_j H}\rho e^{it_j H}=U(g)\rho U(g)^\dag, 
    \end{align}
    we have 
    \begin{align}
        \lim_{j\to\infty} e^{-it_j H'}\sigma e^{it_j H'}=U'(g)\sigma U'(g)^\dag. 
    \end{align}
\end{definition}

We note that the compatibility condition is stronger than $\mathrm{Sym}(\rho)\subset\mathrm{Sym}(\sigma)$, which can be confirmed as follows. 
We assume the compatibility condition and take arbitrary $g\in\mathrm{Sym}(\rho)$. 
Then, $(t_j)_{j=0}^\infty$ defined by $t_j=0$ for all $j\in\mathbb{Z}^{\geq 0}$ satisfies $\lim_{j\to\infty} e^{-it_j H}\rho e^{it_j H}=U(g)\rho U(g)^\dag$. 
The compatibility condition implies $\lim_{j\to\infty} e^{-it_j H'}\sigma e^{it_j H'}=U'(g)\sigma U'(g)^\dag$, which means $g\in\mathrm{Sym}(\sigma)$. 
This compatibility condition can be generalized to arbitrary two groups, and in that case, it can be expressed as: For any sequences $(g_n)_{n=0}^\infty$ and $(g'_n)_{n=0}^\infty$ in $G$ and $G'$, $\lim_{n\to\infty} T(\mathcal{U}_{g_n}^\mathrm{in}(\rho), {\mathcal{U}'}_{g'_n}^\mathrm{in}(\rho))=0$ implies $\lim_{n\to\infty} T(\mathcal{U}_{g_n}^\mathrm{out}(\sigma), {\mathcal{U}'}_{g'_n}^\mathrm{out}(\sigma))=0$.

\begin{lemma} \label{lem:group_time_covariance_compatibility}
    Let $\rho$ and $\sigma$ be states, and the conversion rate $R_{G\text{-}\mathrm{cov}, \mathrm{TC}}(\rho\to\sigma)$ via time-translation covariant operations satisfy $R_{G\text{-}\mathrm{cov}, \mathrm{TC}}(\rho\to\sigma)>0$. 
    Then, the compatibility condition (Def.~\ref{def:group_time_covariance_compatibility}) holds. 
\end{lemma}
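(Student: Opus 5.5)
We argue by contraposition, mirroring the zero-rate half of Theorem~\ref{thm:asymmetry_rate} but using the time-translation covariance together with the $G$-covariance. Suppose the compatibility condition fails. Then there are $g\in G$ and real times $(t_j)$ with $e^{-it_jH}\rho e^{it_jH}\to U(g)\rho U(g)^\dag$, while $e^{-it_jH'}\sigma e^{it_jH'}$ does not converge to $U'(g)\sigma U'(g)^\dag$. The first step is to pass to a subsequence of $(t_j)$ along which $e^{-it_jH'}\sigma e^{it_jH'}$ converges to some $\sigma_\infty$; this is possible by compactness of the orbit, since the state space is compact. We may additionally choose the subsequence so that $\mathrm{T}(\sigma_\infty,\mathcal{U}'_g(\sigma))=:2\eta>0$.

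The second step is to extend these limits to tensor powers, noting that $e^{-itH}$ acts on $\rho^{\otimes n}$ as $(e^{-itH})^{\otimes n}$ with the total Hamiltonian. The single-copy convergence implies $e^{-it_jH_n}\rho^{\otimes n}e^{it_jH_n}\to \mathcal{U}_g(\rho)^{\otimes n}$ for each fixed $n$, because a tensor product of convergent sequences converges. On the output side we only need a single-copy marginal: $\mathcal{E}$ produces $\lfloor rn\rfloor\geq1$ copies, and we trace out all but one.

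The third step is the main estimate. Let $\mathcal{E}$ be $G$-covariant and time-translation covariant, with $\mathrm{T}(\mathcal{E}(\rho^{\otimes n}),\sigma^{\otimes m})=\delta$. Let $\mathcal{E}_1$ denote $\mathcal{E}$ followed by discarding all but one output copy; $\mathcal{E}_1$ inherits both covariances, and $\mathrm{T}(\mathcal{E}_1(\rho^{\otimes n}),\sigma)\le\delta$. Time covariance and unitary invariance of the trace distance give
\begin{align}
    \mathrm{T}\left(\mathcal{E}_1(e^{-it_jH_n}\rho^{\otimes n}e^{it_jH_n}), e^{-it_jH'}\sigma e^{it_jH'}\right)\le\delta .
\end{align}
Taking $j\to\infty$ along the subsequence, continuity of $\mathcal{E}_1$ yields $\mathrm{T}(\mathcal{E}_1(\mathcal{U}_g(\rho)^{\otimes n}),\sigma_\infty)\le\delta$. $G$-covariance gives $\mathcal{E}_1(\mathcal{U}_g(\rho)^{\otimes n})=\mathcal{U}'_g(\mathcal{E}_1(\rho^{\otimes n}))$, which is within $\delta$ of $\mathcal{U}'_g(\sigma)$. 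The triangle inequality then gives $2\eta\le 2\delta$, i.e.\ $\delta\ge\eta$ for every $n$ and every such $\mathcal{E}$. Hence no positive rate is achievable, contradicting $R_{G\text{-}\mathrm{cov},\mathrm{TC}}(\rho\to\sigma)>0$.

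The only delicate point is the first step: negation of the limit statement gives merely non-convergence, not a fixed gap. This is handled by extracting a subsequence bounded away from $U'(g)\sigma U'(g)^\dag$ and then a further convergent subsequence. The original subsequence still satisfies the input-side hypothesis, since any subsequence of a convergent sequence converges to the same limit. Gibbs preservation is not needed, so the argument applies to any class contained in the time-translation-covariant operations.
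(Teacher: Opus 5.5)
Your proof is correct and uses the same ingredients as the paper's: reduce to a single output copy, use time-translation covariance to move the time-shifted input to the time-shifted output, use $G$-covariance to rotate, and close with the triangle inequality. The only difference is packaging. The paper argues directly, fixing an $\epsilon/4$-accurate map and bounding the distance for all large $j$ via $\mathrm{T}(\alpha^{\otimes n},\beta^{\otimes n})\le n\,\mathrm{T}(\alpha,\beta)$. You argue by contraposition, extracting a subsequence and passing to the limit by continuity of $\mathcal{E}_1$, which is equally valid.
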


\begin{proof}
    We take arbitrary $g\in G$ and $(t_j)_{j=0}^\infty$ such that $\lim_{j\to\infty} e^{-it_j H}\rho e^{it_j H}=U(g)\rho U(g)^\dag$, and we show $\lim_{j\to\infty} e^{-it_j H'}\sigma e^{it_j H'}=U'(g)\sigma U'(g)^\dag$ in the following.

    We take arbitrary $\epsilon>0$. 
    Since $R_{G\text{-}\mathrm{cov}, \mathrm{TC}}(\rho\to\sigma)$ is nonzero, we can take some $r>0$ satisfying $r<R_{G\text{-}\mathrm{cov}, \mathrm{TC}}(\rho\to\sigma)$. 
    Then, by definition, there exists some $n_0\in\mathbb{N}$ such that for any $n\geq n_0$ and $m\leq rn$, $\rho^{\otimes n}$ can be converted to $\sigma^{\otimes m}$ via some $(G, U^{\otimes n}, U'^{\otimes m})$-covariant and time-translation $(H, H')$-covariant map with error $\epsilon/4$. 
    We set $n=\max\{n_0, \lceil 1/r \rceil\}$ and $m=1$. 
    Then, we can take a $(G, U^{\otimes n}, U')$-covariant and time-translation $(H, H')$-covariant map $\mathcal{E}: \mathcal{L}(\mathcal{H}^{\otimes n})\to\mathcal{L}(\mathcal{H}')$ that satisfies 
    \begin{align}
        T(\mathcal{E}(\rho^{\otimes n}), \sigma)\leq \frac{\epsilon}{4}. \label{SMeq:lem:group_time_covariance_compatibility1}
    \end{align}

    Since $g\in G$ and $(t_j)_{j=0}^\infty$ satisfies $\lim_{j\to\infty} e^{-it_j H}\rho e^{it_j H}=U(g)\rho U(g)^\dag$, we can take $j_0\in\mathbb{N}$ such that for any $j\geq j_0$, $T(e^{-it_j H}\rho e^{it_j H}, U(g)\rho U(g)^\dag)\leq \epsilon/(2n\|\mathcal{E}\|)$, where $\|\mathcal{E}\|:=\max_{\|L\|_1\leq 1}\|\mathcal{E}(L)\|_1$. 
    Thus, for any $j\geq j_0$, we have 
    \begin{align}
        T\left(\mathcal{E}\left(\left(e^{-it_j H}\rho e^{it_j H}\right)^{\otimes n}\right), \mathcal{E}\left(\left(U(g)\rho U(g)^\dag\right)^{\otimes n}\right)\right) 
        \leq &\|\mathcal{E}\|T\left(\left(e^{-it_j H}\rho e^{it_j H}\right)^{\otimes n}, \left(U(g)\rho U(g)^\dag\right)^{\otimes n}\right) \nonumber\\
        \leq &\|\mathcal{E}\|\cdot nT\left(e^{-it_j H}\rho e^{it_j H}, U(g)\rho U(g)^\dag\right) \nonumber\\
        \leq &n\|\mathcal{E}\|\cdot\frac{\epsilon}{2n\|\mathcal{E}\|} \nonumber\\
        =&\frac{\epsilon}{2}. \label{SMeq:lem:group_time_covariance_compatibility2}
    \end{align}
    Since $\mathcal{E}$ is $(G, U^{\otimes n}, U')$-covariant and time-translation $(H, H')$-covariant, we have 
    \begin{align}
        &\mathcal{E}\left(\left(U(g)\rho U(g)^\dag\right)^{\otimes n}\right)
        =U'(g)\mathcal{E}(\rho^{\otimes n})U'(g)^\dag, \label{SMeq:lem:group_time_covariance_compatibility3}\\
        &\mathcal{E}\left(\left(e^{-it_j H}\rho e^{it_j H}\right)^{\otimes n}\right)
        =e^{-it_j H'}\mathcal{E}(\rho^{\otimes n})e^{it_j H'}. \label{SMeq:lem:group_time_covariance_compatibility4}
    \end{align}
    By plugging Eqs.~\eqref{SMeq:lem:group_time_covariance_compatibility3} and \eqref{SMeq:lem:group_time_covariance_compatibility4} into Eq.~\eqref{SMeq:lem:group_time_covariance_compatibility2}, we get 
    \begin{align}
        T\left(e^{-it_j H'}\mathcal{E}(\rho^{\otimes n})e^{it_j H'}, U'(g)\mathcal{E}(\rho^{\otimes n})U'(g)^\dag\right)\leq\frac{\epsilon}{2}. \label{SMeq:lem:group_time_covariance_compatibility5}
    \end{align}
    Eq.~\eqref{SMeq:lem:group_time_covariance_compatibility1} implies 
    \begin{align}
        &T\left(e^{-it_j H'}\mathcal{E}(\rho^{\otimes n})e^{it_j H'}, e^{-it_j H'}\sigma e^{it_j H'}\right)\leq\frac{\epsilon}{4}, \label{SMeq:lem:group_time_covariance_compatibility6}\\
        &T\left(U'(g)\mathcal{E}(\rho^{\otimes n})U'(g)^\dag, U'(g)\sigma U'(g)^\dag\right)\leq\frac{\epsilon}{4}. \label{SMeq:lem:group_time_covariance_compatibility7}
    \end{align}
    By the triangle inequality and Eqs.~\eqref{SMeq:lem:group_time_covariance_compatibility5}, \eqref{SMeq:lem:group_time_covariance_compatibility6}, and \eqref{SMeq:lem:group_time_covariance_compatibility7}, we get
    \begin{align}
        T\left(e^{-it_j H'}\sigma e^{it_j H'}, U'(g)\sigma U'(g)^\dag\right) 
        \leq &T\left(e^{-it_j H'}\sigma e^{it_j H'}, e^{-it_j H'}\mathcal{E}(\rho^{\otimes n})e^{it_j H'}\right) \nonumber\\
        &+T\left(e^{-it_j H'}\mathcal{E}(\rho^{\otimes n})e^{it_j H'}, U'(g)\mathcal{E}(\rho^{\otimes n})U'(g)^\dag\right) \nonumber\\
        &+T\left(U'(g)\mathcal{E}(\rho^{\otimes n})U'(g)^\dag, U'(g)\sigma U'(g)^\dag\right) \nonumber\\
        \leq &\frac{\epsilon}{4}+\frac{\epsilon}{2}+\frac{\epsilon}{4} 
        \nonumber\\
        =&\epsilon. 
    \end{align}
    Therefore, we get $\lim_{j\to\infty} e^{-it_j H'}\sigma e^{it_j H'}=U'(g)\sigma U'(g)^\dag$. 
\end{proof}

We next show that the cosets of $K$ can be distinguished by time-translation-invariant measurements.

\begin{definition} [Dephasing map.]
    Let 
    $F$ have the following spectral decomposition: 
    \begin{align}
        F=\sum_{j=1}^J f_j \Pi_j, 
    \end{align}
    where $f_j\neq f_k$ if $j\neq k$, and $\Pi_j$ is the projection operator onto the eigenspace with eigenvalue $f_j$. 
    The Dephasing map $\mathcal{D}_F$ with respect to $F$ is defined by 
    \begin{align}
        \mathcal{D}_F(L):=\sum_{j=1}^J \Pi_j L \Pi_j\ \forall L. 
    \end{align}
\end{definition}

Time-translation covariance changes which part of the finite-group orbit is operationally accessible.
Any admissible measurement can depend only on information that survives the relevant time-translation constraint.
We therefore first identify the subgroup that remains invisible after collective energy dephasing.
For a state $\rho$, we define
\begin{align}
    K:=\left\{g\in G \ |\ U(g)\rho U(g)^\dag\in\overline{\left\{e^{-itH}\rho e^{itH} \ |\ t\in\mathbb{R}\right\}}\right\}. \label{eq:time_translation_subgroup}
\end{align}
Since the group representation commutes with the Hamiltonian, $K$ is a subgroup of $G$.
Moreover,
\begin{align}
    S\subset K.
\end{align}

\begin{lemma} \label{lem:dephased_state_symmetry_subgroup}
    Let $G$ be a finite group, $H$ be a Hamiltonian, $U$ be a projective unitary representation of $G$ satisfying $[U(h), H]=0$ for all $h\in G$, and $g\in G$. 
    Then, there exists some $m\in\mathbb{N}$ such that 
    \begin{align}
        \mathrm{Sym}\left(\mathcal{D}_{\mathcal{H}^{(m)}}(\rho^{\otimes m})\right) 
        =K, \label{eq:lem:dephased_state_symmetry_subgroup01}
    \end{align}
    where 
    \begin{align}
        H^{(m)}:=\sum_{j=1}^m I^{\otimes j-1}\otimes H\otimes I^{\otimes m-j}. 
    \end{align}
\end{lemma}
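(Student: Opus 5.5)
The proof has two inclusions. The inclusion $K\subset\mathrm{Sym}(\mathcal{D}_{H^{(m)}}(\rho^{\otimes m}))$ will hold for every $m$. The reverse inclusion is obtained by separating, for each $g\notin K$ individually, the two torus orbits of $\rho$ and $\mathcal{U}_g(\rho)$ using finitely many tensor-power moments. A monotonicity property in $m$ then lets us take one common $m$.

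\emph{Two structural facts.} First, I will use the standard identity
\begin{align}
\mathcal{D}_{H^{(m)}}(X)=\lim_{\mathcal{T}\to\infty}\frac{1}{\mathcal{T}}\int_0^{\mathcal{T}} e^{-itH^{(m)}}Xe^{itH^{(m)}}dt .
\end{align}
The off-diagonal blocks between distinct eigenvalues of $H^{(m)}$ average to zero. Equivalently, $\mathcal{D}_{H^{(m)}}(X)=\int_{\mathbb{T}} u^{\otimes m}Xu^{\dagger\otimes m}\,du$, where $\mathbb{T}$ is the compact abelian group $\overline{\{e^{-itH}\}_{t}}$ with Haar measure $du$. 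This holds because $e^{-itH^{(m)}}=(e^{-itH})^{\otimes m}$, and the time average of a continuous function on the closed one-parameter subgroup equals its Haar average. Second, since the twirl acts diagonally on the copies, partial trace commutes with it. Hence $\mathrm{tr}_{m-k}\,\mathcal{D}_{H^{(m)}}(\rho^{\otimes m})=\mathcal{D}_{H^{(k)}}(\rho^{\otimes k})$ for $k\le m$. Consequently, if $g\notin\mathrm{Sym}(\mathcal{D}_{H^{(k)}}(\rho^{\otimes k}))$, then $g\notin\mathrm{Sym}(\mathcal{D}_{H^{(m)}}(\rho^{\otimes m}))$ for all $m\ge k$; this uses that $U(g)^{\otimes m}$ commutes with the partial trace.

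\emph{Inclusion $K\subset\mathrm{Sym}$.} Let $g\in K$, so that $\mathcal{U}_g(\rho)=\lim_j e^{-it_jH}\rho e^{it_jH}$. Then $U(g)^{\otimes m}\rho^{\otimes m}U(g)^{\dagger\otimes m}=\lim_j e^{-it_jH^{(m)}}\rho^{\otimes m}e^{it_jH^{(m)}}$. Because $[U(g),H]=0$, the unitary $U(g)^{\otimes m}$ commutes with $H^{(m)}$ and hence with $\mathcal{D}_{H^{(m)}}$. Moreover, $\mathcal{D}_{H^{(m)}}\circ\mathrm{Ad}_{e^{-itH^{(m)}}}=\mathcal{D}_{H^{(m)}}$. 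By continuity of $\mathcal{D}_{H^{(m)}}$,
\begin{align}
\mathcal{U}_g^{\otimes m}\big(\mathcal{D}_{H^{(m)}}(\rho^{\otimes m})\big)=\mathcal{D}_{H^{(m)}}(\rho^{\otimes m}).
\end{align}

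\emph{Reverse inclusion.} Fix $g\notin K$. Then $\rho':=\mathcal{U}_g(\rho)$ is not in $\mathbb{T}\cdot\rho$, which is the closure of the time orbit. Since both $\mathbb{T}$-orbits are compact and $\mathbb{T}$ is a group, they are disjoint. This step is the main obstacle. I would pick a continuous function equal to $1$ on $\mathbb{T}\cdot\rho$ and $0$ on $\mathbb{T}\cdot\rho'$. By Stone--Weierstrass, I approximate it to within $1/4$ by a polynomial in the matrix entries; conjugate entries are unnecessary because the states are Hermitian, $\bar\rho_{ij}=\rho_{ji}$. Averaging this polynomial over $\mathbb{T}$ gives an invariant polynomial $p$ with $p(\rho)\neq p(\rho')$. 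Using $\mathrm{tr}\rho=1$, each monomial can be homogenized to a common degree $k$. Then $p(\tau)=\mathrm{tr}(A\,\tau^{\otimes k})$ for some operator $A$, and $\mathbb{T}$-invariance gives
\begin{align}
p(\tau)=\mathrm{tr}\big(A\,\mathcal{D}_{H^{(k)}}(\tau^{\otimes k})\big).
\end{align}
Therefore $\mathcal{D}_{H^{(k)}}(\rho'^{\otimes k})\neq\mathcal{D}_{H^{(k)}}(\rho^{\otimes k})$. The left-hand side equals $\mathcal{U}_g^{\otimes k}(\mathcal{D}_{H^{(k)}}(\rho^{\otimes k}))$, again because $U(g)^{\otimes k}$ commutes with $\mathcal{D}_{H^{(k)}}$, so $g\notin\mathrm{Sym}(\mathcal{D}_{H^{(k)}}(\rho^{\otimes k}))$. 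Call this degree $k_g$.

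Finally, since $G$ is finite, set $m:=\max_{g\notin K}k_g$, or $m=1$ if $K=G$. By the monotonicity fact, every $g\notin K$ lies outside $\mathrm{Sym}(\mathcal{D}_{H^{(m)}}(\rho^{\otimes m}))$. Combined with the first inclusion, this yields Eq.~\eqref{eq:lem:dephased_state_symmetry_subgroup01}.
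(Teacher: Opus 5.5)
Your proof is correct. It shares the paper's overall skeleton: an inclusion $K\subset\mathrm{Sym}$ valid for every $m$, then a separation for each $g\notin K$, then finiteness of $G$. The separation step, however, is done differently.

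\textbf{The paper's separation step.} The paper uses the explicit witness $\rho^{\otimes m}$. It writes
\[
\mathrm{tr}\bigl(\mathcal{D}_{H^{(m)}}(\tau^{\otimes m})\rho^{\otimes m}\bigr)=\int_{\mathbb{T}_H}\mathrm{tr}(V\tau V^\dag\rho)^m\,d\mu(V).
\]
For $\tau=\rho$ it bounds this below by $c^m\mu(\mathcal{N})$, where $\mathcal{N}$ is an open neighbourhood of the identity on which the overlap exceeds $c=(a+b)/2$. For $\tau=\mathcal{U}_g(\rho)$ it bounds it above by $b^m$, where $b=\max_{V\in\mathbb{T}_H}\mathrm{tr}(V\mathcal{U}_g(\rho)V^\dag\rho)<a=\mathrm{tr}\rho^2$ by compactness. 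This argument is constructive and reuses the paper's recurring idea of treating $\rho$ itself as the discriminating observable. It gives separation for every sufficiently large $m$, so a common $m$ for the finitely many $g\notin K$ follows at once; the paper leaves this implicit.

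\textbf{Your separation step.} Your Urysohn/Stone--Weierstrass argument, followed by averaging and homogenization, yields an abstract invariant polynomial. It gives no control on the degree. You therefore need your partial-trace identity $\mathrm{tr}_{m-k}\mathcal{D}_{H^{(m)}}(\rho^{\otimes m})=\mathcal{D}_{H^{(k)}}(\rho^{\otimes k})$ to pass to a common $m$. That step is clean, and it makes the uniformity over $g$ explicit.

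\textbf{The inclusion $K\subset\mathrm{Sym}$.} Your argument uses only continuity of $\mathcal{D}_{H^{(m)}}$ and $\mathcal{D}_{H^{(m)}}\circ\mathrm{Ad}_{e^{-itH^{(m)}}}=\mathcal{D}_{H^{(m)}}$. This is more elementary than the paper's route, which passes through the Haar-integral representation of the dephasing and right invariance of the Haar measure. Both proofs still rely on that Haar-average representation in the separation step.
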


We give the proof of this lemma in Appendix~\ref{sec:dephased_state_symetry}

We can now characterize the effect of imposing the finite-group covariance condition.

\begin{theorem} \label{thm:TCGP_conversion_rate}
    Let $G$ be a finite group, $U$ and $U'$ be projective unitary representations of $G$ on $\mathcal{H}$ and $\mathcal{H}'$, respectively, $\rho\in\mathcal{S}(\mathcal{H})$, $\sigma\in\mathcal{S}(\mathcal{H}')$, $H\in\mathcal{L}^\mathrm{H}(\mathcal{H})$ and $H'\in\mathcal{L}^\mathrm{H}(\mathcal{H}')$ be Hamiltonians on the input and output systems, and $U$ and $U'$ commute with $H$ and $H'$, respectively. 
    Then, 
    \begin{align}
        R_{G\text{-}\mathrm{cov}, \mathrm{TCGP}}(\rho\to\sigma)=
        \begin{cases}
            R_{\mathrm{TCGP}}(\rho\to\sigma) & \text{ if the compatibility condition (Def.~\ref{def:group_time_covariance_compatibility}) holds,}\\
            0 & \textrm{ otherwise.}
        \end{cases}
    \end{align}
\end{theorem}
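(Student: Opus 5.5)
The proof splits into an upper bound and an achievability part.

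For the upper bound, the inclusion of operation classes gives $R_{G\text{-}\mathrm{cov},\mathrm{TCGP}}(\rho\to\sigma)\leq R_{\mathrm{TCGP}}(\rho\to\sigma)$. If the compatibility condition fails, the rate must be zero. Otherwise a positive rate would contradict Lemma~\ref{lem:group_time_covariance_compatibility}, whose argument only uses $G$-covariance and time-translation covariance, and both hold for $G$-covariant TCGP maps. So it remains to show $R_{G\text{-}\mathrm{cov},\mathrm{TCGP}}\geq R_{\mathrm{TCGP}}$ under compatibility, and for this I will apply Theorem~\ref{thm:rate_lower_bound_general} with the subgroup $K$ of Eq.~\eqref{eq:time_translation_subgroup} in place of $S$.

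The first ingredient of Theorem~\ref{thm:rate_lower_bound_general} is the $K$-averaged convergence Eq.~\eqref{eq:thm:rate_lower_bound_general01}. Fix $r<R_{\mathrm{TCGP}}(\rho\to\sigma)$ and a TCGP sequence $\mathcal{E}^{(n)}$ achieving it. For $k\in K$ choose $t_j$ with $e^{-it_jH}\rho e^{it_jH}\to\mathcal{U}_k(\rho)$. Compatibility then gives $e^{-it_jH'}\sigma e^{it_jH'}\to\mathcal{U}'_k(\sigma)$. Time-translation covariance of $\mathcal{E}^{(n)}$ together with unitary invariance of the trace distance gives
\begin{align}
\mathrm{T}\left(\mathcal{E}^{(n)}\left((e^{-it_jH}\rho e^{it_jH})^{\otimes n}\right),(e^{-it_jH'}\sigma e^{it_jH'})^{\otimes\lfloor rn\rfloor}\right)=\mathrm{T}\left(\mathcal{E}^{(n)}(\rho^{\otimes n}),\sigma^{\otimes\lfloor rn\rfloor}\right).
\end{align}
For fixed $n$, I let $j\to\infty$. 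Both arguments converge by continuity, since $n$ and $\lfloor rn\rfloor$ are fixed while $j\to\infty$. Hence $\mathrm{T}(\mathcal{E}^{(n)}(\mathcal{U}_k(\rho)^{\otimes n}),\mathcal{U}'_k(\sigma)^{\otimes\lfloor rn\rfloor})=\mathrm{T}(\mathcal{E}^{(n)}(\rho^{\otimes n}),\sigma^{\otimes\lfloor rn\rfloor})\to0$, and averaging over $k$ gives Eq.~\eqref{eq:thm:rate_lower_bound_general01}. I also need that $K$ is a subgroup; this was stated above and follows from $[U(g),H]=0$ together with compactness of the orbit closure.

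The second ingredient is a control-admissible discrimination of $G/K$, and this is the main obstacle. By Lemma~\ref{lem:dephased_state_symmetry_subgroup} there is an $m$ such that $\tilde\rho:=\mathcal{D}_{H^{(m)}}(\rho^{\otimes m})$ satisfies $\mathrm{Sym}(\tilde\rho)=K$. Apply Lemma~\ref{lem:asymmetry_state_discrimination} to $\tilde\rho$ with the representation $U(g)^{\otimes m}$, which again commutes with $H^{(m)}$. The binary tests there measure the spectral projections of $O_x=\mathcal{U}_g(\tilde\rho)$, which commutes with $H^{(m)}$, on disjoint blocks of $m$ copies. Every effect is then a sum of tensor products of projections commuting with the block Hamiltonians, so each $M_x^{(k)}$ commutes with the total Hamiltonian. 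Moreover, $\mathrm{tr}(M\,\mathcal{U}_g(\rho)^{\otimes m})=\mathrm{tr}(M\,\mathcal{U}_g(\tilde\rho))$ for any $M$ commuting with $H^{(m)}$, since dephasing commutes with $\mathcal{U}_g$. Thus the success probability on $\rho^{\otimes mk}$ equals the one computed for $\tilde\rho^{\otimes k}$ and tends to $1$ exponentially. By Lemma~\ref{lem:control_admissible_POVM_TCGP} this POVM on $l=mk$ copies is control-admissible for TCGP. For $l$ not divisible by $m$, I discard the leftover copies; discarding is free, so this does not affect control admissibility. This yields Eq.~\eqref{eq:thm:rate_lower_bound_general02}.

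Theorem~\ref{thm:rate_lower_bound_general} then gives $R_{G\text{-}\mathrm{cov},\mathrm{TCGP}}\geq r$, and taking the supremum over $r$ completes the proof. One point to check is that TCGP satisfies the standing assumptions used by Theorem~\ref{thm:conversion_control_admissible_POVM}. Conjugation by $U(g)$ preserves Gibbs states and time covariance because $U(g)$ commutes with $H$. The other assumptions (tensor closure, discarding, convexity) hold directly.
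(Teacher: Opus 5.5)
Your proposal is correct and follows the paper's proof essentially step by step. You bound the rate from above by inclusion and by Lemma~\ref{lem:group_time_covariance_compatibility}, derive the $K$-averaged convergence from time covariance plus compatibility, and build a $G/K$-discriminating POVM that commutes with the Hamiltonian via Lemma~\ref{lem:dephased_state_symmetry_subgroup}, made control-admissible by Lemma~\ref{lem:control_admissible_POVM_TCGP}, before concluding with Theorem~\ref{thm:rate_lower_bound_general}. The only cosmetic difference is that the paper takes an arbitrary discriminating POVM for the dephased orbit and dephases it blockwise using $\mathcal{D}_{H^{(m)}}^\dag=\mathcal{D}_{H^{(m)}}$, whereas you verify directly that the POVM of Lemma~\ref{lem:asymmetry_state_discrimination} applied to $\mathcal{D}_{H^{(m)}}(\rho^{\otimes m})$ is already block-diagonal in energy.
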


\begin{proof}
    We note that the result for the case where the compatibility condition does not hold is direct from Lemma~\ref{lem:group_time_covariance_compatibility}. 
    Thus, we consider the case where the compatibility condition holds in the following. 
    Since the class of $G$-covariant time-translation covariant Gibbs-preserving operations is smaller than the class of time-translation invariant Gibbs-preserving operations, we have 
    \begin{align}
        R_{G\text{-}\mathrm{cov}, \mathrm{TCGP}}(\rho\to\sigma) 
        \leq R_\mathrm{TCGP}(\rho\to\sigma) \label{eq:thm:TCGP_conversion_rate1}
    \end{align}
    if $R_\mathrm{TCGP}(\rho\to\sigma)=0$, we trivially get $R_{G\text{-}\mathrm{cov}, \mathrm{TCGP}}(\rho\to\sigma)=0$. 
    In the following, we assume that $R_\mathrm{TCGP}(\rho\to\sigma)>0$. 
    We take arbitrary $r\in (0, R_\mathrm{TCGP}(\rho\to\sigma))$. 
    Then, there exists some sequence of time-translation covariant and Gibbs-preserving operations $\mathcal{E}_n$ such that 
    \begin{align}
        \lim_{n\to\infty} \mathrm{T}\left(\mathcal{E}_n(\rho^{\otimes n}), \sigma^{\otimes \lfloor rn\rfloor}\right) =0
    \end{align}
    For any $k\in K$, there exists some sequence $(t_j)_{j\in\mathbb{N}}$ such that 
    \begin{align}
        \lim_{j\to\infty} e^{-it_j H}\rho e^{it_j H}=U(k)\rho U(k)^\dag. 
    \end{align}
    By the compatibility condition, we also have 
    \begin{align}
        \lim_{j\to\infty} e^{-it_j H'}\sigma e^{it_j H'}=U'(k)\sigma U'(k)^\dag. 
    \end{align}
    Since the map $\mathcal{E}_n$ is time-translation covariant, we have 
    \begin{align}
        &\mathrm{T}\left(\mathcal{E}_n\left(\left(e^{-it_j H}\rho e^{it_j H}\right)^{\otimes n}\right), \left(e^{-it_j H'}\sigma e^{it_j H'}\right)^{\otimes \lfloor rn\rfloor}\right) \nonumber\\
        =&\mathrm{T}\left(\left(e^{-it_j H'}\right)^{\otimes \lfloor rn\rfloor}
        \mathcal{E}_n(\rho^{\otimes n}) 
        \left(e^{it_j H'}\right)^{\otimes \lfloor rn\rfloor}, 
        \left(e^{-it_j H'}\right)^{\otimes \lfloor rn\rfloor} 
        \sigma^{\otimes \lfloor rn\rfloor}
        \left(e^{it_j H'}\right)^{\otimes \lfloor rn\rfloor}\right) \nonumber\\
        =&\mathrm{T}\left(\mathcal{E}_n(\rho^{\otimes n}), \sigma^{\otimes \lfloor rn\rfloor}\right). 
    \end{align}
    Taking the limit $j\to\infty$, we get 
    \begin{align}
        \mathrm{T}\left(\mathcal{E}_n\left(\mathcal{U}_k(\rho)^{\otimes n}\right), \mathcal{U}'_k(\sigma)^{\otimes \lfloor rn \rfloor}\right) 
        =\mathrm{T}\left(\mathcal{E}_n(\rho^{\otimes n}), \sigma^{\otimes \lfloor rn\rfloor}\right). 
    \end{align}
    We therefore get 
    \begin{align}
        \lim_{n\to\infty} \max_{k\in K} \mathrm{T}\left(\mathcal{E}_n\left(\mathcal{U}_k(\rho)^{\otimes n}\right), \mathcal{U}'_k(\sigma)^{\otimes \lfloor rn \rfloor}\right) 
        =0. 
    \end{align}

    By Lemma~\ref{lem:dephased_state_symmetry_subgroup}, there exists some $m\in\mathbb{N}$ such that the symmetry subgroup of $\mathcal{D}_{H^{(m)}}(\rho^{\otimes m})$ is $K$. 
    We can take a sub-POVM $\{M_x\}_{x\in G/K}$ such that 
    \begin{align}
        \lim_{n\to\infty} \mathrm{tr}((\mathcal{D}_{H^{(m)}}(\rho_x^{\otimes m}))^{\otimes n} M_x^{(n)})=1.  
    \end{align}
    Since $\mathcal{D}_{H^{(m)}}^\dag=\mathcal{D}_{H^{(m)}}$, we have 
    \begin{align}
        \lim_{n\to\infty} \mathrm{tr}\left(\rho_x^{\otimes mn} \mathcal{D}_{H^{(m)}}^{\otimes n}\left(M_x^{(n)}\right)\right)=1. 
    \end{align}
    We note that $\mathcal{D}_{H^{(m)}}^{\otimes n}(M_x^{(n)})$ commutes with $H^{(mn)}$. 
    By Lemma~\ref{lem:control_admissible_POVM_TCGP}, $\mathcal{D}_{H^{(m)}}^{\otimes n}\left(M_x^{(n)}\right)$ is a control-admissible sub-POVM in the resource theory of time-translation covariant Gibbs-preserving operations.

    Therefore, by Theorem~\ref{thm:rate_lower_bound_general}, we get 
    \begin{align}
        R_{G\text{-}\mathrm{cov}, \mathrm{TCGP}}(\rho\to\sigma)\geq r. 
    \end{align}
    Since this holds for all $r\in (0, R_\mathrm{TCGP}(\rho\to\sigma))$, we have 
    \begin{align}
        R_{G\text{-}\mathrm{cov}, \mathrm{TCGP}}(\rho\to\sigma)
        \geq R_\mathrm{TCGP}(\rho\to\sigma). \label{eq:thm:TCGP_conversion_rate2}
    \end{align}
    By Eqs.~\eqref{eq:thm:TCGP_conversion_rate1} and \eqref{eq:thm:TCGP_conversion_rate2}, we get 
    \begin{align}
        R_{G\text{-}\mathrm{cov}, \mathrm{TCGP}}(\rho\to\sigma)
        =R_\mathrm{TCGP}(\rho\to\sigma).
    \end{align}
\end{proof}

\subsection{Thermal operations}

We next consider thermal operations.
A thermal operation from a system with Hamiltonian $H$ to a system with Hamiltonian $H'$ is a channel of the form
\begin{align}
    \mathcal{E}(\rho)
    =\mathrm{tr}_\mathrm{E} \left(V\left(\rho\otimes\gamma_\mathrm{B}\right)V^\dag\right),
\end{align}
where $\gamma_\mathrm{B}$ is a Gibbs state with a fixed inverse temperature $\beta>0$ and $V$ is an energy-preserving unitary.
Every thermal operation is Gibbs preserving and time-translation covariant.
Therefore, the compatibility condition introduced in the previous subsection is necessary for a positive conversion rate.
It remains to show that, when the compatibility condition holds, the discrimination required in Theorem~\ref{thm:rate_lower_bound_general} can be implemented within thermal operations.

\subsubsection{Map-level symmetry constraints}

First, we check a sufficient condition for control-admissible sub-POVM in the resource theory of thermal operations.

\begin{lemma} \label{lem:control_admissible_POVM_TO}
    Let $\{M_x^{(m)}\}_{x\in X}$ be a sub-PVM satisfying $[M_x^{(m)}, H^{(m)}]=0$ and $\mathcal{F}$ be a thermal operation. 
    Then, $\mathcal{F}^\dag(M_x^{(m)})$ is a control-admissible sub-POVM in the resource theory of thermal operations. 
\end{lemma}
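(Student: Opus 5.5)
Our approach is to realize the measurement-and-control map of Definition~\ref{def:control_admissible_POVM} explicitly as one thermal operation. Let $M_\perp^{(m)}:=I-\sum_x M_x^{(m)}$. Since $\{M_x^{(m)}\}$ is a sub-PVM commuting with $H^{(m)}$, $M_\perp^{(m)}$ is also an energy-commuting projection, so $\{M_y^{(m)}\}_{y\in X\cup\{\perp\}}$ is a complete PVM of energy-preserving projections. Write $\mathcal{F}(L)=\mathrm{tr}_{E}(V(L\otimes\gamma_B)V^\dag)$ with $V$ energy preserving, and take thermal operations $\mathcal{E}_y(L')=\mathrm{tr}_{E_y}(W_y(L'\otimes\gamma_{B_y})W_y^\dag)$ with energy-preserving $W_y$. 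Using the adjoint, the target map is
\begin{align}
    \mathcal{G}(L)=\sum_{y}\mathcal{E}_y\bigl(\mathrm{tr}_1((\mathcal{F}^\dag(M_y^{(m)})\otimes I)L)\bigr)
    =\sum_y \mathcal{E}_y\bigl(\mathrm{tr}_1((M_y^{(m)}\otimes I)(\mathcal{F}\otimes\mathrm{id})(L))\bigr).
\end{align}
Thermal operations are closed under tensor products with the identity and under composition, so $\mathcal{F}\otimes\mathrm{id}$ is thermal. It therefore suffices to show that the map $\mathcal{G}_0(L):=\sum_y\mathcal{E}_y(\mathrm{tr}_1((M_y^{(m)}\otimes I)L))$ is thermal for an energy-commuting PVM. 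The claim then follows by composing $\mathcal{G}_0$ after $\mathcal{F}\otimes\mathrm{id}$.

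For $\mathcal{G}_0$, we append all the baths $\gamma_{B_y}$, which is allowed because their tensor product is again Gibbs. If the output systems of the $\mathcal{E}_y$ differ, we pad them to a common output with trivially traced parts. We then apply the controlled unitary
\begin{align}
    W:=\sum_y M_y^{(m)}\otimes W_y\otimes I_{\text{other baths}}.
\end{align}
This operator is unitary because the $M_y^{(m)}$ are orthogonal projections summing to $I$. It is energy preserving because each $M_y^{(m)}$ commutes with $H^{(m)}$ and each $W_y$ commutes with its total Hamiltonian, so every block commutes with the sum of the Hamiltonians. Finally, we trace out the control register (system $1$) and all baths. Tracing system $1$ kills the cross terms $M_y\,\cdot\,M_{y'}$ for $y\neq y'$ by cyclicity and orthogonality, which yields exactly $\sum_y\mathcal{E}_y(\mathrm{tr}_1((M_y\otimes I)L))$. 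Discarding subsystems is allowed in thermal operations, since the trace is included in the definition through $\mathrm{tr}_E$.

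The main obstacle is the bookkeeping when the $\mathcal{E}_y$ have different environments or output systems. We will handle this by enlarging everything to a common bath $\bigotimes_y B_y$ with additive Hamiltonian, and by treating each $W_y$ as acting trivially on the other baths. The step that needs care is recognizing that composing with $\mathcal{F}^\dag$ on the effect side corresponds to composing with $\mathcal{F}$ on the state side. For this we will use $\mathrm{tr}_1((\mathcal{F}^\dag(M)\otimes I)L)=\mathrm{tr}_1((M\otimes I)(\mathcal{F}\otimes\mathrm{id})(L))$, which holds for product operators and extends by linearity.
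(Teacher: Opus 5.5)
Your proposal is correct and follows essentially the same route as the paper: you show that the energy-commuting PVM (completed by $M_\perp^{(m)}$) is control-admissible by building the energy-preserving controlled unitary $\sum_y M_y^{(m)}\otimes\widetilde W_y$ over a common product bath, and then precompose with the thermal operation $\mathcal{F}\otimes\mathrm{id}$, using the adjoint identity to move the effects to $\mathcal{F}^\dag(M_y^{(m)})$. You also check the cross-term cancellation and the adjoint identity explicitly, which the paper leaves implicit; the one step still implicit (as in the paper) is that the inconclusive effect $I-\sum_x\mathcal{F}^\dag(M_x^{(m)})$ equals $\mathcal{F}^\dag(M_\perp^{(m)})$, and this follows from unitality of $\mathcal{F}^\dag$.
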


\begin{proof}
    Let $M_\perp:=I-\sum_{x\in X}M_x$.
    We first show that the channel obtained by measuring the PVM
    $\{M_x\}_{x\in X\cup\{\perp\}}$ and conditionally applying a thermal operation is itself a thermal operation.

    For each $x\in X\cup\{\perp\}$, let $\mathcal{E}_x:S\to S'$ be a thermal operation.
    We may write
    \begin{align}
        \mathcal{E}_x(L)=&
        \mathrm{tr}_{E_x}\left(
        V_x\left(L\otimes\gamma_{B_x}\right)V_x^\dag
        \right),
    \end{align}
    where $B_x$ is a heat bath in its Gibbs state $\gamma_{B_x}$ and $V_x$ is an energy-preserving unitary.

    Introduce the common bath
    \begin{align}
        &B:=\bigotimes_{x\in X\cup\{\perp\}}B_x, \\
        &\gamma_B:=\bigotimes_{x\in X\cup\{\perp\}}\gamma_{B_x}.
    \end{align}
    For each $x$, extend $V_x$ by the identity on all bath factors $B_y$ with $y\neq x$.
    We denote the resulting unitary by $\widetilde V_x$.
    The output spaces may be enlarged by trivial ancillary factors so that all $\widetilde V_x$ act between the same input and output Hilbert spaces.
    We define
    \begin{align}
        V:=&\sum_{x\in X\cup\{\perp\}}M_x\otimes\widetilde V_x.
    \end{align}
    Since $\{M_x\}_{x\in X\cup\{\perp\}}$ is a PVM, $V$ is unitary.
    Moreover, $[M_x,H_A]=0$ for every $x$, and each $\widetilde V_x$ is energy preserving.
    Hence $V$ is energy preserving with respect to the total Hamiltonian.

    After tracing out the measured system $A$ and all bath output systems, the resulting channel is
    \begin{align}
        \mathcal{C}_M(L)=&
        \sum_{x\in X\cup\{\perp\}}
        \mathcal{E}_x\left(
        \mathrm{tr}_A\left(
        \left(M_x\otimes I_S\right)L
        \right)
        \right).
    \end{align}
    Therefore, $\mathcal{C}_M$ is a thermal operation.

    Now let $\mathcal{F}$ be the thermal operation appearing in the statement.
    Since thermal operations are closed under tensor products and composition, $\mathcal{C}_M\circ \left(\mathcal{F}\otimes\mathrm{id}_S\right)$ is a thermal operation.
    Its action is
    \begin{align}
        \mathcal{C}_M\circ
        \left(\mathcal{F}\otimes\mathrm{id}_S\right)(L)
        =\sum_{x\in X\cup\{\perp\}}
        \mathcal{E}_x\left(
        \mathrm{tr}_A\left(
        \left(\mathcal{F}^\dag(M_x)\otimes I_S\right)L
        \right)
        \right).
    \end{align}
    Hence $\{\mathcal{F}^\dag(M_x)\}_{x\in X}$ is a control-admissible sub-POVM.
\end{proof}

Next, we construct a control-admissible sub-POVM that asymptotically identifies the coset $gK$. 
Since we use this idea in the proof of the next section, we describe the lemma in a general form.

\begin{lemma} \label{lem:TO_state_discrimination_general}
    Let $G$ be a finite group, $S$ be a symmetry subgroup of $\rho$, $\tau$ be a state, $\ket{\phi}$ be a pure state, $\ket{\phi_g}:=U'(g)\ket{\phi}$ are mutually orthogonal and 
    \begin{align}
        \bra{\phi}\tau\ket{\phi}>0. 
    \end{align}
    Then, there exists a sequence of sub-PVMs $(\{R_x^{(k)}\}_{x\in G/S})_{k\in\mathbb{N}}$ and a constant $\alpha>0$ such that for any $g\in G$ and sufficiently large $k\in\mathbb{N}$, 
    \begin{align}
        \mathrm{tr}\left(R_{gS}^{(k)}\left(\mathcal{U}_g(\rho)^{\otimes k}\otimes \mathcal{U}'_g(\tau)\right)^{\otimes k}\right)\geq 1-e^{-\alpha k}. 
    \end{align}
    Moreover, for any operator $A$ commuting with $\mathcal{U}_g(\rho)$ for all $g\in G$, $R_{gS}^{(k)}$ commutes with $(A^{\otimes k}\otimes I)^{\otimes k}$. 
\end{lemma}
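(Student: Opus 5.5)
The plan is to build $R^{(k)}_{gS}$ as a product of two commuting projective tests, both assembled from objects the paper already controls. The first test reads the coset from the $\rho$-registers. The second is a projective ``flag'' measurement on the $\tau$-registers. On each of the $k$ outer blocks, the $\rho$-part $\mathcal{U}_g(\rho)^{\otimes k}$ will be tested with the projective orbit-discrimination measurement of Lemma~\ref{lem:asymmetry_state_discrimination}. I will reuse its construction verbatim: the binary tests $T_x^{(l)}$ of Eq.~\eqref{eq:finite_orbit_binary_test} are sums of tensor products of spectral projections $P_{x,a}$ of the orbit states $\rho_x$. The decoded effects $\widetilde M^{(l)}_x$ are sums of products of $T_x^{(l)}$ and $I-T_x^{(l)}$ on disjoint copies. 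Hence each $\widetilde M_x^{(l)}$ is a projection, and the family is a PVM.

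The key structural observation concerns the commutation clause. Each $P_{x,a}$ is a polynomial in $\rho_x=\mathcal{U}_g(\rho)$, so any $A$ commuting with all $\mathcal{U}_g(\rho)$ commutes with every $P_{x,a}$. Consequently $A^{\otimes k}$ commutes with every effect built on the $\rho$-registers. On the $\tau$-register of each block I will measure the PVM $\{|\phi_h\rangle\langle\phi_h|\}_{h\in G}\cup\{I-\sum_h|\phi_h\rangle\langle\phi_h|\}$, which is well defined because the $|\phi_h\rangle$ are mutually orthogonal. This PVM acts on the factor where the claimed commutant is $I$, so the second part of the statement will follow once the final effects are written as sums of tensor products of these two kinds of projections.

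For the decision rule, I will aggregate the $k$ blocks. The coset estimate is taken by majority vote over the $k$ independent $\rho$-tests; each test errs with probability at most $|G/S|e^{-c k}$ by Lemma~\ref{lem:asymmetry_state_discrimination}, so the majority errs with probability at most $2^k(|G/S|e^{-ck})^{k/2}$, which is much smaller than needed. I will additionally require that at least one block's $\tau$-outcome is a flag $|\phi_h\rangle$ with $h$ in the decoded coset. Call the set of such outcome strings the acceptance event, and let $R^{(k)}_{x}$ be the sum of the corresponding product projections. Distinct $x$ then give orthogonal projections, so the family is a sub-PVM.

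On the input $(\mathcal{U}_g(\rho)^{\otimes k}\otimes\mathcal{U}'_g(\tau))^{\otimes k}$, the flag $|\phi_g\rangle$ appears in a given block with probability
\begin{align}
    \langle\phi_g|\mathcal{U}'_g(\tau)|\phi_g\rangle=\langle\phi|\tau|\phi\rangle=:p>0 .
\end{align}
Independence across blocks then gives the bound
\begin{align}
    1-\mathrm{tr}\left(R^{(k)}_{gS}\left(\mathcal{U}_g(\rho)^{\otimes k}\otimes\mathcal{U}'_g(\tau)\right)^{\otimes k}\right)\leq (1-p)^k+e^{-\Omega(k^2)},
\end{align}
and choosing $\alpha<-\log(1-p)$ yields the claimed bound $1-e^{-\alpha k}$ for large $k$.

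I expect the main obstacle to be bookkeeping rather than estimation: one has to check that every effect used is genuinely a projection (not merely a POVM element), that the pieces on the $\rho$- and $\tau$-registers commute, and that the acceptance sets for distinct cosets are disjoint. One must also confirm that the flag condition is phrased through $h\in x$, and not through a single fixed $h$, so that the rule is well defined on cosets even when $S\neq\{e\}$. If the coset-consistency requirement turns out to cause trouble, I would drop it and require only that some flag fires. That alone still gives a sub-PVM with the same error bound, since the flag's role is merely to certify a projective ``success'' sector for later energy-compatible use.
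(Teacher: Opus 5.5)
Your proof is correct, but it gets projectivity and cross-coset orthogonality from a different source than the paper does.

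\textbf{The paper's route.} The paper thresholds the single permutation-symmetric observable $O^{\times k}=\frac1k\sum_j I^{\otimes j-1}\otimes\rho\otimes I^{\otimes k-j}$ on the $k$ copies of a block and rotates the result to $\Pi^{(k)}_{gS}$. These one-versus-rest projectors overlap for different cosets. The paper makes them orthogonal only by tensoring with the coset flag $\sum_{h\in x}\ket{\phi_h}\bra{\phi_h}$, so in its construction the flag is essential. The per-block success probability is then only about $\sum_{s\in S}\bra{\phi_s}\tau\ket{\phi_s}$, so the paper amplifies with $R_x=\sum_{(x_1,\dots,x_k)\in\{x,\perp\}^k\setminus\{\perp\}^k}Q_{x_1}\otimes\cdots\otimes Q_{x_k}$.

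\textbf{Your route.} You take the already projective decoder of Lemma~\ref{lem:asymmetry_state_discrimination} instead. Its candidate-labelled disjoint sub-blocks and classical decoding give orthogonality on the $\rho$-registers alone. The commutation clause then follows because every projection involved is a spectral projection of an orbit state. For $k<k_0$ you should replace that lemma's random guess $I/N$ by any PVM; only large $k$ matters.

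\textbf{What your route shows.} In your construction the flag, and with it the hypotheses on $\tau$ and $\ket{\phi}$, is superfluous. Take $R^{(k)}_x:=M^{(k)}_x\otimes I$, with $M^{(k)}$ acting on the $\rho$-copies of one block. This already meets all three requirements, with error $e^{-\Omega(k)}$, or $e^{-\Omega(k^2)}$ with your majority vote. Adding the flag condition only degrades the error to roughly $(1-p)^k$.

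\textbf{What the paper's construction buys.} It is exactly covariant: $(U(h)^{\otimes k}\otimes U'(h))\,Q_x\,(U(h)^{\otimes k}\otimes U'(h))^\dag=Q_{hx}$. It also admits the group-labelled refinement $Q_g=\Pi^{(k)}_{gS}\otimes\ket{\phi_g}\bra{\phi_g}$. Lemma~\ref{lem:ergotropy_fine_grained_PVM} needs both to build a $G$-invariant controlled unitary. Your decoder fixes which sub-block tests which candidate and uses an arbitrary decoding function and tie-breaking rule. It is therefore not covariant, and it would not serve that later use without modification.
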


\begin{proof}
    First, we construct a projection operator $\Pi^{(k)}$ such that 
    \begin{align}
        &\mathrm{tr}\left(\Pi^{(k)}\rho^{\otimes k}\right)
        \geq 1-e^{-\lambda k}, \nonumber\\
        &\mathrm{tr}\left(\Pi^{(k)}\mathcal{U}_g(\rho)^{\otimes k}\right)
        \leq e^{-\lambda k} \text{ if }g\not\in S. 
    \end{align}
    with some $\lambda>0$. 
    We define 
    \begin{align}
        O^{\times k}:=\frac{1}{k}\sum_{j=1}^k I^{\otimes j-1}\otimes \rho \otimes I^{\otimes k-j}. 
    \end{align}
    We note that for any $g\in G$, 
    \begin{align}
        &\mathrm{tr}\left(\mathcal{U}_g(\rho)^{\otimes k} O^{\times k}\right) 
        =\mathrm{tr}\left(\mathcal{U}_g(\rho)\rho\right), \\
        &\mathrm{tr}\left(\rho^2\right) 
        =\mathrm{tr}\left(\mathcal{U}_g(\rho)\rho\right)+\frac{1}{2}\|\rho-\mathcal{U}_g(\rho)\|_2^2 
        >\mathrm{tr}\left(\mathcal{U}_g(\rho)\rho\right) \text{ if }g\not\in S. 
    \end{align}
    We define the projection operator onto the eigenspace of $O^{\times k}$ with eigenvalue no smaller than $c$: 
    \begin{align}
        \Pi^{(k)}:=\bm{1}_{\{O^{\times k}\geq c\}}, 
    \end{align}
    where 
    \begin{align}
        &a:=\mathrm{tr}\left(\rho^2\right), \\
        &b:=\max_{g\in G-S} \mathrm{tr}\left(\mathcal{U}_g(\rho)\rho\right), \\
        &c:=\frac{a+b}{2}. 
    \end{align}
    By using Hoeffding's inequality, 
    \begin{align}
        \mathrm{tr}\left(\Pi^{(k)}\rho^{\otimes k}\right)
        \geq 1-\exp\left(-\frac{2\left(k\Delta\right)^2}{k\|\rho\|_\infty^2}\right) 
        =1-\exp\left(-\frac{2\Delta^2}{\|\rho\|_\infty^2}k\right) 
        =1-e^{-\lambda k}, \label{eq:lem:TO_state_discrimination_general1}
    \end{align}
    where 
    \begin{align}
        &\Delta:=\frac{b-a}{2}, \\
        &\lambda:=\frac{2\Delta^2}{\|\rho\|_\infty^2}. 
    \end{align}
    Similarly, we get 
    \begin{align}
        \mathrm{tr}\left(\Pi^{(k)}\mathcal{U}_g(\rho)^{\otimes k}\right)
        \leq e^{-\lambda k} \text{ if }g\not\in S. \label{eq:lem:TO_state_discrimination_general2}
    \end{align}
    For each $g\in G$, we define
    \begin{align}
        \Pi_g^{(k)}:=U(g)^{\otimes k}\Pi^{k} U(g)^{\dag\otimes k}.
    \end{align}
    We note that the projectors $\{\Pi_x\}_{x\in G/S}$ is not necessarily mutually orthogonal.

    Next, we construct a mutually orthogonal projectors, and prove several properties. 
    We define 
    \begin{align}
        Q_{gS}^{(k)}:=\Pi_{gS}^{(k)}\otimes\left(\sum_{h\in gS}\ket{\phi_h}\bra{\phi_h}\right). 
    \end{align}
    Since the states $\{\ket{\phi_g}\}_{g\in G}$ are mutually orthogonal, the projectors $\{Q_x^{(k)}\}_{x\in G/S}$ are mutually orthogonal. 
    We also define
    \begin{align}
        Q_\perp^{(k)}:=I-\sum_{x\in G/S} Q_x^{(k)}. 
    \end{align}
    Then, we have 
    \begin{align}
        \mathrm{tr}\left(Q_x^{(k)}\left(\mathcal{U}_g(\rho)^{\otimes k}\otimes\mathcal{U}'_g(\tau)\right)\right) 
        =&\mathrm{tr}\left(\left[\Pi_x^{(k)}\otimes \left(\sum_{h\in x}\ket{\phi_h}\bra{\phi_h}\right)\right]\left(\mathcal{U}_g(\rho)^{\otimes k}\otimes\mathcal{U}'_g(\tau)\right)\right) \nonumber\\
        =&\mathrm{tr}(\Pi_x^{(k)}\mathcal{U}_g(\rho)^{\otimes k})\sum_{h\in x} \mathrm{tr}(\ket{\phi_h}\bra{\phi_h}\mathcal{U}'_g(\tau)). \label{eq:lem:TO_state_discrimination_general3}
    \end{align}
    By Eqs.~\eqref{eq:lem:TO_state_discrimination_general1} and \eqref{eq:lem:TO_state_discrimination_general3}, we get 
    \begin{align}  
        \mathrm{tr}\left(Q_{gS}^{(k)}\left(\mathcal{U}_g(\rho)^{\otimes k}\otimes\mathcal{U}'_g(\tau)\right)\right) 
        =&\mathrm{tr}\left(\Pi_{gS}^{(k)}\mathcal{U}_g(\rho)^{\otimes k}\right)\sum_{h\in gS} \mathrm{tr}(\ket{\phi_h}\bra{\phi_h}\mathcal{U}'_g(\tau)) \nonumber\\
        =&\mathrm{tr}\left(\Pi_{gS}^{(k)}\mathcal{U}_g(\rho)^{\otimes k}\right) \sum_{h\in S} \braket{\phi_h | \tau | \phi_h} \nonumber\\
        \geq &\left(1-e^{-\lambda k}\right)\sum_{h\in S} \braket{\phi_h | \tau | \phi_h}, \label{eq:lem:TO_state_discrimination_general4}
    \end{align}
    If $x\in G/S-\{gS\}$, we have 
    \begin{align}
        \mathrm{tr}\left(Q_x^{(k)}\left(\mathcal{U}_g(\rho)^{\otimes k}\otimes\mathcal{U}'_g(\tau)\right)\right) 
        =\mathrm{tr}(\Pi_x^{(k)}\mathcal{U}_g(\rho)^{\otimes k})\sum_{h\in x} \mathrm{tr}(\ket{\phi_h}\bra{\phi_h}\mathcal{U}'_g(\tau)) 
        \leq e^{-\lambda k} \sum_{h\in x} \mathrm{tr}(\ket{\phi_h}\bra{\phi_h}\mathcal{U}'_g(\tau)), 
    \end{align}
    which implies 
        \begin{align}
        \sum_{x\in G/S-\{gS\}} \mathrm{tr}\left(Q_x^{(k)}\left(\mathcal{U}_g(\rho)^{\otimes k}\otimes\mathcal{U}'_g(\tau)\right)\right) 
        \leq &e^{-\lambda k} \sum_{x\in G/S-\{gS\}} \sum_{h\in x} \mathrm{tr}(\ket{\phi_h}\bra{\phi_h}\mathcal{U}'_g(\tau)) \nonumber\\
        =& e^{-\lambda k} \sum_{h\in G-gS} \mathrm{tr}(\ket{\phi_h}\bra{\phi_h}\mathcal{U}'_g(\tau)) \nonumber\\
        =& e^{-\lambda k}\mathrm{tr}\left(\left(\sum_{h\in G-gS}\ket{\phi_h}\bra{\phi_h}\right)\mathcal{U}'_g(\tau)\right) \nonumber\\
        \leq & e^{-\lambda k}\left\|\sum_{h\in G-gS}\ket{\phi_h}\bra{\phi_h}\right\|_\infty \left\|\mathcal{U}'_g(\tau)\right\|_1 \nonumber\\
        =&e^{-\lambda k}. \label{eq:lem:TO_state_discrimination_general5}
    \end{align}
    Eq.~\eqref{eq:lem:TO_state_discrimination_general5} implies 
    \begin{align}
        \mathrm{tr} \left((Q_{gS}^{(k)}+Q_\perp^{(k)})\left(\mathcal{U}_g(\rho)^{\otimes k}\otimes \mathcal{U}'(\tau)\right)\right) 
        =& \mathrm{tr} \left(\left(I-\sum_{x\in G/S-\{gS\}} Q_x^{(k)}\right) \left(\mathcal{U}_g(\rho)^{\otimes k}\otimes \mathcal{U}'(\tau)\right)\right) \nonumber\\
        =& 1-\sum_{x\in G/S-\{gS\}} \mathrm{tr} \left(Q_x^{(k)} \left(\mathcal{U}_g(\rho)^{\otimes k}\otimes \mathcal{U}'(\tau)\right)\right) \nonumber\\
        \geq & 1-e^{-\lambda k}. \label{eq:lem:TO_state_discrimination_general6}
    \end{align}
    Eq.~\eqref{eq:lem:TO_state_discrimination_general4} implies 
    \begin{align}
        \mathrm{tr} \left(Q_\perp^{(k)} \left(\mathcal{U}_g(\rho)^{\otimes k}\otimes\mathcal{U}'_g(\tau)\right)\right) 
        \leq &\mathrm{tr} \left(\left(I-Q_{gS}^{(k)}\right) \left(\mathcal{U}_g(\rho)^{\otimes k}\otimes \mathcal{U}'_g(\tau)\right)\right) \nonumber\\
        =&1-\mathrm{tr} \left(Q_{gS}^{(k)}\left(\mathcal{U}_g(\rho)^{\otimes k}\otimes \mathcal{U}'_g(\tau)\right)\right) \nonumber\\
        \leq &1-\left(1-e^{-\lambda k}\right)\sum_{h\in S} \braket{\phi_h | \tau | \phi_h}. \label{eq:lem:TO_state_discrimination_general7}
    \end{align}

    Finally, we construct a mutually orthogonal sub-PVM with vanishingly small error probability. 
    For each $x\in G/S$, we define 
    \begin{align}
        R_x^{(k, l)}:=\sum_{(x_1, ..., x_l)\in\{x, \perp\}^l-\{\perp\}^l} Q_{x_1}^{(k)}\otimes\cdots\otimes Q_{x_l}^{(k)}, 
    \end{align}
    and we also define 
    \begin{align}
        R_\perp^{(k, l)}:=I-\sum_{x\in G/S} R_x^{(k, l)}. 
    \end{align}
    Then, $\{R_x^{(k, l)}\}_{x\in G/S\cup\{\perp\}}$ is a PVM. 
    By Eqs.~\eqref{eq:lem:TO_state_discrimination_general6} and \eqref{eq:lem:TO_state_discrimination_general7}, we have 
    \begin{align}
        \mathrm{tr}\left(R_{gS}^{(k, l)}\left(\mathcal{U}_g(\rho)^{\otimes k}\otimes\mathcal{U}'_g(\tau)\right)^{\otimes l}\right) 
        \geq &\left(1-e^{-\lambda k}\right)^l-\left[1-\left(1-e^{-\lambda k}\right)\sum_{h\in S} \braket{\phi_h | \tau | \phi_h}\right]^l \nonumber\\
        \geq &1-le^{-\lambda k}-\left[\left(1-\sum_{h\in S} \braket{\phi_h | \tau | \phi_h}\right)+e^{-\lambda k}\sum_{h\in S} \braket{\phi_h | \tau | \phi_h}\right]^l. 
    \end{align}
    If we set $l=k$, we have 
    \begin{align}
        \mathrm{tr}\left(R_{gS}^{(k, k)}\left(\mathcal{U}_g(\rho)^{\otimes k}\otimes\mathcal{U}'_g(\tau)\right)^{\otimes k}\right) 
        \geq 1-ke^{-\lambda k}-\left[\left(1-\sum_{h\in S} \braket{\phi_h | \tau | \phi_h}\right)+e^{-\lambda k}\sum_{h\in S} \braket{\phi_h | \tau | \phi_h}\right]^k. 
    \end{align}
    We take arbitrary $\alpha<\min\{\lambda, -\log(1-\sum_{h\in S} \braket{\phi_h | \tau | \phi_h})\}$. 
    Since $\ket{\phi}$ satisfies $\braket{\phi | \tau | \phi}>0$, we can take $\alpha>0$. 
    Then, we can take sufficiently large $k\in\mathbb{N}$ such that 
    \begin{align}
        &ke^{-\lambda k}\leq \frac{1}{2}e^{-\alpha k}, \\
        &\left[\left(1-\sum_{h\in S} \braket{\phi_h | \tau | \phi_h}\right)+e^{-\lambda k}\sum_{h\in S} \braket{\phi_h | \tau | \phi_h}\right]^k 
        \leq \frac{1}{2}e^{-\alpha k}. 
    \end{align}
    For this sufficiently large $k$, we have 
    \begin{align}
        \mathrm{tr}\left(R_{gS}^{(k, k)}\left(\mathcal{U}_g(\rho)^{\otimes k}\otimes\mathcal{U}'_g(\tau)\right)^{\otimes k}\right) 
        \geq 1-e^{-\alpha k}. 
    \end{align}
    For any operator $A$ commuting with $\rho$, $O^{\times k}$ commutes with $A^{\otimes k}$, which implies $\Pi^{(k)}$ commutes with $A^{\otimes k}$. 
    Then, $\Pi_{gS}^{(k)}$ commutes with $\mathcal{U}_g(A)^{\otimes k}$. 
    By the construction of $Q_{gS}^{(k)}$, $Q_{gS}^{(k)}$ commutes with $\mathcal{U}_g(A)^{\otimes k}\otimes I$. 
    By the construction of $R_{gS}^{(k, k)}$, $R_{gS}^{(k, k)}$ commutes with $(\mathcal{U}_g(A)^{\otimes k}\otimes I)^{\otimes k}$. 
\end{proof}

By considering the case where we substitute $\rho$, $S$, $\ket{\phi}$, $\tau$ by $\mathcal{D}_{H^{\times m}}(\rho^{\otimes m})$, $K$, $\ket{\xi_e}$ and $\sum_{g\in G} \ket{\xi_g}\bra{\xi_g}/|G|$, respectively, we get the following lemma.

\begin{lemma} \label{lem:TO_state_discrimination}
    Let $G$ be a finite group, $K$ be a subgroup of $G$ defined by Eq.~\eqref{eq:time_translation_subgroup}. 
    Then, there exists a sequnce of control-admissible sub-POVMs $(\{\widetilde{R}_x^{(k)}\}_{x\in G/K})_{k\in\mathbb{N}}$ in the resource theory of thermal operations such that for sufficiently large $k\in\mathbb{N}$, 
    \begin{align}
        \mathrm{tr}\left(\mathcal{U}_g(\rho)^{\otimes mk^2} \widetilde{R}_{gK}^{(k)}\right)\geq 1-e^{-\alpha k}. 
    \end{align}
\end{lemma}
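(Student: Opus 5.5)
The plan is to obtain $\widetilde{R}^{(k)}_x$ by specializing Lemma~\ref{lem:TO_state_discrimination_general} and then pulling the resulting sub-PVM back through a suitable thermal operation $\mathcal{F}$, so that Lemma~\ref{lem:control_admissible_POVM_TO} gives control admissibility. The substitution is the one indicated before the statement.
\begin{itemize}
    \item Choose $m$ by Lemma~\ref{lem:dephased_state_symmetry_subgroup}. Replace $\rho$ by $\tilde\rho:=\mathcal{D}_{H^{(m)}}(\rho^{\otimes m})$, whose symmetry subgroup under $U^{\otimes m}$ is exactly $K$; this lemma is what sets the subgroup to $K$.
    \item Take $\ket{\phi}:=\ket{\xi_e}$ in the regular-representation register with $U'=L$, so the vectors $\ket{\phi_g}=\ket{\xi_g}$ are mutually orthogonal.
    \item Take $\tau:=\frac{1}{|G|}\sum_{g}\ket{\xi_g}\bra{\xi_g}=I/|G|$. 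This gives $\bra{\phi}\tau\ket{\phi}=1/|G|>0$, and $\mathcal{U}'_g(\tau)=\tau$ for every $g$.
\end{itemize}
Since $[U(g),H]=0$, the dephasing commutes with the group action, so $\mathcal{U}_g(\tilde\rho)=\mathcal{D}_{H^{(m)}}(\mathcal{U}_g(\rho)^{\otimes m})$. Lemma~\ref{lem:TO_state_discrimination_general} then yields sub-PVMs $\{R^{(k)}_x\}_{x\in G/K}$ on $\bigl((\mathcal{H}^{\otimes m})^{\otimes k}\otimes\mathcal{H}_R\bigr)^{\otimes k}$ satisfying
\begin{align}
    \mathrm{tr}\left(R^{(k)}_{gK}\left(\mathcal{U}_g(\tilde\rho)^{\otimes k}\otimes\tau\right)^{\otimes k}\right)\geq 1-e^{-\alpha k}
\end{align}
for all sufficiently large $k$.

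Next, I check energy compatibility. Give the register $\mathcal{H}_R$ the trivial Hamiltonian $0$. Apply the ``moreover'' clause of Lemma~\ref{lem:TO_state_discrimination_general} with $A:=e^{-itH^{(m)}}$. This $A$ commutes with every $\mathcal{U}_g(\tilde\rho)$, because $\tilde\rho$ is block diagonal in the energy eigenspaces of $H^{(m)}$ and $U(g)$ commutes with $H$. Hence each $R^{(k)}_x$ commutes with $e^{-itH_{\mathrm{tot}}}$ for all $t$, where $H_{\mathrm{tot}}$ is the total Hamiltonian on all $mk^2$ system copies plus the registers. Differentiating in $t$ gives $[R^{(k)}_x,H_{\mathrm{tot}}]=0$, which is the hypothesis of Lemma~\ref{lem:control_admissible_POVM_TO}.

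I then define $\mathcal{F}$, acting on $\rho^{\otimes mk^2}$, by grouping the input into $k^2$ blocks of $m$ copies, applying $\mathcal{D}_{H^{(m)}}$ to each block, and appending $k$ copies of the register state $\tau$. Both ingredients must be thermal operations.
\begin{itemize}
    \item $\tau=I/|G|$ is the Gibbs state of a register with trivial Hamiltonian, so appending it is a thermal operation.
    \item Write $H^{(m)}=\sum_{j=1}^J f_j\Pi_j$ and let $\omega:=e^{2\pi i/J}$. The pinching is the uniform average over $\ell=0,\dots,J-1$ of the unitaries $W_\ell:=\sum_j\omega^{\ell j}\Pi_j$, since the cross terms cancel. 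Each $W_\ell$ commutes with $H^{(m)}$.
    \item This average is realized by an energy-preserving controlled unitary $\sum_\ell W_\ell\otimes\ket{\ell}\bra{\ell}$ that uses a $J$-dimensional trivial-Hamiltonian ancilla in its Gibbs (maximally mixed) state as the control, so it is a thermal operation.
\end{itemize}
Tensor-product closure then makes $\mathcal{F}$ a thermal operation. With $\widetilde{R}^{(k)}_x:=\mathcal{F}^\dagger(R^{(k)}_x)$, Lemma~\ref{lem:control_admissible_POVM_TO} gives control admissibility. Moreover $\mathrm{tr}(\mathcal{U}_g(\rho)^{\otimes mk^2}\widetilde{R}^{(k)}_{gK})$ equals the displayed quantity above, which yields the claimed bound.

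The step I expect to need the most care is the commutation with the total Hamiltonian. The moreover clause only supplies invariance under the collective time translation $A^{\otimes k}$, not under independent translations of each block. This is exactly why $\mathcal{F}$ must first dephase each $m$-block separately. I would verify that the order of block indices in $R^{(k)}_x$ matches the order in which $\mathcal{F}$ lays out the blocks and registers.
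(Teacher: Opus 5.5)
Your proposal is correct and follows the paper's proof essentially step for step: fix $m$ by Lemma~\ref{lem:dephased_state_symmetry_subgroup}, apply Lemma~\ref{lem:TO_state_discrimination_general} with $\ket{\xi_e}$ and the maximally mixed regular-representation register (the Gibbs state of a trivial Hamiltonian), use its ``moreover'' clause to get commutation with the total Hamiltonian, and pull back through the thermal operation $L\mapsto(\mathcal{D}_{H^{(m)}})^{\otimes k^2}(L)\otimes\tau^{\otimes k}$ so that Lemma~\ref{lem:control_admissible_POVM_TO} applies. Your explicit realization of the pinching as a trivial-Hamiltonian-controlled mixture of energy-preserving unitaries fills in a step the paper only asserts. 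One small correction to your closing remark: the per-block dephasing is needed so that $\mathcal{F}(\mathcal{U}_g(\rho)^{\otimes mk^2})$ is exactly the state on which Lemma~\ref{lem:TO_state_discrimination_general} certifies the bound. It is not needed for the Hamiltonian commutation, which the collective invariance already supplies, since Lemma~\ref{lem:control_admissible_POVM_TO} only asks for $[R^{(k)}_x,H_{\mathrm{tot}}]=0$.
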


\begin{proof}
    By Lemma~\ref{lem:dephased_state_symmetry_subgroup}, there exists some $m\in\mathbb{N}$ such that the symmetry subgroup of $\mathcal{D}_{H^{(m)}}(\rho^{\otimes m})$ is $K$. 

    We introduce an ancillary system with the trivial Hamiltonian $H=0$ and an orthonormal basis $\{|\xi_g\rangle\}_{g\in G}$ carrying the left regular representation.
    The Gibbs state of this system is given by 
    \begin{align}
        \zeta:=\frac{1}{|G|}\sum_{g\in G}|\xi_g\rangle\langle\xi_g|.
    \end{align}
    This state satisfies 
    \begin{align}
        \braket{\xi_e | \zeta | \xi_e}=\frac{1}{|G|}>0. 
    \end{align}

    Therefore, by Lemma~\ref{lem:TO_state_discrimination_general}, there exists a sequence of PVMs $(\{R_x^{(k)}\}_{x\in G/K})_{k\in\mathbb{N}}$ and a constant $\alpha>0$ such that for any $g\in G$ and sufficiently large $k\in\mathbb{N}$, 
    \begin{align}
        \mathrm{tr}\left(R_{gK}^{(k)}\left(\mathcal{U}_g^{\otimes m}(\mathcal{D}_{H^{\times m}}(\rho^{\otimes m}))^{\otimes k}\otimes \zeta\right)^{\otimes k}\right)\geq 1-e^{-\alpha k}. 
    \end{align}
    Since $\mathcal{U}_g^{\otimes m}$ commutes with $\mathcal{D}_{H^{\times m}}$ and $\mathcal{D}_{H^{\times m}}^\dag=\mathcal{D}_{H^{\times m}}$, we get 
    \begin{align}
        \mathrm{tr}\left(R_{gK}^{(k)}\left(\mathcal{U}_g^{\otimes m}(\mathcal{D}_{H^{\times m}}(\rho^{\otimes m}))^{\otimes k}\otimes \zeta\right)^{\otimes k}\right) 
        =&\mathrm{tr}\left(R_{gK}^{(k)}\left(\mathcal{D}_{H^{\times m}}(\mathcal{U}_g^{\otimes m}(\rho^{\otimes m}))^{\otimes k}\otimes \zeta\right)^{\otimes k}\right) \nonumber\\
        =&\mathrm{tr}\left(R_{gK}^{(k)}\left(\left(\mathcal{D}_{H^{\times m}}\right)^{\otimes k^2}(\mathcal{U}_g(\rho)^{\otimes mk^2})\otimes \zeta^{\otimes k}\right)\right). 
    \end{align}
    We define a CPTP map by 
    \begin{align}
        \mathcal{E}(L):=\left(\mathcal{D}_{H^{\times m}}\right)^{\otimes k^2}(L)\otimes \zeta^{\otimes k}. 
    \end{align}
    Then, $\mathcal{E}$ is a thermal operation and we have 
    \begin{align}
        \mathrm{tr}\left(R_{gK}^{(k)}\left(\mathcal{U}_g^{\otimes m}(\mathcal{D}_{H^{(m)}}(\rho^{\otimes m}))^{\otimes k}\otimes \zeta\right)^{\otimes k}\right) 
        =\mathrm{tr}\left(R_{gK}^{(k)}\mathcal{E}(\mathcal{U}_g(\rho)^{\otimes mk^2})\right) 
        =\mathrm{tr}\left(\mathcal{E}^\dag\left(R_{gK}^{(k)}\right) \mathcal{U}_g(\rho)^{\otimes mk^2}\right). 
    \end{align}
    Since for any $t\in\mathbb{R}$, $(e^{-itH})^{\otimes m}$ commutes with $\mathcal{D}_{H^{\times m}}(\rho^{\otimes m})$, Lemma~\ref{lem:TO_state_discrimination_general} implies that $R_{gK}^{(k)}$ commutes with $[(e^{-itH})^{\otimes mk}\otimes I]^{\otimes k}$, which means that $R_{gK}^{(k)}$ is time-translation invariant. 
    Therefore, by lemma~\ref{lem:control_admissible_POVM_TO}, $\widetilde{R}_{gK}^{(k)}:=\mathcal{E}^\dag(R_{gK}^{(k)})$ is a control-admissible sub-POVM in the resource theory of thermal operations. 
\end{proof}

We can now characterize the asymptotic conversion rate under $G$-covariant thermal operations. 
We denote the asymptotic conversion rate under thermal operations by $R_\mathrm{TO}(\rho\to\sigma)$, and the corresponding rate under thermal operations that are additionally $G$-covariant by $R_{G\text{-}\mathrm{cov}, \mathrm{TO}}(\rho\to\sigma)$.

\begin{theorem} \label{thm:covariant_TO_conversion_rate}
    Let $G$ be a finite group, $U$ and $U'$ be projective unitary representations of $G$ on $\mathcal{H}$ and $\mathcal{H}'$, respectively, $\rho\in\mathcal{S}(\mathcal{H})$, $\sigma\in\mathcal{S}(\mathcal{H}')$, $H\in\mathcal{L}^\mathrm{H}(\mathcal{H})$ and $H'\in\mathcal{L}^\mathrm{H}(\mathcal{H}')$ be Hamiltonians on the input and output systems, and $[H, U(g)]=0$ and $[H', U'(g)]=0$ for all $g\in G$. 
    Then, 
    \begin{align}
        R_{G\text{-}\mathrm{cov}, \mathrm{TO}}(\rho\to\sigma)=
        \begin{cases}
            R_{\mathrm{TO}}(\rho\to\sigma) & \text{ if the compatibility condition (Def.~\ref{def:group_time_covariance_compatibility}) holds,}\\
            0 & \textrm{ otherwise.}
        \end{cases}
    \end{align}
\end{theorem}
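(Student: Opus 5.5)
The proof follows the template of Theorem~\ref{thm:TCGP_conversion_rate}, with Lemma~\ref{lem:TO_state_discrimination} supplying the discrimination step inside thermal operations.

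\emph{Failure of compatibility.} Every thermal operation is time-translation covariant. Hence $\mathfrak{O}_{G\text{-cov}}$ for thermal operations is contained in the class of $G$-covariant time-translation covariant operations, and $R_{G\text{-}\mathrm{cov},\mathrm{TO}}(\rho\to\sigma)\le R_{G\text{-}\mathrm{cov},\mathrm{TC}}(\rho\to\sigma)$. By Lemma~\ref{lem:group_time_covariance_compatibility}, a positive value of the right-hand side forces the compatibility condition. So if compatibility fails, the rate is zero.

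\emph{Compatibility holds: upper bound.} The inclusion $\mathfrak{O}_{G\text{-cov}}\subset\mathfrak{O}$ gives $R_{G\text{-}\mathrm{cov},\mathrm{TO}}\le R_{\mathrm{TO}}$ immediately.

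\emph{Compatibility holds: lower bound.} Assume $R_{\mathrm{TO}}(\rho\to\sigma)>0$, fix $r<R_{\mathrm{TO}}$, and take thermal operations $\mathcal{E}_n$ with $\mathrm{T}(\mathcal{E}_n(\rho^{\otimes n}),\sigma^{\otimes\lfloor rn\rfloor})\to0$. Let $K$ be the subgroup of Eq.~\eqref{eq:time_translation_subgroup}. For $k\in K$, choose $t_j$ with $e^{-it_jH}\rho e^{it_jH}\to\mathcal{U}_k(\rho)$. Compatibility then gives $e^{-it_jH'}\sigma e^{it_jH'}\to\mathcal{U}'_k(\sigma)$. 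Time-translation covariance of $\mathcal{E}_n$ and unitary invariance of the trace distance yield
\begin{align}
\mathrm{T}\left(\mathcal{E}_n\left(\mathcal{U}_k(\rho)^{\otimes n}\right),\mathcal{U}'_k(\sigma)^{\otimes\lfloor rn\rfloor}\right)=\mathrm{T}\left(\mathcal{E}_n(\rho^{\otimes n}),\sigma^{\otimes\lfloor rn\rfloor}\right),
\end{align}
exactly as in the proof of Theorem~\ref{thm:TCGP_conversion_rate}. This establishes hypothesis \eqref{eq:thm:rate_lower_bound_general01} with $K$ in place of the subgroup.

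For hypothesis \eqref{eq:thm:rate_lower_bound_general02}, I would use Lemma~\ref{lem:TO_state_discrimination}. It gives control-admissible sub-POVMs $\widetilde R^{(k)}_{gK}$ for thermal operations on $mk^2$ copies, with error at most $e^{-\alpha k}$ uniformly in $g$. The general conversion theorem is indexed by every copy number $l$, not only $l=mk^2$. I would handle this by setting $k_l:=\lfloor\sqrt{l/m}\rfloor$, using $\widetilde R^{(k_l)}$ on the first $mk_l^2$ copies, and discarding the rest. Discarding is a thermal operation, so control admissibility is preserved, and the averaged success probability still tends to $1$. Theorem~\ref{thm:rate_lower_bound_general} then gives $R_{G\text{-}\mathrm{cov},\mathrm{TO}}\ge r$, and letting $r\uparrow R_{\mathrm{TO}}$ finishes the proof.

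The step I expect to need the most care is applying Theorem~\ref{thm:rate_lower_bound_general} here. Theorem~\ref{thm:conversion_control_admissible_POVM} builds the intermediate maps $\mathcal{E}_x$ by conjugating with $\mathcal{U}_g$, and it uses convex combinations and tensor products of free operations. I must check that thermal operations have these closure properties under the standing assumption $[U(g),H]=0$. Conjugating by an energy-commuting unitary keeps the total unitary energy preserving, and convex mixtures can be realized with a bath ancilla of trivial Hamiltonian, so both should hold. If convexity is only available in the closure of thermal operations, I would state the result for the closed class, which leaves the asymptotic rate unchanged.
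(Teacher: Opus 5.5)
Your proposal is correct and follows the paper's proof essentially step by step. The zero case comes from Lemma~\ref{lem:group_time_covariance_compatibility}, the upper bound from inclusion, the residual $K$-covariance from the time-translation argument of Theorem~\ref{thm:TCGP_conversion_rate}, and the lower bound from feeding Lemma~\ref{lem:TO_state_discrimination} into Theorem~\ref{thm:rate_lower_bound_general}. Your reindexing of the $mk^2$-copy measurements and your check of the closure properties of thermal operations fill in details the paper leaves implicit, and the fallback to the closed class is unnecessary: only uniform averages with weights $1/|K|$, $|K|/|G|$, and $1/|G|$ are needed, and a trivial-Hamiltonian ancilla in its maximally mixed Gibbs state realizes these exactly.
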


\begin{proof}
    We can prove this theorem in a similar way to the proof of Theorem~\ref{thm:TCGP_conversion_rate}. 
    If the compatibility condition does not hold, $R_{G\text{-}\mathrm{cov}, \mathrm{TO}}(\rho\to\sigma)=0$ by Lemma~\ref{lem:group_time_covariance_compatibility}. 
    In the following, we consider the case where the compatibility condition holds. 
    In the same way as in the proof of Theorem~\ref{thm:TCGP_conversion_rate}, we have 
    \begin{align}
        R_{G\text{-}\mathrm{cov}, \mathrm{TO}}(\rho\to\sigma) 
        \leq R_{\mathrm{TO}}(\rho\to\sigma). \label{eq:thm:covariant_TO_conversion_rate1}
    \end{align}
    We take $r\in (0, R_{\mathrm{TO}}(\rho\to\sigma))$. 
    Then, there exists some sequence of thermal operations $\mathcal{E}_n$ such that 
    \begin{align}
        \lim_{n\to\infty} \mathrm{T}\left(\mathcal{E}_n(\rho^{\otimes n}), \sigma^{\otimes \lfloor rn\rfloor}\right)=0. 
    \end{align}
    Since thermal operations are time-translation covariant, by the same argument in the proof of Theorem~\ref{thm:TCGP_conversion_rate}, we get 
    \begin{align}
        \lim_{n\to\infty} \max_{k\in K} \mathrm{T}\left(\mathcal{E}_n\left(\mathcal{U}_k(\rho)^{\otimes n}\right), \mathcal{U}_k(\sigma)^{\otimes \lfloor rn \rfloor}\right) 
        =0. 
    \end{align}
    By Lemma~\ref{lem:TO_state_discrimination}, there exists some sequence of control-admissible sub-POVMs $\{R_x^{(l)}\}_{x\in G/K}$ such that 
    \begin{align}
        \lim_{l\to\infty} \mathrm{tr}\left(R_{gK}^{(l)}\mathcal{U}_g(\rho)^{\otimes l}\right)=1. 
    \end{align}
    Therefore, by Theorem~\ref{thm:rate_lower_bound_general}, we have 
    \begin{align}
        R_{G\text{-}\mathrm{cov}, \mathrm{TO}}(\rho\to\sigma) 
        \geq r. 
    \end{align}
    Since this holds for all $r\in (0, R_\mathrm{TO}(\rho\to\sigma))$, we get 
    \begin{align}
        R_{G\text{-}\mathrm{cov}, \mathrm{TO}}(\rho\to\sigma)
        \geq R_\mathrm{TO}(\rho\to\sigma). \label{eq:thm:covariant_TO_conversion_rate2}
    \end{align}
    By Eqs.~\eqref{eq:thm:covariant_TO_conversion_rate1} and \eqref{eq:thm:covariant_TO_conversion_rate2}, we get 
    \begin{align}
        R_{G\text{-}\mathrm{cov}, \mathrm{TO}}(\rho\to\sigma)
        =R_\mathrm{TO}(\rho\to\sigma). 
    \end{align}
\end{proof}

\subsubsection{Primitive-level symmetry constraints}

We next consider the class of free operations that we call $G$-invariant thermal operations, where symmetry constraints are directly imposed on the time-translation-invariant unitary appearing in the dilation of a thermal operation, i.e., a map $\mathcal{E}$ is a $G$-invariant thermal operation if it can be written as 
\begin{align}
    \mathcal{E}(L)=\mathrm{tr}_\mathrm{E}\left(V(L\otimes \gamma_\mathrm{B})V^\dag\right)
\end{align}
with a time-translation invariant and $G$-invariant unitary $V$ and a Gibbs state $\gamma_\mathrm{B}$.

We show that this stronger requirement does not further restrict the class of $G$-covariant thermal operations.

\begin{lemma} \label{lem:invariant_covariant_TO_equivalence}
    Let $G$ be a finite group. 
    Then, the class of $G$-invariant thermal operations is equivalent to the class of $G$-covariant thermal operations. 
\end{lemma}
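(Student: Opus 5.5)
The plan is to prove the two inclusions separately. The direction ``$G$-invariant $\Rightarrow$ $G$-covariant'' follows from the invariance of every ingredient of the dilation. The converse uses the regular-representation reference trick of Proposition~\ref{prop:primitive-map-equivalence} and of the Pauli-symmetric magic construction. The one new ingredient is that the reference register is given the trivial Hamiltonian, so that its maximally mixed state is itself a Gibbs state.

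\emph{Invariant implies covariant.} Let $\mathcal{E}(L)=\mathrm{tr}_\mathrm{E}(V(L\otimes\gamma_\mathrm{B})V^\dag)$, where $V$ is energy preserving and satisfies $V(U_S(g)\otimes U_\mathrm{B}(g))=(U_{S'}(g)\otimes U_\mathrm{E}(g))V$ for all $g$. By Eq.~\eqref{eq:symmetry_hamiltonian_commutation} we have $[U_\mathrm{B}(g),H_\mathrm{B}]=0$, so $\gamma_\mathrm{B}$ is $G$-invariant. The partial trace over $\mathrm{E}$ intertwines $\mathcal{U}_{S'}\otimes\mathcal{U}_\mathrm{E}$ with $\mathcal{U}_{S'}$. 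Combining these facts gives $\mathcal{E}\circ\mathcal{U}_{S,g}=\mathcal{U}_{S',g}\circ\mathcal{E}$. The map $\mathcal{E}$ is a thermal operation by definition, so it is a $G$-covariant thermal operation.

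\emph{Covariant implies invariant.} Let $\mathcal{E}$ be a $G$-covariant thermal operation with some energy-preserving dilation $V:S\otimes\mathrm{B}\to S'\otimes\mathrm{E}$. The unitary $V$ is not assumed symmetric, and we let $G$ act trivially on $\mathrm{B}$ and $\mathrm{E}$. Adjoin to the bath a register $\mathrm{R}$ with orthonormal basis $\{|\xi_g\rangle\}_{g\in G}$, left regular representation $L(h)$, and Hamiltonian $H_\mathrm{R}=0$. Its Gibbs state is $\zeta=I/|G|$, and the new bath state $\gamma_\mathrm{B}\otimes\zeta$ is the Gibbs state of $\mathrm{B}\mathrm{R}$ because the Hamiltonians add. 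Define
\begin{align}
    \widetilde V:=\sum_{g\in G}|\xi_g\rangle\langle\xi_g|_\mathrm{R}\otimes\left(U_{S'}(g)\otimes I_\mathrm{E}\right)V\left(U_S(g)^\dag\otimes I_\mathrm{B}\right).
\end{align}
The following checks are each routine.
\begin{enumerate}
    \item $\widetilde V$ is unitary, being block diagonal with unitary blocks.
    \item $\widetilde V$ is energy preserving. Each block is, since $U_S(g)$ and $U_{S'}(g)$ commute with the respective Hamiltonians and $H_\mathrm{R}=0$ is block diagonal.
    \item $\widetilde V$ is $G$-invariant. Conjugating by $L(h)\otimes U_S(h)$ on the input and $L(h)\otimes U_{S'}(h)$ on the output relabels $g\mapsto hg$. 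The projective phases $U(h)U(g)\propto U(hg)$ cancel between $U_{S'}$ and $U_S^\dag$.
\end{enumerate}
Tracing out $\mathrm{E}$ and $\mathrm{R}$ then gives
\begin{align}
    \mathrm{tr}_{\mathrm{ER}}\left(\widetilde V(L\otimes\gamma_\mathrm{B}\otimes\zeta)\widetilde V^\dag\right)
    =\frac{1}{|G|}\sum_{g\in G}\mathcal{U}_{S',g}\circ\mathcal{E}\circ\mathcal{U}_{S,g}^{-1}(L)
    =\mathcal{E}(L),
\end{align}
where the last equality is the $G$-covariance of $\mathcal{E}$. So $\mathcal{E}$ is a $G$-invariant thermal operation.

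I expect the main obstacle to be bookkeeping rather than any analytic estimate. The environment and bath must be assigned representations that commute with their Hamiltonians; choosing trivial ones on $\mathrm{B}$ and $\mathrm{E}$ and the regular one on $\mathrm{R}$ does this. In addition, the enlarged bath state $\gamma_\mathrm{B}\otimes\zeta$ must be a legitimate Gibbs state at the same $\beta$, which holds because $\mathrm{R}$ has zero Hamiltonian. I would also confirm that the projective cocycle phases in $\widetilde V$ cancel exactly for the left regular action.
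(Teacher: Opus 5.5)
Your construction is the paper's: a zero-Hamiltonian register with the left regular representation (so its Gibbs state is $I/|G|$), a block-diagonal unitary whose $g$th block is the $g$-rotated dilation, and covariance collapsing $\frac{1}{|G|}\sum_g\mathcal{U}_{S',g}\circ\mathcal{E}\circ\mathcal{U}_{S,g}^{-1}$ to $\mathcal{E}$. The one difference is that you rotate only the system factors and give $\mathrm{B},\mathrm{E}$ trivial representations. The paper instead conjugates $V$ by a single $U(g)$ acting on the whole system--bath space, and then uses invariance of $\gamma_\mathrm{B}$ together with $\mathrm{tr}_\mathrm{E}$. Your variant has the advantage that it never needs $\gamma_\mathrm{B}$ to be invariant.

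However, check 3 is false as stated whenever $U_S$ and $U_{S'}$ have different cocycles, and the thermal section allows this because $U$ and $U'$ are arbitrary projective representations on different spaces. Write $U_S(h)U_S(g)=\omega_S(h,g)U_S(hg)$, and likewise with $\omega_{S'}$. Conjugating your $\widetilde V$ then multiplies the block labelled $hg$ by $\omega_{S'}(h,g)\overline{\omega_S(h,g)}$. This factor depends on $g$ and equals $1$ only if $\omega_S=\omega_{S'}$.

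Re-choosing $V$ cannot fix this. With trivial representations on $\mathrm{B},\mathrm{E}$ and the linear regular representation on $\mathrm{R}$, the total input and output representations $W_{\mathrm{in}},W_{\mathrm{out}}$ have cocycles $\omega_S$ and $\omega_{S'}$. Composing $W_{\mathrm{out}}(h)X=XW_{\mathrm{in}}(h)$ for two group elements shows that any invertible intertwiner $X$ forces $\omega_S=\omega_{S'}$. So as written your argument covers only matching cocycles, e.g. linear representations, or $S=S'$ with the same $U$. The paper avoids the issue implicitly, since the same $U(g)$ acts on the total space before and after $V$.

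Within your setup the repair is to use twisted regular representations $L^{(\alpha)}(h)|\xi_g\rangle:=\alpha(h,g)|\xi_{hg}\rangle$, which form a projective representation with cocycle $\alpha$. Take $\alpha=\omega_{S'}$ on a bath copy $\mathrm{R}_{\mathrm{in}}$ and $\alpha=\omega_S$ on an environment copy $\mathrm{R}_{\mathrm{out}}$, both with zero Hamiltonian, and let $\widetilde V$ map $|\xi_g\rangle_{\mathrm{in}}\mapsto|\xi_g\rangle_{\mathrm{out}}$ blockwise. The four phases then cancel exactly, $I/|G|$ is still the Gibbs state, and your trace computation is unchanged.
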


\begin{proof}
    The class of $G$-invariant thermal operations is trivially included in the class of $G$-covariant thermal operations, because every step of $G$-invariant thermal operation is a $G$-covariant thermal operation. 
    In the following, we show that every $G$-covariant thermal operation is a $G$-invariant thermal operation. 
    We take an arbitrary $G$-covariant thermal operation $\mathcal{E}$. 
    Since $\mathcal{E}$ is a thermal operation, it can be expressed as 
    \begin{align}
        \mathcal{E}(L)=\mathrm{tr}_\mathrm{E}(V(L\otimes\gamma_\mathrm{B})V^\dag) \label{eq:lem:invariant_covariant_TO_equivalence01}
    \end{align}
    with time-translation-invariant unitary $V$ and Gibbs state $\gamma_\mathrm{B}$. 
    We introduce an ancillary system $\mathrm{A}$ with Hamiltonian $H_\mathrm{A}=0$ and an orthonormal basis $\{\ket{\xi_g}\}_{g\in G}$ carrying the left regular representation $L$ of $G$ satisfying, $L(h)\ket{\xi_g}=\ket{\xi_{hg}}$. 
    The Gibbs state of this system is 
    \begin{align}
        \zeta_\mathrm{A}
        =\frac{1}{|G|}\sum_{g\in G} \ket{\xi_g}\bra{\xi_g}. \label{eq:lem:invariant_covariant_TO_equivalence02}
    \end{align}
    We define $\widetilde{V}$ by 
    \begin{align}
        \widetilde{V}:=\sum_{g\in G} U(g)VU(g)^\dag\otimes \ket{\xi_g}\bra{\xi_g}. \label{eq:lem:invariant_covariant_TO_equivalence03}
    \end{align}
    Since $U(g)VU(g)^\dag$ is unitary for all $g\in G$ and the projectors $\{\ket{\xi_g}\bra{\xi_g}\}_{g\in G}$ are mutually orthogonal, $\widetilde{V}$ is unitary. 
    We can directly confirm that $\widetilde{V}$ is $G$-invariant. 
    Moreover, since $V$, $U(g)$, and $\ket{\xi_g}\bra{\xi_g}$ are all commute with the Hamiltonian of each system, $\widetilde{V}$ is time-translation invariant. 
    By Eqs.~\eqref{eq:lem:invariant_covariant_TO_equivalence02} and \eqref{eq:lem:invariant_covariant_TO_equivalence03}, we have 
    \begin{align}
        \widetilde{V} \left(L\otimes\gamma_\mathrm{B}\otimes\zeta_\mathrm{A}\right)\widetilde{V}^\dag 
        =&\frac{1}{|G|}\sum_{g\in G} U(g)VU(g)^\dag (L\otimes\gamma_\mathrm{B}) U(g)V^\dag U(g)^\dag\otimes\ket{\xi_g}\bra{\xi_g} \\
        =&\frac{1}{|G|}\sum_{g\in G} \mathcal{U}_{\mathrm{S}', g}\otimes\mathcal{U}_{\mathrm{E}, g}(V(\mathcal{U}_{\mathrm{S}, g^{-1}}(L)\otimes\gamma_\mathrm{B})V^\dag)\otimes\ket{\xi_g}\bra{\xi_g}, \label{eq:lem:invariant_covariant_TO_equivalence04}
    \end{align}
    which implies 
    \begin{align}
        \mathrm{tr}_{\mathrm{EA}} \left(\widetilde{V}\left(L\otimes\gamma_\mathrm{B}\otimes\zeta_\mathrm{A}\right)\widetilde{V}^\dag\right) 
        =&\mathrm{tr}_\mathrm{E}\left(\frac{1}{|G|}\sum_{g\in G} \mathcal{U}_{\mathrm{S}', g}\otimes\mathcal{U}_{\mathrm{E}, g}(V(\mathcal{U}_{\mathrm{S}, g^{-1}}(L)\otimes\gamma_\mathrm{B})V^\dag)\right) \nonumber\\
        =&\frac{1}{|G|}\sum_{g\in G} \mathcal{U}_{\mathrm{S}', g} \mathrm{tr}_\mathrm{E}\left((V(\mathcal{U}_{\mathrm{S}, g^{-1}}(L)\otimes\gamma_\mathrm{B})V^\dag)\right) \nonumber\\
        =&\frac{1}{|G|}\sum_{g\in G} \mathcal{U}_{\mathrm{S}', g}\circ\mathcal{E}\circ\mathcal{U}_{\mathrm{S}, g^{-1}}(L)  \nonumber\\
        =&\mathcal{E}(L), \label{eq:lem:invariant_covariant_TO_equivalence05}
    \end{align}
    where we used the $G$-covariance of $\mathcal{E}$ in the final equality. 
    Equation~\eqref{eq:lem:invariant_covariant_TO_equivalence05} implies that $\mathcal{E}$ is a $G$-invariant thermal operation. 
\end{proof}

As an immediate consequence, imposing the symmetry constraint at the primitive level does not further reduce the asymptotic conversion rate.

\begin{theorem} \label{thm:TO_rate_primitive_level}
    The asymptotic conversion rate under thermal operations implemented by $G$-symmetric energy-preserving unitaries is identical to the rate under $G$-covariant thermal operations, i.e., 
    \begin{align}
        R_{G\text{-}\mathrm{inv}, \mathrm{TO}}(\rho\to\sigma)
        = \begin{cases}
            R_\mathrm{TO}(\rho\to\sigma) & \text{if the compatibility condition (Def.~\ref{def:group_time_covariance_compatibility}) holds,}, \\
            0 & \text{otherwise}.
        \end{cases}
    \end{align}
\end{theorem}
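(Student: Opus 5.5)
The plan is to reduce the statement to Theorem~\ref{thm:covariant_TO_conversion_rate} via Lemma~\ref{lem:invariant_covariant_TO_equivalence}. Fix the input and output systems with their Hamiltonians and representations, and fix the bath Gibbs states. Lemma~\ref{lem:invariant_covariant_TO_equivalence} states that the class of $G$-invariant thermal operations from $\mathcal{H}^{\otimes n}$ to $\mathcal{H}'^{\otimes m}$ equals the class of $G$-covariant thermal operations between the same systems. We apply it for each pair $(n,m)$, with the representations $U^{\otimes n}$ and $U'^{\otimes m}$ and the Hamiltonians $H^{(n)}$ and $H'^{(m)}$. This is legitimate because the lemma was proved for arbitrary finite-dimensional systems whose representations commute with the Hamiltonians. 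The commutation $[U(g),H]=0$ lifts to $[U(g)^{\otimes n},H^{(n)}]=0$, so the hypotheses survive tensor powers.

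Given this equality of operation classes, the sets of admissible sequences $(\mathcal{E}^{(n)})_n$ in Definition~\ref{def:conversion_rate} coincide for the two theories. Hence the sets of achievable rates coincide, and
\begin{align}
    R_{G\text{-}\mathrm{inv}, \mathrm{TO}}(\rho\to\sigma)=R_{G\text{-}\mathrm{cov}, \mathrm{TO}}(\rho\to\sigma).
\end{align}
Theorem~\ref{thm:covariant_TO_conversion_rate} then evaluates the right-hand side. It equals $R_{\mathrm{TO}}(\rho\to\sigma)$ when the compatibility condition of Def.~\ref{def:group_time_covariance_compatibility} holds, and $0$ otherwise. That is the claim.

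The main point requiring care is the dilation in Lemma~\ref{lem:invariant_covariant_TO_equivalence}. The ancilla register carrying the left regular representation has trivial Hamiltonian, so its maximally mixed state is a legitimate Gibbs state at any $\beta$ and may be absorbed into the bath. The bath, and the discarded environment $E$, must also carry representations of $G$ commuting with their Hamiltonians; otherwise the expression $U(g)VU(g)^\dag$ in $\widetilde{V}$ is not defined on the full dilation space.

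If the primitive-level theory does not specify such representations, I would fix this first. I would choose the representation on the bath and environment to be trivial, so that $U(g)$ in $\widetilde{V}$ acts only on the system factors. This keeps the Gibbs state $G$-invariant. Energy preservation of $\widetilde{V}$ then follows block by block: on each block $|\xi_g\rangle\langle\xi_g|$, the unitary $(U_{S'}(g)\otimes I)V(U_S(g)^\dag\otimes I)$ is energy preserving, because $V$ is energy preserving and the $U$'s commute with the local Hamiltonians. The partial-trace computation in the lemma then goes through unchanged.
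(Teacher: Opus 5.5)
Your reduction is the one the paper uses. The paper states Theorem~\ref{thm:TO_rate_primitive_level} as an immediate consequence of Lemma~\ref{lem:invariant_covariant_TO_equivalence} and Theorem~\ref{thm:covariant_TO_conversion_rate}, and your observation that equal operation classes for every pair $(n,\lfloor rn\rfloor)$ give equal sets of achievable rates is exactly the step it leaves implicit. The problem is in your proposed repair of the dilation.

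\emph{Trivial representations on the bath and environment do not work for genuinely projective $U$ and $U'$, which the theorem allows.} You checked energy preservation and the partial trace, but not $G$-invariance of $\widetilde V=\sum_g (U'(g)\otimes I_E)V(U(g)^\dagger\otimes I_B)\otimes\ket{\xi_g}\bra{\xi_g}$. Conjugation by $h$ sends the $g$-block to $\omega_{\mathrm{out}}(h,g)\overline{\omega_{\mathrm{in}}(h,g)}$ times the $hg$-block, where $\omega_{\mathrm{in}},\omega_{\mathrm{out}}$ are the cocycles of $U^{\otimes n}$ and $U'^{\otimes m}$. These phases cancel only if the cocycles coincide, up to rephasing when they are cohomologous. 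If they are not cohomologous, no unitary intertwines the input and output representations even up to phases, since conjugation preserves commutation phases.

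For a concrete failure, let $G=\mathbb{Z}_2\times\mathbb{Z}_2$ act on an input qubit through $X$ and $Z$, and trivially on an output qubit. Take $H=H'=0$ and $\rho=\sigma=\ket{0}\bra{0}$. The compatibility condition holds, and $R_{\mathrm{TO}}(\rho\to\sigma)\geq 1$ because the identity is a thermal operation. For odd $n$, however, the input representatives of the two generators anticommute while the output ones commute. So your $G$-invariant class from $n$ copies is empty, although $\chi\mapsto\mathrm{tr}(\chi)\,I/2^m$ is a $G$-covariant thermal operation. In your model no rate is achievable.

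The convention that makes the lemma true is the one implicit in the paper's formula $U(g)VU(g)^\dagger$: a single operator $U(g)$ on the whole dilation space, so the phases cancel as in Appendix~\ref{sec:LOCC_technical_lemmas}. You can always arrange this. Put a copy of the output system, in its Gibbs state and carrying $U'$, into the bath, and keep the input system, carrying $U$, in the environment. Concretely, apply $(V^\dagger\otimes I)\,\mathrm{SWAP}\,(V\otimes I)$ on $S\otimes B\otimes S'$ and read the output off the added copy of $S'$. For linear representations, or equal cocycles, your trivial choice is fine.
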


\subsection{Symmetry-protected ergotropy}

We finally consider work extraction by unitary operations under a finite-group symmetry constraint.
For a state $\rho$ with Hamiltonian $H$, we define
\begin{align}
    W^{(n)}(\rho)
    := \max_V\left\{\mathrm{tr}\left(\rho^{\otimes n} H^{(n)}\right)-\mathrm{tr}\left(V\rho^{\otimes n}V^\dag H^{(n)}\right)\right\},
\end{align}
where the maximization is taken over all unitaries $V$ on $n$ copies and
\begin{align}
    H^{\times n}:=\sum_{j=1}^n I^{\otimes j-1}\otimes H\otimes I^{\otimes n-j}. 
\end{align}
We define the asymptotic ergotropy rate by 
\begin{align}
    \overline{W}^{(\infty)}(\rho):=\limsup_{n\to\infty} \overline{W}^{(n)}(\rho) 
\end{align}
with 
\begin{align}
    \overline{W}^{(n)}(\rho):=\frac{1}{n}W^{(n)}(\rho). 
\end{align}

We similarly define $W_{G\text{-}\mathrm{inv}}^{(n)}(\rho)$ by restricting the optimization to $G$-invariant unitaries satisfying $[V, U(g)^{\otimes n}]=0$ for all $g\in G$, and define symmetry-protected ergotropy rate 
\begin{align}
    W_{G\text{-}\mathrm{inv}}^{(\infty)}(\rho):=\limsup_{n\to\infty} \overline{W}_{G\text{-}\mathrm{inv}}^{(n)}(\rho). 
\end{align}
with 
\begin{align}
    \overline{W}_{G\text{-}\mathrm{inv}}^{(n)}(\rho):=\frac{1}{n}W_{G\text{-}\mathrm{inv}}^{(n)}(\rho). 
\end{align}
We emphasize that the unitaries considered here are not required to commute with the Hamiltonian.

\begin{definition}[Complete passivity]
    \label{def:complete_passivity}
    Let $G$ be a group and $\rho$ be a state. 
    A state $\rho$ is called completely passive if
    \begin{align}
        W^{(n)}(\rho)=0\ \forall n\in\mathbb{N}.
    \end{align}
    Similarly, we say that $\rho$ is $G$-completely passive if
    \begin{align}
        W_{G\text{-inv}}^{(n)}(\rho)=0\ \forall n\in\mathbb{N},
    \end{align}
    where the work extraction in $W_G^{(n)}(\rho)$ is restricted to $G$-invariant unitaries.
\end{definition}

\begin{theorem} \label{thm:symmetry_protected_ergotropy}
    Let $G$ be a finite group, $U$ be a projective unitary representation of $G$, and $H$ be a $G$-invariant Hamiltonian. 
    Then,
    \begin{align}
        \overline{W}_{G\text{-}\mathrm{inv}}^{(\infty)}(\rho)=\overline{W}^{(\infty)}(\rho). 
    \end{align}
\end{theorem}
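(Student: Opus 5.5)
The inequality $\overline{W}_{G\text{-}\mathrm{inv}}^{(\infty)}(\rho)\leq\overline{W}^{(\infty)}(\rho)$ is trivial, because $G$-invariant unitaries are a subset of all unitaries. For the converse, I follow the discrimination-and-control mechanism of Theorem~\ref{thm:conversion_control_admissible_POVM}. The difference is that it must be realized coherently by a single $G$-invariant unitary, rather than by a channel followed by group averaging. I take a vanishing fraction $l_n=o(n)$ of the copies as a "reference block" and the remaining $n-l_n$ copies as a "work block". On the work block I let $V_n$ be an optimal unrestricted ergotropy-extracting unitary for $\rho^{\otimes(n-l_n)}$. I use the orbit states $\mathcal{U}_g(\rho)$ and the $G/S$ structure with $S=\mathrm{Sym}(\rho)$. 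By the invariance of $H$, the rotated unitary $U(g)^{\otimes(n-l_n)}V_nU(g)^{\dag\otimes(n-l_n)}$ extracts the same work from $\mathcal{U}_g(\rho)^{\otimes(n-l_n)}$.

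The key construction is a projective version of the orbit discrimination of Lemma~\ref{lem:asymmetry_state_discrimination}. I take the spectral projectors of the averaged observable $O^{\times l}$ as in Lemma~\ref{lem:TO_state_discrimination_general}. Rotating them gives projectors $\Pi_{gS}^{(l)}$ that accept $\mathcal{U}_g(\rho)^{\otimes l}$ with probability $1-e^{-\lambda l}$ and reject other cosets with probability $1-e^{-\lambda l}$. To avoid an ancilla, which ergotropy does not allow, I make them mutually orthogonal directly. Splitting the reference block into $|G/S|$ sub-blocks and using the product one-versus-rest structure of Lemma~\ref{lem:multiple_test_LOCC}, I define the orthogonal projectors $P_x:=F_{\bm e_x}$ (accept only on sub-block $x$) and $P_\perp:=I-\sum_xP_x$. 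These are orthogonal because the $F_{\bm b}$ are. Moreover, $U(g)^{\otimes l_n}$ permutes them as $P_x\mapsto P_{gx}$ up to the sub-block relabeling. To keep this exact, I let each sub-block test all cosets on disjoint sub-sub-blocks arranged $G$-equivariantly, so that the permutation action of $G$ on $G/S$ is implemented consistently. I then set
\begin{align}
W_n:=\sum_{x=gS\in G/S}P_x\otimes U(g)^{\otimes(n-l_n)}V_nU(g)^{\dag\otimes(n-l_n)}+P_\perp\otimes I.
\end{align}
This is well defined modulo $S$ only after averaging $V_n$ over $S$, and this averaging is the main obstacle: a unitary cannot be convexly averaged. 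My first attempt is to replace $V_n$ by an $S$-invariant unitary achieving nearly the same work. For example, I symmetrize the passive target, which needs the ordering of eigenvalues to be $S$-compatible, and I argue that for $\rho^{\otimes m}$ with $\rho$ $S$-invariant the optimal ergotropy unitary can be chosen to commute with $U(s)^{\otimes m}$. This works because the optimal passive unitary maps the eigenbasis of $\rho^{\otimes m}$ to that of $H^{(m)}$ in sorted order, and both operators commute with $U(s)^{\otimes m}$, so degeneracies can be resolved in a common eigenbasis compatible with the $S$-action. If this fails at the level of exact degeneracies, I lose only $o(n)$ energy by perturbing. With $V_n$ $S$-invariant, $W_n$ is unitary and commutes with $U(g)^{\otimes n}$.

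Finally, I evaluate the energy change on $\rho^{\otimes n}=\rho^{\otimes l_n}\otimes\rho^{\otimes(n-l_n)}$. The correct branch $x=S$ occurs with probability at least $1-|G/S|e^{-\alpha l_n}$. The $W_n$ is block diagonal in the reference projectors, but the reference block's energy may change, since $P_x$ need not commute with $H^{(l_n)}$. However, $W_n$ acts on the reference only through the projective decomposition, so the reference state is dephased, and its energy changes by at most $2l_n\|H\|_\infty=o(n)$. In the wrong branches, the energy change is bounded by $2n\|H\|_\infty$ times the error probability $e^{-\alpha l_n}$. Choosing $l_n=\lceil\sqrt n\rceil$, I obtain $\overline{W}_{G\text{-}\mathrm{inv}}^{(n)}(\rho)\geq\frac{n-l_n}{n}\overline{W}^{(n-l_n)}(\rho)-o(1)$. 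Since ergotropy per copy is superadditive and its $\limsup$ is a limit, the $\limsup$ gives the claimed equality.
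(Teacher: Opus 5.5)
There are two genuine gaps. The paper's proof closes both with a single ingredient that your proposal lacks.

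\textbf{First gap: the $S$-averaging step.} Your fix relies on the claim that, for $S$-invariant $\rho$, an optimal ergotropy unitary for $\rho^{\otimes m}$ can be chosen to commute with $U(s)^{\otimes m}$. This is false. The sorted matching of eigenvectors of $\rho^{\otimes m}$ to eigenvectors of $H^{(m)}$ can cross symmetry sectors; the problem is not confined to degenerate eigenspaces. A counterexample: take a qubit, $G=S=\mathbb{Z}_2$ with $U(g)=Z$, $H=\ket{0}\bra{0}$, and $\rho=p\ket{0}\bra{0}+(1-p)\ket{1}\bra{1}$ with $p>1/2$. Every $Z$-invariant unitary is diagonal and extracts nothing, whereas the swap extracts $2p-1$. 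The same happens for every odd $m$, because the most probable eigenvector $\ket{0}^{\otimes m}$ and the unique ground state $\ket{1}^{\otimes m}$ have opposite $Z^{\otimes m}$-parity. An $o(n)$ perturbation cannot repair such a sector mismatch. Worse, when $\rho$ is $G$-invariant there is only one coset, so your construction needs $V_n$ itself to be $G$-invariant up to $o(n)$ loss. That is exactly the theorem for invariant states, so the argument is circular precisely where the content lies.

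\textbf{Second gap: covariance of the reference PVM.} Your $\{P_x\}$ is not $G$-covariant. $U(g)^{\otimes l_n}$ acts identically on every copy and cannot permute sub-blocks. Conjugating $P_x=F_{\bm e_x}$ therefore places the test $T_{gy}$ on sub-block $y$, and the result is not any $F_{\bm b}$. Your ``$G$-equivariant sub-sub-block'' arrangement is not specified and runs into the same obstruction. Consequently $W_n$ need not commute with $U(g)^{\otimes n}$.

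\textbf{The missing idea.} You need an internal reference frame, built from copies of $\rho$, that resolves individual group elements rather than cosets. The paper proceeds as follows.
\begin{enumerate}
\item Quotient by the projective kernel, so that $U$ is faithful.
\item Apply Lemma~\ref{lem:left_regular_basis_construction}. It uses eigenvectors of $U(h)$ with distinct eigenvalues and the cancellation $\sum_b z^{a-b-1}z'^b=0$. The output is a vector $\ket{\eta}$ on finitely many copies whose orbit $\{U(g)^{\otimes N}\ket{\eta}\}_{g\in G}$ is orthonormal, while $\bra{\eta}\rho^{\otimes N}\ket{\eta}>0$.
\item Tensor $\Pi^{(k)}_{gS}$ with the projector onto $U(g)^{\otimes N}\ket{\eta}$, and amplify by the first-conclusive-outcome construction. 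This gives the orthogonal PVM $\{R_g\}_{g\in G}$ of Lemma~\ref{lem:ergotropy_fine_grained_PVM}. It satisfies $U(h)^{\otimes N_k}R_gU(h)^{\dag\otimes N_k}=R_{hg}$ exactly. Its coset sums $\sum_{s\in S}R_{gs}$ accept $\rho_g^{\otimes N_k}$ with probability at least $1-e^{-\alpha k}$. No ancilla is needed.
\end{enumerate}
With this PVM, $\widetilde V=\sum_{g\in G}R_g\otimes U(g)^{\otimes(n-l_n)}V_nU(g)^{\dag\otimes(n-l_n)}+R_\perp\otimes I$ is $G$-invariant without any averaging of $V_n$. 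Each branch $g\in S$ extracts exactly $W^{(n-l_n)}(\rho)$, because $\mathcal{U}_g(\rho)=\rho$ and $[U(g),H]=0$. This symmetry-breaking reference is needed even when $\rho$ is itself $G$-invariant. Once it is in place, your energy bookkeeping for the reference block and the wrong branches goes through essentially as in the paper.
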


To consider ergotropy rate, we prepare a lemma, which is similar to Lemma~\ref{lem:TO_state_discrimination} that we used in the proof of the previous section about thermal operations. 
We note that we cannot freely use Gibbs states this time, unlike Lemma~\ref{lem:TO_state_discrimination}.

To construct a $G$-invariant work-extraction unitary, we use a fine-grained version of the PVM constructed in Lemma~\ref{lem:TO_state_discrimination_general}.
The additional structure needed here is that the outcomes are labeled by individual group elements, while their sums over symmetry cosets retain the discrimination property of Lemma~\ref{lem:TO_state_discrimination_general}.

\begin{lemma}
    \label{lem:ergotropy_fine_grained_PVM}
    Let $G$ be a finite group, $U$ be a faithful projective unitary representation of $G$, and $\rho$ be a state. 
    Suppose that there exist $m\in\mathbb{N}$ and a vector $\ket{\phi}$ such that the vectors $\ket{\phi_g}:=U(g)^{\otimes m}\ket{\phi}$ with $g\in G$ are mutually orthogonal and
    \begin{align}
        \bra{\phi}\rho^{\otimes m}\ket{\phi}>0.
    \end{align}
    Then, there exist a sequence of integers $N_k=O(k^2)$, a constant $\alpha>0$, and PVMs $\{R_g^{(k)}\}_{g\in G\cup\{\perp\}}$ on $N_k$ copies such that
    \begin{align}
        U(h)^{\otimes N_k}R_g^{(k)}U(h)^{\dag\otimes N_k}
        =&R_{hg}^{(k)}
    \end{align}
    for all $g,h\in G$, and, for every $g\in G$ and all sufficiently large $k$,
    \begin{align}
        \mathrm{tr}\left(
        \rho_g^{\otimes N_k}
        \sum_{s\in S}R_{gs}^{(k)}
        \right)
        \geq&1-e^{-\alpha k},
        \qquad
        \rho_g:=U(g)\rho U(g)^\dag, 
    \end{align}
    where $S:=\mathrm{Sym}_{G,U}(\rho)$. 
\end{lemma}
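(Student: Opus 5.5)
The plan is to reuse the two-stage construction of Lemma~\ref{lem:TO_state_discrimination_general}, with one change: the orthogonal "flag" vectors $\ket{\phi_g}$ are taken from copies of $\rho$ itself rather than from a free Gibbs ancilla, and the coset projector $\sum_{h\in gS}\ket{\phi_h}\bra{\phi_h}$ is split into its individual rank-one pieces.

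\textbf{Step 1: a single round.} A round uses $k$ copies for a statistical test and $m$ copies as a flag. First I take the projection $\Pi^{(k)}=\bm{1}_{\{O^{\times k}\geq c\}}$ from the proof of Lemma~\ref{lem:TO_state_discrimination_general}. Writing $\rho_g^{(k)}:=\rho_g^{\otimes k}$, it satisfies
\begin{align}
    \mathrm{tr}\left(\Pi^{(k)}\rho^{(k)}\right)\geq 1-e^{-\lambda k},
    \qquad
    \mathrm{tr}\left(\Pi^{(k)}\rho_g^{(k)}\right)\leq e^{-\lambda k}\ \ (g\notin S).
\end{align}
Its rotations $\Pi_g^{(k)}:=U(g)^{\otimes k}\Pi^{(k)}U(g)^{\dag\otimes k}$ depend only on the coset $gS$. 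Then, for each $g\in G$, I define the projector on $k+m$ copies
\begin{align}
    Q_g:=\Pi_g^{(k)}\otimes\ket{\phi_g}\bra{\phi_g}.
\end{align}
The $Q_g$ are mutually orthogonal because the $\ket{\phi_g}$ are, and $U(h)^{\otimes(k+m)}Q_gU(h)^{\dag\otimes(k+m)}=Q_{hg}$ because projective phases cancel. I set $Q_\perp:=I-\sum_g Q_g$.

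For the input $\rho_g^{\otimes(k+m)}$, the coset sum $\sum_{s\in S}Q_{gs}=\Pi_{gS}^{(k)}\otimes\sum_{s}\ket{\phi_{gs}}\bra{\phi_{gs}}$ has weight at least $(1-e^{-\lambda k})p$, where
\begin{align}
    p:=\sum_{s\in S}\bra{\phi_s}\rho^{\otimes m}\ket{\phi_s}\geq\bra{\phi}\rho^{\otimes m}\ket{\phi}>0.
\end{align}
The total weight of $Q_{g'}$ over all $g'\notin gS$ is at most $e^{-\lambda k}$; this uses orthogonality of the flags, exactly as in Eq.~\eqref{eq:lem:TO_state_discrimination_general5}.

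\textbf{Step 2: amplification with $l=k$ rounds.} This uses $N_k=k(k+m)=O(k^2)$ copies. For $g\in G$ I define
\begin{align}
    R_g^{(k)}:=\sum_{j=1}^{k}Q_\perp^{\otimes(j-1)}\otimes Q_g\otimes\sum_{y_{j+1},\dots,y_k\in gS\cup\{\perp\}}Q_{y_{j+1}}\otimes\cdots\otimes Q_{y_k},
\end{align}
that is, the first non-$\perp$ round reports $g$ and every later non-$\perp$ round lies in the coset $gS$. I set $R_\perp^{(k)}:=I-\sum_g R_g^{(k)}$. These operators are sums of tensor products of orthogonal projectors, hence mutually orthogonal projectors, so $\{R_g^{(k)}\}_{g\in G\cup\{\perp\}}$ is a PVM. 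Conjugating by $U(h)^{\otimes N_k}$ sends $Q_\perp\mapsto Q_\perp$, $Q_g\mapsto Q_{hg}$ and the coset $gS\mapsto hgS$, which gives $R_g^{(k)}\mapsto R_{hg}^{(k)}$.

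The sum $\sum_{s\in S}R_{gs}^{(k)}$ is exactly the event "at least one round is non-$\perp$ and all non-$\perp$ rounds lie in $gS$", the same as $R_{gS}^{(k,k)}$ in Lemma~\ref{lem:TO_state_discrimination_general}. The same estimate therefore yields
\begin{align}
    \mathrm{tr}\left(\rho_g^{\otimes N_k}\sum_{s\in S}R_{gs}^{(k)}\right)\geq 1-ke^{-\lambda k}-\left[(1-p)+e^{-\lambda k}p\right]^k,
\end{align}
which is at least $1-e^{-\alpha k}$ for any $0<\alpha<\min\{\lambda,-\log(1-p)\}$ and all large $k$.

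\textbf{Main obstacle.} The delicate step is Step 2: choosing fine-grained labels that keep both the PVM property and exact equivariance while reproducing the coset-level discrimination bound. If the flag of the first non-$\perp$ round records $g$ while only cosets are checked in later rounds, equivariance holds without any tie-breaking over the choice of coset representatives, because the label is read off from a specific orthogonal flag. A remaining detail is the case $S=G$. There $\lambda$ is not defined, since no $g\notin S$ exists; I take $\Pi^{(k)}=I$, and the bound then follows from the flags alone. Faithfulness of $U$ is only needed to ensure that the hypothesized orthogonal flags can exist.
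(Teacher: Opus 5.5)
Your proposal is correct and follows the paper's proof essentially step for step. The paper uses the same fine-grained projectors $Q_g=\Pi_{gS}^{(k)}\otimes\ket{\phi_g}\bra{\phi_g}$ and the same first-conclusive-round amplification $R_g=\sum_j Q_\perp^{\otimes(j-1)}\otimes Q_g\otimes(Q_{gS}+Q_\perp)^{\otimes(k-j)}$, with $l=k$ and $N_k=k(k+m)$; it then reduces $\sum_{s\in S}R_{gs}$ to the coset projector of Lemma~\ref{lem:TO_state_discrimination_general} with the flag state $\tau=\rho^{\otimes m}$. Your explicit handling of the degenerate case $S=G$ (taking $\Pi^{(k)}=I$) is a small detail that the paper leaves implicit.
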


\begin{proof}
    We refine the construction in Lemma~\ref{lem:TO_state_discrimination_general}.
    In that construction, instead of defining the projector associated with a coset $gS$ by summing the orthogonal orbit projectors, we define
    \begin{align}
        Q_g^{(k)}
        :=\Pi_{gS}^{(k)} \otimes\ket{\phi_g}\bra{\phi_g}\ \forall g\in G.
    \end{align}
    Since the vectors $\{\ket{\phi_g}\}_{g\in G}$ are mutually orthogonal, the projectors $\{Q_g^{(k)}\}_{g\in G}$ are mutually orthogonal.
    Moreover, we have 
    \begin{align}
        U(h)^{\otimes k+m}Q_g^{(k)}U(h)^{\dag\otimes k+m} 
        =Q_{hg}^{(k)}. 
    \end{align}

    We now apply the same amplification as in Lemma~\ref{lem:TO_state_discrimination_general}, while retaining the group label of the first conclusive outcome.
    We define
    \begin{align}
        R_g^{(k,l)}
        :=&\sum_{j=1}^l
        \left(Q_\perp^{(k)}\right)^{\otimes j-1}
        \otimes Q_g^{(k)}
        \otimes
        \left(
        Q_{gS}^{(k)}+Q_\perp^{(k)}
        \right)^{\otimes l-j} 
    \end{align}
    with 
    \begin{align}
        &Q_\perp^{(k)}:=I-\sum_{g\in G}Q_g^{(k)}, \\
        &Q_{gS}^{(k)}:=\sum_{s\in S}Q_{gs}^{(k)},
    \end{align}
    where $Q_{gS}^{(k)}$ corresponds to the projector used in the proof of Lemma~\ref{lem:TO_state_discrimination_general}.
    The projectors $\{R_g^{(k,l)}\}$ are mutually orthogonal and satisfy
    \begin{align}
        \sum_{s\in S}R_{gs}^{(k,l)}
        =&R_{gS}^{(k,l)},
    \end{align}
    where $R_{gS}^{(k,l)}$ is exactly the coset projector constructed in Lemma~\ref{lem:TO_state_discrimination_general}.
    By the definition of $R_g^{(k, l)}$, we have 
    \begin{align}
        U(h)^{\otimes (k+m)l}R_g^{(k,l)}U(h)^{\dag\otimes (k+m)l}
        =&R_{hg}^{(k,l)}.
    \end{align}
    Taking $l=k$ and defining
    \begin{align}
        R_\perp^{(k)}
        :=&I-\sum_{g\in G}R_g^{(k,k)},
    \end{align}
    the error estimate of Lemma~\ref{lem:TO_state_discrimination_general} directly gives
    \begin{align}
        \mathrm{tr}\left(
        \rho_g^{\otimes (k+m)k}\sum_{s\in S}R_{gs}^{(k)}
        \right)
        \geq &1-e^{-\alpha k}. 
    \end{align}
\end{proof}

For later use, we reindex the above construction by the actual number of input copies.
Set
\begin{align}
    N_k:=&(k+m)k,
\end{align}
and, for each sufficiently large $l$, define
\begin{align}
    k(l):=&\max\{k\in\mathbb{N} \,|\, N_k\leq l\}.
\end{align}
We extend the PVM to $l$ copies by
\begin{align}
    \widetilde R_g^{(l)}:=&
    R_g^{(k(l))}\otimes I^{\otimes(l-N_{k(l)})},
    \qquad g\in G,
\end{align}
and define $\widetilde R_\perp^{(l)}$ analogously.
Then,
\begin{align}
    \mathrm{tr}\left(
    \rho_g^{\otimes l}
    \sum_{s\in S}\widetilde R_{gs}^{(l)}
    \right)\geq&
    1-e^{-\alpha k(l)}.
\end{align}
Since $k(l)\to\infty$ as $l\to\infty$, the discrimination error vanishes.
Moreover, the covariance properties of the original PVM are preserved under this extension.

The preceding lemma provides the symmetry information required to align the work-extraction unitary.
Only a sublinear number of copies is needed for this purpose: the discrimination error vanishes as the number of reference copies diverges, while the fraction of copies consumed by the reference tends to zero.
We can therefore use the remaining copies for an asymptotically optimal unrestricted work-extraction protocol and symmetrize it without changing the work extracted per copy.

\bigskip

\noindent
\textit{Proof of Theorem~\ref{thm:symmetry_protected_ergotropy}.}
    Since the class of $G$-invariant unitaries is included in the class of unitaries, we have 
    \begin{align}
        \overline{W}_{G\text{-}\mathrm{inv}}^{(\infty)}(\rho) \leq \overline{W}^{(\infty)}(\rho). \label{eq:thm:symmetry_protected_ergotropy1}
    \end{align}
    In the following, we prove the opposite inequality.

    First, we explain that we can assume that $U$ is a faithful projective unitary representation without loss of generality. 
    We define 
    \begin{align}
        K:=\{h\in G\ |\ U(h)\propto I\}. 
    \end{align}
    Then, $K$ is a mormal subgroup of $G$, which can be confirmed by noting that for any $g\in G$ and $k\in K$, we have 
    \begin{align}
        U(ghg^{-1})\propto U(g)U(h)U(g^{-1}) 
        =U(h)U(g)U(g^{-1}) 
        \propto U(hgg^{-1}) 
        =U(h).  
    \end{align}
    Thus, $G':=G/K$ forms a group, and 
    We fix representative $g_j$ of each coset, and define a function $U'$ on $G'$ by 
    \begin{align}
        U'(g_j K):=U(g_j). 
    \end{align}
    Then, by the definition of $K$, for all $g\in G$, we have 
    \begin{align}
        U'(gK)=U'(g'K)=U(g')\propto U(g')U(g'^{-1}g) \propto U(g'g'^{-1}g) =U(g),  
    \end{align}
    where $g'$ is the representative of $gK$. 
    This property implies for any $g, g'\in G$, 
    \begin{align}
        U'(gK)U'(g'K) \propto U(g)U(g') \propto U(gg')\propto U'(gg'K), 
    \end{align}
    which means $U'$ is a projective unitary representation of $G/K$. 
    The definition of $K$ directly implies $U'$ is a faithful projective unitary representation. 
    We also note that $G$-invariance using the representation $U$ is equivalent to $G'$-invariance using the representation $U'$. 
    Therefore, in the following, we can assume that $U$ is a faithful projective unitary representation without loss of generality.

    Next, we construct a $G$-invariant unitary operator for work extraction. 
    By Lemmas~\ref{lem:TO_state_discrimination_general} and \ref{SMlem:regular_basis_const_second_prep}, we can take a sequence of PVMs $(\{\widetilde{R}_g^{(k)}\}_{g\in G})_{k\in\mathbb{N}}$ and a constant $\alpha>0$ such that for any $g\in G$ and sufficiently large $k\in\mathbb{N}$, 
    \begin{align}
        \mathrm{tr}\left(\left(\sum_{s\in S} \widetilde{R}_{gs}^{(k)}\right)\left(\mathcal{U}_g(\rho)^{\otimes k}\otimes \mathcal{U}_g(\rho)^{\otimes m}\right)^{\otimes k}\right)\geq 1-e^{-\alpha k}. 
    \end{align}
    We take a sequence of unitaries $\{V^{(n)}\}_{n\in\mathbb{N}}$ that can extract the maximal work from $\rho^{\otimes n}$, i.e., 
    \begin{align}
        \mathrm{tr}\left(\left(\rho^{\otimes n}-V^{(n)}\rho^{\otimes n}V^{(n)\dag}\right)H^{\times n}\right)
        =W^{(n)}(\rho). 
    \end{align}
    Using $\widetilde{R}_g^{(k)}$ and $V^{(n)}$, we define 
    \begin{align}
        \widetilde{V}^{(k, n)}:=\sum_{g\in G} \widetilde{R}_g^{(k)}\otimes U(g)^{\otimes n}V^{(n)} U(g)^{\dag\otimes n}+\widetilde{R}_\perp^{(k)}\otimes I^{\otimes n}, 
    \end{align}
    where $\widetilde{R}_\perp^{(k)}$ is defined by 
    \begin{align}
        \widetilde{R}_\perp^{(k)}:=I-\sum_{g\in G} \widetilde{R}_g^{(k)}. 
    \end{align}
    Since the projectors $\{\widetilde{R}_g^{(k)}\}_{g\in G\cup\{\perp\}}$ are mutually orthogonal and sum to the identity, $\widetilde{V}^{(k, n)}$ is unitary. 
    Since $\{\widetilde{R}_g^{(k)}\}_{g\in G}$ satisfies $U(h)^{\otimes k}\widetilde{R}_g^{(k)} U(h)^{\dag\otimes k}=R_{hg}^{(k)}$ for all $g, h\in G$, $\widetilde{V}~{(k, n)}$ is $G$-invariant. 
    We take an arbitrary sequence $(l_n)_{n\in N}$ such that $l_n\in (0, n)\cap\mathbb{N}$ for all $n\in\mathbb{N}$, $\lim_{n\to\infty} l_n=\infty$, and $\lim_{n\to\infty} l_n/n=0$. 
    We note that for any $n\in\mathbb{N}$, 
    \begin{align}
        &I^{\otimes l_n}\otimes H^{\times n-l_n}-\widetilde{V}^{(l_n, n-l_n)\dag}\left(I^{\otimes l_n}\otimes H^{\times n-l_n}\right)\widetilde{V}^{(l_n, n-l_n)} \nonumber\\
        =&\widetilde{V}^{(l_n, n-l_n)\dag}\left[\widetilde{V}^{(l_n, n-l_n)}, I^{\otimes l_n}\otimes H^{\times n-l_n}\right] \nonumber\\
        =&\left[\sum_{g'\in G} \widetilde{R}_{g'}^{(l_n)}\otimes U(g')^{\otimes n-l_n}V^{(n-l_n)\dag}U(g')^{\dag\otimes n-l_n} +\widetilde{R}_\perp^{(l_n)}\otimes I^{\otimes n-l_n}\right] \nonumber\\
        &\hspace{1cm}\times \left(\sum_{g\in G} \widetilde{R}_g^{(l_n)}\otimes U(g)^{\otimes n-l_n}\left[V^{(n-l_n)}, H^{\times n-l_n}\right]U(g)^{\dag\otimes n-l_n}\right) \nonumber\\
        =&\sum_{g\in G} \widetilde{R}_g^{(l_n)}\otimes U(g)^{\otimes n-l_n}V^{(n-l_n) \dag}\left[V^{(n-l_n)}, H^{\times n-l_n}\right]U(g)^{\dag\otimes n-l_n} \nonumber\\
        =&\sum_{g\in G} \widetilde{R}_g^{(l_n)}\otimes U(g)^{\otimes n-l_n}\left(H^{\times n-l_n}-V^{(n-l_n)\dag} H^{\times n-l_n}V^{(n-l_n)}\right)U(g)^{\dag\otimes n-l_n}, 
    \end{align}
    which implies that 
    \begin{align}
        &\mathrm{tr}\left(\rho^{\otimes n} \left[I^{\otimes l_n}\otimes H^{\times n-l_n}-\widetilde{V}^{(l_n, n-l_n)\dag}\left(I^{\otimes l_n}\otimes H^{\times n-l_n}\right)\widetilde{V}^{(l_n, n-l_n)}\right]\right) \nonumber\\
        =&\sum_{g\in G} \mathrm{tr}\left(\rho^{\otimes l_n}\widetilde{R}_g^{(l_n)}\right)\mathrm{tr}\left(\rho^{\otimes n-l_n} U(g)^{\otimes n-l_n}\left(H^{\times n-l_n}-V^{(n-l_n)\dag} H^{\times n-l_n}V^{(n-l_n)}\right)U(g)^{\dag\otimes n-l_n}\right) \nonumber\\
        \geq &\sum_{g\in S} \mathrm{tr}\left(\rho^{\otimes l_n}\widetilde{R}_g^{(l_n)}\right) W^{(n-l_n)}(\rho)+\sum_{g\in G-S} \mathrm{tr}\left(\rho^{\otimes l_n}\widetilde{R}_g^{(l_n)}\right)(n-l_n)(-\Delta h) \nonumber\\
        \geq &\mathrm{tr}\left(\rho^{\otimes l_n}\widetilde{R}_S^{(l_n)}\right) W^{(n-l_n)}(\rho)+\left(1-\mathrm{tr}\left(\rho^{\otimes l_n}\widetilde{R}_S^{(l_n)}\right)\right)(n-l_n)(-\Delta h) \nonumber\\
        =&(n-l_n)\left[\mathrm{tr}\left(\rho^{\otimes l_n}\widetilde{R}_S^{(l_n)}\right)\overline{W}^{(n-l_n)}(\rho)-\left(1-\mathrm{tr}\left(\rho^{\otimes l_n}\widetilde{R}_S^{(l_n)}\right)\right)\Delta h \right], \label{eq:thm:symmetry_protected_ergotropy2}
    \end{align}
    where $\Delta h:=h_\mathrm{max}-h_\mathrm{min}$ with the maximum and minimum eigenvalues of $H$. 
    On the other hand, since $l_n h_\mathrm{min}I^{\otimes n}\leq H^{\times l_n}\otimes I^{\otimes n-l_n}\leq l_n h_\mathrm{max}I^{\otimes n}$, we have 
    \begin{align}
        \mathrm{tr}\left(\rho^{\otimes n}\left[H^{\times l_n}\otimes I^{\otimes n-l_n}-\widetilde{V}^\dag\left(H^{\times l_n}\otimes I^{\otimes n-l_n}\right)\widetilde{V}\right]\right) 
        \geq -l_n(h_\mathrm{max}-h_\mathrm{min}) 
        =-l_n\Delta h, \label{eq:thm:symmetry_protected_ergotropy3}
    \end{align}
    By Eqs.~\eqref{eq:thm:symmetry_protected_ergotropy2} and \eqref{eq:thm:symmetry_protected_ergotropy3}, we get 
    \begin{align}
        \overline{W}_{G\text{-}\mathrm{inv}}^{(n)}\left(\rho\right) 
        \geq &\left(1-\frac{l_n}{n}\right)\left[\mathrm{tr}\left(\rho^{\otimes l_n}R_S^{(l_n)}\right)\overline{W}^{(n-l_n)}(\rho)-\left(1-\mathrm{tr}\left(\rho^{\otimes l_n}R_S^{(l_n)}\right)\right)\Delta_H \right]-\frac{l_n}{n}(h_\mathrm{max}-h_\mathrm{min}). 
    \end{align}
    Since the right-hand side converges to $\overline{W}^{(\infty)}(\rho)$ in the limit of $n \to\infty$, we have 
    \begin{align}
        \overline{W}_{G\text{-}\mathrm{inv}}^{(\infty)}(\rho)
        \geq \overline{W}^{(\infty)}(\rho). \label{eq:thm:symmetry_protected_ergotropy4}
    \end{align}
    By Eqs.~\eqref{eq:thm:symmetry_protected_ergotropy1} and \eqref{eq:thm:symmetry_protected_ergotropy4}, we have  
    \begin{align}
        \overline{W}_{G\text{-}\mathrm{inv}}^{(\infty)}(\rho)
        =\overline{W}^{(\infty)}(\rho). 
    \end{align}
\hfill $\square$

As a direct consequence, finite-group symmetry does not introduce additional completely passive states.

\begin{corollary} \label{cor:complete_passivity}
    Let $G$ be a finite group, $U$ be a projective unitary representation of $G$, $H$ be a $G$-invariant Hamiltonian, and $\rho$ be a state. 
    Then, $\rho$ is completely passive under $G$-invariant unitaries if and only if it is completely passive, i.e., it is a gound state or a Gibbs state with nonnegative inverse temperature $\beta\in [0, \infty)$. 
\end{corollary}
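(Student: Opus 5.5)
The plan is to prove the two implications separately, with the nontrivial one reduced to Theorem~\ref{thm:symmetry_protected_ergotropy}; the explicit form of completely passive states is then quoted from the classical characterization~\cite{pusz1978passive,lenard1978thermodynamical}.

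\emph{Completely passive implies $G$-completely passive.} Every $G$-invariant unitary on $n$ copies is in particular a unitary. Hence $0\le W_{G\text{-}\mathrm{inv}}^{(n)}(\rho)\le W^{(n)}(\rho)$ for every $n$, where nonnegativity follows from choosing $V=I$. So $W^{(n)}(\rho)=0$ for all $n$ forces $W_{G\text{-}\mathrm{inv}}^{(n)}(\rho)=0$ for all $n$.

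\emph{$G$-completely passive implies completely passive.} I argue by contraposition. Suppose $\rho$ is not completely passive, so $W^{(n_0)}(\rho)=w>0$ for some $n_0$, attained by an optimal unitary $V^{(n_0)}$. For $N\ge n_0$, write $N=qn_0+s$ with $0\le s<n_0$ and use the unitary $(V^{(n_0)})^{\otimes q}\otimes I^{\otimes s}$. Since $H^{\times N}$ is a sum of local terms and the input is a product state, the extracted work is additive over blocks. This gives
\begin{align}
    W^{(N)}(\rho)\geq \left\lfloor \frac{N}{n_0}\right\rfloor w .
\end{align}
Dividing by $N$ and taking the limsup yields $\overline{W}^{(\infty)}(\rho)\geq w/n_0>0$. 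Theorem~\ref{thm:symmetry_protected_ergotropy} then gives $\overline{W}_{G\text{-}\mathrm{inv}}^{(\infty)}(\rho)=\overline{W}^{(\infty)}(\rho)>0$. Hence $W_{G\text{-}\mathrm{inv}}^{(n)}(\rho)>0$ for some $n$, since otherwise every term in the limsup would vanish. So $\rho$ is not $G$-completely passive.

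\emph{Identification of completely passive states.} It remains to recall that the states with $W^{(n)}(\rho)=0$ for all $n$ are exactly the ground states and the Gibbs states $e^{-\beta H}/\mathrm{tr}(e^{-\beta H})$ with $\beta\in[0,\infty)$, by the Pusz--Woronowicz and Lenard theorem. Here "ground state" must be read as any state supported on the ground eigenspace, which matters when that eigenspace is degenerate.

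The main step is the one above where Theorem~\ref{thm:symmetry_protected_ergotropy} is applied, but that is supplied by the theorem. The only additional ingredient is the block superadditivity estimate, which turns a single-$n$ positive ergotropy into a positive asymptotic rate. The step I would double-check is that the cited characterization treats the degenerate ground space correctly, so that no additional $G$-dependent passive states can appear there. This is consistent, because the equivalence of $G$-complete passivity and complete passivity has already been established independently of that characterization.
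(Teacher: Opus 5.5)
Your proof is correct and matches the paper's: the inclusion of $G$-invariant unitaries among all unitaries gives one direction, and for the other, block superadditivity turns a single $W^{(n_0)}(\rho)>0$ into $\overline{W}^{(\infty)}(\rho)>0$, after which Theorem~\ref{thm:symmetry_protected_ergotropy} forces some $W^{(n)}_{G\text{-}\mathrm{inv}}(\rho)>0$. Your reading of ``ground state'' as any state supported on the possibly degenerate ground eigenspace is the right one, and it agrees with the paper's remark in the Discussion.
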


\begin{proof}
    Since the statement for the ``if'' part is trivial, we show the ``only if'' part. 
    We take arbitrary state $\rho$ and suppose that $\rho$ is not completely passive but $G$-symmetry-protected completely passive. 
    Then, there exists some $k\in\mathbb{N}$ such that 
    \begin{align}
        W^{(k)}(\rho)>0,  
    \end{align}
    We note that for any $n\in\mathbb{N}$, 
    \begin{align}
        W^{(nk)}(\rho)\geq n W^{(k)}(\rho), 
    \end{align}
    or equivalently, 
    \begin{align}
        \overline{W}^{(nk)}(\rho)\geq \overline{W}^{(k)}(\rho), 
    \end{align}
    which implies 
    \begin{align}
        \overline{W}^{(\infty)}(\rho)\geq \overline{W}^{(k)}(\rho). 
    \end{align}
     By Theorem~\ref{thm:symmetry_protected_ergotropy}, we have 
    \begin{align}
        \overline{W}_{G^\text{-}\mathrm{inv}}^{(\infty)}(\rho)\geq \overline{W}^{(k)}(\rho)>0. \label{eq:cor:complete_passivity1}
    \end{align}
    On the other hand, since we assume that $\rho$ is $G$-symmetry-protected completely passive, we have 
    \begin{align}
        W_{G^\text{-}\mathrm{inv}}^{(n)}(\rho)=0 
    \end{align}
    for all $n\in\mathbb{N}$, which implies 
    \begin{align}
        \overline{W}_{G^\text{-}\mathrm{inv}}^{(\infty)}(\rho)=0. 
    \end{align}
    This contradicts with Eq.~\eqref{eq:cor:complete_passivity1}. 
    Therefore, we have proved that if $\rho$ is $G$-symmetry-protected completely passive, then $\rho$ is completely passive. 
\end{proof}

\section{Discussion}

In this work, we investigated how finite-group symmetry constraints interact with several quantum resource theories.
The common structure behind our results is that the information required to compensate for a finite symmetry action is itself finite.
Theorem~\ref{thm:conversion_state_discrimination} establishes a direct relation between state discrimination and state conversion for finite-group asymmetry, while Theorem~\ref{thm:conversion_control_admissible_POVM} extends the discrimination-and-control construction to an underlying resource theory.
The general conversion theorem, Theorem~\ref{thm:rate_lower_bound_general}, shows that asymptotically reliable discrimination, together with compatibility of the conversion with the residual symmetry, is sufficient to achieve the desired rate under the combined constraints.
The symmetry information can be extracted from a sublinear number of input copies, leaving the remaining copies available for the resource conversion.

This asymptotic rate preservation does not mean that a finite-group symmetry constraint is operationally trivial.
First, Theorem~\ref{thm:asymmetry_rate} establishes an exact obstruction: if $\mathrm{Sym}(\rho)\not\subset\mathrm{Sym}(\sigma)$, then no positive asymptotic conversion rate is possible under $G$-covariant operations.
For LOCC, stabilizer operations, completely stabilizer-preserving operations, and Gibbs-preserving operations, Theorems~\ref{thm:LOCC_rate_map_level}, \ref{thm:magic_map_rates}, and~\ref{thm:GP_conversion_rate} show that the symmetry-subgroup condition is sufficient to preserve the corresponding unconstrained rate.
For time-translation-covariant Gibbs-preserving operations and thermal operations, Theorems~\ref{thm:TCGP_conversion_rate} and \ref{thm:covariant_TO_conversion_rate} instead identify the compatibility condition in Definition~\ref{def:group_time_covariance_compatibility} as the relevant condition.
Second, even when the asymptotic rate is unchanged, the discrimination step consumes copies and introduces an error at finite block length.
Finite-copy conversion can therefore remain sensitive to the symmetry constraint even when its first-order asymptotic effect disappears.

The rate equalities established here should also be distinguished from an explicit solution of the underlying resource-conversion problem.
For example, for general mixed multipartite states, the optimal LOCC conversion rate is not known in closed form.
Theorem~\ref{thm:LOCC_rate_map_level} states that, whenever the symmetry-subgroup condition is satisfied, imposing finite-group covariance does not further reduce this rate.
Theorem~\ref{thm:magic_map_rates} gives the corresponding statement for magic-state conversion under stabilizer operations and completely stabilizer-preserving operations, and Theorems~\ref{thm:GP_conversion_rate}, \ref{thm:TCGP_conversion_rate}, and \ref{thm:covariant_TO_conversion_rate} treat the three thermodynamic operation classes.
The role of the symmetry analysis is therefore to determine the additional restriction generated by the symmetry constraint, rather than to solve the underlying resource theory itself.

A central operational ingredient in these results is state discrimination.
Lemma~\ref{lem:LOCC_discrimination} shows that a finite unitary orbit can be distinguished by LOCC with exponentially decreasing error by estimating suitable observables through local measurements.
Lemma~\ref{lem:magic-free-discrimination} provides the corresponding construction using Pauli measurements for magic.
In the time-covariant thermodynamic setting, only the orbit information that survives collective energy dephasing is available to the discrimination step.
Lemma~\ref{lem:dephased_state_symmetry_subgroup} identifies the residual subgroup associated with this restriction.
For thermal operations, Lemmas~\ref{lem:control_admissible_POVM_TO}, \ref{lem:TO_state_discrimination_general}, \ref{lem:TO_state_discrimination} establish control admissibility through an energy-compatible projective construction.
In our protocols, an error $\epsilon$ can be achieved using $O(\log(1/\epsilon))$ copies for LOCC and stabilizer operations and $O((\log(1/\epsilon))^2)$ copies for thermal operations.
These bounds are sufficient to make the discrimination cost sublinear in the total number of input copies.
We have not established their optimality.

More generally, our analysis distinguishes several operational requirements that arise when symmetry is combined with another resource theory.
One must determine which symmetry-related states can be distinguished and whether the measurement outcome can be used to select subsequent free operations.
Definition~\ref{def:control_admissible_POVM} captures this second requirement through control admissibility, without requiring a physical memory that stores the outcome.
Under the free-memory assumption, Lemma~\ref{lem:memory_assisted_control_admissible} relates this formulation to memory-assisted measurements, and Theorem~\ref{thm:conversion_POVM_equivalence} connects discrimination to reference-state distillation and the simulation of symmetry-compatible free operations.
A further question arises when the allowed operations are specified in terms of elementary physical primitives: whether the resulting covariant channel admits a realization using symmetric primitives.

This last distinction is reflected in our separate treatment of map-level and primitive-level symmetry constraints.
At the map level, the resulting channel must belong both to the original free-operation class and to the class of $G$-covariant channels.
At the primitive level, each elementary operation in a realization must itself respect the symmetry.
For LOCC, Proposition~\ref{prop:primitive-map-equivalence} and Theorem~\ref{thm:primitive-conversion-rate} establish the corresponding implementation and rate results under the stated assumptions on ancillary systems and classical communication.
In particular, the globally symmetric setting requires a distinction between freely available shared invariant reference states and only local invariant ancillary preparation.
For thermal operations, Lemma~\ref{lem:invariant_covariant_TO_equivalence} establishes equality of the map-level and primitive-level operation classes, leading to the rate formula in Theorem~\ref{thm:TO_rate_primitive_level}.

For magic, Theorem~\ref{thm:magic_map_rates} allows arbitrary Clifford representations at the map level, whereas the primitive-level result in Theorem~\ref{thm:magic_Pauli_primitive_rates} concerns Pauli representations.
The latter uses symmetric Clifford extensions and joint system-memory Pauli measurements whose observables commute with the full symmetry representation.
The example generated by $SH$ shows that this measurement construction cannot be extended directly to all Clifford representations, even by adjoining Clifford-represented ancillary systems, while leaving open whether the map-level and primitive-level asymptotic conversion rates are actually different.
Theorem~\ref{thm:general-Clifford-memory} provides a complementary result by constructing stabilizer memory states with stabilizer readout and conditional control for arbitrary Clifford representations, without requiring the readout operations to be symmetric.
Determining whether a map-level and primitive-level rate separation occurs under the specified symmetric primitive set, or identifying more general sufficient conditions for equality, remains a separate question.

Our results on work extraction provide a complementary manifestation of the same finite-information principle.
At finite copy number, restricting the work-extracting unitary to be symmetric can reduce the ergotropy.
Nevertheless, Theorem~\ref{thm:symmetry_protected_ergotropy} shows that, for a fixed finite group commuting with the Hamiltonian, the asymptotic ergotropy per copy coincides with its unrestricted value.
Corollary~\ref{cor:complete_passivity} then implies that finite-group symmetry does not enlarge the set of completely passive states.
This statement includes states supported on a degenerate ground eigenspace.
The result complements previous work on symmetry-protected thermal equilibrium and work extraction in Ref.~\cite{mitsuhashi2022characterizing}.

The finite-group assumption is essential to the mechanism developed here.
For a fixed finite group, the orbit contains only finitely many alternatives, so the amount of classical information required to identify the symmetry sector does not grow with the number of input copies.
This argument does not directly extend to continuous groups.
Indeed, pure-state asymptotic conversion under compact Lie-group symmetries can exhibit nontrivial finite rates characterized by the quantum geometric tensor, as established in Ref.~\cite{yamaguchi2026quantum}.
For mixed-state conversions, the optimal rate is characterized by a family of metric-adjusted quantum geometric tensors; in general, no fixed finite subset of this family suffices for all input and output states~\cite{yamaguchi2026quantifyingsymmetrybreakingmetric}.
In that setting, increasingly precise estimation of a continuous group parameter may consume a non-negligible asymptotic resource.
An important direction for future work is therefore to understand how the geometric constraints arising for Lie-group asymmetry interact with additional restrictions such as LOCC, thermodynamic operations, and stabilizer operations.

Several finite-group questions also remain open.
The discrimination protocols constructed here were designed to establish vanishing error with sublinear copy cost rather than to optimize the error exponent.
Determining the optimal discrimination exponent under LOCC, stabilizer, or thermodynamic restrictions could lead to sharper finite-blocklength conversion bounds.
It would also be interesting to investigate regimes in which the group, its representation, or the local Hilbert-space dimension grows with the number of copies.
In such regimes, the amount of symmetry information need no longer remain $O(1)$, and the distinction between finite and continuous symmetry may become less sharp.

Another direction is to characterize more generally the resource theories for which accessible symmetry information can be used for conditional control.
Theorem~\ref{thm:rate_lower_bound_general} gives a sufficient criterion for preserving an achievable rate under an additional finite-group constraint.
Failure of this criterion alone does not imply a reduction of the conversion rate.
A more complete characterization would require distinguishing limitations of the discrimination-and-control construction from restrictions that apply to all conversion protocols.

In summary, finite-group symmetry constraints exhibit a common structure across entanglement, magic, and thermodynamic resource theories.
Theorems~\ref{thm:LOCC_rate_map_level}, \ref{thm:magic_map_rates}, \ref{thm:GP_conversion_rate}, \ref{thm:TCGP_conversion_rate}, and \ref{thm:covariant_TO_conversion_rate} determine when the corresponding map-level asymptotic conversion rates are preserved, while Theorems~\ref{thm:primitive-conversion-rate}, \ref{thm:magic_Pauli_primitive_rates}, and~\ref{thm:TO_rate_primitive_level} establish primitive-level results under their respective assumptions.
Theorem~\ref{thm:symmetry_protected_ergotropy} extends the analysis to asymptotic work extraction.
These results identify broad settings in which symmetry information can be obtained and used at a sublinear copy cost, while separating the exact compatibility conditions from the additional requirements imposed by restricted physical implementations.

\section*{Acknowledgements}
The authors wish to thank 
Kaito Watanabe, 
Ryota Matsuda, 
Koji Yamaguchi, 
Yui Kuramochi, 
and Tomohiro Shitara 
for insightful discussions. 
Y.M. was supported by 
the RIKEN SPDR Program and 
the Hakubi projects of RIKEN.
H.T. was supported by 
JSPS Grants-in-Aid for Scientific Research No. JP25K00924, 
MEXT KAKENHI Grant-in-Aid for Transformative Research Areas B “Quantum Energy Innovation” Grant Numbers 24H00830 and 24H00831, 
JST FOREST No. JPMJFR2365, 
JST MOONSHOT No. JPMJMS256E and 
Royal Society International Collaboration Awards 2025 Flexigrant number ICA/R2/252240.

\bibliography{bib.bib}

@misc{yamaguchi2026quantifyingsymmetrybreakingmetric,
      title={Quantifying Symmetry Breaking with Metric Adjusted Quantum Geometric Tensors}, 
      author={Koji Yamaguchi and Hiroyasu Tajima},
      year={2026},
      eprint={2609.11926},
      archivePrefix={arXiv},
      primaryClass={quant-ph},
      url={https://arxiv.org/abs/2609.11926}, 
}

@article{tajima_coherence_2018,
  title = {Uncertainty Relations in Implementation of Unitary Operations},
  author = {Tajima, Hiroyasu and Shiraishi, Naoto and Saito, Keiji},
  journal = {Phys. Rev. Lett.},
  volume = {121},
  issue = {11},
  pages = {110403},
  numpages = {6},
  year = {2018},
  month = {Sep},
  publisher = {American Physical Society},
  doi = {10.1103/PhysRevLett.121.110403},
  url = {https://link.aps.org/doi/10.1103/PhysRevLett.121.110403}
}

@article{tajima_coherence_2020,
	title = {Coherence cost for violating conservation laws},
	volume = {2},
	url = {https://link.aps.org/doi/10.1103/PhysRevResearch.2.043374},
	doi = {10.1103/PhysRevResearch.2.043374},
	number = {4},
	urldate = {2022-01-05},
	journal = {Physical Review Research},
	author = {Tajima, Hiroyasu and Shiraishi, Naoto and Saito, Keiji},
	year = {2020},
	pages = {043374},
}

@misc{tajima2025universaltradeoffstructuresymmetry,
      title={Universal trade-off structure between symmetry, irreversibility, and quantum coherence in quantum processes}, 
      author={Hiroyasu Tajima and Ryuji Takagi and Yui Kuramochi},
      year={2025},
      eprint={2206.11086},
      archivePrefix={arXiv},
      primaryClass={quant-ph},
      url={https://arxiv.org/abs/2206.11086}, 
}

@article{Tajima2025PRLa,
  title = {Gibbs-Preserving Operations Requiring Infinite Amount of Quantum Coherence},
  author = {Tajima, Hiroyasu and Takagi, Ryuji},
  journal = {Phys. Rev. Lett.},
  volume = {134},
  issue = {17},
  pages = {170201},
  numpages = {7},
  year = {2025},
  month = {Apr},
  publisher = {American Physical Society},
  doi = {10.1103/PhysRevLett.134.170201},
  url = {https://link.aps.org/doi/10.1103/PhysRevLett.134.170201}
}

@article{chitambar2019quantum,
  title = {Quantum resource theories},
  author = {Chitambar, Eric and Gour, Gilad},
  journal = {Reviews of Modern Physics},
  volume = {91},
  number = {2},
  pages = {025001},
  year = {2019},
  doi = {10.1103/RevModPhys.91.025001}
}

@article{bennett1996concentrating,
  title = {Concentrating partial entanglement by local operations},
  author = {Bennett, Charles H. and Bernstein, Herbert J. and Popescu, Sandu and Schumacher, Benjamin},
  journal = {Physical Review A},
  volume = {53},
  number = {4},
  pages = {2046--2052},
  year = {1996},
  doi = {10.1103/PhysRevA.53.2046}
}

@article{brandao2013resource,
  title = {Resource theory of quantum states out of thermal equilibrium},
  author = {Brand{\~a}o, Fernando G. S. L. and Horodecki, Micha{\l} and Oppenheim, Jonathan and Renes, Joseph M. and Spekkens, Robert W.},
  journal = {Physical Review Letters},
  volume = {111},
  number = {25},
  pages = {250404},
  year = {2013},
  doi = {10.1103/PhysRevLett.111.250404}
}

@article{veitch2014resource,
  title = {The resource theory of stabilizer computation},
  author = {Veitch, Victor and Mousavian, S. A. Hamed and Gottesman, Daniel and Emerson, Joseph},
  journal = {New Journal of Physics},
  volume = {16},
  number = {1},
  pages = {013009},
  year = {2014},
  doi = {10.1088/1367-2630/16/1/013009}
}

@article{bartlett2007reference,
  title = {Reference frames, superselection rules, and quantum information},
  author = {Bartlett, Stephen D. and Rudolph, Terry and Spekkens, Robert W.},
  journal = {Reviews of Modern Physics},
  volume = {79},
  number = {2},
  pages = {555--609},
  year = {2007},
  doi = {10.1103/RevModPhys.79.555}
}

@article{gour2008resource,
  title = {The resource theory of quantum reference frames: manipulations and monotones},
  author = {Gour, Gilad and Spekkens, Robert W.},
  journal = {New Journal of Physics},
  volume = {10},
  number = {3},
  pages = {033023},
  year = {2008},
  doi = {10.1088/1367-2630/10/3/033023}
}

@article{faist2015gibbs,
  title = {Gibbs-preserving maps outperform thermal operations in the quantum regime},
  author = {Faist, Philippe and Oppenheim, Jonathan and Renner, Renato},
  journal = {New Journal of Physics},
  volume = {17},
  number = {4},
  pages = {043003},
  year = {2015},
  doi = {10.1088/1367-2630/17/4/043003}
}

@article{lostaglio2015description,
  title = {Description of quantum coherence in thermodynamic processes requires constraints beyond free energy},
  author = {Lostaglio, Matteo and Jennings, David and Rudolph, Terry},
  journal = {Nature Communications},
  volume = {6},
  pages = {6383},
  year = {2015},
  doi = {10.1038/ncomms7383}
}

@article{shitara2025iid,
  title = {The i.i.d. state convertibility in the resource theory of asymmetry for finite groups and Lie groups},
  author = {Shitara, Tomohiro and Mitsuhashi, Yosuke and Tajima, Hiroyasu},
  journal = {arXiv preprint arXiv:2312.15758},
  year = {2025},
  eprint = {2312.15758},
  archivePrefix = {arXiv},
  primaryClass = {quant-ph}
}

@article{yamaguchi2026quantum,
  title = {Quantum geometric tensor determines the pure-state i.i.d. conversion rate in the resource theory of asymmetry for any compact Lie group},
  author = {Yamaguchi, Koji and Mitsuhashi, Yosuke and Shitara, Tomohiro and Tajima, Hiroyasu},
  journal = {Physical Review X},
  volume = {16},
  number = {3},
  pages = {031028},
  year = {2026},
  doi = {10.1103/qqf6-x85b}
}

@article{li2016discriminating,
  title = {Discriminating quantum states: The multiple Chernoff distance},
  author = {Li, Ke},
  journal = {The Annals of Statistics},
  volume = {44},
  number = {4},
  pages = {1661--1679},
  year = {2016},
  doi = {10.1214/16-AOS1436}
}

@article{pusz1978passive,
  title = {Passive states and KMS states for general quantum systems},
  author = {Pusz, Wies{\l}aw and Woronowicz, Stanis{\l}aw L.},
  journal = {Communications in Mathematical Physics},
  volume = {58},
  number = {3},
  pages = {273--290},
  year = {1978},
  doi = {10.1007/BF01614224}
}

@article{alicki2013entanglement,
  title = {Entanglement boost for extractable work from ensembles of quantum batteries},
  author = {Alicki, Robert and Fannes, Mark},
  journal = {Physical Review E},
  volume = {87},
  number = {4},
  pages = {042123},
  year = {2013},
  doi = {10.1103/PhysRevE.87.042123}
}

@article{mitsuhashi2022characterizing,
  title = {Characterizing symmetry-protected thermal equilibrium by work extraction},
  author = {Mitsuhashi, Yosuke and Kaneko, Kazuya and Sagawa, Takahiro},
  journal = {Physical Review X},
  volume = {12},
  number = {2},
  pages = {021013},
  year = {2022},
  doi = {10.1103/PhysRevX.12.021013}
}

@article{mitsuhashi2023clifford,
  title = {Clifford group and unitary designs under symmetry},
  author = {Mitsuhashi, Yosuke and Yoshioka, Nobuyuki},
  journal = {PRX Quantum},
  volume = {4},
  number = {4},
  pages = {040331},
  year = {2023},
  doi = {10.1103/PRXQuantum.4.040331}
}

@article{seddon2019quantifying,
  title = {Quantifying magic for multi-qubit operations},
  author = {Seddon, James R. and Campbell, Earl T.},
  journal = {Proceedings of the Royal Society A},
  volume = {475},
  number = {2227},
  pages = {20190251},
  year = {2019},
  doi = {10.1098/rspa.2019.0251}
}

@article{gour2020optimal,
  title = {Optimal extensions of resource measures and their applications},
  author = {Gour, Gilad and Tomamichel, Marco},
  journal = {Phys. Rev. A},
  volume = {102},
  issue = {6},
  pages = {062401},
  numpages = {13},
  year = {2020},
  month = {Dec},
  publisher = {American Physical Society},
  doi = {10.1103/PhysRevA.102.062401},
  url = {https://link.aps.org/doi/10.1103/PhysRevA.102.062401}
}

@article{coecke2016mathematical,
  author = {Bob Coecke and Tobias Fritz and Robert W. Spekkens},
  title = {A mathematical theory of resources},
  journal = {Information and Computation},
  volume = {250},
  pages = {59--86},
  year = {2016},
  doi = {10.1016/j.ic.2016.02.008}
}

@article{horodecki2009quantum,
  author = {Ryszard Horodecki and Pawe{\l} Horodecki and Micha{\l} Horodecki and Karol Horodecki},
  title = {Quantum entanglement},
  journal = {Reviews of Modern Physics},
  volume = {81},
  pages = {865--942},
  year = {2009},
  doi = {10.1103/RevModPhys.81.865}
}

@article{bennett1996mixed-state,
  author = {Charles H. Bennett and David P. DiVincenzo and John A. Smolin and William K. Wootters},
  title = {Mixed-state entanglement and quantum error correction},
  journal = {Physical Review A},
  volume = {54},
  pages = {3824--3851},
  year = {1996},
  doi = {10.1103/PhysRevA.54.3824}
}

@article{nielsen1999conditions,
  author = {Michael A. Nielsen},
  title = {Conditions for a Class of Entanglement Transformations},
  journal = {Physical Review Letters},
  volume = {83},
  pages = {436--439},
  year = {1999},
  doi = {10.1103/PhysRevLett.83.436}
}

@article{chitambar2014everything,
  author = {Eric Chitambar and Debbie Leung and Laura Man{\v c}inska and Maris Ozols and Andreas Winter},
  title = {Everything You Always Wanted to Know About LOCC (But Were Afraid to Ask)},
  journal = {Communications in Mathematical Physics},
  volume = {328},
  pages = {303--326},
  year = {2014},
  doi = {10.1007/s00220-014-1953-9}
}

@article{janzing2000thermodynamic,
  author = {Dominik Janzing and Pawel Wocjan and Robert Zeier and Rubino Geiss and Thomas Beth},
  title = {Thermodynamic Cost of Reliability and Low Temperatures: Tightening Landauer's Principle and the Second Law},
  journal = {International Journal of Theoretical Physics},
  volume = {39},
  pages = {2717--2753},
  year = {2000},
  doi = {10.1023/A:1026422630734}
}

@article{horodecki2013fundamental,
  author = {Micha{\l} Horodecki and Jonathan Oppenheim},
  title = {Fundamental limitations for quantum and nanoscale thermodynamics},
  journal = {Nature Communications},
  volume = {4},
  pages = {2059},
  year = {2013},
  doi = {10.1038/ncomms3059}
}

@article{brandao2015second,
  author = {Fernando G. S. L. Brand{\~a}o and Micha{\l} Horodecki and Nelly Ng and Jonathan Oppenheim and Stephanie Wehner},
  title = {The second laws of quantum thermodynamics},
  journal = {Proceedings of the National Academy of Sciences},
  volume = {112},
  pages = {3275--3279},
  year = {2015},
  doi = {10.1073/pnas.1411728112}
}

@article{goold2016role,
  author = {John Goold and Marcus Huber and Arnau Riera and L{\'i}dia del Rio and Paul Skrzypczyk},
  title = {The role of quantum information in thermodynamics---a topical review},
  journal = {Journal of Physics A: Mathematical and Theoretical},
  volume = {49},
  pages = {143001},
  year = {2016},
  doi = {10.1088/1751-8113/49/14/143001}
}

@article{lostaglio2019introductory,
  author = {Matteo Lostaglio},
  title = {An introductory review of the resource theory approach to thermodynamics},
  journal = {Reports on Progress in Physics},
  volume = {82},
  pages = {114001},
  year = {2019},
  doi = {10.1088/1361-6633/ab46e5}
}

@article{bravyi2005universal,
  author = {Sergey Bravyi and Alexei Kitaev},
  title = {Universal quantum computation with ideal Clifford gates and noisy ancillas},
  journal = {Physical Review A},
  volume = {71},
  pages = {022316},
  year = {2005},
  doi = {10.1103/PhysRevA.71.022316}
}

@article{veitch2012negative,
  author = {Victor Veitch and Christopher Ferrie and David Gross and Joseph Emerson},
  title = {Negative quasi-probability as a resource for quantum computation},
  journal = {New Journal of Physics},
  volume = {14},
  pages = {113011},
  year = {2012},
  doi = {10.1088/1367-2630/14/11/113011}
}

@article{howard2014contextuality,
  author = {Mark Howard and Joel Wallman and Victor Veitch and Joseph Emerson},
  title = {Contextuality supplies the `magic' for quantum computation},
  journal = {Nature},
  volume = {510},
  pages = {351--355},
  year = {2014},
  doi = {10.1038/nature13460}
}

@article{ahmadi2018quantification,
  author = {Mehdi Ahmadi and Hoan Bui Dang and Gilad Gour and Barry C. Sanders},
  title = {Quantification and manipulation of magic states},
  journal = {Physical Review A},
  volume = {97},
  pages = {062332},
  year = {2018},
  doi = {10.1103/PhysRevA.97.062332}
}

@article{heinrich2019robustness,
  author = {Markus Heinrich and David Gross},
  title = {Robustness of Magic and Symmetries of the Stabiliser Polytope},
  journal = {Quantum},
  volume = {3},
  pages = {132},
  year = {2019},
  doi = {10.22331/q-2019-04-08-132}
}

@article{sparaciari2020first,
  author = {Carlo Sparaciari and L{\'i}dia del Rio and Carlo Maria Scandolo and Philippe Faist and Jonathan Oppenheim},
  title = {The first law of general quantum resource theories},
  journal = {Quantum},
  volume = {4},
  pages = {259},
  year = {2020},
  doi = {10.22331/q-2020-04-30-259}
}

@article{guryanova2016thermodynamics,
  author = {Yelena Guryanova and Sandu Popescu and Anthony J. Short and Ralph Silva and Paul Skrzypczyk},
  title = {Thermodynamics of quantum systems with multiple conserved quantities},
  journal = {Nature Communications},
  volume = {7},
  pages = {12049},
  year = {2016},
  doi = {10.1038/ncomms12049}
}

@article{yungerhalpern2016microcanonical,
  author = {Nicole {Yunger Halpern} and Philippe Faist and Jonathan Oppenheim and Andreas Winter},
  title = {Microcanonical and resource-theoretic derivations of the thermal state of a quantum system with noncommuting charges},
  journal = {Nature Communications},
  volume = {7},
  pages = {12051},
  year = {2016},
  doi = {10.1038/ncomms12051}
}

@article{marvian2013theory,
  author = {Iman Marvian and Robert W. Spekkens},
  title = {The theory of manipulations of pure state asymmetry: I. Basic tools, equivalence classes and single copy transformations},
  journal = {New Journal of Physics},
  volume = {15},
  pages = {033001},
  year = {2013},
  doi = {10.1088/1367-2630/15/3/033001}
}

@article{gour2009measuring,
  author = {Gilad Gour and Iman Marvian and Robert W. Spekkens},
  title = {Measuring the quality of a quantum reference frame: The relative entropy of frameness},
  journal = {Physical Review A},
  volume = {80},
  pages = {012307},
  year = {2009},
  doi = {10.1103/PhysRevA.80.012307}
}

@article{marvian2014extending,
  author = {Iman Marvian and Robert W. Spekkens},
  title = {Extending Noether's theorem by quantifying the asymmetry of quantum states},
  journal = {Nature Communications},
  volume = {5},
  pages = {3821},
  year = {2014},
  doi = {10.1038/ncomms4821}
}

@article{marvian2014modes,
  author = {Iman Marvian and Robert W. Spekkens},
  title = {Modes of asymmetry: The application of harmonic analysis to symmetric quantum dynamics and quantum reference frames},
  journal = {Physical Review A},
  volume = {90},
  pages = {062110},
  year = {2014},
  doi = {10.1103/PhysRevA.90.062110}
}

@article{verstraete2003quantum,
  author = {Frank Verstraete and J. Ignacio Cirac},
  title = {Quantum Nonlocality in the Presence of Superselection Rules and Data Hiding Protocols},
  journal = {Physical Review Letters},
  volume = {91},
  pages = {010404},
  year = {2003},
  doi = {10.1103/PhysRevLett.91.010404}
}

@article{bartlett2003entanglement,
  author = {Stephen D. Bartlett and Howard M. Wiseman},
  title = {Entanglement Constrained by Superselection Rules},
  journal = {Physical Review Letters},
  volume = {91},
  pages = {097903},
  year = {2003},
  doi = {10.1103/PhysRevLett.91.097903}
}

@article{schuch2004nonlocal,
  author = {Norbert Schuch and Frank Verstraete and J. Ignacio Cirac},
  title = {Nonlocal Resources in the Presence of Superselection Rules},
  journal = {Physical Review Letters},
  volume = {92},
  pages = {087904},
  year = {2004},
  doi = {10.1103/PhysRevLett.92.087904}
}

@article{schuch2004quantum,
  author = {Norbert Schuch and Frank Verstraete and J. Ignacio Cirac},
  title = {Quantum entanglement theory in the presence of superselection rules},
  journal = {Physical Review A},
  volume = {70},
  pages = {042310},
  year = {2004},
  doi = {10.1103/PhysRevA.70.042310}
}

@article{lostaglio2015quantum,
  author = {Matteo Lostaglio and Kamil Korzekwa and David Jennings and Terry Rudolph},
  title = {Quantum Coherence, Time-Translation Symmetry, and Thermodynamics},
  journal = {Physical Review X},
  volume = {5},
  pages = {021001},
  year = {2015},
  doi = {10.1103/PhysRevX.5.021001}
}

@article{jonathan1999entanglement-assisted,
  author = {Daniel Jonathan and Martin B. Plenio},
  title = {Entanglement-Assisted Local Manipulation of Pure Quantum States},
  journal = {Physical Review Letters},
  volume = {83},
  pages = {3566--3569},
  year = {1999},
  doi = {10.1103/PhysRevLett.83.3566}
}

@article{aberg2014catalytic,
  author = {Johan {\AA}berg},
  title = {Catalytic Coherence},
  journal = {Physical Review Letters},
  volume = {113},
  pages = {150402},
  year = {2014},
  doi = {10.1103/PhysRevLett.113.150402}
}

@article{marvian2019no-broadcasting,
  author = {Iman Marvian and Robert W. Spekkens},
  title = {No-Broadcasting Theorem for Quantum Asymmetry and Coherence and a Trade-off Relation for Approximate Broadcasting},
  journal = {Physical Review Letters},
  volume = {123},
  pages = {020404},
  year = {2019},
  doi = {10.1103/PhysRevLett.123.020404}
}

@article{lostaglio2019coherence,
  author = {Matteo Lostaglio and Markus P. M{\"u}ller},
  title = {Coherence and Asymmetry Cannot be Broadcast},
  journal = {Physical Review Letters},
  volume = {123},
  pages = {020403},
  year = {2019},
  doi = {10.1103/PhysRevLett.123.020403}
}

@article{audenaert2007discriminating,
  author = {Koenraad M. R. Audenaert and John Calsamiglia and Ramon Mu{\~n}oz-Tapia and Emilio Bagan and Lluis Masanes and Antonio Ac{\'i}n and Frank Verstraete},
  title = {Discriminating States: The Quantum Chernoff Bound},
  journal = {Physical Review Letters},
  volume = {98},
  pages = {160501},
  year = {2007},
  doi = {10.1103/PhysRevLett.98.160501}
}

@article{nussbaum2009chernoff,
  author = {Michael Nussbaum and Arleta Szko{\l}a},
  title = {The Chernoff lower bound for symmetric quantum hypothesis testing},
  journal = {The Annals of Statistics},
  volume = {37},
  pages = {1040--1057},
  year = {2009},
  doi = {10.1214/08-AOS593}
}

@article{bennett1999quantum,
  author = {Charles H. Bennett and David P. DiVincenzo and Christopher A. Fuchs and Tal Mor and Eric Rains and Peter W. Shor and John A. Smolin and William K. Wootters},
  title = {Quantum nonlocality without entanglement},
  journal = {Physical Review A},
  volume = {59},
  pages = {1070--1091},
  year = {1999},
  doi = {10.1103/PhysRevA.59.1070}
}

@article{walgate2000local,
  author = {Jonathan Walgate and Anthony J. Short and Lucien Hardy and Vlatko Vedral},
  title = {Local Distinguishability of Multipartite Orthogonal Quantum States},
  journal = {Physical Review Letters},
  volume = {85},
  pages = {4972--4975},
  year = {2000},
  doi = {10.1103/PhysRevLett.85.4972}
}

@article{hoeffding1963probability,
  author = {Wassily Hoeffding},
  title = {Probability Inequalities for Sums of Bounded Random Variables},
  journal = {Journal of the American Statistical Association},
  volume = {58},
  pages = {13--30},
  year = {1963},
  doi = {10.1080/01621459.1963.10500830}
}

@article{lenard1978thermodynamical,
  author = {A. Lenard},
  title = {Thermodynamical proof of the Gibbs formula for elementary quantum systems},
  journal = {Journal of Statistical Physics},
  volume = {19},
  pages = {575--586},
  year = {1978},
  doi = {10.1007/BF01011769}
}

@article{allahverdyan2004maximal,
  author = {A. E. Allahverdyan and R. Balian and Th. M. Nieuwenhuizen},
  title = {Maximal work extraction from finite quantum systems},
  journal = {Europhysics Letters},
  volume = {67},
  pages = {565--571},
  year = {2004},
  doi = {10.1209/epl/i2004-10101-2}
}

\appendix

\section{Distinguishability measures}

In this appendix, we collect elementary properties of the divergence measures used in the general framework.
These results are used to convert the operator inequalities arising from the discrimination-and-control construction into bounds on the chosen error measure.

\begin{lemma} \label{SMlem:distinguishability_orthogonal_general}
    Let $\mathcal{H}$ and $\mathcal{H}'$ be finite-dimensional Hilbert spaces, $\Delta$ be a distinguishability measure, $\rho, \sigma\in\mathcal{S}(\mathcal{H})$ be orthogonal to each other, and $\rho', \sigma'\in\mathcal{S}(\mathcal{H}')$. 
    Then, for any $x\in [0, \infty]$, 
    \begin{align}
        \Delta(\rho, e^{-x}\rho+(1-e^{-x})\sigma)\geq \Delta(\rho', e^{-x}\rho'+(1-e^{-x})\sigma'). 
    \end{align}
\end{lemma}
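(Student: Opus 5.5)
The plan is to build a single CPTP map $\mathcal{E}:\mathcal{L}(\mathcal{H})\to\mathcal{L}(\mathcal{H}')$ that sends $\rho\mapsto\rho'$ and $\sigma\mapsto\sigma'$, and then invoke the monotonicity in Definition~\ref{SMdef:general_distinguishability_measure}. Since $\rho$ and $\sigma$ are orthogonal, their supports are orthogonal. Let $\Pi_\rho$ be the projection onto the support of $\rho$, and set $\Pi':=I-\Pi_\rho$. Then $\Pi'\geq\Pi_\sigma$, so $\mathrm{tr}(\Pi_\rho\sigma)=0$ and $\mathrm{tr}(\Pi'\rho)=0$. We use the measure-and-prepare channel
\begin{align}
    \mathcal{E}(L):=\mathrm{tr}(\Pi_\rho L)\rho'+\mathrm{tr}(\Pi' L)\sigma'.
\end{align}
It is CPTP because $\{\Pi_\rho,\Pi'\}$ is a POVM and $\rho',\sigma'$ are states.

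Next we check the action on the relevant states. Since $\mathrm{tr}(\Pi_\rho\rho)=1$ and $\mathrm{tr}(\Pi'\rho)=0$, we get $\mathcal{E}(\rho)=\rho'$. Since $\mathrm{tr}(\Pi_\rho\sigma)=0$ and $\mathrm{tr}(\Pi'\sigma)=1$, we get $\mathcal{E}(\sigma)=\sigma'$. By linearity, for every $x\in[0,\infty]$,
\begin{align}
    \mathcal{E}\left(e^{-x}\rho+(1-e^{-x})\sigma\right)=e^{-x}\rho'+(1-e^{-x})\sigma',
\end{align}
with the convention $e^{-\infty}=0$. Applying the data-processing property of $\Delta$ to the pair $(\rho,\,e^{-x}\rho+(1-e^{-x})\sigma)$ then gives the claimed inequality directly.

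The main care point is the definition of $\Pi'$. If $\Pi'$ is taken to be the support projection of $\sigma$ rather than $I-\Pi_\rho$, the pair does not sum to the identity whenever $\mathrm{supp}\rho\oplus\mathrm{supp}\sigma\neq\mathcal{H}$, and $\mathcal{E}$ fails to be trace preserving. Using the complement $I-\Pi_\rho$ avoids this problem.

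Applying the lemma twice, once with the roles of the two orthogonal pairs exchanged (in particular taking $\rho',\sigma'$ orthogonal as well), yields equality. This is what justifies Corollary~\ref{SMcor:distinguishability_state_independence}, the independence of $f_\Delta$ from the choice of orthogonal pair.
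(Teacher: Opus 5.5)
Your proposal is correct and matches the paper's proof: the paper uses exactly the measure-and-prepare channel $\mathcal{E}(L)=\mathrm{tr}(\Pi L)\rho'+\mathrm{tr}((I-\Pi)L)\sigma'$, with $\Pi$ the support projection of $\rho$. It checks $\mathcal{E}(\rho)=\rho'$ and $\mathcal{E}(\sigma)=\sigma'$, then concludes by the monotonicity of $\Delta$ (your use of $I-\Pi_\rho$ rather than $\Pi_\sigma$ to ensure trace preservation is the same choice the paper makes).
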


\begin{proof}
    Since $\rho$ and $\sigma$ are orthogonal to each other, the projection operator $\Pi$ onto the support of $\rho$ satisfies 
    \begin{align}
        \mathrm{tr}(\Pi \rho)=1, \\
        \mathrm{tr}(\Pi \sigma)=0. 
    \end{align}
    We define a CPTP map $\mathcal{E}$ by 
    \begin{align}
        \mathcal{E}(L)
        :=\mathrm{tr}(\Pi L)\rho'
        +\mathrm{tr}((I-\Pi) L)\sigma'\ \forall L\in\mathcal{L}(\mathcal{H}). 
    \end{align}
    Then, $\mathcal{E}$ satisfies 
    \begin{align}
        \mathcal{E}(\rho)=\rho', \\
        \mathcal{E}(\sigma)=\sigma'. 
    \end{align}
    Therefore, by the definition of the distinguishability measure, we get 
    \begin{align}
        \Delta(\rho, e^{-x}\rho+(1-e^{-x})\sigma)\geq \Delta(\rho', e^{-x}\rho'+(1-e^{-x})\sigma') 
    \end{align}
    for all $x\in [0, \infty]$. 
\end{proof}

When both pairs of states $(\rho, \sigma)$ and $(\rho', \sigma')$ are orthogonal to each other, applying Lemma~\ref{SMlem:distinguishability_orthogonal_general} directly leads to the following corollary.

\begin{corollary} \label{SMcor:distinguishability_state_independence}
    Let $\mathcal{H}$ and $\mathcal{H}'$ be finite-dimensional Hilbert spaces, $\Delta$ be a distinguishability measure, $\rho, \sigma\in\mathcal{S}(\mathcal{H}')$ be orthogonal to each other, and $\rho', \sigma'\in\mathcal{S}(\mathcal{H})$ be orthogonal to each other. 
    Then, for any $x\in [0, \infty]$, 
    \begin{align}
        \Delta(\rho, e^{-x}\rho+(1-e^{-x})\sigma)= \Delta(\rho', e^{-x}\rho'+(1-e^{-x})\sigma'). 
    \end{align}
\end{corollary}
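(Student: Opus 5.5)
The corollary follows from applying Lemma~\ref{SMlem:distinguishability_orthogonal_general} twice, once in each direction.

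First, take the orthogonal pair $(\rho,\sigma)$ on $\mathcal{H}'$ as the input pair and $(\rho',\sigma')$ on $\mathcal{H}$ as the target pair. The lemma requires only that the input pair be orthogonal, and it places no condition on the target pair. It then gives
\begin{align}
    \Delta(\rho, e^{-x}\rho+(1-e^{-x})\sigma)\geq \Delta(\rho', e^{-x}\rho'+(1-e^{-x})\sigma').
\end{align}

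Second, swap the roles. Since $(\rho',\sigma')$ is also orthogonal by assumption, it is an admissible input pair, and we send it to $(\rho,\sigma)$. This yields the reverse inequality. Together the two inequalities give equality for every $x\in[0,\infty]$, including $x=\infty$, where the mixture reduces to $\sigma$ or $\sigma'$.

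The only point to check is that the measure-and-prepare map from the lemma's proof is CPTP between the correct spaces. Here it maps $\mathcal{L}(\mathcal{H}')\to\mathcal{L}(\mathcal{H})$ in the first application and $\mathcal{L}(\mathcal{H})\to\mathcal{L}(\mathcal{H}')$ in the second. Both spaces are finite-dimensional, so Definition~\ref{SMdef:general_distinguishability_measure} applies to both maps, and the lemma was already stated for arbitrary pairs of finite-dimensional spaces. No real obstacle is expected.

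The equality shows that the value depends only on $x$, which confirms that $f_\Delta$ in Definition~\ref{SMdef:orthogonal_distinguishability_function} is well defined.
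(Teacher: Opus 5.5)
Your proof is correct and matches the paper's argument: the paper obtains the corollary by applying Lemma~\ref{SMlem:distinguishability_orthogonal_general} in both directions, which is possible because both pairs are orthogonal. Your additional checks, covering the $x=\infty$ case and the fact that the measure-and-prepare map goes between finite-dimensional spaces in each direction, are consistent with how the lemma is stated.
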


In the following two lemmas, we see two basic properties of the orthogonal distinguishability function. 
The first is the monotonicity. 
We note that the following lemma only shows that $f_\Delta$ is an increasing function, and $f_\Delta$ is not always a strictly increasing function.

\begin{lemma} \label{SMlem:orthogonal_distingishability_function_monotonicity}
    Let $\Delta$ be a distinguishability measure, and $f_\Delta$ be its orthogonal distinguishability function. 
    Then, $f_\Delta$ is an increasing function. 
\end{lemma}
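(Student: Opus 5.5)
To show that $f_\Delta$ is increasing, we take $0\le x\le y\le\infty$ and aim to prove $f_\Delta(x)\le f_\Delta(y)$ by exhibiting a single CPTP map that sends the pair defining $f_\Delta(y)$ to the pair defining $f_\Delta(x)$. Monotonicity of $\Delta$ (Definition~\ref{SMdef:general_distinguishability_measure}) then gives the claim. By Corollary~\ref{SMcor:distinguishability_state_independence}, we may choose any orthogonal pair. We take $\rho=|0\rangle\langle0|$ and $\sigma=|1\rangle\langle1|$ on $\mathbb{C}^2$, so that
\begin{align}
    f_\Delta(x)=\Delta\left(|0\rangle\langle0|,\ e^{-x}|0\rangle\langle0|+(1-e^{-x})|1\rangle\langle1|\right).
\end{align}

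We use a classical channel that fixes $|0\rangle\langle0|$ and maps $|1\rangle\langle1|$ to a mixture $q|0\rangle\langle0|+(1-q)|1\rangle\langle1|$. Explicitly,
\begin{align}
    \mathcal{E}(L):=\left(\langle0|L|0\rangle+q\langle1|L|1\rangle\right)|0\rangle\langle0|+(1-q)\langle1|L|1\rangle\,|1\rangle\langle1|,
\end{align}
with $q\in[0,1]$. This map is CPTP because it is a measure-and-prepare map. It leaves the first argument $|0\rangle\langle0|$ unchanged. It sends $e^{-y}|0\rangle\langle0|+(1-e^{-y})|1\rangle\langle1|$ to a state whose weight on $|0\rangle$ is $e^{-y}+q(1-e^{-y})$.

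We choose $q$ so that this weight equals $e^{-x}$, i.e. $q=(e^{-x}-e^{-y})/(1-e^{-y})$. Since $x\le y$, we have $e^{-x}\ge e^{-y}$, so $q\ge0$, and $q\le1$ because $e^{-x}\le1$. The only degenerate case is $y=0$, where $1-e^{-y}=0$; then $x=0$ as well and there is nothing to prove. The case $y=\infty$ is handled by the same formula with $e^{-y}=0$.

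Applying monotonicity of $\Delta$ under $\mathcal{E}$ yields $f_\Delta(y)\ge f_\Delta(x)$. The only step needing care is checking that $q$ lies in $[0,1]$ together with the boundary values $x,y\in\{0,\infty\}$; everything else is immediate. The argument gives only weak monotonicity, which is consistent with the remark preceding the lemma that $f_\Delta$ need not be strictly increasing.
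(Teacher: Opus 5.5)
Your proposal is correct and is essentially the paper's argument: a single channel fixes $\rho$ and sends $e^{-y}\rho+(1-e^{-y})\sigma$ to $e^{-x}\rho+(1-e^{-x})\sigma$, and your weight $1-q$ equals the paper's $\frac{1-e^{-x}}{1-e^{-y}}$. The only difference is that the paper applies the replacement channel $L\mapsto p L+(1-p)\mathrm{tr}(L)\rho$ (with $p=\frac{1-e^{-x}}{1-e^{-y}}$) directly to an arbitrary orthogonal pair, whereas you first reduce to a qubit pair via the state-independence corollary and use a measure-and-prepare channel.
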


\begin{proof}
    We take arbitrary $x, y$ such that $0\leq x< y\leq\infty$, and arbitrary orthogonal states $\rho$ and $\sigma$. 
    We define a CPTP map $\mathcal{E}$ by 
    \begin{align}
        \mathcal{E}(L):=\frac{1-e^{-x}}{1-e^{-y}}L+\left(1-\frac{1-e^{-x}}{1-e^{-y}}\right)\mathrm{tr}(L)\rho. 
    \end{align}
    Then, the monotonicity of $\Delta$ implies 
    \begin{align}
        f_\Delta(y) 
        =\Delta(\rho, e^{-y}\rho+(1-e^{-y})\sigma) 
        \geq \Delta(\mathcal{E}(\rho), \mathcal{E}(e^{-y}\rho+(1-e^{-y})\sigma)) 
        =\Delta(\rho, e^{-x}\rho+(1-e^{-x})\sigma) 
        =f_\Delta(x), 
    \end{align}
    which means that $f_\Delta$ is an increasing function. 
\end{proof}

As the second property of $f_\Delta$, we show that $\Delta(\rho, \sigma)$ can be upper and lower bounded by $f_\Delta(D^\mathrm{max}(\rho\|\sigma))$ and $f_\Delta(D^\mathrm{min}(\rho\|\sigma))$, respectively. 
The following bound is closely related to the extremal role of the min- and max-relative entropies established in Ref.~\cite{gour2020optimal}.
Here, we use only the data-processing property of a general distinguishability measure and express the resulting bounds through its orthogonal distinguishability function.

\begin{lemma} \label{lem:distingishability_evaluation_Dmax_Dmin} 
    Let $\Delta$ be a distinguishability measure, $f_\Delta$ be its orthogonal distinguishability function, and $\rho, \sigma\in\mathcal{S}(\mathcal{H})$. 
    Then, 
    \begin{align}
        f_\Delta(D^\mathrm{min}(\rho||\sigma))
        \leq\Delta(\rho, \sigma)\leq 
        f_\Delta(D^\mathrm{max}(\rho||\sigma)). \label{SMeq:lem:distingishability_evaluation_Dmax_Dmin1}
    \end{align}
\end{lemma}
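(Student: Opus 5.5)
Both inequalities will follow from the same device: use the data-processing property of $\Delta$ to compare $(\rho,\sigma)$ with a pair of the form $(\rho_0, e^{-x}\rho_0+(1-e^{-x})\sigma_0)$, where $\rho_0\perp\sigma_0$. By Corollary~\ref{SMcor:distinguishability_state_independence}, $\Delta$ evaluated on such a pair equals $f_\Delta(x)$. Throughout, let $\rho_0,\sigma_0$ be orthogonal states on a two-dimensional space, e.g.\ $|0\rangle\langle0|$ and $|1\rangle\langle1|$.

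For the lower bound, I will use the measure-and-prepare channel
\begin{align}
    \mathcal{M}(L):=\mathrm{tr}(\Pi_\rho L)\,\rho_0+\mathrm{tr}((I-\Pi_\rho)L)\,\sigma_0 .
\end{align}
It maps $\rho$ to $\rho_0$, since $\mathrm{tr}(\Pi_\rho\rho)=1$. Writing $p:=\mathrm{tr}(\Pi_\rho\sigma)=e^{-D^\mathrm{min}(\rho\|\sigma)}$, it maps $\sigma$ to $p\rho_0+(1-p)\sigma_0$. Monotonicity of $\Delta$ then gives $\Delta(\rho,\sigma)\ge f_\Delta(-\log p)=f_\Delta(D^\mathrm{min}(\rho\|\sigma))$. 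If $p=0$, the argument $x=\infty$ is handled by the same computation, since $f_\Delta$ is defined on $[0,\infty]$.

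For the upper bound, let $\lambda:=D^\mathrm{max}(\rho\|\sigma)$. If $\lambda=\infty$ there is nothing to prove. Otherwise $\sigma\ge e^{-\lambda}\rho$, so $\omega:=(\sigma-e^{-\lambda}\rho)/(1-e^{-\lambda})$ is a state when $\lambda>0$. If $\lambda=0$, then $\rho\le\sigma$ with equal traces forces $\rho=\sigma$, and the bound reduces to $\Delta(\rho,\rho)\le f_\Delta(0)$. This holds with equality by applying data processing to a constant preparation channel in both directions. For $\lambda>0$, I will use the preparation channel
\begin{align}
    \mathcal{P}(L):=\mathrm{tr}(\Pi_0 L)\,\rho+\mathrm{tr}((I-\Pi_0)L)\,\omega ,
\end{align}
where $\Pi_0$ is the projection onto $\rho_0$. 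This channel sends $\rho_0\mapsto\rho$ and $e^{-\lambda}\rho_0+(1-e^{-\lambda})\sigma_0\mapsto e^{-\lambda}\rho+(1-e^{-\lambda})\omega=\sigma$. Data processing then yields $\Delta(\rho,\sigma)\le f_\Delta(\lambda)$.

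The infimum in the definition of $D^\mathrm{max}$ is attained in finite dimensions, because the set of $\lambda$ with $\rho\le e^\lambda\sigma$ is closed. Hence $\omega\ge0$ holds exactly at $\lambda=D^\mathrm{max}$. As a fallback, one could instead apply the argument for every $\lambda'>D^\mathrm{max}$ and let $\lambda'\downarrow D^\mathrm{max}$; but that route would need right-continuity of $f_\Delta$, which is not assumed. This attainment check is therefore the only delicate point; the rest is routine verification that both maps are CPTP and act as claimed.
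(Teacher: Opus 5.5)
Your proposal is correct and follows the paper's proof essentially step for step: the same measure-and-prepare channel onto $\Pi_\rho$ for the lower bound, and for the upper bound the same decomposition $\sigma=e^{-\lambda}\rho+(1-e^{-\lambda})\omega$ pulled back from an orthogonal pair by a preparation channel (the paper applies its orthogonal-pair lemma), with $\rho=\sigma$ treated separately. One small correction: the case $\lambda=\infty$ is not vacuous, since $f_\Delta(\infty)=\Delta(\rho_0,\sigma_0)$ can be finite (it equals $1$ for the trace distance), but your channel $\mathcal{P}$ with $\omega:=\sigma$, i.e.\ $e^{-\lambda}=0$, covers it, exactly as the paper's formula for $\sigma'$ reduces to $\sigma$ there.
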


\bigskip

When $\Delta$ is trace distance or infidelity, $f_\Delta(x):=1-e^{-x}$. 
Thus, this lemma includes the following relations. 
\begin{align}
    &1-e^{-D^\mathrm{min}(\rho\| \sigma)}
    \leq T(\rho, \sigma)
    \leq 1-e^{-D^\mathrm{max}(\rho\|\sigma)}, \\
    &1-e^{-D^\mathrm{min}(\rho\|\sigma)}
    \leq 1-F(\rho, \sigma)
    \leq 1-e^{-D^\mathrm{max}(\rho\|\sigma)}. 
\end{align}

\bigskip

\begin{proof}
    First, we show the left inequality in Eq.~\eqref{SMeq:lem:distingishability_evaluation_Dmax_Dmin1}. 
    We define a CPTP map $\mathcal{M}$ by 
    \begin{align}
        \mathcal{M}(L):=\mathrm{tr}(L\Pi_\rho)\ket{0}\bra{0}+\mathrm{tr}(L(I-\Pi_\rho))\ket{1}\bra{1}. 
    \end{align}
    Then, we have 
    \begin{align}
        \Delta(\rho, \sigma) 
        \geq &\Delta(\mathcal{M}(\rho), \mathcal{M}(\sigma)) \nonumber\\
        =&\Delta(\ket{0}\bra{0}, \mathrm{tr}(\sigma\Pi_\rho)\ket{0}\bra{0}+(1-\mathrm{tr}(\sigma\Pi_\rho))\ket{1}\bra{1}) \nonumber\\
        =&\Delta\left(\ket{0}\bra{0}, e^{-D^\mathrm{min}(\rho ||\sigma)}\ket{0}\bra{0}+\left(1-e^{-D^\mathrm{min}(\rho ||\sigma)}\right)\ket{1}\bra{1}\right) \nonumber\\
        =&f_\Delta(D^\mathrm{min}(\rho ||\sigma)). 
    \end{align}
    We note that this argument holds for the case when $D^\mathrm{min}(\rho \|\sigma)=\infty$, i.e., $\mathrm{tr}(\Pi_\rho \sigma)=0$.

    Next, we show the second inequality in Eq.~\eqref{SMeq:lem:distingishability_evaluation_Dmax_Dmin1}. 
    When $\rho=\sigma$, we can directly confirm the inequality as follows: 
    \begin{align}
        \Delta(\rho, \sigma) 
        =\Delta(\rho, \rho) 
        \leq\Delta(\ket{0}\bra{0}, \ket{0}\bra{0}) 
        =f_\Delta(0)
        =f_\Delta(D^\mathrm{max}(\rho\|\sigma)). 
    \end{align}
    In the following, we consider the case where $\rho\neq\sigma$, i.e., $D^\mathrm{max}(\rho\|\sigma)>0$. 
    Then, we have 
    \begin{align}
        \sigma\geq e^{-D^\mathrm{max}(\rho\|\sigma)}\rho. 
    \end{align}
    We define a state 
    \begin{align}
        \sigma':=\frac{\sigma-e^{-D^\mathrm{max}(\rho\|\sigma)}\rho}{1-e^{-D^\mathrm{max}(\rho\|\sigma)}}, 
    \end{align}
    where we note that the denominator is nonzero, since we exclude the case where $\rho=\sigma$. 
    Then, we have 
    \begin{align}
        \sigma=e^{-D^\mathrm{max}(\rho\|\sigma)}\rho+\left(1-e^{-D^\mathrm{max}(\rho\|\sigma)}\right)\sigma', 
    \end{align}
    which implies that 
    \begin{align}
        \Delta(\rho, \sigma)
        =&\Delta\left(\rho, e^{-D^\mathrm{max}(\rho\|\sigma)}\rho+\left(1-e^{-D^\mathrm{max}(\rho\|\sigma)}\right)\sigma'\right) \nonumber\\
        \leq &\Delta\left(\ket{0}\bra{0}, e^{-D^\mathrm{max}(\rho\|\sigma)}\ket{0}\bra{0}+\left(1-e^{-D^\mathrm{max}(\rho\|\sigma)}\right)\ket{1}\bra{1}\right) \nonumber\\
        =&f_\Delta(D^\mathrm{max}(\rho\|\sigma)). 
    \end{align}

\end{proof}

Lemma~\ref{lem:distingishability_evaluation_Dmax_Dmin} directly implies the relation among the conditions about the upper bounds of $D^\mathrm{max}(\rho\|\sigma)$, $\Delta(\rho, \sigma)$, and $D^\mathrm{min}(\rho\|\sigma)$.

\begin{corollary} \label{cor:distingishability_relation_Dmax_general_Dmin}
    Let $\Delta$ be a distinguishability measure, $f_\Delta$ be its orthogonal distinguishability function, $x\in [0, \infty]$, and the conditions about the distinguishability between states $\rho$ and $\sigma$ be given as follows: 
    \begin{align}
        &(i)\ D^\mathrm{max}(\rho\|\sigma)\leq x, \\
        &(ii)\ \Delta(\rho, \sigma)\leq f_\Delta(x), \\
        &(iii)\ D^\mathrm{min}(\rho\|\sigma)\leq x. 
    \end{align}
    Then, (i) implies (ii). 
    When $f_\Delta$ is strictly increasing, (ii) implies (iii). 
\end{corollary}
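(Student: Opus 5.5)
The statement is Corollary~\ref{cor:distingishability_relation_Dmax_general_Dmin}. I will derive both implications directly from Lemma~\ref{lem:distingishability_evaluation_Dmax_Dmin}, which sandwiches $\Delta(\rho,\sigma)$ between $f_\Delta(D^\mathrm{min}(\rho\|\sigma))$ and $f_\Delta(D^\mathrm{max}(\rho\|\sigma))$. I will also use the monotonicity of $f_\Delta$ from Lemma~\ref{SMlem:orthogonal_distingishability_function_monotonicity}.

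For (i)$\Rightarrow$(ii), assume $D^\mathrm{max}(\rho\|\sigma)\leq x$.
\begin{itemize}
\item The upper bound of Lemma~\ref{lem:distingishability_evaluation_Dmax_Dmin} gives $\Delta(\rho,\sigma)\leq f_\Delta(D^\mathrm{max}(\rho\|\sigma))$.
\item Since $f_\Delta$ is increasing on $[0,\infty]$, we have $f_\Delta(D^\mathrm{max}(\rho\|\sigma))\leq f_\Delta(x)$, which is (ii).
\item The only point to check is that $D^\mathrm{max}(\rho\|\sigma)\in[0,\infty]$, so that $f_\Delta$ is defined there. This holds for states: $\rho\leq e^\lambda\sigma$ forces $\lambda\geq 0$ after taking traces, and the value is $\infty$ if no such $\lambda$ exists.
\end{itemize}

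For (ii)$\Rightarrow$(iii), assume $f_\Delta$ is strictly increasing and $\Delta(\rho,\sigma)\leq f_\Delta(x)$.
\begin{itemize}
\item The lower bound of Lemma~\ref{lem:distingishability_evaluation_Dmax_Dmin} gives $f_\Delta(D^\mathrm{min}(\rho\|\sigma))\leq\Delta(\rho,\sigma)\leq f_\Delta(x)$.
\item Strict monotonicity gives the contrapositive: if $D^\mathrm{min}(\rho\|\sigma)>x$, then $f_\Delta(D^\mathrm{min}(\rho\|\sigma))>f_\Delta(x)$, a contradiction. Hence $D^\mathrm{min}(\rho\|\sigma)\leq x$.
\item Note that $D^\mathrm{min}(\rho\|\sigma)=-\log\mathrm{tr}(\Pi_\rho\sigma)\in[0,\infty]$, so $f_\Delta$ is defined there as well.
\end{itemize}

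The main point of care is the extended-real endpoint $\infty$. Strict monotonicity on $[0,\infty]$ must include comparisons with $\infty$, so I read "strictly increasing" as holding on the whole domain $[0,\infty]$. Under that reading the contrapositive covers the case $D^\mathrm{min}(\rho\|\sigma)=\infty$ with $x$ finite. The case $x=\infty$ is trivial, since then (iii) holds automatically.

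Everything else is immediate. The substantive work is already contained in Lemma~\ref{lem:distingishability_evaluation_Dmax_Dmin}.
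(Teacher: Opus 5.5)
Your proof is correct and matches the paper's argument: (i)$\Rightarrow$(ii) combines the upper bound $\Delta(\rho,\sigma)\leq f_\Delta(D^\mathrm{max}(\rho\|\sigma))$ from Lemma~\ref{lem:distingishability_evaluation_Dmax_Dmin} with the monotonicity of $f_\Delta$ from Lemma~\ref{SMlem:orthogonal_distingishability_function_monotonicity}, and (ii)$\Rightarrow$(iii) combines the lower bound $f_\Delta(D^\mathrm{min}(\rho\|\sigma))\leq\Delta(\rho,\sigma)$ with strict monotonicity. Your extra checks are also correct: that $D^\mathrm{max}$ and $D^\mathrm{min}$ take values in $[0,\infty]$, and your handling of the endpoint $\infty$. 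The paper leaves these details implicit.
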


When $\Delta$ is trace distance, this corollary means that $D^\mathrm{max}(\rho\|\sigma)\leq x$ implies $T(\rho, \sigma)\leq 1-e^{-x}$, and $T(\rho, \sigma)\leq 1-e^{-x}$ implies $D^\mathrm{min}(\rho\|\sigma)\leq x$. 
When $\Delta$ is infidelity, it means that $D^\mathrm{max}(\rho\|\sigma)\leq x$ implies $1-F(\rho, \sigma)\leq 1-e^{-x}$, and $1-F(\rho, \sigma)\leq 1-e^{-x}$ implies $D^\mathrm{min}(\rho\|\sigma)\leq x$.

\begin{proof}
    First, we show (i) $\rightarrow$ (ii). 
    We take arbitrary states $\rho$ and $\sigma$ such that 
    \begin{align}
        D^\mathrm{max}(\rho\|\sigma)\leq x. \label{SMeq:cor:distingishability_relation_Dmax_general_Dmin1}
    \end{align}
    Since $f_\Delta$ is an increasing function by Lemma~\ref{SMlem:orthogonal_distingishability_function_monotonicity}, Eq.~\eqref{SMeq:cor:distingishability_relation_Dmax_general_Dmin1} implies 
    \begin{align}
        f_\Delta(D^\mathrm{max}(\rho\|\sigma))\leq f_\Delta(x). \label{SMeq:cor:distingishability_relation_Dmax_general_Dmin2}
    \end{align}
    By Lemma~\ref{lem:distingishability_evaluation_Dmax_Dmin}, we have 
    \begin{align}
        \Delta(\rho, \sigma)\leq f_\Delta(D^\mathrm{max}(\rho\|\sigma)). \label{SMeq:cor:distingishability_relation_Dmax_general_Dmin3}
    \end{align}
    By combining Eqs.~\eqref{SMeq:cor:distingishability_relation_Dmax_general_Dmin2} and \eqref{SMeq:cor:distingishability_relation_Dmax_general_Dmin3}, we get 
    \begin{align}
        \Delta(\rho, \sigma)\leq f_\Delta(x). 
    \end{align}

    Next, we show (ii) $\rightarrow$ (iii) when $f_\Delta$ is strictly increasing. 
    We take arbitrary states $\rho$ and $\sigma$ such that 
    \begin{align}
        \Delta(\rho, \sigma)\leq f_\Delta(x). \label{SMeq:cor:distingishability_relation_Dmax_general_Dmin4}
    \end{align}
    By Lemma~\ref{lem:distingishability_evaluation_Dmax_Dmin}, we have
    \begin{align}
        f_\Delta(D^\mathrm{min}(\rho\|\sigma))\leq\Delta(\rho, \sigma). \label{SMeq:cor:distingishability_relation_Dmax_general_Dmin5}
    \end{align}
    By combining Eqs.~\eqref{SMeq:cor:distingishability_relation_Dmax_general_Dmin4} and \eqref{SMeq:cor:distingishability_relation_Dmax_general_Dmin5}, we get 
    \begin{align}
        f_\Delta(D^\mathrm{min}(\rho\|\sigma))\leq f_\Delta(x). \label{SMeq:cor:distingishability_relation_Dmax_general_Dmin6}
    \end{align}
    Since $f_\Delta$ is a strictly increasing function, Eq.~\eqref{SMeq:cor:distingishability_relation_Dmax_general_Dmin6} implies 
    \begin{align}
        D^\mathrm{min}(\rho\|\sigma)\leq x. 
    \end{align}
\end{proof}

\section{Technical lemmas for symmetric implementations}
\label{sec:LOCC_technical_lemmas}

In this appendix, we establish the technical results used in the primitive-level implementation of covariant LOCC channels.
The basic idea is to attach a register carrying the left regular representation and to use its basis states to label rotated versions of the elementary operations in a Stinespring realization.

Let $G$ be a finite group, and let $\{\ket{\xi_g}\}_{g\in G}$ be an orthonormal basis of a register $R$ carrying the left regular representation $L(h)\ket{\xi_g}=\ket{\xi_{hg}}$. 
Throughout this appendix, the representations acting on the physical systems may be projective.
Whenever the same physical system appears before and after an elementary operation, we use the same projective representation on that system.

The construction is organized so that the label $g$ stored in $R$ is preserved throughout the protocol.
Conditioned on $R$ being in $\ket{\xi_g}$, every elementary operation in the original realization is replaced by its $g$-rotated version.
For a unitary $V$, this means replacing $V$ by the corresponding conjugated unitary $U(g)VU(g)^\dag$. 
For a projective measurement $\{P_x\}_x$, it means replacing each projection $P_x$ by its conjugate.
For state introduction, a slightly different construction is required: we first introduce a symmetric state containing an additional regular-representation register, align its group label with the label already stored in $R$ by a symmetric unitary, and finally discard the additional register.

Since the same label $g$ is used at every step, the complete Stinespring realization in the $g$th block implements the $g$-rotated version $\mathcal{U}'_g\circ\mathcal{E}\circ\mathcal{U}_{g^{-1}}$ of the original channel.
After discarding the environment, the resulting channel on the principal system is $|G|^{-1}\sum_{g\in G} \mathcal{U}'_g\circ\mathcal{E}\circ\mathcal{U}_{g^{-1}}$. 
Consequently, if $\mathcal{E}$ is $G$-covariant, the action on the principal system is independent of $g$.
The regular-representation register can therefore be retained throughout the protocol and returned unchanged.

\subsection{Symmetric extension of elementary operations}

\subsubsection{Unitary operations}

Let $Q$ be a system carrying a projective unitary representation $U_Q(g)$ of $G$, and let $V$ be a unitary operation on $Q$.
We define its regular extension on $QR$ by
\begin{align}
    \widetilde{V}:=\sum_{g\in G} U_Q(g)VU_Q(g)^\dag \otimes\ket{\xi_g}\bra{\xi_g}.
\end{align}
Since $\{\ket{\xi_g}\}_{g\in G}$ is mutually orthogonal and $U_Q(g)VU_Q(g)^\dag$ is unitary for all $g\in G$, $\widetilde{V}$ is unitary. 
Moreover, $\widetilde{V}$ is $G$-invariant under the joint symmetry.
Indeed, for every $h\in G$,
\begin{align}
    \left(
    U_Q(h)\otimes L(h)
    \right)
    \widetilde{V}
    \left(
    U_Q(h)^\dag\otimes L(h)^\dag
    \right)
    =&\sum_{g\in G} U_Q(h)U_Q(g)VU_Q(g)^\dag U_Q(h)^\dag\otimes\ket{\xi_{hg}}\bra{\xi_{hg}} \nonumber\\
    =&\sum_{g\in G} U_Q(hg)V U_Q(hg)^\dag \otimes\ket{\xi_{hg}}\bra{\xi_{hg}}\nonumber\\
    =&\widetilde{V}.
\end{align}
In the second equality, the projective phase in
$U_Q(h)U_Q(g)$ cancels with its adjoint.
Conditioned on the regular-representation register being in
$\ket{\xi_g}$, the extended unitary therefore acts as $U_Q(g)VU_Q(g)^\dag$, while the group label stored in $R$ is preserved, i.e., 
\begin{align}
    \widetilde{V}(\rho\otimes \ket{\xi_g}\bra{\xi_g}
    =(U_Q(g)VU_Q(g)^\dag)\rho (U_Q(g)VU_Q(g)^\dag)^\dag\otimes \ket{\xi_g}\bra{\xi_g} 
\end{align}
for all $g\in G$.

\subsubsection{Projective measurements}

Let $\{P_x\}_{x\in X}$ be a projective measurement on $Q$.
For each $x\in X$, define
\begin{align}
    \widetilde{P}_x:=\sum_{g\in G} U_Q(g)P_xU_Q(g)^\dag\otimes\ket{\xi_g}\bra{\xi_g} .
\end{align}
The collection $\{\widetilde{P}_x\}_{x\in X}$ is a projective
measurement on $QR$.
We can easily check
\begin{align}
    &\widetilde{P}_x\widetilde{P}_y
    =\delta_{x,y}\widetilde{P}_x, \\
    &\sum_{x\in X}\widetilde{P}_x
    =I_Q\otimes I_R.
\end{align}
Each projection is invariant under the joint symmetry.
For every $h\in G$,
\begin{align}
    \left(
    U_Q(h)\otimes L(h)
    \right)
    \widetilde{P}_x
    \left(
    U_Q(h)^\dag\otimes L(h)^\dag
    \right)
    =\sum_{g\in G} U_Q(hg)P_xU_Q(hg)^\dag \otimes\ket{\xi_{hg}}\bra{\xi_{hg}} 
    =\widetilde{P}_x.
\end{align}
Again, the projective phase cancels under conjugation.
Conditioned on $R$ being in $|\xi_g\rangle$, the measurement is
therefore the $g$-rotated projective measurement $U_Q(g)P_xU_Q(g)^\dag$, while the regular-representation label itself is unchanged.

\subsubsection{State introduction}

State introduction requires a slightly different construction because the state introduced in the $g$th regular-representation block must be $U_E(g)\tau U_E(g)^\dag$, while an asymmetric state $\tau$ itself is not assumed to be freely introduced.

Let $E$ be the system of the state to be introduced, carrying a projective unitary representation $U_E(g)$ of $G$.
We introduce an additional regular-representation register $R'$ with basis $\{\ket{\xi_h}\}_{h\in G}$ and prepare the state
\begin{align}
    \widetilde{\tau}_{ER'}
    :=&\frac{1}{|G|} \sum_{h\in G} U_E(h)\tau U_E(h)^\dag\otimes \ket{\xi_h}\bra{\xi_h}. 
\end{align}
This state is invariant under the joint action
$U_E(k)\otimes L_{R'}(k)$ for every $k\in G$.
Thus, $\widetilde{\tau}_{ER'}$ can be introduced as a symmetric
ancillary state.

We next align the group label $h$ of the newly introduced register with the label $g$ already stored in $R$.
For $g,h\in G$, define
\begin{align}
    C_{g,h}
    :=U_E(g)U_E(g^{-1}h)^\dag U_E(g)^\dag.
\end{align}
We then define the controlled unitary on $ERR'$ by
\begin{align}
    W:=\sum_{g,h\in G} C_{g,h}\otimes
    \ket{\xi_g}\bra{\xi_g}_R
    \otimes
    \ket{\xi_h}\bra{\xi_h}_{R'}.
\end{align}
Since every $C_{g,h}$ is unitary and the control subspaces are mutually orthogonal, $W$ is unitary.
We next show that $W$ is symmetric.
The relative group element satisfies $(kg)^{-1}(kh)=g^{-1}h$. 
Therefore,
\begin{align}
    U_E(k)C_{g,h}U_E(k)^\dag =C_{kg,kh},
\end{align}
where the projective phases cancel under conjugation.
It follows that
\begin{align}
    \left(
    U_E(k)\otimes L_R(k)\otimes L_{R'}(k)
    \right)
    W
    \left(
    U_E(k)^\dag\otimes L_R(k)^\dag\otimes L_{R'}(k)^\dag
    \right)
    =W.
\end{align}

For fixed $g,h\in G$, the definition of $C_{g,h}$ gives
\begin{align}
    C_{g,h}
    U_E(h)\tau U_E(h)^\dag
    C_{g,h}^\dag
    =U_E(g)\tau U_E(g)^\dag.
\end{align}
Hence, for every $g\in G$,
\begin{align}
    W
    \left(
    |\xi_g\rangle\langle\xi_g|_R \otimes \widetilde{\tau}_{ER'}
    \right)
    W^\dag
    =|\xi_g\rangle\langle\xi_g|_R \otimes U_E(g)\tau U_E(g)^\dag \otimes \frac{I_{R'}}{|G|}.
\end{align}
After discarding $R'$, we therefore obtain
\begin{align}
    |\xi_g\rangle\langle\xi_g|_R\to 
    U_E(g)\tau U_E(g)^\dag\otimes |\xi_g\rangle\langle\xi_g|_R.
\end{align}
Thus, the $g$th regular-representation block introduces precisely the $g$-rotated version of the state $\tau$, using only symmetric state introduction, a symmetric unitary, and discarding.

\subsection{Extension of a Stinespring realization}

Consider a local channel $\mathcal{E}$ and fix a realization of $\mathcal{E}$ in terms of state introduction, unitary operations, projective measurements, classical control, and discarding.
Replace every quantum primitive by the symmetric construction given above, using the same regular-representation register $R$ throughout the realization.
Conditioned on $R$ being in $|\xi_g\rangle$, every state introduced in the realization is replaced by its $g$-rotated state, every unitary is replaced by its $g$-rotated unitary, and every projective measurement is replaced by its $g$-rotated projective measurement.
Since the label $g$ is preserved at every step, these replacements combine into the $g$-rotated version of the complete realization.
After all ancillary systems are discarded, the resulting channel on
the principal system is
\begin{align}
    \mathcal{E}_g
    =\mathcal{U}'_g\circ\mathcal{E}\circ\mathcal{U}_{g^{-1}}.
\end{align}
Here, the unitary actions on the discarded ancillary systems disappear
under the partial trace.
If $\mathcal{E}$ is $G$-covariant, then $\mathcal{E}_g=\mathcal{E}$ for all $g\in G$.

Consequently, if the regular-representation register is initialized in
a $G$-invariant classical mixture
\begin{align}
    \Gamma_R:=\frac{1}{|G|}\sum_{g\in G} |\xi_g\rangle\langle\xi_g|,
\end{align}
the symmetric realization satisfies
\begin{align}
    \widetilde{\mathcal{E}}
    \left(
    \rho\otimes\Gamma_R
    \right)
    =
    \mathcal{E}(\rho)\otimes\Gamma_R.
\end{align}
Thus, the original covariant channel is implemented exactly on the
principal system, while the regular-representation register is
returned unchanged.

\section{Stabilizer memory states for general Clifford symmetries}
\label{app:clifford_memory}

We show that stabilizer memory states satisfying the control condition can be constructed for an arbitrary finite-group symmetry represented by qubit Clifford operators.
The required memory states must transform covariantly under the group action, must be distinguishable by stabilizer measurements, and must allow stabilizer operations to be classically controlled by the identified group label.
We construct such memory states directly by assigning a separate stabilizer block to every pair of distinguishable group elements.

Let
\begin{align}
    K_{\mathrm{proj}}:=\{g\in G \,|\, U(g)\propto I\}
\end{align}
be the projective kernel of the representation, and define the effective group by
\begin{align}
    \overline{G}:=G/K_{\mathrm{proj}}.
\end{align}
Elements belonging to the same coset of $K_{\mathrm{proj}}$ act identically on every quantum state and therefore cannot be distinguished by any memory state.
It is thus sufficient and necessary to construct a memory for the effective group $\overline{G}$.
For notational simplicity, we choose one representative $g\in G$ for each element of $\overline{G}$.

We first establish a two-copy stabilizer witness for an arbitrary nontrivial Clifford operator.

\begin{lemma}
    \label{lem:Clifford-stabilizer-witness}
    Let $C$ be a qubit Clifford operator satisfying $C\not\propto I$.
    There exist an integer $n_C\leq2$, a pure stabilizer state $\lvert\phi_C\rangle$ on $n_C$ copies of the original system, and a Hermitian Pauli operator $A_C$ such that
    \begin{align}
        A_C\lvert\phi_C\rangle=\lvert\phi_C\rangle
    \end{align}
    and
    \begin{align}
        A_CC^{\otimes n_C}\lvert\phi_C\rangle=-C^{\otimes n_C}\lvert\phi_C\rangle.
    \end{align}
    In particular, $\lvert\phi_C\rangle$ and $C^{\otimes n_C}\lvert\phi_C\rangle$ are perfectly distinguished by measuring $A_C$.
\end{lemma}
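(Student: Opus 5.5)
The plan is to split into two cases according to how $C$ acts on the Pauli group by conjugation. Since $C$ is Clifford and $C\not\propto I$, its conjugation action on Hermitian Pauli operators is nontrivial. If it were trivial, then $C$ would commute with every Pauli operator up to a sign, which forces $C$ to be proportional to a Pauli operator. So there are two cases: (a) $C$ is proportional to a non-identity Pauli operator, and (b) $C$ maps some Hermitian Pauli operator $P$ to a Pauli operator $P'\not\propto P$.

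\emph{Case (a), $n_C=1$.} Let $C\propto Q$ with $Q\neq I$ a Hermitian Pauli operator. Choose a Hermitian Pauli operator $A$ that anticommutes with $Q$; one exists because the symplectic form on the Pauli group is nondegenerate. Take $\lvert\phi_C\rangle$ to be a pure stabilizer state whose stabilizer group contains $A$, for instance a stabilizer state obtained by extending $\{A\}$ to a maximal commuting set. Then $A\lvert\phi_C\rangle=\lvert\phi_C\rangle$. Since $AQ=-QA$, we get $AC\lvert\phi_C\rangle=-CA\lvert\phi_C\rangle=-C\lvert\phi_C\rangle$, so $A_C:=A$ works.

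\emph{Case (b).} Pick a Hermitian Pauli $P$ with $CPC^\dag=\pm P'$ and $P'\not\propto P$. If $P$ and $P'$ anticommute, a single copy suffices: take $\lvert\phi_C\rangle$ stabilized by $-P'$ (up to sign chosen appropriately) and also by $P$... but $P,P'$ anticommuting cannot share an eigenstate. So I would instead take $\lvert\phi_C\rangle$ stabilized by $P$ and set $A_C:=$ some Pauli commuting with $P$ in the stabilizer of $\lvert\phi_C\rangle$ but whose image under the stabilizer group of $C\lvert\phi_C\rangle$ flips sign. Concretely, $C\lvert\phi_C\rangle$ is stabilized by $CPC^\dag=\pm P'$, so I need $A_C$ with $A_C\lvert\phi\rangle=\lvert\phi\rangle$ and $-A_C$ in the stabilizer of $C\lvert\phi\rangle$. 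The cleanest route is to use the second copy, which is why $n_C\le 2$. On two copies, let $\lvert\phi_C\rangle$ be the stabilizer state stabilized by $P\otimes P$ together with other suitable generators, and take $A_C:=P\otimes P$ or $P\otimes P'$. The image state is stabilized by $(CPC^\dag)^{\otimes 2}=P'\otimes P'$. I will choose the generators so that $\lvert\phi_C\rangle$ is also stabilized by $-P'\otimes P'$ or by a Pauli $B$ with $CBC^\dag$ related to the sign flip. For instance, a Bell-type state with stabilizers $P\otimes P^{T}$-like operators, together with $\pm P'\otimes P'$ where signs are forced by commutation, should give a state where $P\otimes P=+1$ but $C^{\otimes 2}$ maps it to a state with $P\otimes P=-1$.

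The main obstacle is this sign bookkeeping in case (b). What I would try first is the anticommuting subcase: $P$ and $P'$ anticommute, so $P\otimes P'$ and $P'\otimes P$... Alternatively, and more robustly, I would pick $A_C$ to be a Pauli stabilizer of $\lvert\phi_C\rangle$ such that $C^\dag A_C C$ also lies in the Pauli group and anticommutes with some element of $\lvert\phi_C\rangle$'s stabilizer. I would then choose $\lvert\phi_C\rangle$ to be a joint eigenstate of $A_C$ and $-C^\dag A_C C$. This is possible exactly when $A_C$ and $C^\dag A_C C$ commute and are not equal, which is where the second copy helps: two anticommuting Paulis become commuting on the doubled system, while tensor squares keep them distinct. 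Finally, perfect distinguishability follows immediately because the two states lie in the $\pm1$ eigenspaces of $A_C$.
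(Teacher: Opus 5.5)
Your case (a), and your final ``more robust'' construction in case (b), match the paper's proof exactly: there you take $A_C=P\otimes P$ and let $\lvert\phi_C\rangle$ be a joint eigenstate with eigenvalues $+1$ and $-1$ of the commuting, distinct Pauli operators $P\otimes P$ and $(C^\dagger PC)^{\otimes 2}$. When writing it up, drop the exploratory middle of case (b), including the stray requirement that $C^\dagger A_C C$ anticommute with an element of the stabilizer of $\lvert\phi_C\rangle$, since that would force $\langle\phi_C\rvert C^{\dagger\otimes 2}A_C C^{\otimes 2}\lvert\phi_C\rangle=0$ rather than $-1$; also note that the construction needs $C^\dagger PC\neq\pm P$ rather than $CPC^\dagger\not\propto P$, which you obtain by replacing $P$ with $CPC^\dagger$.
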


\begin{proof}
    Suppose first that $C$ is proportional to a nonidentity Pauli operator $R$.
    Choose a Hermitian Pauli operator $P$ that anticommutes with $R$, and let $\lvert\phi_C\rangle$ be a pure stabilizer state satisfying
    \begin{align}
        P\lvert\phi_C\rangle=\lvert\phi_C\rangle.
    \end{align}
    Setting $n_C:=1$ and $A_C:=P$, we obtain
    \begin{align}
        A_CC\lvert\phi_C\rangle=-CA_C\lvert\phi_C\rangle=-C\lvert\phi_C\rangle.
    \end{align}

    Suppose next that $C$ is not proportional to a Pauli operator.
    There exists a Hermitian Pauli operator $P$ such that
    \begin{align}
        Q:=C^\dag PC\neq\pm P.
    \end{align}
    Define the two-copy Pauli operators
    \begin{align}
        &A:=P\otimes P, \\
        &B:=Q\otimes Q.
    \end{align}
    The operators $A$ and $B$ commute even when $P$ and $Q$ anticommute, and they are independent because $Q\neq\pm P$.
    They can therefore be extended to a maximal set of independent commuting Pauli operators.
    Hence, there exists a pure stabilizer state $\lvert\phi_C\rangle$ satisfying
    \begin{align}
        &A\ket{\phi_C}=\ket{\phi_C}, \\
        &B\ket{\phi_C}=-\ket{\phi_C}.
    \end{align}
    Setting $n_C:=2$ and $A_C:=A$, we obtain
    \begin{align}
        A_CC^{\otimes2}\ket{\phi_C}
        =C^{\otimes2}B\ket{\phi_C} 
        =-C^{\otimes2}\ket{\phi_C}.
    \end{align}
\end{proof}

We apply Lemma~\ref{lem:Clifford-stabilizer-witness} to every unordered pair of distinct elements of $\overline{G}$.
For each pair $\{g,h\}$, define the relative Clifford operator by
\begin{align}
    C_{g,h}:=U(g)^\dag U(h).
\end{align}
Since $g$ and $h$ represent distinct elements of $\overline{G}$, we have
\begin{align}
    C_{g,h}\not\propto I.
\end{align}
Lemma~\ref{lem:Clifford-stabilizer-witness} gives an integer $n_{g,h}\leq2$, a pure stabilizer state $\lvert\phi_{g,h}\rangle$, and a Hermitian Pauli operator $A_{g,h}$ satisfying
\begin{align}
    A_{g,h}\lvert\phi_{g,h}\rangle=\lvert\phi_{g,h}\rangle
\end{align}
and
\begin{align}
    A_{g,h}C_{g,h}^{\otimes n_{g,h}}\lvert\phi_{g,h}\rangle
    =-C_{g,h}^{\otimes n_{g,h}}\lvert\phi_{g,h}\rangle.
\end{align}

We define the seed memory state by
\begin{align}
    \lvert\Phi\rangle:=\bigotimes_{\{g,h\}\subset\overline{G}}\lvert\phi_{g,h}\rangle.
\end{align}
The tensor product is taken over all unordered pairs of distinct elements of $\overline{G}$.
The state $\lvert\Phi\rangle$ is a pure stabilizer state.
The total number of copies of the original system used by the memory is
\begin{align}
    N:=\sum_{\{g,h\}\subset\overline{G}}n_{g,h}\leq2\binom{|\overline{G}|}{2}=|\overline{G}|(|\overline{G}|-1).
\end{align}

For every $k\in\overline{G}$, define the corresponding memory state by
\begin{align}
    \lvert M_k\rangle:=U(k)^{\otimes N}\lvert\Phi\rangle.
\end{align}
Since $U(k)$ is a Clifford operator and $\lvert\Phi\rangle$ is a stabilizer state, every $\lvert M_k\rangle$ is a stabilizer state.
Moreover, for every $a,k\in\overline{G}$, the projective representation property gives
\begin{align}
    U(a)^{\otimes N}\lvert M_k\rangle\propto\lvert M_{ak}\rangle.
\end{align}
Thus, the group action permutes the memory labels according to left multiplication.

It remains to construct a stabilizer readout of the memory label.
For each unordered pair $\{g,h\}$, define a Pauli operator acting on the corresponding block by
\begin{align}
    P_{g,h}:=U(g)^{\otimes n_{g,h}}A_{g,h}U(g)^{\dag\otimes n_{g,h}}.
\end{align}
This is a Pauli operator because $U(g)$ is Clifford.
On the memory state with label $g$, the corresponding block satisfies
\begin{align}
    P_{g,h}U(g)^{\otimes n_{g,h}}\lvert\phi_{g,h}\rangle
    =U(g)^{\otimes n_{g,h}}\lvert\phi_{g,h}\rangle.
\end{align}
On the memory state with label $h$, the same block satisfies
\begin{align}
    P_{g,h}U(h)^{\otimes n_{g,h}}\lvert\phi_{g,h}\rangle
    =-U(h)^{\otimes n_{g,h}}\lvert\phi_{g,h}\rangle.
\end{align}
Therefore, measuring $P_{g,h}$ perfectly distinguishes the labels $g$ and $h$.

The operators $P_{g,h}$ associated with different unordered pairs act on different memory blocks.
They can therefore be measured simultaneously by a stabilizer measurement.
For every pair of distinct labels $g$ and $h$, the outcome of $P_{g,h}$ is different on $\lvert M_g\rangle$ and $\lvert M_h\rangle$.
Consequently, the joint measurement outcomes identify the memory label uniquely.
Classical post-processing of these outcomes defines a stabilizer POVM $\{\Pi_g\}_{g\in\overline{G}}$ satisfying
\begin{align}
    \mathrm{tr}\left(\Pi_g\lvert M_h\rangle\langle M_h\rvert\right)=\delta_{g,h}.
\end{align}

The readout also satisfies the control condition required in our definition of a memory.
Let $\{\mathcal{E}_g\}_{g\in\overline{G}}$ be an arbitrary family of stabilizer operations.
After applying the stabilizer POVM $\{\Pi_g\}_{g\in\overline{G}}$ to the memory, we may apply $\mathcal{E}_g$ conditionally on the classical outcome $g$.
The resulting channel is
\begin{align}
    \mathcal{C}(\tau_{MS})
    :=\sum_{g\in\overline{G}}\mathcal{E}_g\left(\mathrm{tr}_M\left((\Pi_g\otimes I_S)\tau_{MS}\right)\right).
\end{align}
Since stabilizer operations are closed under stabilizer measurements, classical feed-forward, and composition, the channel $\mathcal{C}$ is a stabilizer operation.
Thus, the memory label can be used to control arbitrary stabilizer operations without leaving the free-operation class.

We summarize the construction in the following theorem.

\begin{theorem}
    \label{thm:general-Clifford-memory}
    Let $G$ be a finite group with a projective representation $U(g)$ by qubit Clifford operators, and let
    \begin{align}
        \overline{G}:=G/K_{\mathrm{proj}},
        \qquad
        K_{\mathrm{proj}}:=\{g\in G \,|\, U(g)\propto I\}.
    \end{align}
    There exist a pure stabilizer state $\lvert\Phi\rangle$ on at most $|\overline{G}|(|\overline{G}|-1)$ copies of the original system and a family of stabilizer memory states
    \begin{align}
        \lvert M_g\rangle:=U(g)^{\otimes N}\lvert\Phi\rangle,
        \qquad
        g\in\overline{G},
    \end{align}
    satisfying the following properties.
    
    \begin{enumerate}
        \item The group action permutes the memory states according to
        \begin{align}
            U(a)^{\otimes N}\lvert M_g\rangle\propto\lvert M_{ag}\rangle.
        \end{align}
        \item There exists a stabilizer POVM $\{\Pi_g\}_{g\in\overline{G}}$ such that
        \begin{align}
            \mathrm{tr}\left(\Pi_g\lvert M_h\rangle\langle M_h\rvert\right)=\delta_{g,h}.
        \end{align}
        \item For every family of stabilizer operations $\{\mathcal{E}_g\}_{g\in\overline{G}}$, measuring the memory by $\{\Pi_g\}_{g\in\overline{G}}$ and applying $\mathcal{E}_g$ conditionally on the outcome defines a stabilizer operation.
    \end{enumerate}
\end{theorem}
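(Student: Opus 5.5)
The three properties in Theorem~\ref{thm:general-Clifford-memory} will be verified separately, using the construction given just before the statement.

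\emph{Stabilizerness and covariance.} The state $\lvert\Phi\rangle$ is a tensor product of the pure stabilizer states $\lvert\phi_{g,h}\rangle$ supplied by Lemma~\ref{lem:Clifford-stabilizer-witness}, one for each unordered pair $\{g,h\}$ of representatives. It is therefore a pure stabilizer state on $N=\sum n_{g,h}\leq |\overline{G}|(|\overline{G}|-1)$ copies. Each $\lvert M_g\rangle=U(g)^{\otimes N}\lvert\Phi\rangle$ is the image of a stabilizer state under a Clifford unitary, so it is again a stabilizer state. For property 1, the projective relation $U(a)U(g)\propto U(ag)$ gives $U(a)^{\otimes N}U(g)^{\otimes N}\propto U(ag)^{\otimes N}$. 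The representative of $ag$ in $\overline{G}$ differs from $ag$ by an element of $K_{\mathrm{proj}}$, which acts as a scalar. Hence $U(a)^{\otimes N}\lvert M_g\rangle\propto\lvert M_{ag}\rangle$.

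\emph{Readout.} I will measure the commuting Pauli operators $P_{g,h}=U(g)^{\otimes n_{g,h}}A_{g,h}U(g)^{\dag\otimes n_{g,h}}$ on their disjoint blocks. This measurement is a stabilizer measurement, since conjugating a Pauli operator by a Clifford unitary gives a Pauli operator. The key step is the eigenvalue of $P_{g,h}$ on the block $U(k)^{\otimes n_{g,h}}\lvert\phi_{g,h}\rangle$ of $\lvert M_k\rangle$.
\begin{itemize}
\item For $k=g$ the eigenvalue is $+1$ directly, because $A_{g,h}\lvert\phi_{g,h}\rangle=\lvert\phi_{g,h}\rangle$.
\item For $k=h$, write $U(h)=U(g)C_{g,h}$ with $C_{g,h}=U(g)^\dag U(h)$. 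Then $P_{g,h}U(h)^{\otimes n}\lvert\phi\rangle=U(g)^{\otimes n}A_{g,h}C_{g,h}^{\otimes n}\lvert\phi\rangle=-U(h)^{\otimes n}\lvert\phi\rangle$.
\end{itemize}
For a third label $k\notin\{g,h\}$ the block need not be an eigenstate, so the outcome of $P_{g,h}$ may be random. The main obstacle is therefore decoding the label from outcomes that are partly random. Unambiguous decoding uses a tournament rule. Assign outcome $+1$ of $P_{g,h}$ as a ``vote for $g$'' and $-1$ as a ``vote for $h$''. On input $\lvert M_k\rangle$, every test involving $k$ deterministically votes for $k$, so $k$ wins all $|\overline{G}|-1$ of its matches. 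Any other label $j$ loses its match against $k$ deterministically, so it wins at most $|\overline{G}|-2$ matches. Hence the classical post-processing ``output the unique label that wins all its matches'' identifies $k$ with probability one. The post-processing assigns each outcome string to exactly one label, so $\{\Pi_g\}$ is a stabilizer POVM (a coarse-graining of a Pauli measurement) with $\mathrm{tr}(\Pi_g\lvert M_h\rangle\langle M_h\rvert)=\delta_{g,h}$.

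\emph{Control.} Property 3 follows because stabilizer operations are closed under Pauli measurements, classical feed-forward on the measured outcome, and composition. The map $\mathcal{C}$ defined above is exactly such a protocol: measure the $P_{g,h}$, post-process the outcomes, apply $\mathcal{E}_g$ conditionally, and discard the memory. It is therefore a stabilizer operation.
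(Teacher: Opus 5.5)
Your proof is correct and follows the paper's construction. It uses the same pairwise witness blocks from Lemma~\ref{lem:Clifford-stabilizer-witness}, the same projective-covariance argument, the same Pauli readout $P_{g,h}$ on disjoint blocks, and the same closure argument for conditional control. Your tournament decoding spells out the step the paper only asserts (that the joint outcomes identify the label uniquely). It correctly handles the fact that blocks for pairs not containing the true label give random outcomes; the paper relies implicitly on the resulting disjointness of the outcome supports.
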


This theorem constructs stabilizer memory states with free readout and free classical control for every finite-group symmetry represented by qubit Clifford operators.
Here, ``free'' refers to the underlying stabilizer theory.
The Pauli measurements used for readout are not required to commute with the symmetry representation.

The existence of such memory states therefore addresses a different question from the existence of a symmetric primitive implementation.
In particular, it does not imply that the readout can be performed using the symmetric Pauli measurements required in Sec.~\ref{sec:magic_primitive}.
Nor does it imply that an arbitrary coherent controlled-Clifford unitary is Clifford.
Accordingly, the general memory construction is consistent with restricting our primitive-level conversion theorem to Pauli representations, without asserting a separation of asymptotic conversion rates for other Clifford representations.

\section{Symmetry subgroup of dephased states} \label{sec:dephased_state_symetry}

The aim of this section is to prove Lemma~\ref{lem:dephased_state_symmetry_subgroup}, which we restate here: 

\setcounter{lemma}{9}

\begin{lemma} 
    Let $G$ be a finite group, $H$ be a Hamiltonian, $U$ be a projective unitary representation of $G$ satisfying $[U(h), H]=0$ for all $h\in G$, and $g\in G$. 
    Then, there exists some $m\in\mathbb{N}$ such that 
    \begin{align}
        \mathrm{Sym}\left(\mathcal{D}_{\mathcal{H}^{(m)}}(\rho^{\otimes m})\right) 
        =K. 
    \end{align}
\end{lemma}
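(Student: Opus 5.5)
The plan is to prove the two inclusions $K\subset\mathrm{Sym}(\mathcal{D}_{H^{(m)}}(\rho^{\otimes m}))$ for every $m$, and $\mathrm{Sym}(\mathcal{D}_{H^{(m)}}(\rho^{\otimes m}))\subset K$ for one suitably large $m$. Since $G$ is finite, it suffices to separate each $g\notin K$ at some degree $m_g$ and then show that separation at degree $m_g$ persists at every larger degree.

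\emph{First inclusion.} Since $[U(g),H]=0$, the operator $U(g)^{\otimes m}$ commutes with $H^{(m)}$, so $\mathcal{U}_g^{\otimes m}$ commutes with $\mathcal{D}_{H^{(m)}}$. Moreover $\mathcal{D}_{H^{(m)}}(e^{-itH^{(m)}}Xe^{itH^{(m)}})=\mathcal{D}_{H^{(m)}}(X)$, because the dephasing map annihilates all time-dependent off-diagonal energy blocks. For $k\in K$ choose $t_j$ with $e^{-it_jH}\rho e^{it_jH}\to\mathcal{U}_k(\rho)$. Taking tensor powers and using continuity of $\mathcal{D}_{H^{(m)}}$ gives
\begin{align}
\mathcal{U}_k^{\otimes m}\left(\mathcal{D}_{H^{(m)}}(\rho^{\otimes m})\right)=\mathcal{D}_{H^{(m)}}(\mathcal{U}_k(\rho)^{\otimes m})=\lim_{j\to\infty}\mathcal{D}_{H^{(m)}}\left((e^{-it_jH}\rho e^{it_jH})^{\otimes m}\right)=\mathcal{D}_{H^{(m)}}(\rho^{\otimes m}).
\end{align}
Hence $K$ is contained in the symmetry subgroup for every $m$.

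\emph{Reformulation via invariant functionals.} We have $\mathcal{D}_{H^{(m)}}(X)=\mathcal{D}_{H^{(m)}}(Y)$ if and only if $\mathrm{tr}(AX)=\mathrm{tr}(AY)$ for all $A$ commuting with $H^{(m)}$, since the range of the self-adjoint map $\mathcal{D}_{H^{(m)}}$ is the commutant of $H^{(m)}$. The functions $\rho\mapsto\mathrm{tr}(A\rho^{\otimes m})$ with $[A,H^{(m)}]=0$ span the degree-$m$ homogeneous polynomials in the matrix entries of $\rho$ that are invariant under $\rho\mapsto e^{-itH}\rho e^{itH}$. Because $\rho$ is Hermitian, conjugate entries are themselves entries, so these are all invariant real polynomials on the space of states. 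Degree also lifts: if $A$ separates at degree $m'$, then $A\otimes I^{\otimes(m-m')}$ commutes with $H^{(m)}$, and, using $\mathrm{tr}\rho=1$, it gives the same values at degree $m\geq m'$. So it suffices, for each $g\notin K$, to find some invariant polynomial $p$ with $p(\mathcal{U}_g(\rho))\neq p(\rho)$, and then take $m:=\max_{g\notin K}\deg p_g$.

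\emph{Separation step (main obstacle).} Let $\mathbb{T}$ be the closure of $\{e^{-itH}\}_t$ in the unitary group, modulo phases. This is a compact torus, and its invariant polynomials coincide with those of the one-parameter group. The orbit closure in the definition of $K$ is exactly $\mathbb{T}\cdot\rho$. Because $U(g)$ commutes with $\mathbb{T}$, we have $\mathbb{T}\cdot\mathcal{U}_g(\rho)=\mathcal{U}_g(\mathbb{T}\cdot\rho)$, which is again a $\mathbb{T}$-orbit. Two orbits of a group are either equal or disjoint, so $g\notin K$ forces $\mathbb{T}\cdot\rho\cap\mathbb{T}\cdot\mathcal{U}_g(\rho)=\emptyset$.

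I would then invoke the standard fact that invariant polynomials of a compact group action on a real vector space separate orbits. Concretely, the two orbits are disjoint compact sets, so by Stone--Weierstrass there is a polynomial $q$ with $q\approx 0$ on the first orbit and $q\approx 1$ on the second. Averaging $q$ over $\mathbb{T}$ with Haar measure yields an invariant polynomial of the same degree that still separates the two orbits. For a torus this step can also be done explicitly: decompose $\rho$ into energy-gap modes $\rho_{ab}$ with phases $e^{-it(E_a-E_b)}$. The invariant monomials are then products $\prod\rho_{a_ib_i}$ whose total gap is zero, and separation reduces to a linear-algebra fact about characters of the torus.

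Verifying this separation carefully, including the identification of the orbit closure with a $\mathbb{T}$-orbit, is the only genuinely nontrivial point. Once it is in place, taking the maximum degree over the finitely many $g\notin K$ completes the proof.
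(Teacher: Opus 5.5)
Your proof is correct, but it reaches the key separation step by a different route from the paper.

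For the inclusion $K\subset\mathrm{Sym}(\mathcal{D}_{H^{(m)}}(\rho^{\otimes m}))$, the paper writes $\mathcal{D}_{H^{(m)}}(L)=\int_{\mathbb{T}_H}V^{\otimes m}LV^{\dagger\otimes m}\,d\mu(V)$ and uses invariance of the Haar measure. You use only that dephasing annihilates time evolution, together with continuity. This is slightly more elementary.

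For the converse, the paper does not appeal to abstract orbit separation. For $g\notin K$ it compares one explicit quantity, $\mathrm{tr}(\mathcal{D}_{H^{(m)}}(\sigma^{\otimes m})\rho^{\otimes m})=\int_{\mathbb{T}_H}\mathrm{tr}(V\sigma V^{\dagger}\rho)^m\,d\mu(V)$, at $\sigma=\rho$ and at $\sigma=\mathcal{U}_g(\rho)$. Compactness gives $b:=\max_{V}\mathrm{tr}(V\mathcal{U}_g(\rho)V^{\dagger}\rho)<\mathrm{tr}(\rho^2)$. A Laplace-type bound then finishes: the first quantity is at least $c^m\mu(\mathcal{N})$, the second at most $b^m$, where $b<c$ and $\mathcal{N}$ is an open neighbourhood of the identity. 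Hence the $m$-th moments differ for large $m$. In your terms, the paper exhibits an explicit separating invariant polynomial, $\sigma\mapsto\mathrm{tr}(\mathcal{D}_{H^{(m)}}(\rho^{\otimes m})\sigma^{\otimes m})$, which peaks on the orbit of $\rho$. You obtain such a polynomial nonconstructively, from Stone--Weierstrass followed by Haar averaging.

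What each route buys:
\begin{itemize}
\item The paper's argument is self-contained and quantitative. The threshold on $m$ is explicit in terms of $b$, $c$ and $\mu(\mathcal{N})$. The inequality is also monotone in $m$, which silently gives a single $m$ for the finitely many $g\notin K$.
\item Your argument is more conceptual. It identifies $\mathrm{Sym}(\mathcal{D}_{H^{(m)}}(\rho^{\otimes m}))$ as the stabilizer of all invariant polynomials of degree at most $m$. It makes uniformity in $g$ explicit through the lifting $A\otimes I^{\otimes(m-m')}$. It also applies verbatim to any compact group commuting with $U(G)$. The cost is that it gives no control over the required $m$.
\end{itemize}

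Three minor points of precision. The averaged polynomial has degree at most $\deg q$, not necessarily equal to it. The claim that every invariant homogeneous polynomial has the form $\mathrm{tr}(A\rho^{\otimes m})$ with $[A,H^{(m)}]=0$ needs its one-line proof: write the polynomial as $\mathrm{tr}(B\rho^{\otimes m})$, replace $B$ by $\mathcal{D}_{H^{(m)}}(B)$, and use that dephasing is the infinite-time average. Finally, your sketched character-theoretic alternative for the torus is too vague to stand on its own, but the proof does not need it.
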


\setcounter{lemma}{25}

For the proof of this lemma, we define the compact closure of the time evolution operators: 
\begin{align}
    \mathbb{T}_H:=\overline{\left\{e^{-itH}\ |\ t\in\mathbb{R}\right\}}. \label{eq:time_evolution_compact}
\end{align}
We denote its normalized Haar measure by $\mu$.

\begin{lemma} \label{lem:dephasing_Haar_average}
    Let $m\in\mathbb{N}$, $\rho$ be a state, $H$ be a Hamiltonian, $H^{\times m}$ be the sum of Hamiltonians on $m$ copies, and $\mathbb{T}_H$ be defined by Eq.~\eqref{eq:time_evolution_compact}. 
    Then, 
    \begin{align}
        \mathcal{D}_{H^{(m)}}(L) 
        =\int_{V\in\mathbb{T}_H} V^{\otimes m} L V^{\dag\otimes m}d\mu(V). \label{eq:lem:dephasing_Haar_average01}
    \end{align}
\end{lemma}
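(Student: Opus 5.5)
We prove the identity by diagonalizing everything in a common eigenbasis of the commuting family $\{H^{(m)}\}$ and comparing the two sides block by block. Let $H=\sum_{j} e_j P_j$ be the spectral decomposition of $H$, with distinct eigenvalues $e_j$. Then
\begin{align}
    H^{(m)}=\sum_{\bm{j}} \left(e_{j_1}+\cdots+e_{j_m}\right) P_{j_1}\otimes\cdots\otimes P_{j_m},
\end{align}
and the spectral projection of $H^{(m)}$ onto an eigenvalue $E$ is $\Pi_E=\sum_{\bm{j}:\,\sum_k e_{j_k}=E} P_{j_1}\otimes\cdots\otimes P_{j_m}$. Writing $L=\sum_{\bm{j},\bm{j}'} P_{\bm{j}} L P_{\bm{j}'}$, we have $\mathcal{D}_{H^{(m)}}(L)=\sum_{E}\Pi_E L\Pi_E$. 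This keeps exactly the blocks $P_{\bm{j}}LP_{\bm{j}'}$ with $\sum_k e_{j_k}=\sum_k e_{j'_k}$ and kills all the others.

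On the right-hand side, for $V=e^{-itH}$ we have $V^{\otimes m}P_{\bm{j}}LP_{\bm{j}'}V^{\dag\otimes m}=e^{-it(E_{\bm{j}}-E_{\bm{j}'})}P_{\bm{j}}LP_{\bm{j}'}$. By continuity, each element $V\in\mathbb{T}_H$ acts on the block as multiplication by $\chi_{\bm{j},\bm{j}'}(V)$, where $\chi_{\bm{j},\bm{j}'}(V):=\prod_k \chi_{j_k}(V)\overline{\chi_{j'_k}(V)}$ and $\chi_j(V)$ is the eigenvalue of $V$ on $P_j$. Each $\chi_j$ is a continuous character of the compact abelian group $\mathbb{T}_H$ (it is the restriction of a matrix entry, and it is multiplicative because all elements of $\mathbb{T}_H$ are diagonal in the same spectral decomposition). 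So $\chi_{\bm{j},\bm{j}'}$ is a continuous character of $\mathbb{T}_H$.

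Orthogonality of characters for Haar measure gives $\int \chi\, d\mu=1$ if $\chi$ is trivial and $0$ otherwise. It therefore remains to show that $\chi_{\bm{j},\bm{j}'}$ is trivial iff $E_{\bm{j}}=E_{\bm{j}'}$. If the energies are equal, the character equals $1$ on the dense subgroup $\{e^{-itH}\}$, hence on $\mathbb{T}_H$ by continuity. Conversely, if $E_{\bm{j}}\neq E_{\bm{j}'}$, then $e^{-it(E_{\bm{j}}-E_{\bm{j}'})}\neq1$ for suitable $t$, so the character is nontrivial.

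Summing over blocks then yields Eq.~\eqref{eq:lem:dephasing_Haar_average01}. The only point needing care is the claim that $\chi_{\bm{j},\bm{j}'}$ is a genuine continuous character on the closure, not just on the one-parameter subgroup. This follows because $\mathbb{T}_H$ is an abelian compact subgroup of the diagonal unitaries in the eigenbasis of $H$. Closure preserves both this diagonality and the multiplicativity of eigenvalue maps.
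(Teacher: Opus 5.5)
Your proof is correct, and it takes a different route from the paper.

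\textbf{The paper's route.} The paper never describes how a general element of $\mathbb{T}_H$ acts on off-diagonal blocks. It proceeds in three steps:
\begin{enumerate}
\item It uses invariance of $\mu$ under multiplication by $e^{-itH}$, together with commutativity of $\mathbb{T}_H$ with $e^{-itH}$. This shows that the twirl $\int V^{\otimes m}LV^{\dag\otimes m}d\mu(V)$ is unchanged when $L$ is replaced by $e^{-itH^{(m)}}Le^{itH^{(m)}}$.
\item It expands in the coarse spectral blocks $\Pi_{m,j}L\Pi_{m,k}$ of $H^{(m)}$. The blocks with $E_{m,j}\neq E_{m,k}$ are killed by a Ces\`aro average $\frac{1}{2T}\int_{-T}^{T}dt$ with $T\to\infty$.
\item The surviving diagonal blocks are fixed by every $V\in\mathbb{T}_H$, which follows by continuity from the one-parameter subgroup.
\end{enumerate}

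\textbf{Your route.} You refine to the product blocks $P_{\bm{j}}LP_{\bm{j}'}$, on which $\mathbb{T}_H$ acts by the continuous characters $\chi_{\bm{j},\bm{j}'}$. You then invoke orthogonality of characters, so the off-diagonal terms vanish exactly, with no limit in $t$.

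\textbf{Comparison.} The price of your route is the extra verification, which you supply, that every $V\in\mathbb{T}_H$ has the form $\sum_j\chi_j(V)P_j$ with each $\chi_j$ a continuous multiplicative map on the closure. The gain is a more structural picture. The Haar twirl is exactly the projection onto the trivial-character sector, and this sector is identified with the equal-energy blocks of $H^{(m)}$. The paper's argument is more elementary: it only ever uses Haar invariance under the dense one-parameter subgroup and an explicit time average. It needs neither the diagonal form of elements of the closure nor the character theory of compact abelian groups.
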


\begin{proof}
    By the left invariance of the Haar measure, for any $t\in\mathbb{R}$, we have 
    \begin{align}
        \int_{V\in\mathbb{T}_H} V^{\otimes m}LV^{\dag\otimes m}d\mu (V) 
        =\int_{V\in\mathbb{T}_H} \left(e^{-itH}V\right)^{\otimes m}L\left(e^{-itH}V\right)^{\dag\otimes m}d\mu (V). \label{eq:lem:dephasing_Haar_average02}
    \end{align}
    For any $s\in\mathbb{R}$, $e^{-isH}$ commutes with $e^{-itH}$. 
    Therefore, every $V\in\mathbb{T}_H$ commutes with $e^{-itH}$. 
    Then, we have 
    \begin{align}
        \int_{V\in\mathbb{T}_H} \left(e^{-itH}V\right)^{\otimes m}L\left(e^{-itH}V\right)^{\dag\otimes m}d\mu (V) 
        =&\int_{V\in\mathbb{T}_H} \left(Ve^{-itH}\right)^{\otimes m}L\left(Ve^{-itH}\right)^{\dag\otimes m}d\mu (V) \nonumber\\
        =&\int_{V\in\mathbb{T}_H} V^{\otimes m}e^{-itH^{(m)}}L e^{itH^{(m)}}V^{\dag\otimes m}d\mu (V). \label{eq:lem:dephasing_Haar_average03}
    \end{align}
    We denote the spectral decomposition of $H^{(m)}$ as 
    \begin{align}
        H^{(m)}=\sum_{j=1}^{J_m} E_{m, j}\Pi_{m, j}, \label{eq:lem:dephasing_Haar_average04}
    \end{align}
    where $E_{m, j}\neq E_{m, k}$ if $j\neq k$ and $\Pi_{m, j}$ is the projection operator onto the eigenspace with eigenvalue $E_{m, j}$. 
    Equation~\eqref{eq:lem:dephasing_Haar_average04} implies 
    \begin{align}
        e^{-itH^{(m)}}L e^{itH^{(m)}} 
        =\sum_{j, k=1}^{J_m} e^{-it(E_{m, j}-E_{m, k})}\Pi_{m, j}L\Pi_{m, k}. \label{eq:lem:dephasing_Haar_average05}
    \end{align}
    By plugging Eq.~\eqref{eq:lem:dephasing_Haar_average05} into Eq.~\eqref{eq:lem:dephasing_Haar_average03} we get
    \begin{align}
        \int_{V\in\mathbb{T}_H} \left(e^{-itH}V\right)^{\otimes m}L\left(e^{-itH}V\right)^{\dag\otimes m}d\mu (V) 
        =&\int_{V\in\mathbb{T}_H} V^{\otimes m}\left(\sum_{j, k=1}^{J_m} e^{-it(E_{m, j}-E_{m, k})}\Pi_{m, j}L\Pi_{m, k}\right) V^{\dag\otimes m}d\mu (V) \nonumber\\
        =&\sum_{j, k=1}^{J_m} e^{-it(E_{m, j}-E_{m, k})}\int_{V\in\mathbb{T}_H} V^{\otimes m}\Pi_{m, j}L\Pi_{m, k} V^{\dag\otimes m}d\mu (V). \label{eq:lem:dephasing_Haar_average06}
    \end{align}
    By Eqs.~\eqref{eq:lem:dephasing_Haar_average02} and \eqref{eq:lem:dephasing_Haar_average06}, we have 
    \begin{align}
        \int_{V\in\mathbb{T}_H} V^{\otimes m}LV^{\dag\otimes m}d\mu (V)
        =\sum_{j, k=1}^{J_m} e^{-it(E_{m, j}-E_{m, k})}\int_{V\in\mathbb{T}_H} V^{\otimes m}\Pi_{m, j}L\Pi_{m, k} V^{\dag\otimes m}d\mu (V), \label{eq:lem:dephasing_Haar_average07}
    \end{align}
    which implies 
    \begin{align}
        \int_{V\in\mathbb{T}_H} V^{\otimes m}LV^{\dag\otimes m}d\mu (V) 
        =&\sum_{j, k=1}^{J_m} \left(\frac{1}{2T}\int_{-T}^T e^{-it(E_{m, j}-E_{m, k})}dt\right)\int_{V\in\mathbb{T}_H} V^{\otimes m}\Pi_{m, j}L\Pi_{m, k} V^{\dag\otimes m}d\mu (V). \label{eq:lem:dephasing_Haar_average08}
    \end{align}
    By taking the limit $T\to\infty$, we get 
    \begin{align}
        \int_{V\in\mathbb{T}_H} V^{\otimes m}LV^{\dag\otimes m}d\mu (V) 
        =&\sum_{j, k=1}^{J_m} \delta_{j, k}\int_{V\in\mathbb{T}_H} V^{\otimes m}\Pi_{m, j}L\Pi_{m, k} V^{\dag\otimes m} d\mu(V) \nonumber\\
        =&\sum_{j=1}^{J_m} \int_{V\in\mathbb{T}_H} V^{\otimes m}\Pi_{m, j}L\Pi_{m, j} V^{\dag\otimes m} d\mu(V). \label{eq:lem:dephasing_Haar_average09}
    \end{align}
    By noting that for any $s\in\mathbb{R}$, $(e^{-isH})^{\otimes m}\Pi_{m, j}L\Pi_{m, j} (e^{isH})^{\otimes m}=\Pi_{m, j}L\Pi_{m, j}$, we have 
    \begin{align}
        V^{\otimes m}\Pi_{m, j}L\Pi_{m, j} V^{\dag\otimes m}=\Pi_{m, j}L\Pi_{m, j} \label{eq:lem:dephasing_Haar_average10}
    \end{align}
    for all $V\in\mathbb{T}_H$. 
    Thus, we have 
    \begin{align}
        \int_{V\in\mathbb{T}_H} V^{\otimes m}\Pi_{m, j}L\Pi_{m, j} V^{\dag\otimes m} d\mu(V) 
        =\int_{V\in\mathbb{T}_H} \Pi_{m, j}L\Pi_{m, j} d\mu(V) 
        =\Pi_{m, j}L\Pi_{m, j}. \label{eq:lem:dephasing_Haar_average11}
    \end{align}
    By plugging Eq.~\eqref{eq:lem:dephasing_Haar_average11} into Eq.~\eqref{eq:lem:dephasing_Haar_average09}, we get 
    \begin{align}
        \mathcal{D}_{H^{(m)}}(L) 
        =\int_{V\in\mathbb{T}_H} V^{\otimes m} L V^{\dag\otimes m}d\mu(V). 
    \end{align}
\end{proof}

\begin{lemma} \label{lem:time_evolution_orbit_closure}
    Let $\rho$ be a state, $H$ be a Hamiltonian, and $\mathbb{T}_H$ be defined by Eq.~\eqref{eq:time_evolution_compact}. 
    Then, 
    \begin{align}
        \overline{\left\{e^{-itH}\rho e^{itH} \ |\ t\in\mathbb{R}\right\}}
        =\{V\rho V^\dag\ |\ V\in\mathbb{T}_H\}. 
    \end{align}
\end{lemma}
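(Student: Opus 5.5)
The plan is to prove the two inclusions separately, using that $\mathbb{T}_H$ is a compact subset of the unitary group and that conjugation $V\mapsto V\rho V^\dag$ is continuous.

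First, I show that $\{V\rho V^\dag\,|\,V\in\mathbb{T}_H\}$ is a closed set containing the orbit $\{e^{-itH}\rho e^{itH}\}$. The map $\phi:V\mapsto V\rho V^\dag$ is continuous on the unitary group, and $\mathbb{T}_H$ is closed and bounded in finite dimension, hence compact. Therefore $\phi(\mathbb{T}_H)$ is compact and in particular closed. Since every $e^{-itH}$ lies in $\mathbb{T}_H$, the orbit is contained in $\phi(\mathbb{T}_H)$. Taking closures gives $\overline{\{e^{-itH}\rho e^{itH}\}}\subset\{V\rho V^\dag\,|\,V\in\mathbb{T}_H\}$.

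Second, I prove the reverse inclusion. Take $V\in\mathbb{T}_H$. By definition of the closure, there is a sequence $t_j$ with $e^{-it_jH}\to V$ in operator norm. Continuity of $\phi$ then gives $e^{-it_jH}\rho e^{it_jH}\to V\rho V^\dag$, so $V\rho V^\dag$ lies in the closure of the orbit. Combining the two inclusions yields the equality.

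No step is a serious obstacle. The only point needing care is noting that $\mathbb{T}_H$ is compact, and hence that $\phi(\mathbb{T}_H)$ is closed. This follows because $\mathbb{T}_H$ is a closed subset of the compact unitary group $\mathcal{U}(\mathcal{H})$; all topologies agree in finite dimension, so the choice of norm for the convergence does not matter.
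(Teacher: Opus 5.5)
Your proof is correct and follows the paper's argument. The reverse inclusion is identical. For the forward inclusion, the paper extracts a convergent subsequence of $e^{-it_jH}$ in the compact unitary group, which is just the sequential form of your observation that $\{V\rho V^\dag \,|\, V\in\mathbb{T}_H\}$ is the continuous image of a compact set and hence closed.
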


\begin{proof}
    First, we show that $\overline{\left\{e^{-itH}\rho e^{itH} \ |\ t\in\mathbb{R}\right\}}\subset\{V\rho V^\dag\ |\ V\in\mathbb{T}_H\}$. 
    We take arbitrary $\tau\in\overline{\left\{e^{-itH}\rho e^{itH} \ |\ t\in\mathbb{R}\right\}}$. 
    Then, there exists some real sequence $(t_j)_{j=1}^\infty$ such that 
    \begin{align}
        \lim_{j\to\infty} e^{-it_j H}\rho e^{it_jH}=\tau. 
    \end{align}
    Since the unitary group is compact, there exists some subsequence $(t_{j(k)})_{k=1}^\infty$ such that $e^{-it_{j(k)} H}$ converges in the limit of $k\to\infty$. 
    We denote $V:=\lim_{k\to\infty} e^{-it_{j(k)} H}$. 
    Then, $V\in\mathbb{T}_H$ and $V\rho V^\dag=\tau$, which implies $\tau\in\{V\rho V^\dag\ |\ V\in\mathbb{T}_H\}$.

    Next, we show that $\overline{\left\{e^{-itH}\rho e^{itH} \ |\ t\in\mathbb{R}\right\}}\supset\{V\rho V^\dag\ |\ V\in\mathbb{T}_H\}$. 
    We take arbitrary $\tau\in\{V\rho V^\dag\ |\ V\in\mathbb{T}_H\}$. 
    By definition, $\tau$ can be written as $\tau=V\rho V^\dag$ with some $V\in\mathbb{T}_H$. 
    By the definition of $V$, there exists some real sequence $(t_j)_{j=1}^\infty$ such that 
    \begin{align}
        \lim_{j\to\infty} e^{-it_j H}=V, 
    \end{align}
    which implies 
    \begin{align}
        \lim_{j\to\infty} e^{-it_j H}\rho e^{it_j H}=V\rho V^\dag=\tau. 
    \end{align}
    We therefore have $\tau\in\overline{\left\{e^{-itH}\rho e^{itH} \ |\ t\in\mathbb{R}\right\}}$. 
\end{proof}

\noindent 
\textit{Proof of Lemma~\ref{lem:dephased_state_symmetry_subgroup}.}
    We prove this lemma in two steps. 
    In the first step, we show that 
    \begin{align}
        \forall m\in\mathbb{N},\ \mathrm{Sym}\left(\mathcal{D}_{\mathcal{H}^{(m)}}(\rho^{\otimes m})\right)\supset K. \label{eq:lem:dephased_state_symmetry_subgroup02}
    \end{align}
    In the second step, we show that 
    \begin{align}
        \exists m\in\mathbb{N} \text{ such that } \mathrm{Sym}\left(\mathcal{D}_{\mathcal{H}^{(m)}}(\rho^{\otimes m})\right)\subset K. \label{eq:lem:dephased_state_symmetry_subgroup03}
    \end{align}

    Before going into the details of the two steps, we explain two points that we commonly use in both steps. 
    We first define 
    \begin{align}
        \Omega_\rho^{(m)}:=\mathcal{D}_{H^{(m)}}(\rho^{\otimes m}). \label{eq:lem:dephased_state_symmetry_subgroup04}
    \end{align}
    Since $H$ commutes with $U(g)$ for all $g\in G$, we have 
    \begin{align}
        \mathcal{U}_g^{\otimes m}(\mathcal{D}_{H^{(m)}}(\rho^{\otimes m})) 
        =\mathcal{D}_{H^{(m)}}(\mathcal{U}_g(\rho)^{\otimes m}) 
        =\Omega_{\mathcal{U}_g(\rho)}^{(m)}, \label{eq:lem:dephased_state_symmetry_subgroup05}
    \end{align}
    which implies 
    \begin{align}
        \mathrm{Sym}(\mathcal{D}_{\mathcal{H}^{(m)}}(\rho^{\otimes m})) 
        =\left\{g\in G\ \middle|\ \Omega_{\mathcal{U}_g(\rho)}^{(m)}=\Omega_\rho^{(m)}\right\}. \label{eq:lem:dephased_state_symmetry_subgroup06}
    \end{align} 
    Next, we note that by Lemma~\ref{lem:time_evolution_orbit_closure}, $g\in K$ is equivalent to 
    \begin{align}
        \exists V\in\mathbb{T}_H \text{ such that } \mathcal{U}_g(\rho)=V\rho V^\dag, \label{eq:lem:dephased_state_symmetry_subgroup07}
    \end{align}
    which means 
    \begin{align}
        K=\{g\in G\ |\ \exists V\in\mathbb{T}_H \text{ such that }\mathcal{U}_g(\rho)=V\rho V^\dag\}. \label{eq:lem:dephased_state_symmetry_subgroup08}
    \end{align}

    As the first step, we show Eq.~\eqref{eq:lem:dephased_state_symmetry_subgroup02}. 
    We take arbitrary $g\in K$ and $m\in\mathbb{N}$. 
    By Eq.~\eqref{eq:lem:dephased_state_symmetry_subgroup08}, there exists some $V\in\mathbb{T}_H$ such that 
    \begin{align}
        \mathcal{U}_g(\rho)=V\rho V^\dag. \label{eq:lem:dephased_state_symmetry_subgroup09}
    \end{align}
    Thus, by Lemma~\ref{lem:dephasing_Haar_average}, we have 
    \begin{align}
        \Omega_{\mathcal{U}_g(\rho)}^{(m)}
        =\int_{W\in\mathbb{T}_H}
        \left(
            WV\rho V^\dag W^\dag
        \right)^{\otimes m}
        d\nu(W)
        =\int_{W\in\mathbb{T}_H}
        \left(
            W\rho W^\dag
        \right)^{\otimes m}
        d\nu(W)
        =\Omega_\rho^{(m)}, \label{eq:lem:dephased_state_symmetry_subgroup10}
    \end{align}
    where we used the right invariance of the Haar measure in the second equality. 
    By Eqs.~\eqref{eq:lem:dephased_state_symmetry_subgroup06} and \eqref{eq:lem:dephased_state_symmetry_subgroup10}, we have $g\in\mathrm{Sym}\left(\mathcal{D}_{\mathcal{H}^{(m)}}(\rho^{\otimes m})\right)$.

    As the second step, we show Eq.~\eqref{eq:lem:dephased_state_symmetry_subgroup03}. 
    We take arbitrary $g\in G$ and assume $g\not\in K$. 
    By Eq.~\eqref{eq:lem:dephased_state_symmetry_subgroup06}, it is sufficient to show that $\Omega_\rho^{(m)}\neq\Omega_{\mathcal{U}_g(\rho)}^{(m)}$ with some $m\in\mathbb{N}$. 
    To show this, we prove that 
    \begin{align}
        \mathrm{tr}\left(\Omega_\rho^{(m)}\rho^{\otimes m}\right)
        >\mathrm{tr}\left(\Omega_{\mathcal{U}_g(\rho)}^{(m)}\rho^{\otimes m}\right). \label{eq:lem:dephased_state_symmetry_subgroup11}
    \end{align}
    We suppose that there exists some $V\in\mathbb{T}_H$ such that $VU(g)\rho U(g)^\dag V^\dag=\rho$. 
    Then, we have $U(g)\rho U(g)^\dag=V^\dag\rho V$, which implies $g\in K$. 
    Since this contradicts with the assumption, for any $V\in\mathbb{T}_H$, we have $VU(g)\rho U(g)^\dag V^\dag\neq\rho$. 
    Therefore, we get 
    \begin{align}
        \mathrm{tr}(VU(g)\rho U(g)^\dag V^\dag \rho)
        =\mathrm{tr}(\rho^2)-\frac{1}{2}\|\rho-VU(g)\rho U(g)^\dag V^\dag\|_2^2 
        <\mathrm{tr}(\rho^2). \label{eq:lem:dephased_state_symmetry_subgroup12}
    \end{align}
    Since $\mathbb{T}_H$ is compact, this implies 
    \begin{align}
        \max_{V\in\mathbb{T}_H} \mathrm{tr}(VU(g)\rho U(g)^\dag V^\dag \rho)
        <\mathrm{tr}(\rho^2). \label{eq:lem:dephased_state_symmetry_subgroup13}
    \end{align}
    We define 
    \begin{align}
        \mathcal{N}:=
        \left\{
            V\in\mathbb{T}_H \ |\ \mathrm{tr}\left(V\rho V^\dag \rho\right)>c \label{eq:lem:dephased_state_symmetry_subgroup14}
        \right\} 
    \end{align}
    with 
    \begin{align}
        &a:=\mathrm{tr}(\rho^2), \label{eq:lem:dephased_state_symmetry_subgroup15}\\
        &b:=\max_{V\in\mathbb{T}_H} \mathrm{tr}(VU(g)\rho U(g)^\dag V^\dag \rho), \label{eq:lem:dephased_state_symmetry_subgroup16}\\
        &c:=\frac{a+b}{2}. \label{eq:lem:dephased_state_symmetry_subgroup17}
    \end{align}
    For any $m\in\mathbb{N}$, we have
    \begin{align}
        \mathrm{tr}\left(\Omega_\rho^{(m)}\rho^{\otimes m}\right) 
        =\int_{V\in\mathbb{T}_H} \left(\mathrm{tr}\left(V\rho V^\dag \rho\right)\right)^m d\nu(V) 
        \geq \int_{V\in\mathcal{N}} \left(\mathrm{tr}\left(V\rho V^\dag \rho\right)\right)^m d\nu(V) 
        \geq c^m \nu(\mathcal{N}). \label{eq:lem:dephased_state_symmetry_subgroup18}
    \end{align}
    We note that $\mathcal{N}$ is an open subset of $\mathbb{T}_H$ containing the identity, which implies $\nu(\mathcal{N})>0$. 
    On the other hand, we have 
    \begin{align}
        \mathrm{tr}\left(\Omega_{\mathcal{U}_g(\rho)}^{(m)} \rho^{\otimes m}\right) 
        =\int_{V\in\mathbb{T}_H} \left(\mathrm{tr}\left(V\mathcal{U}_g(\rho) V^\dag \rho\right)\right)^m d\nu(V) 
        \leq b^m. \label{eq:lem:dephased_state_symmetry_subgroup19}
    \end{align}
    Since $\nu(\mathcal{N})>0$ and $b<c$, we can take some $m\in\mathbb{N}$ such that $c^m \nu(\mathcal{N})>b^m$, which implies Eq.~\eqref{eq:lem:dephased_state_symmetry_subgroup11}. 

\hfill $\square$

\section{Construction of mutually orthogonal state set}

Finally, we prove the existence of the orthogonal orbit vectors used in the ergotropy construction.
The result provides a finite-copy state whose group orbit is perfectly distinguishable while retaining nonzero overlap with the corresponding tensor power of the input state.

\begin{lemma} \label{SMlem:regular_basis_const_first_prep}
    Let $a\in\mathbb{N}$, $\ket{\phi}$ and $\ket{\phi'}$ be eigenstates of $Z\in\mathcal{U}(\mathcal{H})$ with different eigenvalues $z$ and $z'$ satisfying $z^a=z'^a$, and $c_0, c_1, ..., c_{a-1}\in\mathbb{C}$ satisfy $|c_b|=1$ for all $b\in\{0, 1,.., a-1\}$, 
    $\ket{\psi}\in\mathcal{H}^{\otimes a-1}$ be defined by 
    \begin{align}
        \ket{\psi}:=\frac{1}{\sqrt{a}}\sum_{b=0}^{a-1} c_b \ket{\phi}^{\otimes a-b-1}\otimes \ket{\phi'}^{\otimes b}. 
    \end{align}
    Then, 
    \begin{align}
        \bra{\psi}Z^{\otimes a-1}\ket{\psi}=0. 
    \end{align}
\end{lemma}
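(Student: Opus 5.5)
The plan is to compute $\bra{\psi}Z^{\otimes a-1}\ket{\psi}$ directly by expanding $\ket{\psi}$ in its $a$ product terms. I will treat it as a double sum over $b,b'\in\{0,\dots,a-1\}$ and show that all diagonal contributions add up to zero and all off-diagonal contributions vanish.

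First I use the eigenvector property. Each product vector $\ket{\phi}^{\otimes a-b-1}\otimes\ket{\phi'}^{\otimes b}$ is an eigenvector of $Z^{\otimes a-1}$ with eigenvalue $z^{a-b-1}z'^{b}$. Hence
\begin{align}
    \bra{\psi}Z^{\otimes a-1}\ket{\psi}
    =\frac{1}{a}\sum_{b,b'=0}^{a-1}\overline{c_{b'}}c_b\, z^{a-b-1}z'^{b}
    \left\langle \phi^{\otimes a-b'-1}\phi'^{\otimes b'}\middle|\phi^{\otimes a-b-1}\phi'^{\otimes b}\right\rangle .
\end{align}

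Next I use orthogonality. Since $Z$ is unitary, hence normal, and $z\neq z'$, its eigenvectors satisfy $\braket{\phi|\phi'}=0$. For $b\neq b'$ some tensor slot pairs $\phi$ with $\phi'$, so the inner product vanishes. For $b=b'$ the inner product is $1$, assuming normalized vectors as the definition of $\ket{\psi}$ implicitly does. With $|c_b|=1$ this leaves
\begin{align}
    \bra{\psi}Z^{\otimes a-1}\ket{\psi}=\frac{1}{a}\sum_{b=0}^{a-1} z^{a-1-b}z'^{b}.
\end{align}

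Then I evaluate this geometric sum. Multiplying it by $(z-z')$ telescopes to $z^{a}-z'^{a}$, which is $0$ by hypothesis. Since $z\neq z'$, we can divide by $z-z'$, so the sum is $0$. This proves the claim.

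The only step that could be an obstacle is the orthogonality of $\ket{\phi}$ and $\ket{\phi'}$. It follows from the normality of the unitary $Z$, since eigenvectors of a normal operator with distinct eigenvalues are orthogonal, and I would state this explicitly. I would also note the degenerate case $a=1$: there $z=z^{1}=z'^{1}=z'$ contradicts $z\neq z'$, so this case cannot occur and needs no separate treatment.
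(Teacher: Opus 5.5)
Your proof is correct and follows the paper's argument step by step. You use orthogonality of $\ket{\phi}$ and $\ket{\phi'}$ (from $Z$ unitary, $z\neq z'$) to kill the cross terms, and $|c_b|=1$ to reduce to $\frac{1}{a}\sum_{b} z^{a-1-b}z'^{b}$, which vanishes by the geometric-sum identity with $z^a=z'^a$ and $z\neq z'$. Your explicit remarks on normalization and on the vacuous case $a=1$ are small clarifications that the paper leaves implicit.
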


\begin{proof}
    Since $\ket{\phi}$ and $\ket{\phi'}$ are eigenstates of a unitary $Z$ with different eigenvalues, $\ket{\phi}$ and $\ket{\phi'}$ are orthogonal to each other, which implies that $\{\ket{\phi}^{\otimes a-b-1}\otimes\ket{\phi'}^{\otimes b}\}_{b=0}^{a-1}$ forms an orthogonal set of states. 
    Thus we get 
    \begin{align}
        \bra{\psi}Z^{\otimes a-1}\ket{\psi}
        =&\left(\frac{1}{\sqrt{a}}\sum_{b=0}^{a-1} c_b^*\bra{\phi}^{\otimes a-b-1}\otimes \bra{\phi'}^{\otimes b}\right)
        \left(\frac{1}{\sqrt{a}}\sum_{b=0}^{a-1} c_b z^{a-b-1} z'^b \ket{\phi}^{\otimes a-b-1}\otimes \ket{\phi'}^{\otimes b}\right) \nonumber\\
        =&\frac{1}{a}\sum_{b=0}^{a-1} |c_b|^2 z^{a-b-1} z'^b \nonumber\\
        =&\frac{1}{a}\sum_{b=0}^{a-1} z^{a-b-1} z'^b \nonumber\\
        =&\frac{1}{a}\cdot\frac{z^a-z'^a}{z-z'} \nonumber\\
        =&0, 
    \end{align}
    where we used $z\neq z'$ in the fourth equality. 
\end{proof}

\begin{lemma} \label{SMlem:regular_basis_const_second_prep}
    Let $a\in\mathbb{N}$, $\ket{\phi}$ and $\ket{\phi'}$ be eigenstates of $Z\in\mathcal{U}(\mathcal{H})$ with different eigenvalues $z$ and $z'$ satisfying $z^a=z'^a$, and $\rho\in\mathcal{S}(\mathcal{H})$ satisfy $\bra{\phi}\rho\ket{\phi}>0$. 
    Then, there exists some $\ket{\psi}\in\mathcal{H}^{\otimes a-1}$ such that 
    \begin{align}
        &\bra{\psi}Z^{\otimes a-1}\ket{\psi}=0, \\
        &\bra{\psi}\rho^{\otimes a-1}\ket{\psi}>0. 
    \end{align}
\end{lemma}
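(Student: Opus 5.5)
The construction will reduce directly to Lemma~\ref{SMlem:regular_basis_const_first_prep}: I will take $\ket{\psi}$ of the form used there and then choose the free phases $c_0,\dots,c_{a-1}$ so that the overlap with $\rho^{\otimes a-1}$ does not vanish. The first condition $\bra{\psi}Z^{\otimes a-1}\ket{\psi}=0$ then holds for every choice of unit-modulus phases. The whole problem is therefore the second condition.

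Set $v_b:=\ket{\phi}^{\otimes a-b-1}\otimes\ket{\phi'}^{\otimes b}$ for $b=0,\dots,a-1$. Expanding gives
\begin{align}
    \bra{\psi}\rho^{\otimes a-1}\ket{\psi}
    =\frac{1}{a}\sum_{b,b'} \overline{c_b}c_{b'}\, M_{bb'},
\end{align}
where $M_{bb'}:=\bra{v_b}\rho^{\otimes a-1}\ket{v_{b'}}$. The matrix $M$ is a Gram matrix of $(\rho^{1/2})^{\otimes a-1}$ applied to the $v_b$, so it is positive semidefinite. Its diagonal entry $M_{00}=\bra{\phi}\rho\ket{\phi}^{a-1}$ is strictly positive by hypothesis. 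Hence $M\neq0$, and the quadratic form $c\mapsto c^\dagger M c$ does not vanish identically.

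The step I expect to be the main obstacle is that $c$ must be restricted to the torus $|c_b|=1$. I will handle it by averaging over independent uniformly random phases $c_b=e^{i\theta_b}$. The off-diagonal terms average to zero, so
\begin{align}
    \mathbb{E}\left[c^\dagger Mc\right]=\mathrm{tr}M\geq M_{00}>0.
\end{align}
Since the average is strictly positive, some specific choice of phases attains a strictly positive value; in fact, because $c^\dagger Mc\geq0$ everywhere, it is positive on a set of positive measure. Fixing such phases gives $\bra{\psi}\rho^{\otimes a-1}\ket{\psi}>0$.

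The degenerate case $a=1$ needs a separate remark. There $\mathcal{H}^{\otimes 0}=\mathbb{C}$, $\ket{\psi}=c_0$, and $z=z'$ is forced, contradicting $z\neq z'$. So $a\geq2$ holds implicitly, and the argument above applies.
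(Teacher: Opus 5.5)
Your proof is correct and is essentially the paper's argument: both take the vector from Lemma~\ref{SMlem:regular_basis_const_first_prep} and average over its free phases so that the cross terms cancel, which leaves a nonnegative sum containing $\bra{\phi}\rho\ket{\phi}^{a-1}>0$. The paper averages over only two points, the sign choice in $\ket{\psi_\pm}=a^{-1/2}\left(\ket{\phi}^{\otimes a-1}\pm\ket{\Phi}\right)$, rather than uniformly over all independent phases, so it ends with an explicit pair of candidate vectors instead of an existence statement over the torus.
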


\begin{proof}
    We define 
    \begin{align}
        \ket{\psi_\pm}:=\frac{1}{\sqrt{a}}\left(\ket{\phi}^{\otimes a-1}\pm\ket{\Phi}\right) 
    \end{align}
    with
    \begin{align}
        \ket{\Phi}:=\sum_{b=1}^{a-1} \ket{\phi}^{\otimes a-b-1}\otimes \ket{\phi'}^{\otimes b}. 
    \end{align}
    By Lemma~\ref{SMlem:regular_basis_const_first_prep}, we have $\bra{\psi_+}Z^{\otimes a-1}\ket{\psi_+}=\bra{\psi_-}Z^{\otimes a-1}\ket{\psi_-}=0$. 
    By the definition of $\ket{\psi_\pm}$, we have 
    \begin{align}
        &\bra{\psi_+}\rho^{\otimes a-1}\ket{\psi_+}+\bra{\psi_-}\rho^{\otimes a-1}\ket{\psi_-} \nonumber\\
        =&\frac{1}{\sqrt{a}}\left(\bra{\phi}^{\otimes a-1}+\bra{\Phi}\right)\rho^{\otimes a-1}\frac{1}{\sqrt{a}}\left(\ket{\phi}^{\otimes a-1}+\ket{\Phi}\right)
        +\frac{1}{\sqrt{a}}\left(\bra{\phi}^{\otimes a-1}-\bra{\Phi}\right)\rho^{\otimes a-1}\frac{1}{\sqrt{a}}\left(\ket{\phi}^{\otimes a-1}-\ket{\Phi}\right) \nonumber\\
        =&\frac{2}{a}\left(\bra{\phi}^{\otimes a-1}\rho^{\otimes a-1}\ket{\phi}^{\otimes a-1}+\bra{\Phi}\rho^{\otimes a-1}\ket{\Phi}\right) \nonumber\\
        \geq &\frac{2}{a}\bra{\phi}\rho\ket{\phi}^{a-1} \nonumber\\
        >&0, 
    \end{align}
    which implies $\bra{\psi_+}\rho^{\otimes a-1}\ket{\psi_+}>0$ or $\bra{\psi_-}\rho^{\otimes a-1}\ket{\psi_-}>0$. 
\end{proof}

\begin{lemma} \label{lem:left_regular_basis_construction}
    Let $G$ be a finite group, $U$ be a faithful projective unitary representation of $G$ on $\mathcal{H}$, and $\rho\in\mathcal{S}(\mathcal{H})$. 
    Then, there exist some $N\in\mathbb{N}$ and some pure state $\ket{\eta}\in\mathcal{H}^{\otimes N}$ satisfying 
    \begin{align}
        &\bra{\eta}U(g')^{\dag\otimes N}U(g)^{\otimes N}\ket{\eta}=\delta_{g, g'}, \label{SMeq:lem:left_regular_basis_construction1}\\
        &\bra{\eta}\rho^{\otimes N}\ket{\eta}>0. \label{SMeq:lem:left_regular_basis_construction2}
    \end{align} 
\end{lemma}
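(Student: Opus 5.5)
We build $\ket{\eta}$ as a tensor product of small "witness" blocks, one for each nontrivial group element. Each block is designed so that its rotated copy is orthogonal to it, while it keeps positive overlap with the corresponding power of $\rho$.

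\textbf{Reduction to pairwise orthogonality.} It suffices to find, for every $k\in G\setminus\{e\}$, an integer $N_k$ and a unit vector $\ket{\eta_k}\in\mathcal{H}^{\otimes N_k}$ such that
\begin{align}
\bra{\eta_k}U(k)^{\otimes N_k}\ket{\eta_k}=0,\qquad \bra{\eta_k}\rho^{\otimes N_k}\ket{\eta_k}>0 .
\end{align}
Then set $\ket{\eta}:=\bigotimes_{k\neq e}\ket{\eta_k}$ and $N:=\sum_k N_k$. The overlap $\bra{\eta}U(g')^{\dag\otimes N}U(g)^{\otimes N}\ket{\eta}$ equals, up to a phase, $\prod_k\bra{\eta_k}U(g'^{-1}g)^{\otimes N_k}\ket{\eta_k}$. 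If $g\neq g'$, the factor with $k=g'^{-1}g$ vanishes. The diagonal case $g=g'$ is trivially $1$. Finally, $\bra{\eta}\rho^{\otimes N}\ket{\eta}=\prod_k\bra{\eta_k}\rho^{\otimes N_k}\ket{\eta_k}>0$, which gives Eq.~\eqref{SMeq:lem:left_regular_basis_construction2}.

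\textbf{Construction of each block.} Fix $k\neq e$. Faithfulness gives $U(k)\not\propto I$, so $Z:=U(k)$ is a unitary with at least two distinct eigenvalues. Since $k$ has finite order $a$, $U(k)^a=U(k^a)\propto I$. Hence every eigenvalue $z$ of $Z$ satisfies $z^a=\lambda$ for a common $\lambda$, and so $z^a=z'^a$ for any two eigenvalues.

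Next we need an eigenvector $\ket{\phi}$ of $Z$ with $\bra{\phi}\rho\ket{\phi}>0$. The eigenspace projections $P_z$ sum to the identity, so some $P_z$ has $\mathrm{tr}(P_z\rho)>0$. Hence $P_z\rho P_z\neq0$, and we take $\ket{\phi}$ to be an eigenvector of $P_z\rho P_z$ with positive eigenvalue. We also choose any eigenvector $\ket{\phi'}$ of $Z$ with a different eigenvalue $z'$.

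Lemma~\ref{SMlem:regular_basis_const_second_prep} then yields $\ket{\psi}\in\mathcal{H}^{\otimes a-1}$ with $\bra{\psi}Z^{\otimes a-1}\ket{\psi}=0$ and $\bra{\psi}\rho^{\otimes a-1}\ket{\psi}>0$. We set $\ket{\eta_k}:=\ket{\psi}$ and $N_k:=a-1$.

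\textbf{Degenerate cases.} If $a=2$, the block is a single copy, and the lemma still applies. The case $a=1$ is excluded because $k\neq e$. If the normalization of $\ket{\psi_\pm}$ in that lemma is off, we renormalize; this changes neither property, since both are homogeneous in $\ket{\psi}$.

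\textbf{Main obstacle.} The delicate point is the interaction of projective phases with the orthogonality claim. We need $U(g')^{\dag}U(g)\propto U(g'^{-1}g)$ and $U(k)^a\propto I$ only up to phases. Both orthogonality and positivity are insensitive to these phases, so this should go through. The remaining condition of the lemma, $z^a=z'^a$, is exactly what the finite order of $k$ supplies.
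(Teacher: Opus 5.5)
Your proposal is correct and matches the paper's proof: one block per nontrivial element, each built with Lemma~\ref{SMlem:regular_basis_const_second_prep} from an eigenvector with positive $\rho$-overlap. The second eigenvalue comes from faithfulness, $z^a=z'^a$ comes from the finite order, and tensoring the blocks gives $\ket{\eta}$, with the phases harmless exactly as you say. The only difference is cosmetic: you take $\ket{\phi}$ as an eigenvector of $P_z\rho P_z$, whereas the paper takes a suitable member of an orthonormal eigenbasis of $U(h)$; the two are equivalent.
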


By the construction method in the following proof, we can take $N$ such that $N\leq (|G|-1)^2$.

\begin{proof}
    For each $h\in G-\{e\}$, we consider the eigenvalue decomposition of $U(h)$, i.e., 
    \begin{align}
        U(h)=\sum_{j=0}^{d-1} u_{h, j}\ket{\phi_{h, j}}\bra{\phi_{h, j}}. 
    \end{align}
    with $u_{h, j}\in\mathbb{C}$ and an orthonormal basis $\{\ket{\phi_{h, j}}\}_{j=0}^{d-1}$. 
    Since $\rho$ is a nonzero positive semidefinite operator, we can take some $j\in\{0, 1, ..., d-1\}$ such that $\bra{\phi_{h, j}}\rho\ket{\phi_{h, j}}>0$. 
    Since $U$ is a faithful representation up to phase, there exists some $j'\in\{0, 1, ..., d-1\}$ such that $u_{h, j}\neq u_{h, j'}$. 
    Since $G$ is a finite group, there exists some $a\in\mathbb{N}$ such that $h^a=e$. 
    This means that $U(h)^a\propto U(e)\propto I$, which implies $(u_{h, j})^a=(u_{h, j'})^a$. 
    Thus, we can take some $a_h\in\mathbb{N}$ such that $(u_{h, j})^{a_h}=(u_{h, j'})^{a_h}$. 
    By Lemma~\ref{SMlem:regular_basis_const_second_prep}, we can take some $\ket{\psi_h}\in\mathcal{H}^{\otimes a_h-1}$ such that $\bra{\psi_h}U(h)^{\otimes a_h-1}\ket{\psi_h}=0$ and $\bra{\psi_h}\rho^{\otimes a_h-1}\ket{\psi_h}>0$. 
    We define $N:=\sum_{h\in G-\{e\}} (a_h-1)$ and 
    \begin{align}
        \ket{\eta}:=\bigotimes_{h\in G-\{e\}} \ket{\psi_h}. 
    \end{align}
    Then, it trivially holds that $\bra{\eta}U(g)^{\dag\otimes N}U(g)^{\otimes N}\ket{\eta}=1$ for all $g\in G$. 
    For any $g, g'\in G$ satisfying $g\neq g'$, we have 
    \begin{align}
        \bra{\eta}U(g)^{\dag\otimes N}U(g')^{\otimes N}\ket{\eta} 
        =&\bra{\eta}(\omega(g, g^{-1}g')^{-1}U(g^{-1}g'))^{\otimes N}\ket{\eta} \nonumber\\
        =&\omega(g, g^{-1}g')^{-N}\left(\bigotimes_{h\in G-\{e\}} \bra{\psi_h}\right)\left(\bigotimes_{h\in G-\{e\} } U(g^{-1} g')^{\otimes a_h-1}\right)\left(\bigotimes_{h\in G-\{e\}} \ket{\psi_h}\right) \nonumber\\
        =&\omega(g, g^{-1}g')^{-N}\prod_{h\in G-\{e\}} \bra{\psi_h} U(g^{-1} g')^{\otimes a_h-1}\ket{\psi_h} \nonumber\\
        =&0, 
    \end{align}
    where we used $U(g)U(g^{-1}g')=\omega(g, g^{-1}g')U(g')$ in the first equality. 
    Equation~\eqref{SMeq:lem:left_regular_basis_construction2} can be directly confirmed by 
    \begin{align}
        \bra{\eta}\rho^{\otimes N}\ket{\eta} 
        =\left(\bigotimes_{h\in G-\{e\}} \bra{\psi_h}\right)\left(\bigotimes_{h\in G-\{e\} } \rho^{\otimes a_h-1}\right)\left(\bigotimes_{h\in G-\{e\}} \ket{\psi_h}\right)
        =\prod_{h\in G-\{e\}} \bra{\psi_h} \rho^{\otimes a_h-1}\ket{\psi_h} 
        >0. 
    \end{align}
\end{proof}

\end{document}